\tiny\color{black},
\newcommand{\multicrefrange}[4]{%
    \Crefrange{#1}{#2}, \labelcref{#3}-\labelcref{#4}
}
\newcommand{\codestyle}[1]{{{\fontsize{9}{10}\ttfamily #1}}}
\newcommand{\size}{\codestyle{size}}
\newcommand{\ins}{\codestyle{in\-sert}}
\newcommand{\del}{\codestyle{de\-lete}}
\newcommand{\contains}{\codestyle{con\-tains}}
\newcommand{\cas}{\texttt{CAS}}
\newcommand{\nul}{\texttt{NULL}}
\newcommand{\spsize}{\texttt{SP}}
\newcommand{\term}[1]{\textit{#1}}
\newcommand{\gnote}[1]{{\color{teal}\begin{quote}{\bf Galy's note:} #1\end{quote}}}
\newcommand{\ignore}[1]{}
\newtheorem{theorem}{Theorem}[section]
\newtheorem{claim}[theorem]{Claim}
\newtheorem{observation}[theorem]{Observation}
\begin{document}

\title{A Study of Synchronization Methods for Concurrent Size}
\titlenote{This work was supported by the Israel Science Foundation, Grant No.~1102/21.}

\author{Hen Kas-Sharir}
\email{hen.kas@campus.technion.ac.il}
\affiliation{
  \institution{Technion}
  \city{Haifa}
  \country{Israel}
}
\orcid{0009-0005-3274-528X}

\author{Gal Sela}
\email{gal.sela@epfl.ch}
\affiliation{
  \institution{EPFL}
  \city{Lausanne}
  \country{Switzerland}
}
\orcid{0000-0003-2342-6955}

\author{Erez Petrank}
\email{erez@cs.technion.ac.il}
\affiliation{
  \institution{Technion}
  \city{Haifa}
  \country{Israel}
}
\orcid{0000-0002-6353-956X}

\begin{abstract}
The size of collections, maps, and data structures in general, constitutes a fundamental property. An implementation of the size method is required in most programming environments. Nevertheless, in a concurrent environment, integrating a linearizable concurrent size introduces a noticeable overhead on all operations of the data structure, even when the size method is not invoked during the execution. 
In this work we present a study of synchronization methods in an attempt to improve the performance of the data structure. In particular, we study a handshake technique that is commonly used with concurrent garbage collection, an optimistic technique, and a lock-based technique. 
Evaluation against the state-of-the-art size methodology demonstrates that the overhead can be significantly reduced by selecting the appropriate synchronization approach, but there is no one-size-fits-all method. Different scenarios call for different synchronization methods, as rigorously shown in this study. Nevertheless, our findings align with general trends in concurrent computing. In scenarios characterized by low contention, optimistic and lock-based approaches work best, whereas under high contention, the most effective solutions are the handshake approach and the  wait-free approach.
\end{abstract}

\begin{CCSXML}
<ccs2012>
  <concept>
    <concept_id>10010147.10010169.10010170.10010171</concept_id>
    <concept_desc>Computing methodologies~Shared memory algorithms</concept_desc>
    <concept_significance>500</concept_significance>
  </concept>
  <concept>
    <concept_id>10010147.10011777.10011778</concept_id>
    <concept_desc>Computing methodologies~Concurrent algorithms</concept_desc>
    <concept_significance>500</concept_significance>
  </concept>
  <concept>
    <concept_id>10003752.10003809.10010031</concept_id>
    <concept_desc>Theory of computation~Data structures design and analysis</concept_desc>
    <concept_significance>500</concept_significance>
  </concept>
</ccs2012>
\end{CCSXML}

\ccsdesc[500]{Computing methodologies~Shared memory algorithms}
\ccsdesc[500]{Computing methodologies~Concurrent algorithms}
\ccsdesc[500]{Theory of computation~Data structures design and analysis}

\keywords{Concurrent Algorithms; Concurrent Data Structures; Linearizability; Size}

\maketitle

The conference version of this paper is available at \cite{kas2025study}, 
and the code is publicly available at \cite{artifactSyncMethodsForConcurrentSize}.

\section{Introduction}\label{section:intro}
With the proliferation of multicores in modern computing architectures, the significance of concurrent programming has become acute.
Concurrent data structures 
are a key component of concurrent systems, making their correctness and performance crucial.
A fundamental property of data structures is their \size{}, which is the number of elements they contain. 
The \size{} method is widely used and its implementation is required for collections in many programming environments. For example, in Java, in order to implement the elementary \texttt{Collection} and \texttt{Map} interface classes, one is required to implement the \size{} method. 

In spite of the importance of the \size{} method, it was not known until recently how to compute the \size{} of a concurrent data structure efficiently and correctly. Traditional methods were either very slow or incorrect. For example, taking a concurrent snapshot of the data structure and traversing it to count its elements is correct but inefficient. In contrast, naively maintaining a global \size{} variable and updating it with each operation 
is not linearizable (i.e., incorrect): 
Consider, for instance, $n$ threads, each inserting an element and
then becoming preempted before updating the global counter. At this point, the counter is still 0, even though the number of nodes in the data structure is $n$. If another thread calls
\contains{}() for the $n$ inserted items, it would observe them; but if it then calls \size{}(), it would still see 0---an inconsistent result.
Yet, this approach is currently used in some Java libraries with an adequate warning that the result may be inaccurate---for example, in \texttt{ConcurrentSkipListMap} and \texttt{ConcurrentHashMap} 
within the \texttt{java.util.concurrent} package. 

Recently, Sela and Petrank~\cite{sela2021concurrentSize} proposed a new mechanism for computing a linearizable concurrent \size{} for sets and dictionaries (denoted \spsize{}), significantly improving efficiency compared to previous (linearizable) methods. While their algorithm is linearizable and relatively efficient, it still incurs non-negligible overheads. Specifically, a performance cost of 1--20\% is incurred on the standard operations of the data structure (i.e., \ins{}, \del{}, \contains{}). This overhead comes from the need for data structure operations to always cooperate with a potential concurrent \size{} method, even if a \size{} operation is not currently active. Such an overhead could discourage users from adopting it, especially for workloads that execute the \size{} operation infrequently, or not at all.

A natural question that arises is whether alternative synchronization methods could offer a reduced overhead. The goal of this paper is to provide a study of diverse synchronization methods aimed at mitigating the overhead required for supporting a linearizable concurrent \size{}. Our study covers a variety of synchronization methods, including handshake synchronization --- that is often used with concurrent garbage collection, optimistic synchronization, and lock-based synchronization. All examined methods are incorporated in a way that provides a correct (linearizable) concurrent \size{}, and our goal is to identify the best scheme for common workloads. We examine the overhead on standard data structure operations 
as well as the efficiency of the concurrent \size{} operation itself. The various synchronization methods are compared with the wait-free \spsize{}, the most efficient linearizable \size\ method currently available.

The first synchronization method that we study is handshakes, a concept initially introduced in the context of concurrent garbage collection \cite{DoligezPortable,DoligezConcurrentHS,ImplementingOnTheFlyHS,leva01}. Handshakes allow threads to adapt their behavior to different execution phases, originally, 
to enable threads to start cooperating 
when a concurrent garbage collection commences. 
In a similar spirit, we harness the handshake mechanism to allow threads to operate (almost) normally (on a fast path) when no concurrent  \size\ operation is active, while requiring them to cooperate fully (on a slow path) when a \size\ operation is invoked. This synchronization method offers a tradeoff: It reduces the overhead on all other operations when no \size\ operation is in progress. However, the \size\ operation itself suffers a reduced efficiency due to the need to invoke the handshake mechanism and wait for the cooperation of all threads before beginning to execute.  

The novel handshake mechanism that we design carefully manages transitions between the fast and slow paths. During these transitions, operations running in the slow path, which are aware of \size{}, may execute concurrently with operations in the fast path, which are not. Notably, the fast and slow paths have different linearization points. 
To ensure correctness, two handshakes are required when transitioning to the slow path, as demonstrated in our proof. Interestingly, no handshakes are needed when transitioning back to the fast path. We provide an intricate correctness proof to establish appropriate linearization points for these operations and prove their correctness.

The second synchronization method we study is optimistic. In this approach modifications to the data structure are carried out with almost no cooperation, except for announcing each modification. In tandem, the \size\ operation optimistically assumes that no updates of the data structure occur concurrently and attempts to optimistically collect local per-thread size variables in order to sum them and compute the overall size. If the collection is interrupted by a concurrent modification, the collection is retried. After several unsuccessful attempts, the method transitions to a non-optimistic approach, specially designed for this circumstance.  

Finally, we investigate the utilization of lock synchronization. In this context, the objective of the synchronization is to prevent the \size\ method from executing concurrently with data structure modifications. To improve performance, we employ a reader-writer lock, with the \size\ operation acquiring the write lock and updating operations acquiring the read lock. This setup allows concurrency between the updating operations, while preventing  concurrent execution of the \size\ operation, thereby ensuring correctness.

The approaches we examine are primarily centered on computing size for linearizable data structures implementing a set or a dictionary, but the underlying principles can potentially be applied to operations other than size and to various other data types. The optimistic approach and the lock-based one may be applied to any linearizable set or dictionary, while the handshake-based approach imposes an additional requirement detailed in \Cref{section:preliminary}.
All our methodologies produce linearizable size-supporting data structures.

Regarding progress guarantees, the handshake-based methodology preserves the original progress guarantees of \ins{}, \del{} and \contains{}. Specifically, if these operations were wait-free in the original data structure, they remain wait-free in the transformed version; similarly, lock-free or obstruction-free operations retain their original guarantees. The other methodologies preserve the original progress guarantee of \contains{} but make \ins{} and \del{} blocking. 
Notably, the added \size{} operation itself is blocking in all our methodologies, unlike the wait-free guarantee for \size{} provided by \spsize{}.



We have carefully designed and implemented a concurrent \size\ method with each of the aforementioned synchronization methods (along with additional methods that did not perform well and have thus been omitted from this report). The results appear in \Cref{section:evaluation}. It turns out that there is no one-size-fits-all solution. Different scenarios call for different synchronization methods. The observed results vary according to the chosen data structure, the levels of contention, the frequency of using the \size{} operation, and by workload (read intensive or update intensive). Nevertheless, our findings align with general trends in concurrent computing. In low contention scenarios, optimistic and lock-based approaches exhibit good performance, but their efficacy diminishes when contention increases. In such cases, \spsize{} and the handshake approach work best. 

Hash table operations execute noticeably faster than BST or skip-list operations,  resulting in a higher overhead when synchronizing a \texttt{size()} computation with hash table operations.  Specifically, the average overhead of utilizing \size{} with the BST is $2.4\%$ (with \spsize{}), while for the skip-list it is $4.4\%$ (with the handshakes approach). For the hash function, a typical workloads of $5\%$ updates and $95\%$ read operations incur a similar overhead of $4\%$ (with an optimistic approach). However, in write-oriented workloads, this overhead increases to $10\%$ (with \spsize{} performing best).

Additionally, we evaluated the performance of the \size{} operation when executed concurrently with data structure updates. Similar trends emerge in this context as well, but synchronization methods that optimize the scalability of the \size{} operation may differ from those that improve the performance of data structure operations. We expect data structure performance to be a priority for users in most cases, and thus its performance will probably guide the choice of the adopted synchronization method.


In~\Cref{section:preliminary} we define the basic terms used in this work. Previous solutions are discussed in~\Cref{sec-overview-sela}. A design of \size{} with each of the three synchronization methods is described in~\Cref{section:main-work}. The evaluation appears in~\Cref{section:evaluation} and we conclude in~\Cref{section:conclusion}. We add some measurements in~\Cref{sec:max_tries,sec:zipfian graphs}, describe some additional implementation details and optimizations in~\Cref{section:implementation-details}, and bring a correctness proof for the involved handshake mechanism in \Cref{handshakes:linearizability}.

\section{Preliminaries}\label{section:preliminary}

We consider a standard asynchronous shared memory model. The shared memory consists of a collection of shared objects supporting the following primitives, which are performed atomically:
\codestyle{read}, \codestyle{write}, \codestyle{fetch-and-add}, and \codestyle{compare-and-swap}. 
\codestyle{fetch-and-add} adds the input value to the current value and returns the previous value. \codestyle{compare-and-swap} conditionally updates the value if it matches an expected value.
These atomic instructions are commonly available in many modern processors.

An execution is said to be \term{linearizable} when each of its operations appears to be completed in an instant, occurring between its invocation and response and termed its \term{linearization point}, in accordance with the sequential specification of the data structures. A concurrent data structure is \term{linearizable} if all its executions meet this criterion~\cite{herlihy1990linearizability}.

A concurrent object is said to be \term{lock-free} if, at any point in time, at least one of the threads trying to access the object makes progress within a finite number of steps, even if other threads are paused or delayed. 
Furthermore, a concurrent object framework is \term{wait-free} when any operation by any thread can be completed within a finite number of steps, without being influenced by the operational speed of other threads~\cite{herlihy1991wait}.



All our methodologies may be applied to linearizable sets and dictionaries, while the handshake-based methodology also imposes the following additional requirement (which is the same requirement that the original wait-free mechanism imposes on the data structures it may transform to size-supporting data structures~\cite{sela2021concurrentSize}):
The \del{} operation in the original data structure must perform a marking step before physically unlinking.
Crucially, the linearization point of the \del{} operation within the original data structure must be at this marking step.


A \term{snapshot} of an object provides a captured state of that object at a specific point in time. For a set, this is essentially an atomic view of all the elements currently in it.


Finally, a \term{readers-writer lock} is a synchronization primitive allowing multiple threads to read or write to shared data without interfering with each other. It is made of two locks: a read lock and an exclusive write lock. Multiple threads can hold the read lock at the same time. If a thread has the read lock, no other thread can acquire the write lock.
Only one thread can hold the write lock at any given time. If a thread has the write lock, no other thread can acquire either the read lock or the write lock.


\section{The State-of-the-art Solution}\label{section:related-work}\label{sec-overview-sela}
Let us shortly review the work of Sela and Petrank~\cite{sela2021concurrentSize} that we extend. Their approach (denoted \spsize{}) utilizes two local variables per thread, one of which tracks the number of insertions and the other the number of deletions applied by the thread to the data structure. 
Each update of the data structure begins by executing the actual update, followed by an update to the associated count (of inserts or deletes). To obtain linearizability, an operation does not take effect until the count is properly updated. If a thread encounters an inserted node, for which the inserting thread has not yet updated its count, it helps updating the count before using this node. A similar action is taken with a node marked for deletion. The result of this helping mechanism is that operations can be regarded as being linearized at the time when the thread's local count is updated. To ensure linearizability, the \size\ method needs to take a linearizable snapshot on all local counts to obtain a consistent view of them and compute the overall size of the data structure. They utilize a specialized snapshot mechanism for that, which requires adjustment of the \ins{}, \del{} and \contains{} operations to support size snapshots. Further details appear in the original paper~\cite{sela2021concurrentSize}. 

An essential implementation component in the \spsize{} approach is the \codestyle{UpdateInfo} class. An object of this class is installed by \ins{} and \del{} operations in the nodes on which they operate, posting information about themselves, to enable other operations to observe modifications being made to these nodes and detect whether they should assist updating the associated metadata and how.
Each inserted node and each node marked for deletion is associated with an \codestyle{UpdateInfo} object. Particularly, the installment of an \codestyle{UpdateIn\-fo} object in a \codestyle{deleteInfo} field in a node signifies its marking for deletion.

\section{Memory model}
Our implementations rely on Java's memory mo\-del and use low-level atomic operations to efficiently ensure correctness. In particular, we employ the \texttt{VarHandle} API to perform atomic reads and writes with appropriate visibility and ordering guarantees across threads. 
These reads and writes are performed using the \texttt{getVola\-ti\-le} and \texttt{setVolatile} operations. To perform a \codestyle{fetch-and-add} operation we use \texttt{getAndAdd}, or \texttt{getAndIncrement} to specifically add 1. 
To perform a \codestyle{compare-and-swap} we use either \codestyle{compareAnd\-Set}, which returns a boolean to indicate success, or \codestyle{compareAndExc\-ha\-nge}, which returns the previous value.
See \Cref{subsection: Memory Model} for more information on the memory model.

\section{Synchronization Methods for \size{}}\label{section:main-work}
This paper studies three synchronization methods: handshakes, optimistic, and lock-based for \size{}. For each of these synchronization methods, we design a linearizable \size{} method focusing on improving efficiency. We then evaluate all of them against the original wait-free method of~\cite{sela2021concurrentSize}. In this section, we describe the design of size with each of these synchronization methods.  

\subsection{Handshakes}\label{section:handshakes}
\subsubsection{Overview}
We start with a handshake-based approach for evaluating the size of a data structure. The underlying idea is that many workloads perform data structure operations without requiring the data structure's size. Consequently, it appears inefficient to impose an overhead continuously for a potential \size{} operation, especially when it is infrequently executed. To address this, we aim to segment the program into distinct phases, ensuring that the overhead for the \size{} operation is only incurred when it is actually executed. When no \size{} operation is triggered, the program should execute with minimal overhead. Essentially, our goal is to establish a fast path, resembling normal data structure operations when no \size{} operation is invoked, and a slow path that manages concurrent \size{} operations, incurring the necessary overhead.

Technically, the design involves creating both a fast path and a slow path and establishing a mechanism for transitioning between these execution phases. We have opted for a straightforward approach: the fast path employs the original data structure operations without incurring overhead, while the slow path executes the implementation from \cite{sela2021concurrentSize}. However, transitioning between these phases is complex and involves some overhead on both the slow and fast paths. 

The mechanism for phase transitions, particularly upon starting to execute a \size{} operation, entails notifying all active threads that a \size{} operation is imminent, prompting them to shift to executing the slow path. This can be done by halting all operations until all threads complete their currently executing operations and switch to the slow path. However, such a cooperation can incur a performance overhead  as threads pause and remain idle while waiting for all other threads to complete their operations and switch to the required mode of operation. This cost becomes acute if phase changes occur frequently. Such a blocking cooperation also foils a progress guarantee that an individual operation may carry. 


Instead, we adopt a method commonly employed in memory management known as {\em handshakes}~\cite{DoligezPortable,DoligezConcurrentHS,ImplementingOnTheFlyHS,leva01}.
This involves an initiator (e.g., a thread that is about to execute \size{}) prompting a phase change by incrementing a global counter, thereby asking all other threads to acknowledge the phase change. Each thread checks this variable before starting to execute an operation, and responds by setting its local thread counter to match the global counter. According to the counter, a thread can tell whether a \size{} operation executes concurrently. Once all threads have responded, the handshake concludes, enabling the initiator to proceed, assured that all threads are aware of the phase change.

Notably, after responding to the handshake (by setting its local counter), a thread can continue to execute operations (on the slow path), with no need to wait for other threads, avoiding stalls required in the blocking  method, and maintaining wait-freedom of the original operations. However, this setup introduces a complication. As threads respond to the handshake individually and continue to execute operations, concurrent threads operate in different modes (fast and slow paths) simultaneously. Managing these concurrent thread executions in different modes requires careful handling.
In fact, it turns out that in order to guarantee correctness, prior to initiating the execution of the \size{} operation, we need to run two handshakes. While the necessity for two handshakes might not be intuitively evident, we will demonstrate that a single handshake does not guarantee correctness in \Cref{sec-2-handshakes} below. We prove that executing two handshakes before \size{} (and no handshakes at the end of the \size{} execution) is sufficient for ensuring correctness in \Cref{handshakes:linearizability}.

It is important to disregard idle threads that are not engaged in  operating on the data structure. These threads cannot cooperate with a handshake when they are engaged in other activity, unrelated to the data structure at hand, and  
the \size{} operation does not need to wait for them to acknowledge a phase change.  
On the one hand this is beneficial due to the time saved by not waiting for all threads. On the other hand, this requires any thread that starts an operation to notify all other threads that it is not idle anymore. Such notifications (including a memory fence) incur a non-negligible cost on very fast operations such as operations on the hash table.

\subsubsection{A \size{} design with handshakes}
The process of implementing handshakes for concurrent size entails designing the fast and slow paths, along with the mechanism to transition between different phases of execution. We will run two handshakes before starting the size calculation. During the first handshake, each thread will respond and start using the slow path. In the second handshake threads will only respond, acknowledging the handshake, without taking any further action. A thread that starts an operation after the first handshake was initiated, leaving an idle state, will use the slow path, maintaining the invariant that all threads execute only in the slow path after the first handshake completes, and until the completion of the \size{} execution.  
Once the size calculation is completed, a corresponding announcement by the \size{} suffices to let the threads revert to using the fast path. In \Cref{handshakes:linearizability} we will explain why a single handshake is insufficient at the beginning of a \size{} execution. We then argue that two handshakes are enough, and that no handshakes are needed at the end of the \size{} execution.

\subsubsection{Data-structure transformation}\label{handshakes:data-structure transformation}

To implement the fast and slow paths we employ two metadata arrays: \codestyle{fastMetadataCounte\-rs} designated for fast path operations and \codestyle{metadataCounters} for slow path operations. They are both fields of a \codestyle{HandshakeSizeCalc\-ulator} object we will later describe.
When an operation of a thread $T$ executes the fast path, it is guaranteed that no \size{} operation can execute concurrently. Therefore, it is guaranteed that, during the fast path execution of the operation, the fast path metadata associated with the thread $T$ is accessed only by $T$. Consequently, no synchronization is required for updating this part of the metadata, enhancing the efficiency of the fast path.
Conversely, slow path operations are executed in a manner similar to the wait-free \spsize{}, which takes extra care to allow data structure operations and updates to the slow path metadata to run concurrently with a \size{} operation.

Our data structure transformation scheme, detailed next, is illustrated in \Cref{fig: transformed data structure with handshakes}.
For each \ins{} or \del{} operation performed on the data structure, we have it first announce starting an operation (leaving the idle mode), then execute one of two new operations---\codestyle{slow\_op} or \codestyle{fast\_op}, and eventually announce returning to idle mode.
The \codestyle{fast\_op} operation executes the original operation and then updates the metadata associated with fast operations on a successful \ins{} or \del{}.
The \codestyle{slow\_op} executes the code of the \spsize{} algorithm. 

The slow path and fast path use different methods to mark an object as deleted. The slow path (following the \spsize{} algorithm) installs an \codestyle{UpdateInfo} object in a \codestyle{deleteInfo} field in order to mark the node, whereas the fast path, following the code of the original data structure, uses a simpler mark (typically setting the value field to \nul{} or setting the next field to point to a marker node). 

Since slow and fast operations might run concurrently during the execution of the first handshake and at the end of executing \size{}, we further adjust both slow and fast operations of the data structure to treat a node as marked when either the associated \codestyle{deleteInfo} field is not \nul{} or when the node is marked according to the original data structure's marking scheme. 

This change allows both slow and fast operations on the same key to execute concurrently. To complete the adjustment, we need to address the following issue: in the \spsize{} methodology, the data structure's operations call \codestyle{this.sizeCalculator.updateMetada\-ta} to help concurrent \del{} operations whenever they encounter marked nodes that they need to unlink. Slow operations in our transformation follow the same behavior, but only for nodes that slow operations marked (by installing an \codestyle{UpdateInfo} object in a \codestyle{deleteInfo} field in the node).

We do not modify the \contains{} operation to use two different paths; it always runs in a ``slow'' mode, following the design of the \spsize{} algorithm.  

We did explore building slow and fast paths for \contains{}, but an evaluation indicated no performance enhancement.  
By exclusively employing the slow path, threads executing \contains{} can bypass the necessity to engage in handshakes, which are typically utilized for transitioning between fast and slow paths.  Consequently, this eliminates the requirement to announce their departure from the idle phase at the beginning of a \contains{} operation.  It turned out that employing a slow path for \contains{} did not exhibit a noticeable slowdown compared to using a fast path that announces non-idle status at the beginning of the operation. 

\begin{figure*}[htb]
    \centering
\begin{lstlisting}
@\underline{\textbf{class} HandshakeSizeCalculator}@:
    @\underline{HandshakeSizeCalculator()}@:@\label{code:HandshakeSizeCalculator ctor}@
        this.sizePhase = 4
        // The 3 following arrays are implicitly initialized to zeros
        this.fastMetadataCounters = new long[numThreads]
        this.metadataCounters = new long[numThreads][2]
        this.opPhase = new int[numThreads]
        this.countersSnapshot = new HandshakeCountersSnapshot()
        this.countersSnapshot.collecting.setVolatile(false)
    @\underline{fastUpdateMetadata(opKind)}@:
        tid = ThreadID.threadID.get()
        if opKind == INSERT: @\label{code:HandshakeSizeCalculator opkind check if}@
            this.fastMetadataCounters[tid].setVolatile(1+this.fastMetadataCounters[tid].getVolatile())
        else:
            this.fastMetadataCounters[tid].setVolatile(-1+this.fastMetadataCounters[tid].getVolatile())
    @\underline{setOpPhase(opPhase)}@:
        tid = ThreadID.threadID.get()
        this.opPhase[tid] = opPhase
    @\underline{setOpPhaseVolatile(opPhase)}@:
        tid = ThreadID.threadID.get()
        this.opPhase[tid].setVolatile(opPhase)
    @\underline{getSizePhase()}@:
        return this.sizePhase.getVolatile()
    @\underline{compute()}@:
        currCountersSnapshot = this.countersSnapshot.getVolatile() @\label{code: HandshakeSizeCalculator read counterSnapshot}@
        if not currCountersSnapshot.collecting.getVolatile(): @\label{code: HandshakeSizeCalculator if collecting}@
            newCountersSnapshot = new HandshakeCountersSnapshot() @\label{code:HandshakeSizeCalculator init countersSnapshots}@
            witnessedCountersSnapshot = this.countersSnapshot.compareAndExchange(currCountersSnapshot, newCountersSnapshot) @\label{code:HandshakeSizeCalculator compareAndExchange}@
            if witnessedCountersSnapshot == currCountersSnapshot: @\label{code:HandshakeSizeCalculator if compareAndExchange was successful}@
                currSizePhase = _doFirstAndSecondHandshakes() @\label{code:HandshakeSizeCalculator compute doFirstAndSecondHandshakes}@
                _collect(newCountersSnapshot) @\label{code:HandshakeSizeCalculator compute collect}@
                fastSize = this._computeFastSize()
                newCountersSnapshot.collecting.setVolatile(false) @\label{code: HandshakeSizeCalculator set collecting false}@
                c = newCountersSnapshot.computeSize(fastSize) @\label{code:HandshakeSizeCalculator compute computeSize}@
                this.sizePhase.setVolatile(currSizePhase + 2) @\label{code: HandshakeSizeCalculator compute finish handshake}@
                return c @\label{code: HandshakeSizeCalculator return size}@
            currCountersSnapshot = witnessedCountersSnapshot
        return _waitForComputing(currCountersSnapshot)
\end{lstlisting}
\caption{\codestyle{HandshakeSizeCalculator} interface methods}\label{fig:HandshakeSizeCalculator1}
\end{figure*}

\begin{figure}
\begin{lstlisting}
    @\underline{\_performHandshake(sizePhase)}@:
        for each tid:
            wait until this.opPhase[tid].getVolatile() == IDLE_PHASE or this.opPhase[tid].getVolatile() $\geq$ sizePhase @\label{perform_hanshake: wait for size phase}@
    @\underline{\_doFirstAndSecondHandshakes()}@:
        wait until (currSizePhase = sizePhase.getVolatile()) % 4 == 0 @\label{code: HandshakeSizeCalculator doFirstAndSecondHandshakes wait until}@
        this.sizePhase.setVolatile(currSizePhase+1) @\label{code: HandshakeSizeCalculator doFirstAndSecondHandshakes first inc}@ 
        _performHandshake(currSizePhase+1) @\label{code: HandshakeSizeCalculator first handshake}@
        this.sizePhase.setVolatile(currSizePhase+2) @\label{code: HandshakeSizeCalculator doFirstAndSecondHandshakes second inc}@ 
        _performHandshake(currSizePhase+2) @\label{code: HandshakeSizeCalculator second handshake}@
        return currSizePhase+2 @\label{code: HandshakeSizeCalculator doFirstAndSecondHandshakes return}@
    @\underline{\_computeFastSize()}@:
        fastSize = 0
        for each tid:
            fastSize += this.fastMetadataCounters[tid].getVolatile()
        return fastSize
    @\underline{\_waitForComputing(currCountersSnapshot)}@:
        while true:
            currentSize = currCountersSnapshot.size@\label{code: waitForComputing retrieveSize}@
            if currentSize != INVALID_SIZE:
                return currentSize
\end{lstlisting}
\caption{\codestyle{HandshakeSizeCalculator} auxiliary methods}\label{fig:HandshakeSizeCalculator2}
\end{figure}

For the size calculation we employ the \codestyle{HandshakeSizeCalcula\-tor} and \codestyle{HandshakeCountersSnapshot} objects, which include both methods from \codestyle{SizeCalculator} and \codestyle{CountersSnapshot} of the \spsize{} methodology (\codestyle{\_collect}, \codestyle{updateMetadata} and \codestyle{createUpdateIn\-fo} of \codestyle{SizeCalculator}, and \codestyle{add} and \codestyle{forward} of \codestyle{CountersSnapsh\-ot}) which we do not repeat here, and new methods that appear in \Cref{fig:HandshakeSizeCalculator1,fig:HandshakeSizeCalculator2,fig:HandshakeCountersSnapshot}.

In the \spsize{} methodology, a \size{} operation calculates the size by calling \codestyle{SizeCalculator.com\-pute}, which performs the calculation using a \codestyle{CountersSnapshot} object. It first obtains a \codestyle{Counters\-Snap\-shot} instance that has the value \codestyle{true} in its \codestyle{collecting} field (because otherwise the instance is associated with a \size{} operation whose linearization point has already passed). It then performs some collection process of the metadata values into that \codestyle{CountersS\-napshot} instance. At this point it sets the \codestyle{collecting} field of this instance to \codestyle{false} and then computes the size using this instance. The moment of setting \codestyle{collecting} to \codestyle{false} is the linearization point of the \size{} operation. Further details appear in \cite{sela2021concurrentSize}. Our \codestyle{HandshakeSizeCalculator.compute} adds transitioning from the slow path to the fast path and back as well as handling the fast metadata, as follows.   

After obtaining a \codestyle{HandshakeCountersSnapshot} instance with \codestyle{collecting=true}, if the \size{} operation was the one to install this instance in the \codestyle{HandshakeSizeCalculator}, it proceeds to initiate a sequence involving two handshakes with other threads concurrently accessing the data structure for insertion or deletion (\Cref{code:HandshakeSizeCalculator compute doFirstAndSecondHandshakes}).
In practical terms, these handshakes are facilitated through the use of two fundamental components: a \codestyle{sizePhase} field, to which \size{} operations write the phase with which they wish the other threads will synchronize, and an \codestyle{opPhase} array field, sized to accommodate all running threads, where they publish their current phase for \size{} operations to inspect.

When a thread is not actively engaged in either an \ins{} or a \del{} operation on the data structure, it is in an \codestyle{IDLE\_PHASE} status. This state essentially informs the \size{}-performing threads that they can disregard the phase of the thread in question.
Conversely, during an \ins{} or a \del{} operation, a thread indicates its activity to the \size{}-performing threads by setting its corresponding cell in the \codestyle{opPhase} array to the appropriate value: If it identifies no concurrent \size{} operation (according to a \codestyle{sizePhase} value that reflects no ongoing \size{} operations), it takes the fast path after setting its cell to \codestyle{FAST\_PHASE} using a volatile write (to make sure \size{} operations---which accordingly perform volatile reads of the \codestyle{opPhase} values---see it). Else (if it  identifies a concurrent \size{} operation), it takes the slow path after setting its cell to the phase value that was published by a concurrent \size{} operation in the \codestyle{sizePhase} field, in order to communicate its acknowledgment of the \codestyle{sizePhase}.
To identify its per-thread cells, each thread is assigned a unique, stable thread ID upon initiation, stored in the \texttt{ThreadLocal} variable \texttt{ThreadID.threadID}.
This avoids reliance on system thread IDs.
A full description of the identifier allocation mechanism appears in \Cref{subsection: Thread registration}.

Back to the \size{}-performing thread, to initiate a handshake it increments the \codestyle{sizePhase} field by 1.
Subsequently, it awaits the synchronization of all other threads with this \codestyle{sizePhase} by inspecting their respective cells in the \codestyle{opPhase} array.
Synchronization is achieved when the value in each thread's cell is either equal to \codestyle{IDLE\_PHASE} or it is greater than or equal to \codestyle{sizePhase}.

The \size{}-performing thread carries out two such handshakes one after another.

After all threads have successfully synchronized with it in the second handshake, the \size{}-performing thread can safely perform the size computation: It first collects slow metadata values into the obtained \codestyle{HandshakeCountersSnapshot} instance according to the \spsize{} scheme by calling \codestyle{\_collect}, then sums the fast metadata values by calling \codestyle{\_computeFastSize}. At this point it sets the \codestyle{collecting} field to \codestyle{false} (which constitutes the linearization point of the operation similarly to the \spsize{} method), and then completes the size calculation---computes the size derived from slow operations (by summing slow metadata snapshot values following the snapshot mechanism of the \spsize{} methodology) and adds it to the previously-computed size derived from fast operations.

Upon completion of the computation process, the \size{}-perform\-ing thread increments the \codestyle{sizePhase} field by 2, to inform the other threads that the \size{} operation has finished, and they can return to a fast path of execution in their following operations.\footnote{
The decision to increment the {\fontsize{8}{10}\ttfamily sizePhase} by 2 is a straightforward yet practical optimization. It allows the determination of the {\fontsize{8}{10}\ttfamily size} operation phase using ({\fontsize{8}{10}\ttfamily sizePhase} mod 4), instead of ({\fontsize{8}{10}\ttfamily sizePhase} mod 3). Computing the remainder of a number after division by 4 is highly efficient in hardware, involving a bitwise {\fontsize{8}{10}\ttfamily and} operation with the constant 3.}

When a \size{} operation observes another ongoing \size{} operation, it does not follow the execution scheme described thus far for \size{} operations; instead, it calls the \codestyle{\_waitForComputing} method to wait for the other \size{} to complete its computation and adopt its computed size value once it becomes available.

The handshake-based methodology preserves the original prog\-ress guarantees of the \ins{}, \del{}, and \contains{} operations, as it adds only a constant number of non-blocking instructions to their execution paths.
In contrast to the method of~\cite{sela2021concurrentSize}, however, the \size{} operation in the handshake-based approach is not wait-free. It involves blocking wait instructions during the handshakes.


\subsubsection{Optimization: size operations join the previous handshake}\label{subsubsection: handshakes optimization}
We could further improve the performance of \size{} operations in this methodology by allowing concurrent \size{} operations to join the previous handshake when possible. This improvement is possible when the previous \size{} operation has already set the collecting field of its \codestyle{HandshakeCountersSnapshot} object to false in \Cref{code: HandshakeSizeCalculator set collecting false} and has yet to exit the second handshake by incrementing the \codestyle{sizePhase} value by $2$ in \Cref{code: HandshakeSizeCalculator compute finish handshake}. In that case, a later concurrent \size{} that has successfully installed its \codestyle{HandshakeCountersSnapsh\-ot} object in \Cref{code:HandshakeSizeCalculator compareAndExchange} and has observed a \codestyle{sizePhase} value $\equiv_{4}{2}$ in \Cref{code: HandshakeSizeCalculator doFirstAndSecondHandshakes wait until} can attempt to increase the \codestyle{sizePhase} by $4$ using a \codestyle{compareAn\-dSet} operation. This modification to the \codestyle{sizePhase} value together with the altering of the write in  \Cref{code: HandshakeSizeCalculator compute finish handshake} to use a \codestyle{compareAn\-dSet} operation with an expected value of \codestyle{currSizePhase} ensures that if the \size{} that wants to join the previous handshake successfully modifies the \codestyle{sizePhase} then the previous \size{} (who was yet to execute \Cref{code: HandshakeSizeCalculator compute finish handshake}) will not modify \codestyle{sizePhase}'s value again. By successfully adding $4$ to the \codestyle{sizePhase} value, the \size{} that wants to join the previous handshake maintains a \codestyle{sizePhase} $\equiv_{4}{2}$, thus maintaining correctness as the only \ins{} and \del{} operations that may run are ones on the slow path which did not run concurrently with fast operations. This optimization will be implemented such that instead of executing \Cref{code: HandshakeSizeCalculator doFirstAndSecondHandshakes wait until}, a \size{} operation will obtain \codestyle{sizePhase} only once and then check if the obtained value is $\equiv_{4}{2}$. If the condition is not met, it will proceed as before, performing two handshakes one after the other. Otherwise, it will try to write \codestyle{sizePhase}$+4$ to \codestyle{sizePhase} using a \codestyle{compareAndSet} call with an expected value of \codestyle{sizePhase}. If successful, there is now no need for two handshakes and it can proceed directly to \Cref{code:HandshakeSizeCalculator compute collect}. If the \codestyle{compareAndSet} call fails, then the previous \size{} managed  to execute \Cref{code: HandshakeSizeCalculator compute finish handshake} and once again, \size{} has to perform two handshakes from the beginning as before. Along with these modifications, the value of the variable \codestyle{currSizePhase} in the \codestyle{\_doFirstAndSecond\-Handshakes} function should be maintained to uphold the current value of the \codestyle{sizePhase} field.

\subsubsection{Two handshake rationale}\label{sec-2-handshakes}

In the slow path of our handshake-based methodology, dependent operations help the (update) operations they rely on to update the metadata before executing their own operations. In contrast, operations in the fast path do not assist in updating metadata on behalf of the operations they depend on. The use of handshakes allows a slow path operation to execute concurrently with a fast path operation. In such cases, the slow path operation may linearize first, with the fast path operation depending on it. Since the fast operation does not help the slow one, this could lead to a situation where the metadata is updated for the second (fast) operation before being updated for the first (slow) operation.

With a single handshake, a concurrent \size{} operation might then execute concurrently with the slow path operation and linearize before the slow path updates the metadata. In this scenario, the \size{} operation might only account for the fast (second) operation while missing the (first) slow operation it depends on. 
This scenario is demonstrated in \Cref{fig:concurrent-execution-with-single-handshake-example}, where an \ins{} operation acknowledges the handshake initiated by the \size{} operation and starts running in slow mode, while a concurrent \del{} operation that started operating in fast mode before the beginning of the handshake is still running.

\ignore{
	\begin{figure}[h]
		\begingroup
		\renewcommand{\codestyle}[1]{{\fontsize{6}{8.5}\selectfont\texttt{#1}}} 
		\fontsize{6}{9.5}\selectfont
		\setlength{\unitlength}{0.065mm}
		\begin{picture}(800,230)
			\put(-250,150){\size{}():}
			\put(-250,0){\del(1):}
			\put(-250,-150){\ins(1):}
			
			\put(700,170){\line(0,-1){30}} 
			\put(700,155){\line(1,0){230}} 
			\put(930,170){\line(0,-1){30}} 
			\put(915,180){-1}
			\put(720,145){\color{purple} \line(0,-1){20}}
			\put(720,135){\color{purple} \line(1,0){200}}
			\put(920,145){\color{purple} \line(0,-1){20}}
			\put(725,105){\color{purple} compute size}
			
			\put(80,25){\line(0,-1){30}} 
			\put(80,10){\line(1,0){600}} 
			\put(680,25){\line(0,-1){30}} 
			\put(490,20){\color{teal} \line(0,-1){20}} 
			\put(300,-30){\color{teal} delete 1 from the}
			\put(320,-65){\color{teal} data structure}
			\put(630,20){\color{purple} \line(0,-1){20}} 
			\put(580,-30){\color{purple} update}
			\put(565,-65){\color{purple} metadata}
			
			\put(250,-130){\line(0,-1){30}} 
			\put(250,-145){\line(1,0){760}} 
			\put(1010,-130){\line(0,-1){30}} 
			\put(450,-135){\color{teal} \line(0,-1){20}} 
			\put(415,-185){\color{teal} insert 1 to the}
			\put(415,-220){\color{teal} data structure}
			\put(960,-135){\color{purple} \line(0,-1){20}}
			\put(910,-185){\color{purple} update}
			\put(890,-220){\color{purple} metadata}
			
		\end{picture}
		\vspace{40pt}
		\caption{\texttt{size} concurrent with an \texttt{insert} operation that ran concurrently with a dependent \texttt{delete} operation.}
		\label{fig:concurrent-execution-with-fast-path-example}
		\endgroup
	\end{figure}
} 

\begin{figure}[h]
	\begingroup
	\renewcommand{\codestyle}[1]{{\fontsize{6}{8.5}\selectfont\texttt{#1}}} 
	\fontsize{6}{9.5}\selectfont
	\setlength{\unitlength}{0.065mm}
	\begin{picture}(800,230)
		\put(-250,150){\size{}():}
		\put(-250,0){fast \del(1):}
		\put(-250,-150){slow \ins(1):}
		
		\put(120,170){\line(0,-1){30}} 
		\put(120,155){\line(1,0){810}} 
		\put(930,170){\line(0,-1){30}} 
		\put(915,180){-1}
		\put(235,165){\color{blue} \line(0,-1){20}} 
		\put(60,115){\color{blue} increment \codestyle{sizePhase} to 1}
		\put(720,145){\color{purple} \line(0,-1){20}}
		\put(720,135){\color{purple} \line(1,0){200}}
		\put(920,145){\color{purple} \line(0,-1){20}}
		\put(725,105){\color{purple} compute size}
		
		\put(80,25){\line(0,-1){30}} 
		\put(80,10){\line(1,0){600}} 
		\put(680,25){\line(0,-1){30}} 
		\put(150,20){\color{blue} \line(0,-1){20}}
		\put(120,-30){\color{blue} read}
		\put(50,-65){\color{blue} \codestyle{sizePhase}==0}
		\put(490,20){\color{teal} \line(0,-1){20}} 
		\put(300,-30){\color{teal} delete 1 from the}
		\put(320,-65){\color{teal} data structure}
		\put(630,20){\color{purple} \line(0,-1){20}} 
		\put(580,-30){\color{purple} update}
		\put(550,-65){\color{purple} fast metadata}
		
		\put(250,-130){\line(0,-1){30}} 
		\put(250,-145){\line(1,0){760}} 
		\put(1010,-130){\line(0,-1){30}} 
		\put(450,-135){\color{teal} \line(0,-1){20}} 
		\put(415,-185){\color{teal} insert 1 to the}
		\put(415,-220){\color{teal} data structure}
		\put(960,-135){\color{purple} \line(0,-1){20}}
		\put(910,-185){\color{purple} update}
		\put(850,-220){\color{purple} slow metadata}
		\put(295,-135){\color{blue} \line(0,-1){20}}
		\put(265,-185){\color{blue} read}
		\put(195,-220){\color{blue} \codestyle{sizePhase}==1} 
		
	\end{picture}
	\vspace{40pt}
	\caption{An execution with a single handshake in which the size computation is concurrent with a slow \texttt{insert} that ran concurrently with a fast dependent \texttt{delete}.}
	\label{fig:concurrent-execution-with-single-handshake-example}
	\endgroup
\end{figure}

To ensure linearizability, we ensure that when a \size{} operation computes the size, all concurrent data structure operations have their metadata adequately updated in an order that respects operation dependencies. This is achieved by having the \size{} operation execute concurrently only with slow path operations that have not previously executed concurrently with fast path operations.

After the first handshake completes, we know that all threads are operating in the slow path. However, at this point, some slow path operations may have already executed concurrently with fast path operations. Therefore, a second handshake is performed to wait for all threads to complete their previous operations. Once the second handshake completes, we know that any currently executing operation is in the slow path and has never executed concurrently with a fast path operation.

\subsection{Optimistic Approach}\label{section:optimistic}

We now move to presenting an optimistic scheme for evaluating the size of a data structure.
The main idea of this approach is to allow the \size{} operations to execute optimistically assuming that there are no concurrent updates to the data structure, with no locks and minimal interference to other operations. 
Each thread maintains one local counter that represents the number of \ins{}s it has executed minus the number of \del{}s. This counter effectively captures the impact of this thread on the size of the data structure. 
The objective of the optimistic synchronization is to read all these local counters and collectively sum them up at a time when no concurrent update is executing. Achieving this enables the computation of a linearizable size.

To support the optimistic \size{} operation, the other threads cooperate in order to allow detecting whether the \size{} operation executes without any concurrent update operations. Each thread maintains a local flag indicating whether it is presently involved in updating the data structure, and a local counter of the overall number of update operations the thread has executed. This information is compactly stored within a single word, denoted the activity counter. 
When a thread initiates an update operation, it increments its activity counter. Upon completion of the update operation, the thread increments its activity counter again. Therefore, an odd number in the activity counter indicates that the thread is currently performing an update operation, while the value of the activity counter divided by 2 reveals the count of overall number of updates executed by that thread.

A \size{} operation begins by reading the activity counters of all threads (non-atomically) to ensure that none are currently engaged in updating the data structure—that is, it verifies that all activity counters are even. If any activity counter is found to be odd, indicating that the corresponding thread is actively modifying the data structure, the \size{} operation waits until it becomes even. Once all activity counters are read as even, indicating no ongoing updates, the \size{} operation proceeds to read the metadata counters of all threads (non-atomically), sums them, and then re-reads the activity counters (non-atomically) to confirm that they have not changed. If any change is detected, the operation restarts. The fact that all activity counters were even and remained unchanged ensures that no concurrent update occurred during the metadata counter reads. Consequently, the sum of the metadata counters yields a linearizable size computation.

The issue with the method described above, in its raw form, is the potential for it to restart endlessly if the second read of the activity counters does not match the first. To address this, after multiple restarts, the \size{} operation sets a flag to temporarily suspend update operations and requests that the updating threads assist in computing the size. Once the size computation is complete, all operations resume. Although this approach effectively handles exceptional cases, it may degrade performance, particularly under high contention when update operations are frequent.


\subsubsection{Data-structure transformation}\label{optimistic: algorithm in detail}
The data structure transformation for the optimistic approach uses an \codestyle{OptimisticSizeCalcu\-lator} object whose methods appear in \Cref{fig:OptimisticSizeCalculator1,fig:OptimisticSizeCalculator2} to calculate the size. Next we bring the transformation details (the full pseudocode appears in \Cref{fig: transformed data structure with an optimistic scheme}).
Like in the other methodologies, an array named \codestyle{metadataCoun\-ters} with per-thread size metadata is maintained and updated upon a successful \ins{} or \del{} operation.
The time gap between updating the data structure and updating the \size{} metadata in \ins{} or \del{} operations can lead to non-linearizable size results for \size{} operations that observe the \size{} metadata during this period.
To prevent this, we maintain an activity counter per thread in an array named \codestyle{activityCounters}. Each thread performing an \ins{} or \del{} increments its cell in the \codestyle{activityCounters} array before making any changes to the data structure, and increments it again after updating the \size{} metadata regardless of whether the operation was successful or not. Using this activity counter array, a \size{} operation can determine whether the metadata was updated during its execution, and if so, it can retry the operation, as follows.

\begin{figure}

\begin{lstlisting}
@\underline{\textbf{class} OptimisticSizeCalculator}@:
    @\underline{OptimisticSizeCalculator()}@:@\label{code:OptimisticSizeCalculator ctor}@
        MAX_TRIES = 3
        this.metadataCounters = new long[numThreads]
        this.activityCounters = new long[numThreads]
        this.awaitingSizes = 0
        this.sizeInfo = new SizeInfo()
    @\underline{incrementActivityCounter()}@:
        tid = ThreadID.threadID.get()
        this.activityCounters[tid].setVolatile(1+this.activityCounters[tid].getVolatile())
    @\underline{helpSize()}@:
        if this.awaitingSizes.getVolatile() == 0: return@\label{code:check awaitingSizes}@
        currentSizeInfo = this.sizeInfo.getVolatile()
        while true:
            if currentSizeInfo.size.getVolatile() != INVALID_SIZE:
                break
            size = _tryComputeSize()
            if size != INVALID_SIZE:
                activeSizeInfo.size.compareAndSet(INVALID_SIZE, size)
                break
    @\underline{updateMetadata(opKind)}@:
        tid = ThreadID.threadID.get()
        if opKind == INSERT: @\label{code:OptimisticSizeCalculator opkind check if}@
            this.metadataCounters[tid].setVolatile(1+this.metadataCounters[tid].getVolatile())
        else:
            this.metadataCounters[tid].setVolatile(-1+this.metadataCounters[tid].getVolatile())
    @\underline{computeSize()}@:
        count = 0
        <activeSizeInfo, isReturnableSizeInfo> = _obtainActiveSizeInfo()
        while true:
            if (size = activeSizeInfo.size.getVolatile()) != INVALID_SIZE:
                if isReturnableSizeInfo: break
                else:
                    <activeSizeInfo, _> = _obtainActiveSizeInfo()
                    isReturnableSizeInfo = true
            if count == MAX_TRIES:
                this.awaitingSizes.getAndAdd(1)@\label{code:awaitingSizes inc}@
            if count <= MAX_TRIES:
                count++
            size = _tryComputeSize()
            if size != INVALID_SIZE:
                activeSizeInfo.size.compareAndSet(INVALID_SIZE, size)
                break
        if count == MAX_TRIES + 1:
            this.awaitingSizes.getAndAdd(-1)@\label{code:awaitingSizes dec}@
        return size
\end{lstlisting}
\caption{\codestyle{OptimisticSizeCalculator} interface methods}\label{fig:OptimisticSizeCalculator1}
\end{figure}

\begin{figure}
\begin{lstlisting}
		@\underline{\_readActivityCounters()}@:
			status = new long[numThreads]
			for each tid:
				wait until ((status[tid] = this.activityCounters[tid].getVolatile())%2 == 0) @\label{code: optimistic wait for even numbers}@
			return status
		@\underline{\_retryActivityCounters(status)}@:
			for each tid:
				if status[tid] != this.activityCounters[tid].getVolatile():
					return false
			return true
	    @\underline{\_tryComputeSize()}@:
	        status = _readActivityCounters()@\label{code:read activityCounters first time}@
	        sum = 0@\label{code:sum counters start}@
	        for each tid:
            	sum += this.metadataCounters[tid].getVolatile()@\label{code:sum counters end}@
        	if _retryActivityCounters(status):@\label{code:read activityCounters second time}@
            	return sum
        	return INVALID_SIZE
		@\underline{\_obtainActiveSizeInfo()}@:
			currentSizeInfo = this.sizeInfo.getVolatile()
			if currentSizeInfo.size.getVolatile() == INVALID_SIZE:
				activeSizeInfo = currentSizeInfo
				isNewlyInstalledSizeInfo = false
			else:
   				isNewlyInstalledSizeInfo = true
   				newSizeInfo = new SizeInfo()
				witnessedSizeInfo = this.sizeInfo.compareAndExchange(currentSizeInfo, newSizeInfo) @\label{code: install size object}@
				if witnessedSizeInfo == currentSizeInfo:
					activeSizeInfo = newSizeInfo
				else:
					activeSizeInfo = witnessedSizeInfo
			return <activeSizeInfo, isNewlyInstalledSizeInfo>
    \end{lstlisting}
    \caption{\codestyle{OptimisticSizeCalculator} auxiliary methods}\label{fig:OptimisticSizeCalculator2}
\end{figure}

To calculate the size, a \size{} operation calls the \codestyle{\_tryComputeSize} method, which starts by making a copy named \codestyle{status} of the \codestyle{activ\-ityCounters} array (\Cref{code:read activityCounters first time}). It is important to note that this copy is not obtained using a snapshot mechanism. For any obtained odd value, which means that the corresponding thread is executing an \ins{} or a \del{} operation, the cell in the \codestyle{status} array is re-read until obtaining an even activity counter value. Once obtaining a \codestyle{status} array with no odd values, the \size{} operation proceeds to calculating the size by summing up the values in the \codestyle{metadataCounters} array (\Cref{code:sum counters start,code:sum counters end}).
Then the \size{} operation accesses the \codestyle{activityCounters} array again and compares its values with the values previously obtained in the \codestyle{status} array (\Cref{code:read activityCounters second time}). If they do not match, it restarts. Otherwise, the \size{} operation finishes and returns the computed size.

To prevent the \size{} operation from restarting indefinitely, we set a limit named \codestyle{MAX\_TRIES} on its number of retries, which determines the maximal number of attempts the \size{} will go through before making concurrent \ins{} and \del{} operations assist it. Once this limit is reached, the \size{} operation increments a counter called \codestyle{awaitingSizes} (\Cref{code:awaitingSizes inc}), which it will later decrement before it returns (\Cref{code:awaitingSizes dec}).
The \ins{} and \del{} operations check this counter before they start operating. In case its value is positive, they help the \size{} operation by trying to compute the size themselves in a similar fashion to \size{} operations---by obtaining \codestyle{activityCounters} values before and after the computation (see the \codestyle{helpSize} method in \Cref{fig:OptimisticSizeCalculator1}).
The \codestyle{MAX\_TRIES} variable has a big effect on the transformed data structure's performance. If it is too small, \ins{} and \del{} operations may be interrupted frequently by \size{} operations requiring them to help before performing their operation and therefore harming their performance. If it is too big, \size{} operations may take a long time to complete, deteriorating the performance of \size{} operations.

Helping a \size{} operation compute the size (both by \ins{} and \del{} operations and by other \size{} operations) is coordinated using a shared object named \codestyle{SizeInfo}, which has a single field named \codestyle{size} initialized to \codestyle{INVALID\_SIZE} and intended to hold the result of a \size{} operation. \size{} operations install such an instance in \codestyle{OptimisticSizeCalculator.sizeInfo}, and concurrent \size{}, \ins{} and \del{} operations that observe an installed instance with \codestyle{INVALID\_SIZE} size value attempt to compute the size and write the obtained size onto the size field.
The reason a \size{} operation needs to obtain a \codestyle{SizeInfo} instance, installed in \codestyle{OptimisticSizeCalculator.sizeInfo} by itself or by a concurrent \size{}, is to be able to retrieve from it a size value computed by another thread, as the \size{} operation might keep failing to obtain two identical copies of even activity counters and compute a correct size on its own. After obtaining a \codestyle{SizeInfo} instance, the \size{} operation keeps attempting to obtain two such activity counters copies and compute the size in between. On a successful attempt, it returns the computed value while also writing it to the obtained \codestyle{SizeInfo} instance for helping others. On a failing attempt, if another thread succeeded and wrote its computed size to the \codestyle{SizeInfo} instance, it returns this computed size.
However, a size written by another thread to the first \codestyle{SizeInfo} instance obtained by \size{} may not be returned, since it might have been computed (by summing the \codestyle{countersMetadata} values) before this \size{}'s interval (and so the \size{} operation would have been linearized outside its interval).
Once observing that the \size{} field in the first obtained \codestyle{SizeInfo} instance is set, a new instance should be installed and obtained, and the size value---that will be later computed and written to it---may be legally returned.

The optimistic methodology does not maintain the progress guarantees of \ins{} and \del{} due to the blocking wait in \Cref{code: optimistic wait for even numbers}. It does maintain them for the \contains{} operation as it does not modify it. 

\subsection{Locks} \label{section:locks}
In this section, we describe the final synchronization method studied: lock-based synchronization. As detailed in \Cref{section:preliminary}, locks provide mutual exclusion by allowing only one thread to hold the lock at any given time. However, simply blocking all update operations to let only one execute at a time could be detrimental to scalability and performance. To address this, we employ advanced locks known as read-write locks, which enable multiple reader threads to execute concurrently, whereas a writer thread executes alone, preventing any other thread (reader or writer) from acquiring the lock concurrently with a writer thread that holds the writer lock.

Using these locks, all update operations acquire the reader lock, enabling them to execute concurrently. The \size{} operation acquires the writer lock, ensuring it executes alone without concurrent updates. To facilitate quick execution of \size{} and clear the path quickly for other operations, the updating threads maintain their local metadata counters. The metadata allows the \size{} operation to quickly execute, as it only needs to read the metadata counters of all threads (rather than traversing the whole data structure to count elements),
and it can then release the write lock, allowing all threads to resume execution. The \size{} operation returns the sum of all metadata counters. Moreover, concurrent \size{} executions cooperate, allowing one execution to perform the size computation and others to utilize the result of this computation.

For lack of space, the details of this approach along with pseudo-code are relegated to \Cref{app-locks-details}. 

\ignore{
\subsubsection{Data-structure transformation}
Next we detail the data structure transformation to make it support our lock-based size mechanism (the full transformation pseudocode appears in \Cref{fig: transformed data structure with a readers-writer lock}).
We add a readers-writer lock to the data structure in the form of a field named \codestyle{readWriteLock} placed in a \codestyle{LocksSizeCalculator} object (see \Cref{fig:LocksSizeCalculator} for its full method pseudocode). Different implementations of such a lock can be used; we used Java's \texttt {StampedLock} class from the \texttt{java.util.concurrent.locks} package in our evaluation as it provided the best results out of the tested lock implementations.
Additionally, we add an array named \codestyle{metadataCount\-ers} to the  \codestyle{LocksSizeCalculator} object, with a cell per thread to keep track of the size metadata for each thread.

An \ins{} operation starts with a search to find the insertion point. If an unmarked node with the required key is already found, it returns a failure. Otherwise, the read lock is acquired by invoking the \codestyle{readLock()} method on the \codestyle{readWriteLock} object. Following this, an insertion attempt is executed as in the original data structure. If it concludes successfully, the current thread's cell in the \codestyle{metadataCounters} array is incremented by 1. To wrap up the process, the read lock is released by calling the \codestyle{readUnlock()} method on the \codestyle{readWriteLock} object, and the result of the insertion attempt is returned.

Similarly, a \del{} operation begins by searching for a node with the key it wishes to delete. If such a node is not located, the operation promptly returns a failure. However, if found, the read lock is acquired by invoking the \codestyle{readLock()} method on the \codestyle{readWrite\-Lock} object. The operation then advances to execute a deletion attempt like in the original data structure. If successfully completed, the current thread's cell in the \codestyle{metadataCounters} array is decremented by 1. Finally, the read lock is unlocked by calling the \codestyle{read\-Unlock()} method on the \codestyle{readWriteLock} object, and the outcome of the deletion attempt is returned.

The \contains{} operation remains as in the original data structure. 
Lastly, the \size{} operation sums the values of all the cells in the \codestyle{metadataCounters} array. To do so, the write lock is acquired by invoking the \codestyle{writeLock()} method on the \codestyle{readWriteLock} object. The operation then iterates over the array and sums the values of all the cells. Once the summation is complete, the write lock is released using the \codestyle{writeUnlock()} method on the \codestyle{readWriteLock} object and the result of the summation is returned. The acquisition of the write lock ensures that no \ins{} or \del{} operation is in an inconsistent state while the summation is executed. 

The updates and summation of the \codestyle{metadataCounters} array are not executed using a special snapshot algorithm. This is because due to the mutual exclusion guaranteed by the acquisition of the write lock, when the \size{} operation is accessing the array, no other thread can update it. This makes a simple pass over the array sufficient to correctly compute the size.

To allow multiple \size{} operations to be performed concurrently in an efficient manner, we place a field holding a shared object of type \codestyle{SizeInfo} in the \codestyle{LocksSizeCalculator} object, which has a single field for holding the computed size. At the start of a \size{} operation, it checks if the \codestyle{SizeInfo} instance currently installed in that field has a valid size value written to it. If not, the operation waits until a valid size value is written to the \codestyle{SizeInfo} instance and then returns that value. Otherwise, the operation attempts to replace the existing \codestyle{SizeInfo} instance with a new one with a size field initialized to \codestyle{INVALID\_SIZE} using \codestyle{compareAndExchange}. If the \codestyle{compareAndExchange} fails, the operation waits until a valid size value is written to the \codestyle{SizeInfo} instance and then returns it. If the \codestyle{compareAndExchange} succeeds, the \size{} operation is responsible for computing the size by acquiring the write lock, summing the metadata array, releasing the write lock and writing the computed size value into the \codestyle{SizeInfo} instance; it then returns the computed size. 

}

\section{Evaluation}\label{section:evaluation}
We implemented all of the presented methodologies for computing size in Java, closely corresponding to the algorithms and pseudocode described across \Cref{section:handshakes,section:optimistic,section:locks}. This implementation includes all described optimizations in \Cref{subsubsection: handshakes optimization,subsection: general optimizations}. The code for the data structures and the measurements is available at \cite{artifactSyncMethodsForConcurrentSize}.
In this section, we present the evaluation of all methodologies compared to the methodology from \cite{sela2021concurrentSize}. Two primary aspects were chosen for testing: (1) the additional overhead each methodology incurs on the operations of the original data structure and (2) the performance of the \size{} operation, evaluated by testing its scalability. Finally, we try to give recommendations on which methodology should be used in each scenario.

\paragraph{Platform.} 
All experiments were executed on a system operating on Linux (Ubuntu 20.04.5 LTS), powered by two Intel(R) Xeon(R) Gold 6338 CPUs @2.00GHz, each with 64 threads, summing up to a total of 128 threads. The system is equipped with 32GB of RAM. The methodologies were implemented using Java, employing OpenJDK version 21. We used the @Contended annotation on some shared fields to reduce false sharing, and enabled it with -XX:-RestrictContended -XX:ContendedPaddingWidth=64. As in \cite{sela2021concurrentSize}, the G1 garbage collector was deployed and the flags -server, and -Xms31G were utilized to improve performance and minimize disruption of Java's garbage collection.

\paragraph{Data structures.} We evaluated the methodologies on three different data structures: SkipList, Binary Search Tree (BST) and Hash\-Table. The implementations for the baseline data structures are taken from the public implementation of \cite{sela2021concurrentSize}, available in~\cite{artifactConcurrentSize}. These implementations in turn are based on prior work. The SkipList builds on Java's \texttt{ConcurrentSkipListMap} from the \texttt{java.util.con\-current} package in Java SE 18. The BST builds on Brown's implementation \cite{brownJava} of the lock-free binary search tree of \cite{ellen2010non} that places elements in leaf nodes. The HashTable was implemented in \cite{sela2021concurrentSize} based on the linked list in the base level of Java's \texttt{Concurrent\-SkipList\-Map}.
Since the BST implementation in \cite{brownJava} does not linearize the \del{} operation at the marking step, to comply with the restrictions of the handshake-based methodology, we used a variant of BST that linearizes the deletion at the marking step as in \cite{sela2021concurrentSize}. Since the lock-based and optimistic methodologies do not pose that restriction, we used the unmodified BST implementation from \cite{brownJava} for these transformations.

\paragraph{Methodology.} For the most part, we use the same testing methodology as in \cite{sela2021concurrentSize}. This methodology involves initializing the data structure with 1M items prior to each experiment. Subsequently, two distinct workloads are executed: an update-heavy workload, comprising 30\% \ins{} operations, 20\% \del{} operations, and 50\% \contains{} operations; and a read-heavy workload, consisting of 3\% \ins{} operations, 2\% \del{} operations, and 95\% \contains{} operations. These workloads align with the recommended read rates outlined in the \textit{Yahoo! Cloud Serving Benchmark} (YCSB)~\cite{cooper2010benchmarking}. YCSB also proposes a $100\%$-read workload; however, this scenario is less pertinent to our case as the likelihood of \size{} calls on a data structure that remains unchanged is negligible.
The results that correspond to the read-heavy workload are displayed on the left side of \Cref{fig:SkipList overhead,fig:BST overhead,fig:HT overhead,fig:size scalability SL,fig:size scalability BST,fig:size scalability HT,fig:MAX_TRIES SL,fig:MAX_TRIES BST,fig:MAX_TRIES HT}, while those related to the update-heavy workload are presented on the right side.

The keys utilized for operations during the experiment as well as for the initialization of the data structure, are selected uniformly at random from a specified range $[1, r]$ like in \cite{sela2021concurrentSize}. The value of $r$ is determined to ensure the target size of the data structure is maintained. Furthermore, in all experiments, the type of the subsequent operation is determined iteratively based on the specified update-heavy or read-heavy workload proportions.
Each experiment involves the concurrent execution of $w$ workload threads, which engage in \ins{}, \del{}, and \contains{} operations according to the characteristics of the workload, alongside $s$ \size{} threads, which repeatedly invoke the \size{} operation with a delay between each two invocations. We set the delay time between \size{} operations in overhead measurements to either $0$ or $700 \si{\micro\second}$ (microseconds) to represent continuous or occasional invocations of \size{}. In the rest of the measurements we kept the delay at $0$. 
We chose $700 \si{\micro\second}$ to represent an execution of \size{} at about $10\%$ of the clock time, depending on the methodology used.
For baseline algorithms, only $w$ workload threads are employed.
The values of $w$ and $s$ vary across experiments, with the constraint that $w+s$ is predominantly chosen as a power of $2$. Experiments are run for $5$ seconds. Each reported data point in the graphical representations is the average outcome of 10 runs, following an initial warm-up phase consisting of 5 preliminary runs to stabilize the Java virtual machine. 
To reduce the variance between experiments, we disabled hyper-threading, leaving us with $64$ threads to utilize. Additionally, in experiments involving up to $32$ threads, we employed the "taskset --cpu-list 0-31" command to ensure that the entire experiment was executed on a single CPU node to reduce variability.

\subsection{Overhead of \size{}}\label{sec-overheads}
The graphs in \Cref{fig:SkipList overhead,fig:BST overhead,fig:HT overhead} present throughput measurements of the various synchronization methods with the skip list (\Cref{fig:SkipList overhead}), the BST (\Cref{fig:BST overhead}), and the hash table (\Cref{fig:HT overhead}). 
Specifically, for each data structure we evaluate the original version of the data structure, the \spsize{} method of~\cite{sela2021concurrentSize} (denoted {\em SP}), the optimistic method (denoted {\em optimistic}), the handshake method (denoted {\em handshake}), and the lock-based method (denoted {\em stampedLock}). Each figure contains three rows. The first row presents an execution with no concurrent \size{} execution, representing just the overhead for having size available. The second row presents an execution with a continuous execution of \size{} by a concurrent thread, representing the overhead for always cooperating with the execution of \size{}. Lastly the third row presents an execution of operations when the \size{} operation occasionally runs concurrently. As stated above, we let the thread running \size{} execute a delay of $700 \si{\micro\second}$ after each \size{} execution, to represent a cooperation with a \size{} method at approximately $10\%$ of the time, depending on the efficiency of \size{} with the specific synchronization method. 

\begin{figure*}[htbp]
    \centering
    \medskip
    \textit{Read heavy}\quad
    \includegraphics[height=.02\textwidth]{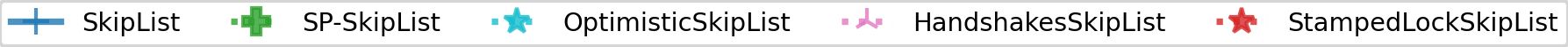}\quad
    \textit{Update heavy}\par
    \medskip
    \text{Without a concurrent \size{} thread}\par
    \smallskip
    \includegraphics[width=.45\textwidth,trim={0 0 0 .1cm}]{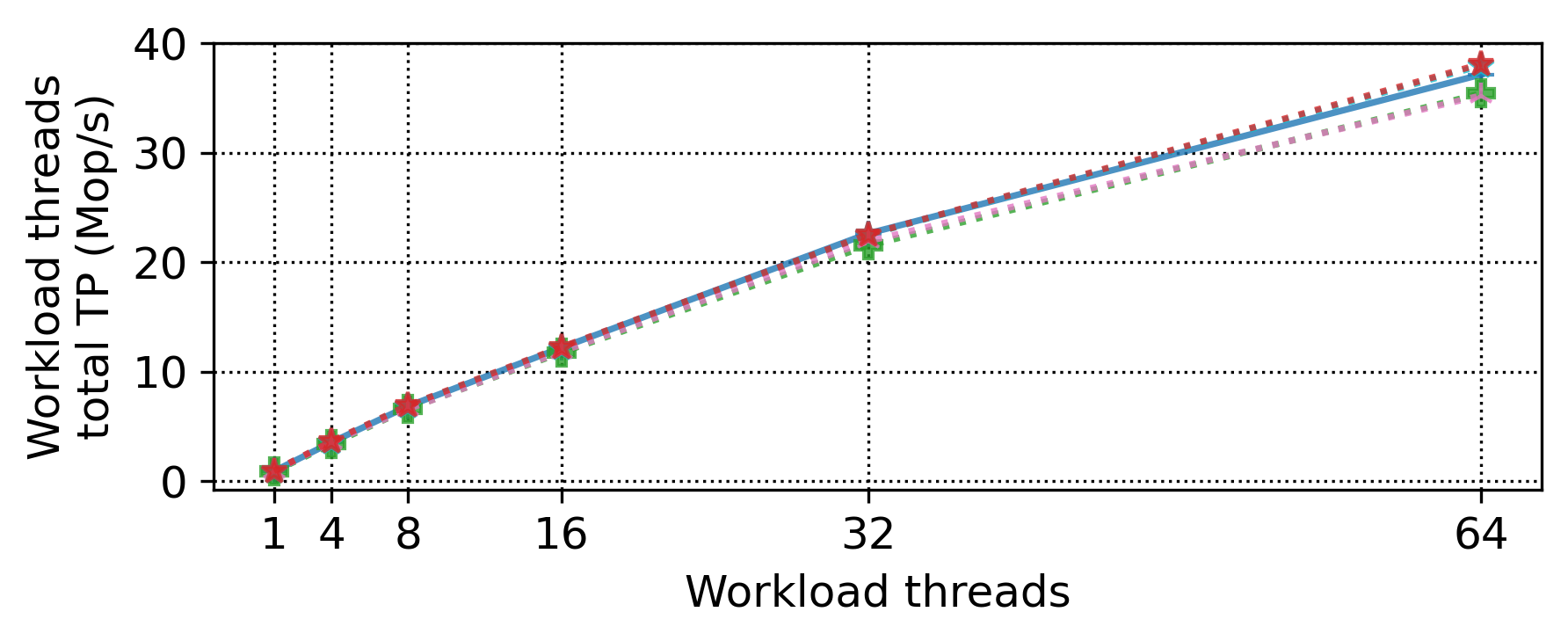}\hspace{2.5em}
    \includegraphics[width=.45\textwidth,trim={0 0 0 .1cm}]{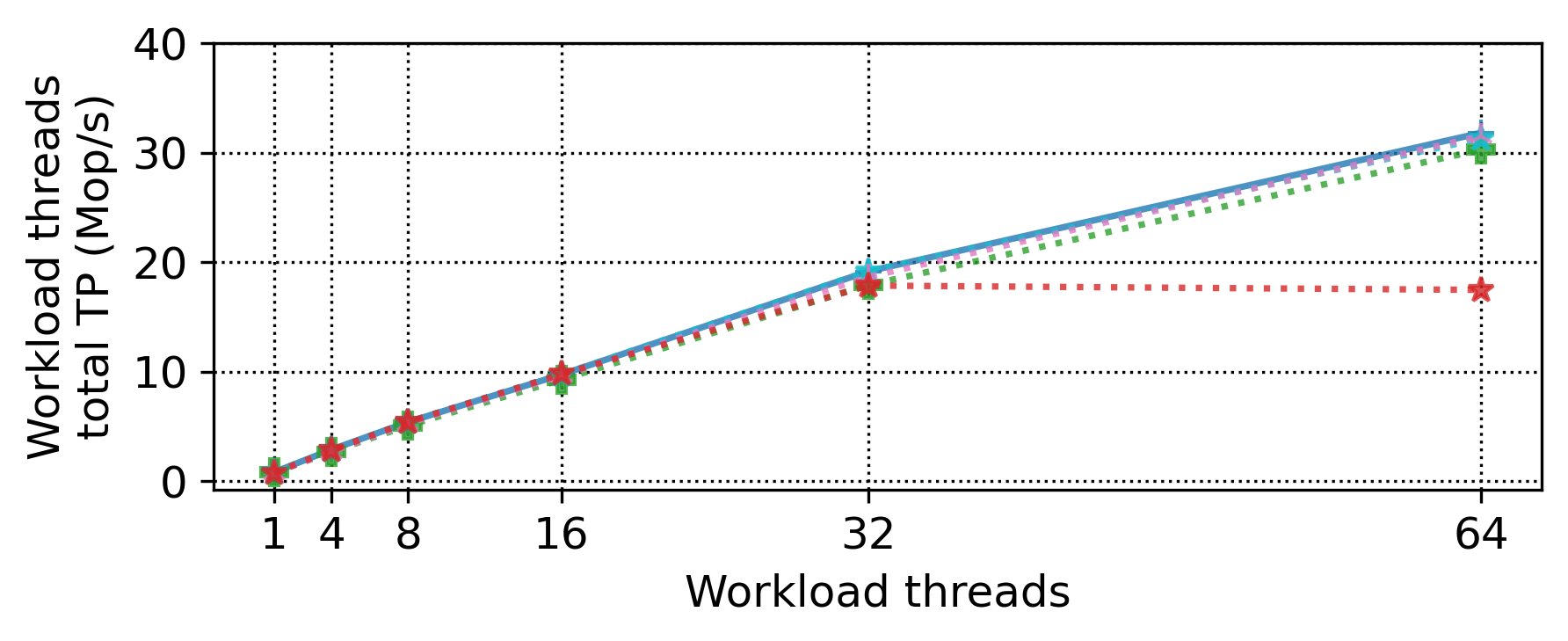}\par
    \includegraphics[width=.45\textwidth,trim={0 0 0 .2cm}]{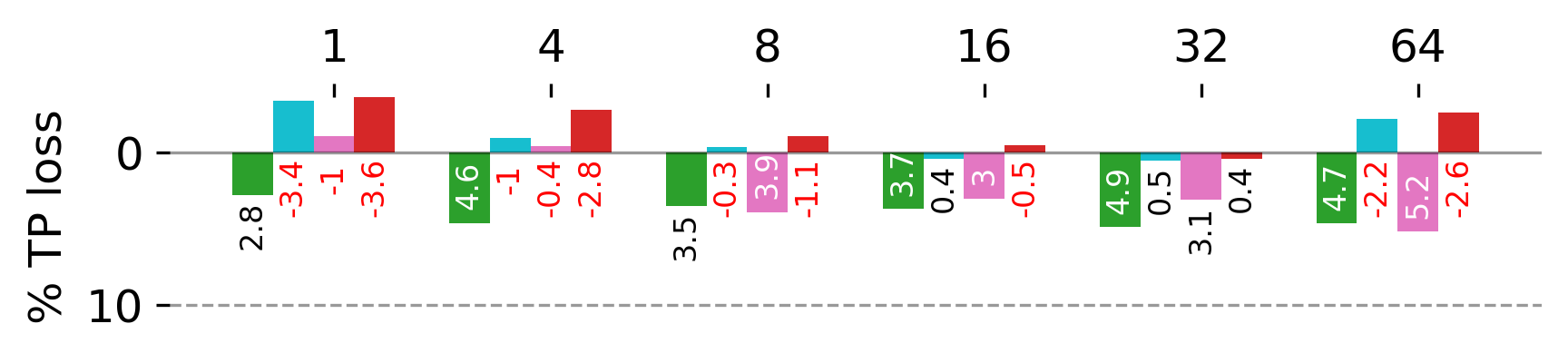}\hspace{2.5em}
    \includegraphics[width=.45\textwidth,trim={0 0 0 .2cm}]{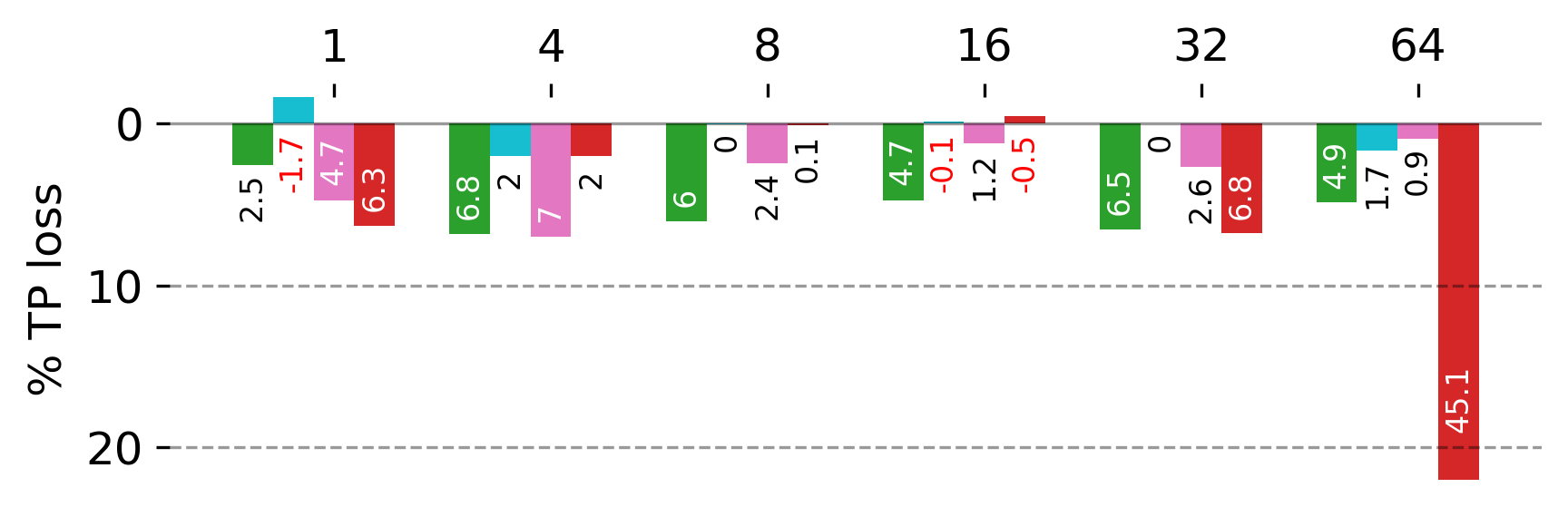}\par
    \medskip
    \text{With a concurrent \size{} thread and no delay}\par
    \includegraphics[width=.45\textwidth,trim={0 0 0 .1cm}]{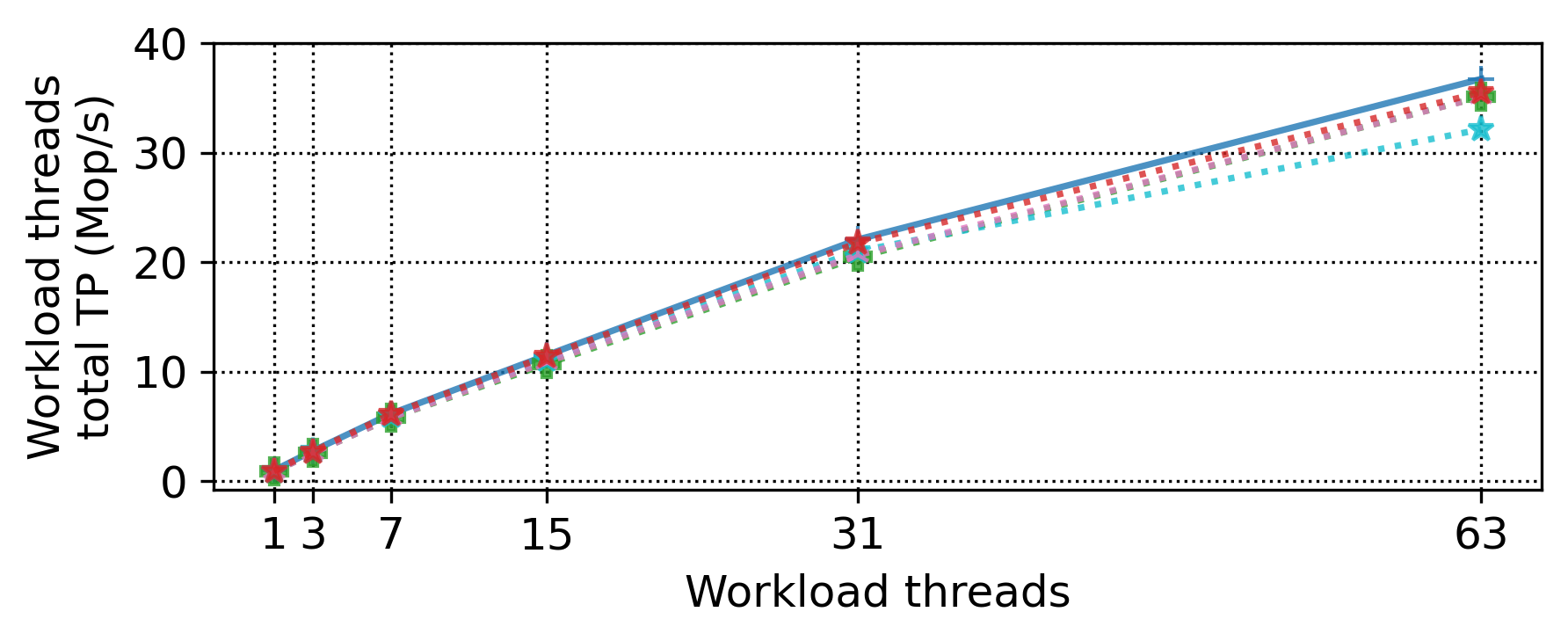}\hspace{2.5em}
    \includegraphics[width=.45\textwidth,trim={0 0 0 .1cm}]{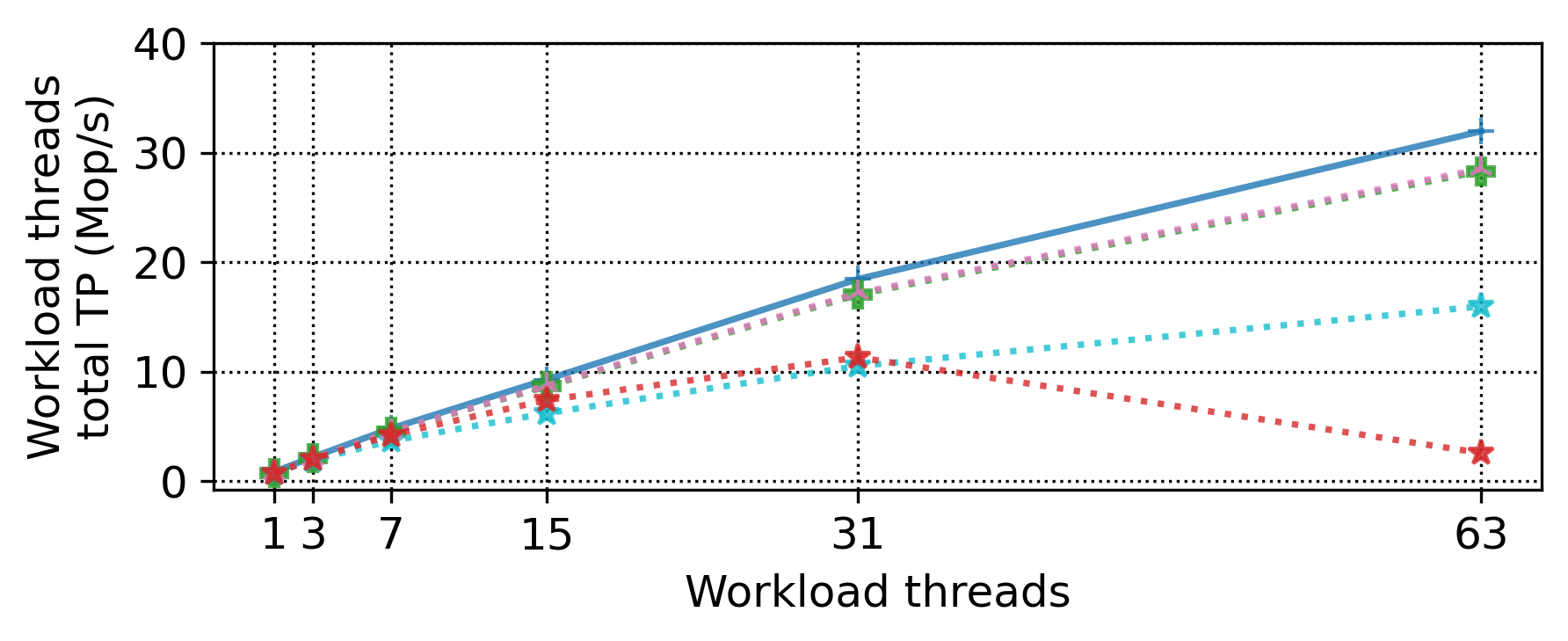}\par
    \includegraphics[width=.45\textwidth,trim={0 0 0 .2cm}]{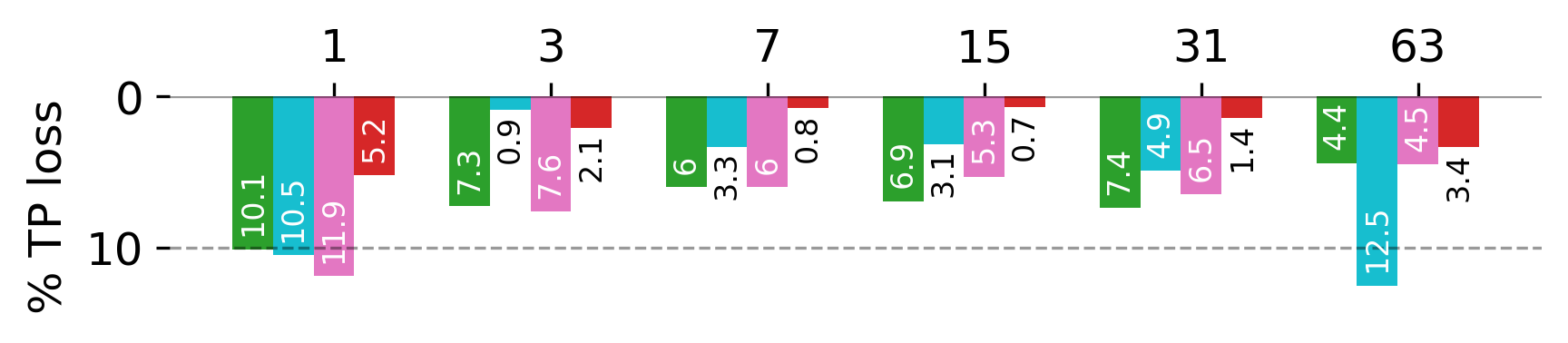}\hspace{2.5em}
    \includegraphics[width=.45\textwidth,trim={0 0 0 .2cm}]{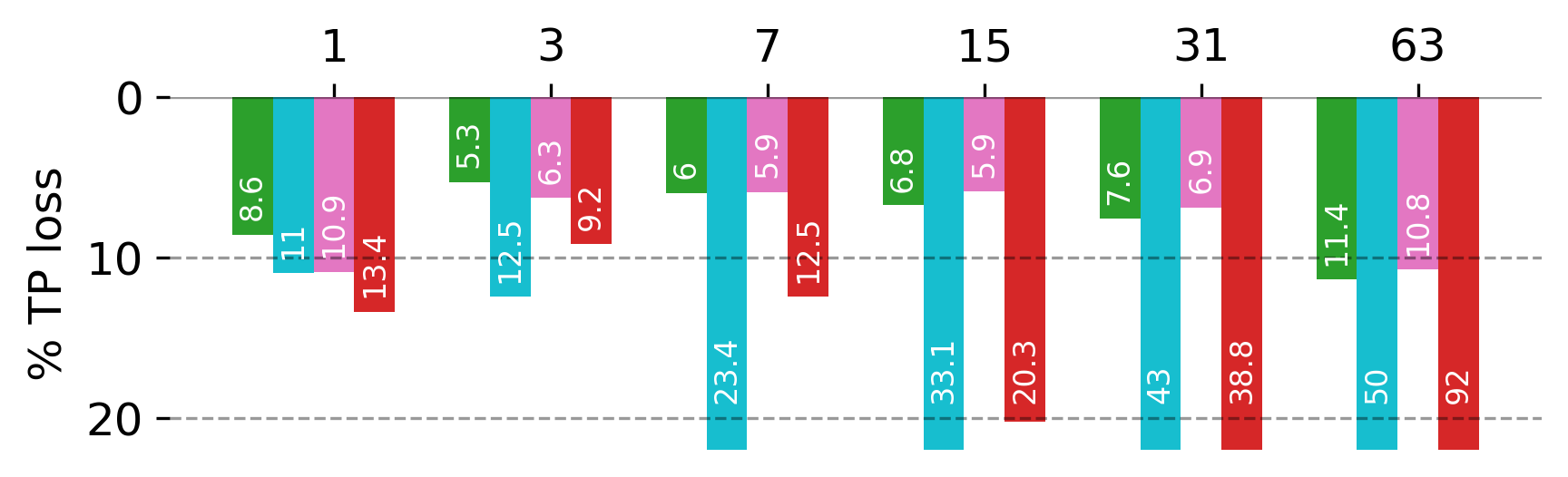}\par
    \medskip
    \text{With a concurrent \size{} thread and 700 \si{\micro\second} delay}\par
    \includegraphics[width=.45\textwidth,trim={0 0 0 .1cm}]{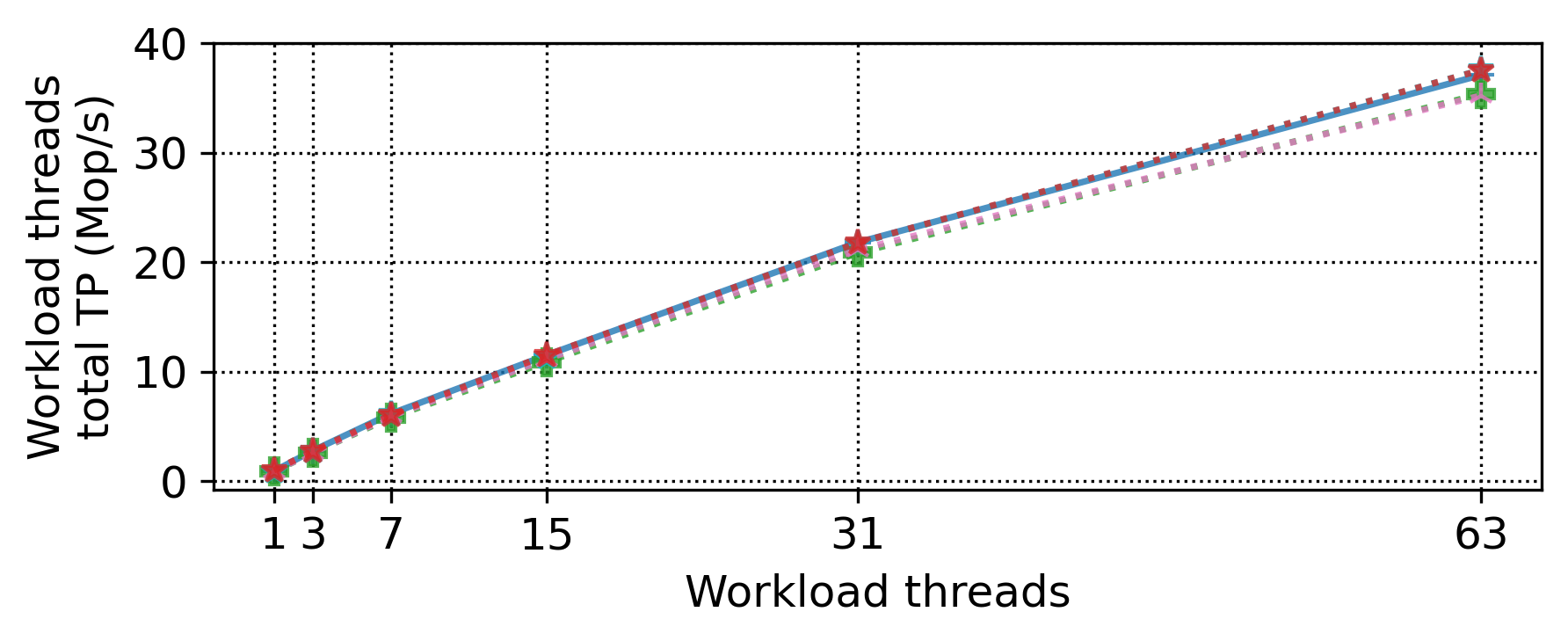}\hspace{2.5em}
    \includegraphics[width=.45\textwidth,trim={0 0 0 .1cm}]{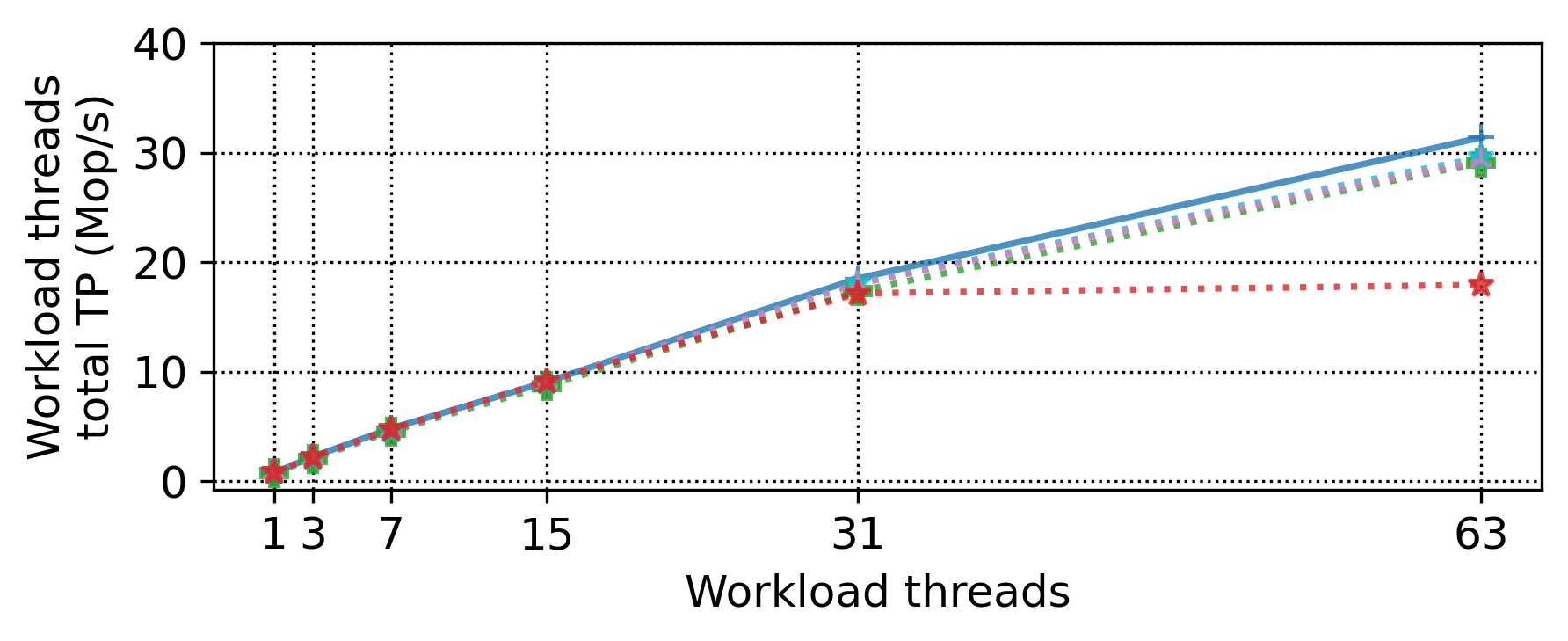}\par
    \includegraphics[width=.45\textwidth,trim={0 0 0 .2cm}]{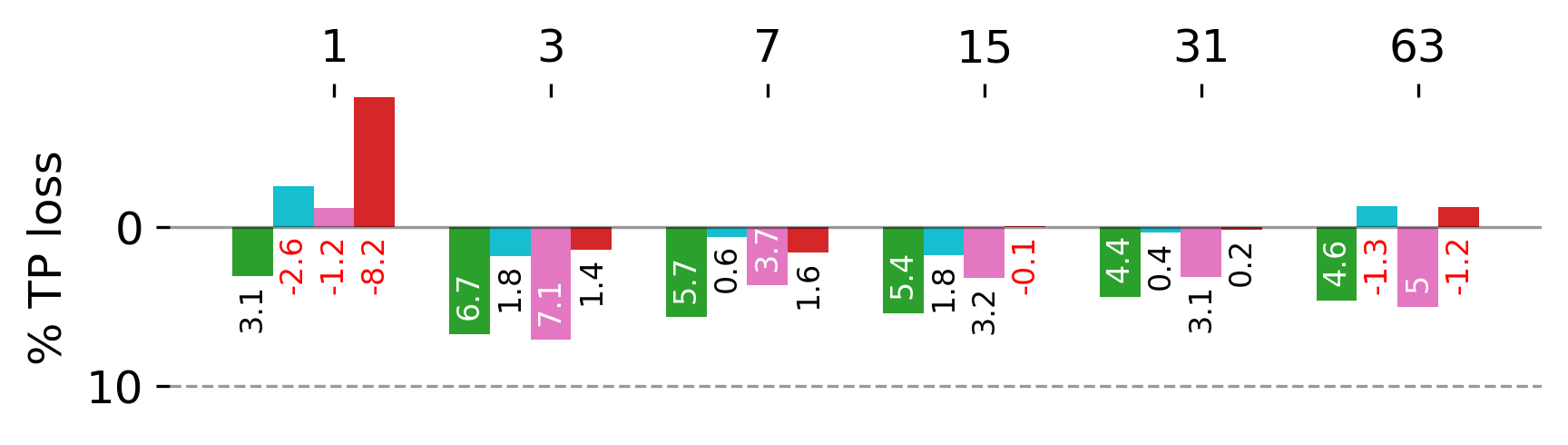}\hspace{2.5em}
    \includegraphics[width=.45\textwidth,trim={0 0 0 .2cm}]{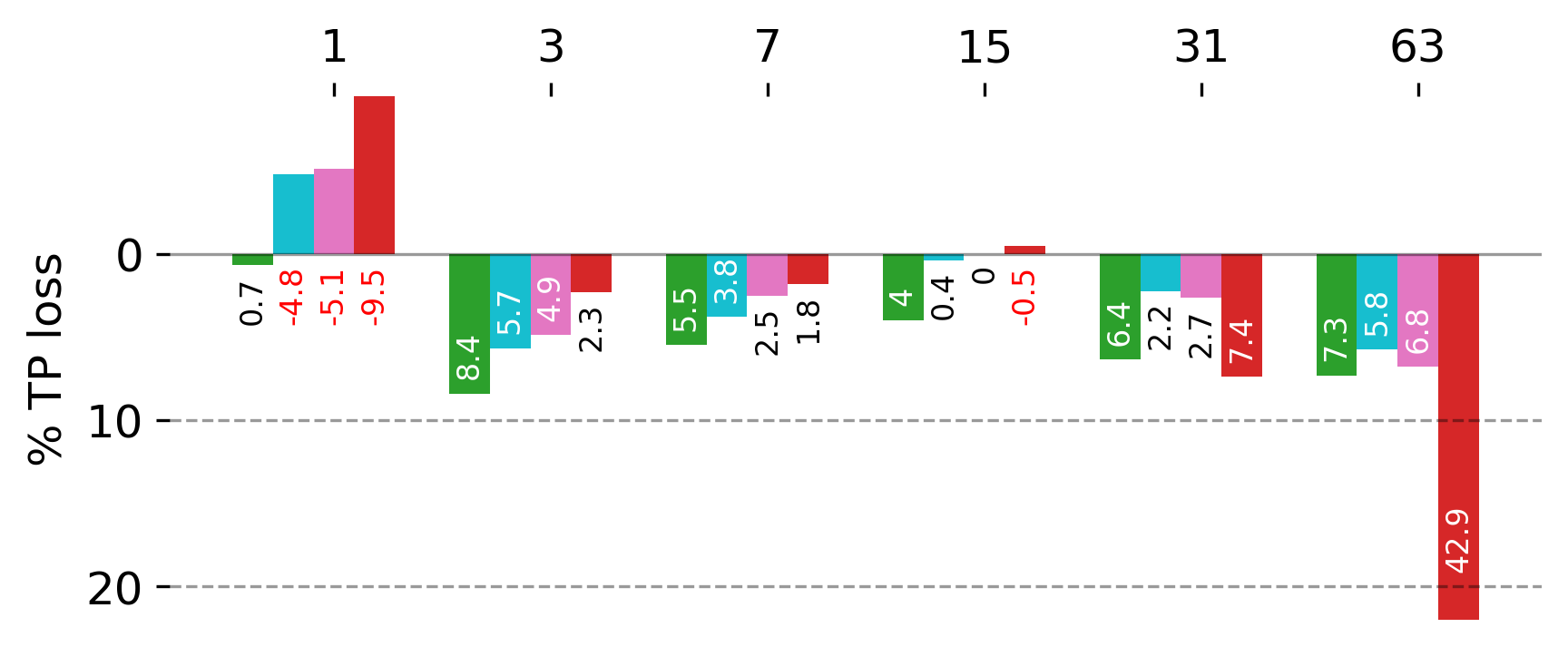}\par
    \caption{Overhead on skip list operations}
    \label{fig:SkipList overhead}
\end{figure*}

\begin{figure*}[htbp]
	\centering
	\medskip
	\textit{Read heavy}\quad\quad\quad
	\includegraphics[height=.02\textwidth]{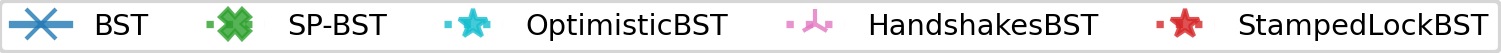}\quad\quad\quad
	\textit{Update heavy}\par
	\medskip
	\text{Without a concurrent \size{} thread}\par
        \smallskip
	\includegraphics[width=.45\textwidth,trim={0 0 0 .1cm}]{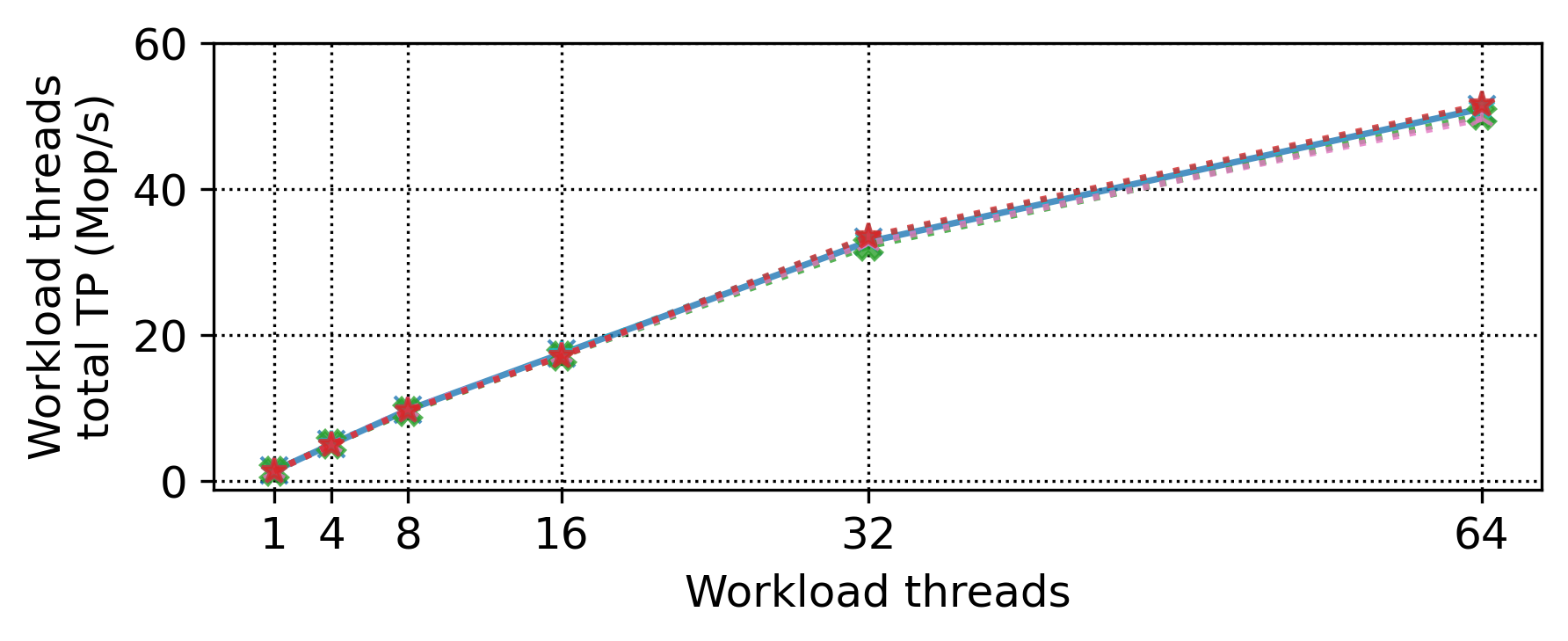}\hspace{2.5em}
	\includegraphics[width=.45\textwidth,trim={0 0 0 .1cm}]{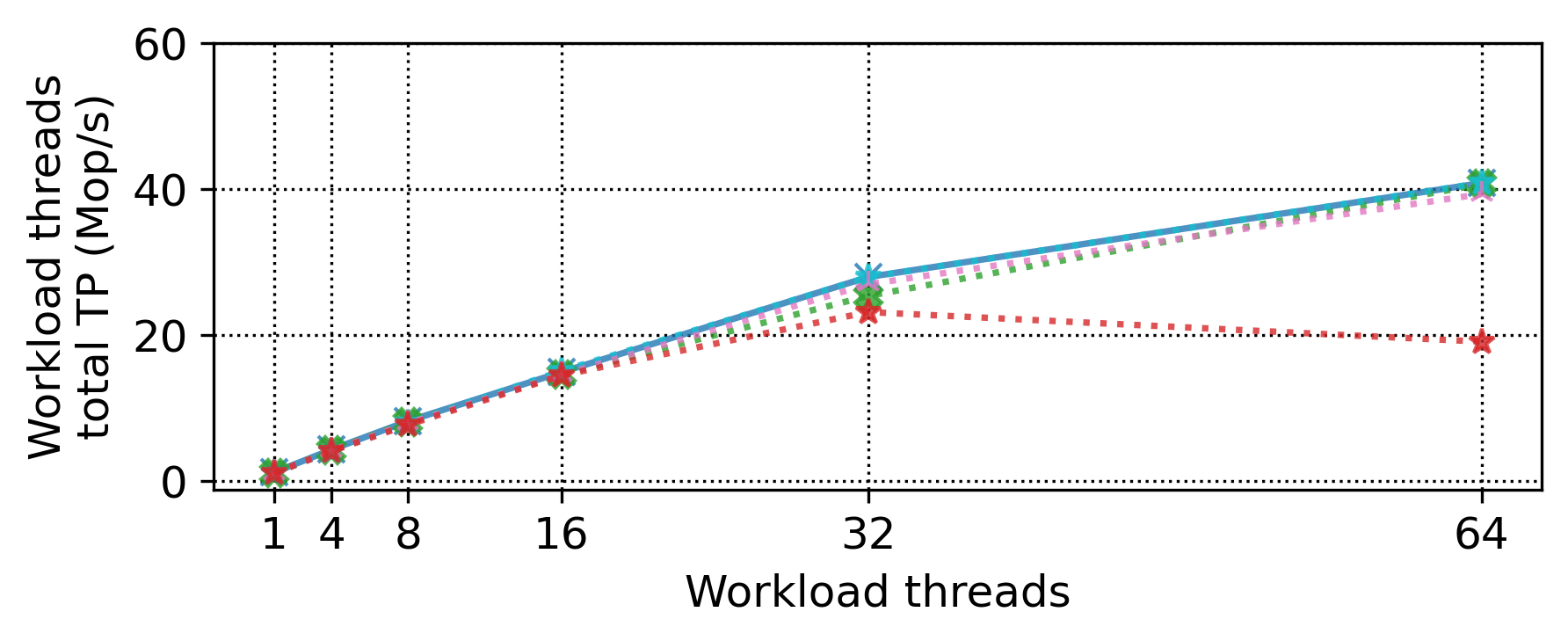}\par
	\includegraphics[width=.45\textwidth,trim={0 0 0 .2cm}]{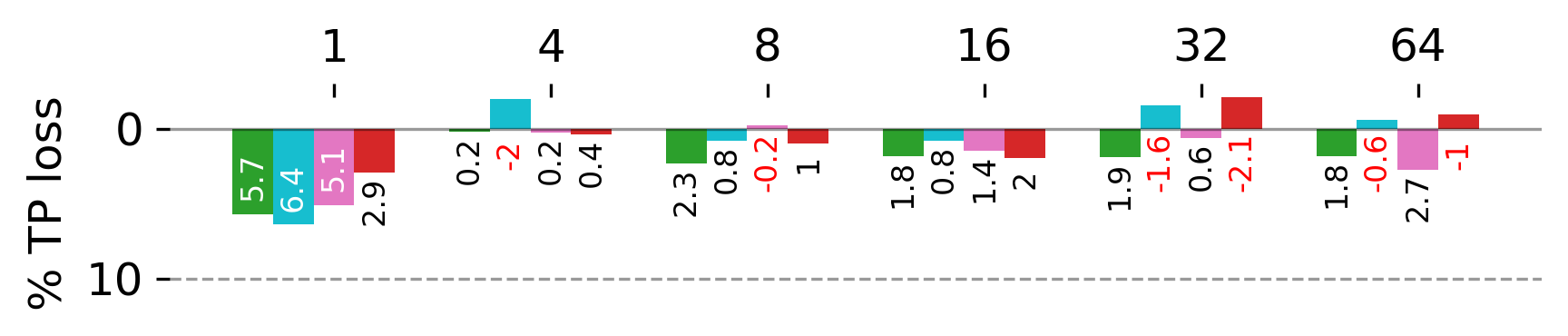}\hspace{2.5em}
	\includegraphics[width=.45\textwidth,trim={0 0 0 .2cm}]{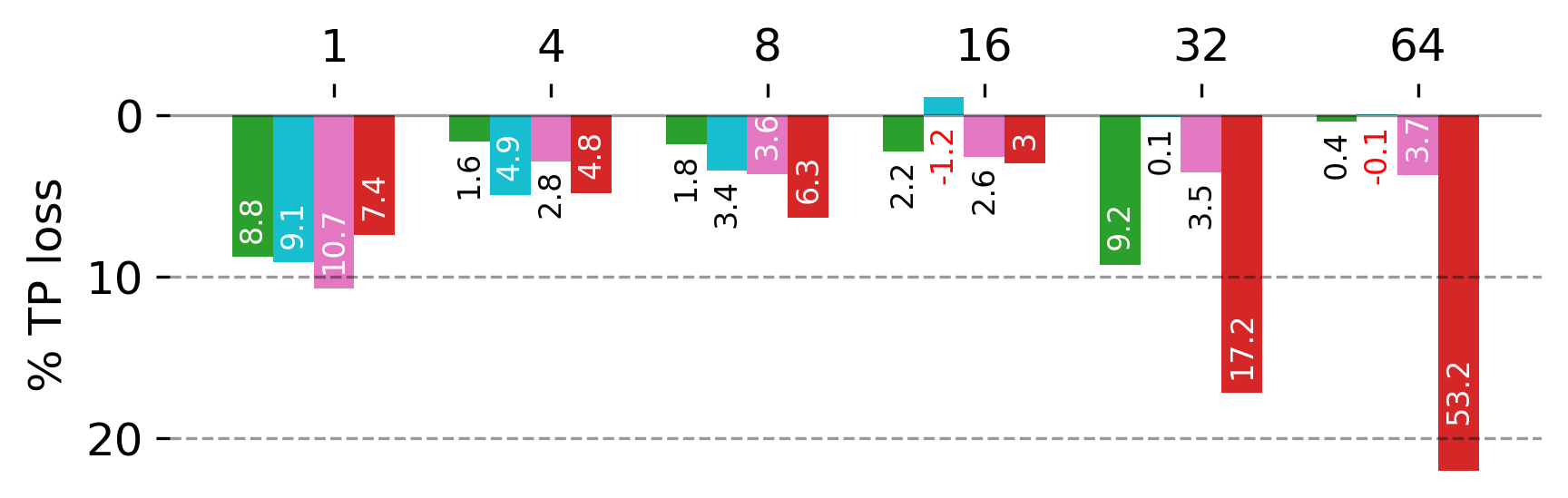}\par
	\medskip
	\text{With a concurrent \size{} thread and no delay}\par
	\includegraphics[width=.45\textwidth,trim={0 0 0 .1cm}]{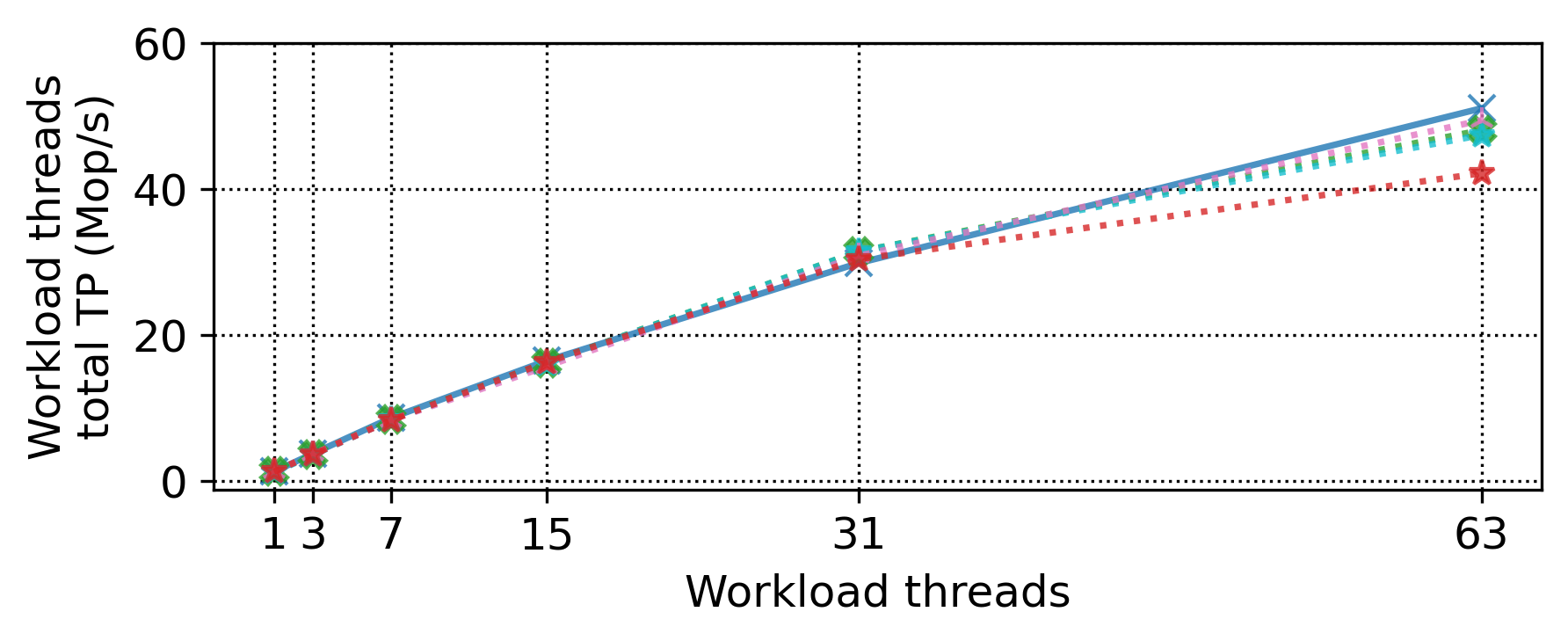}\hspace{2.5em}
	\includegraphics[width=.45\textwidth,trim={0 0 0 .1cm}]{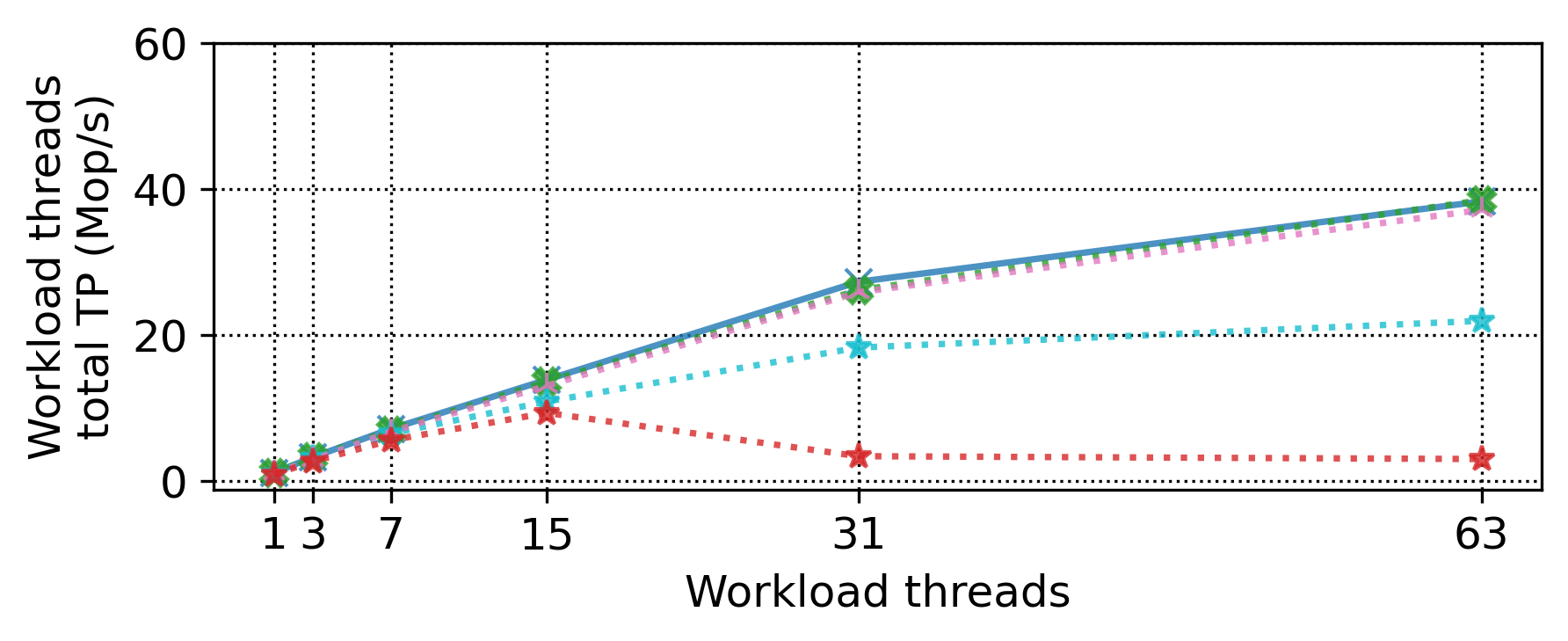}\par
	\includegraphics[width=.45\textwidth,trim={0 0 0 .2cm}]{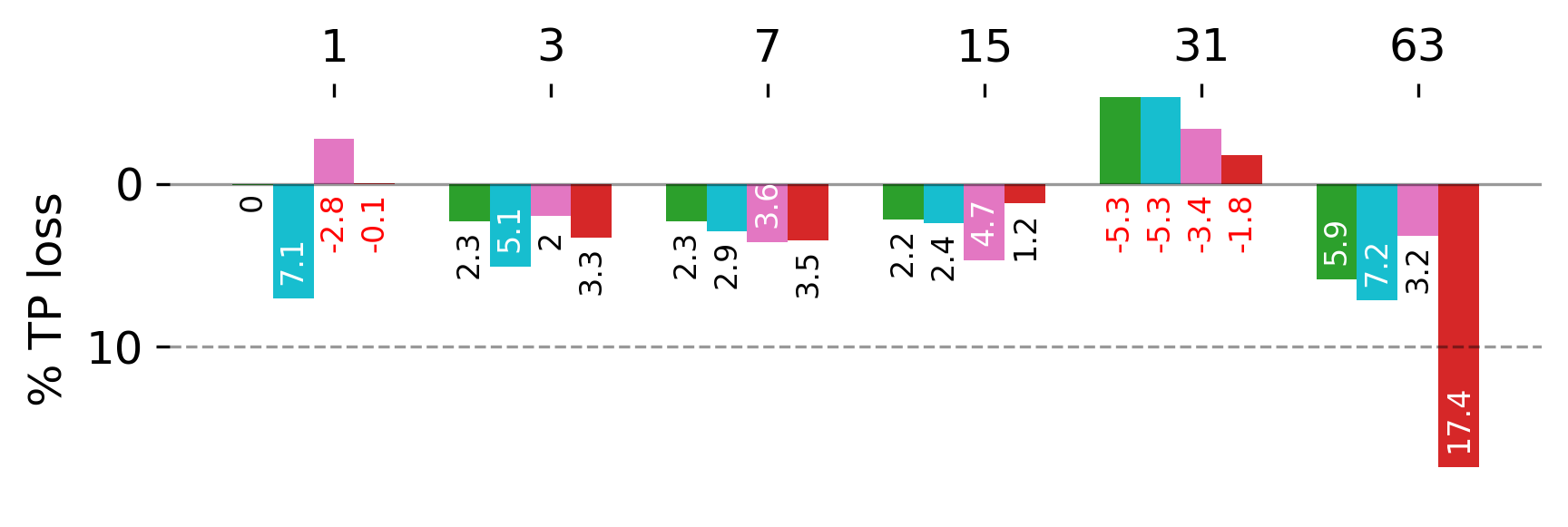}\hspace{2.5em}
	\includegraphics[width=.45\textwidth,trim={0 0 0 .2cm}]{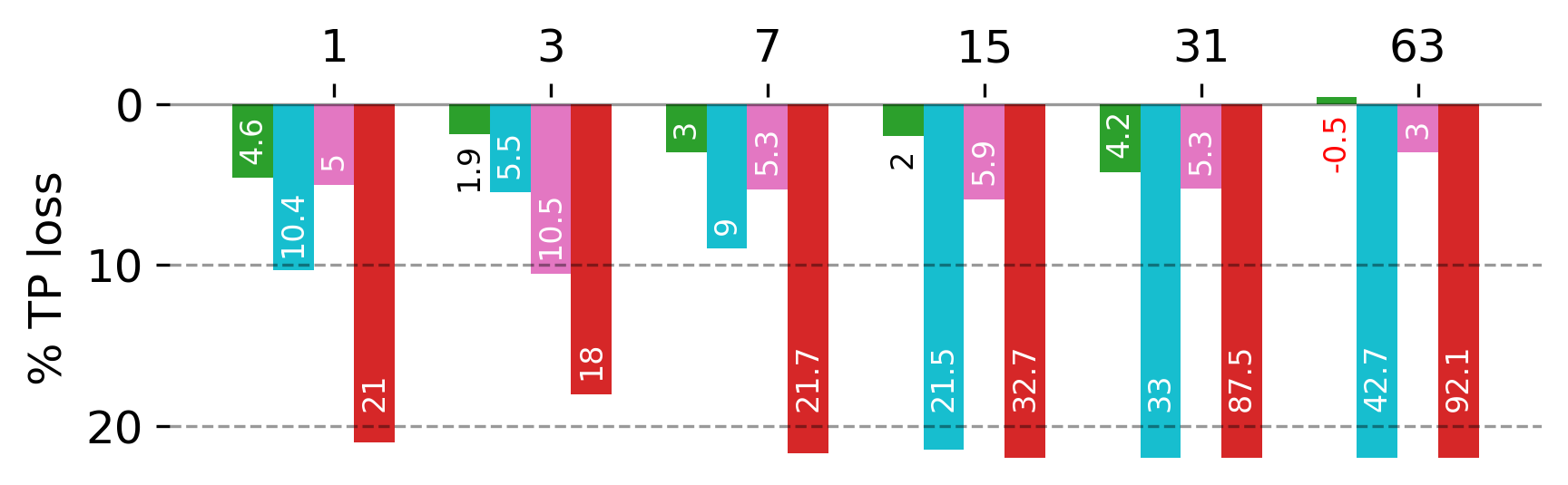}\par
	\medskip
	\text{With a concurrent \size{} thread and 700 \si{\micro\second} delay}\par
	\includegraphics[width=.45\textwidth,trim={0 0 0 .1cm}]{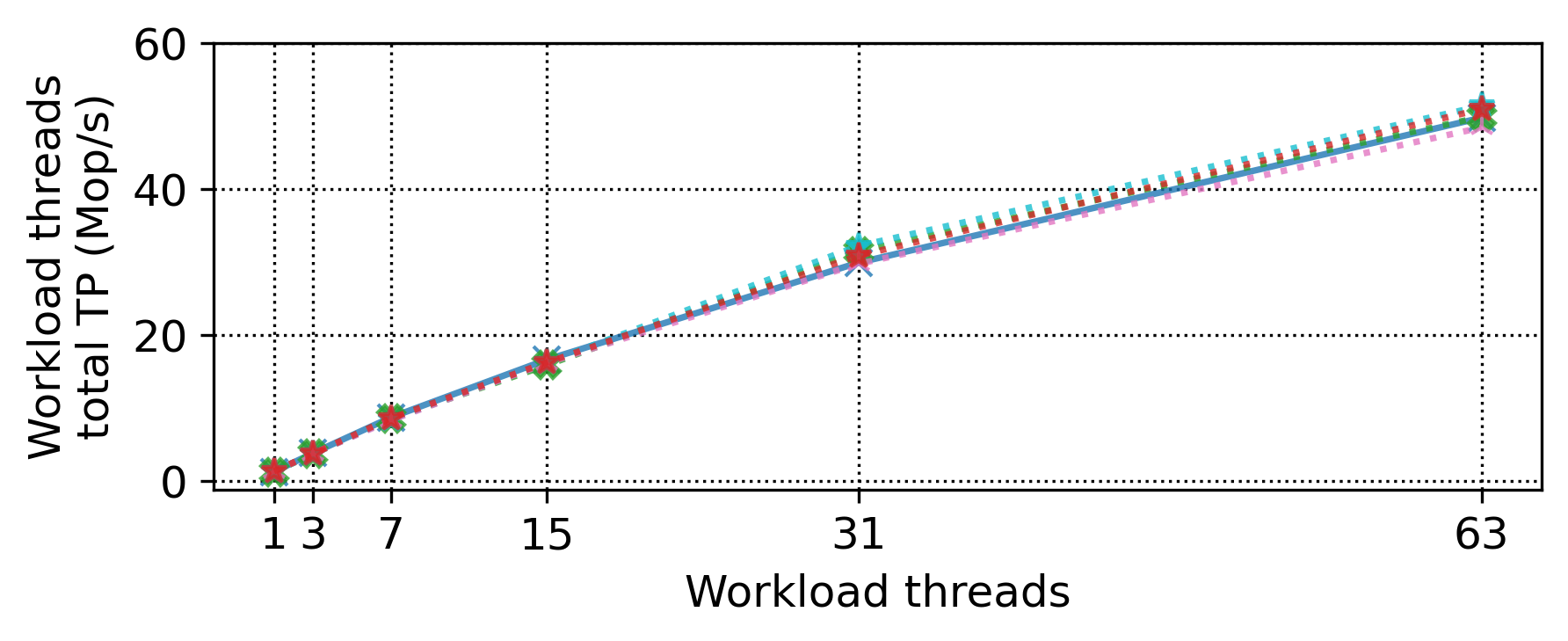}\hspace{2.5em}
	\includegraphics[width=.45\textwidth,trim={0 0 0 .1cm}]{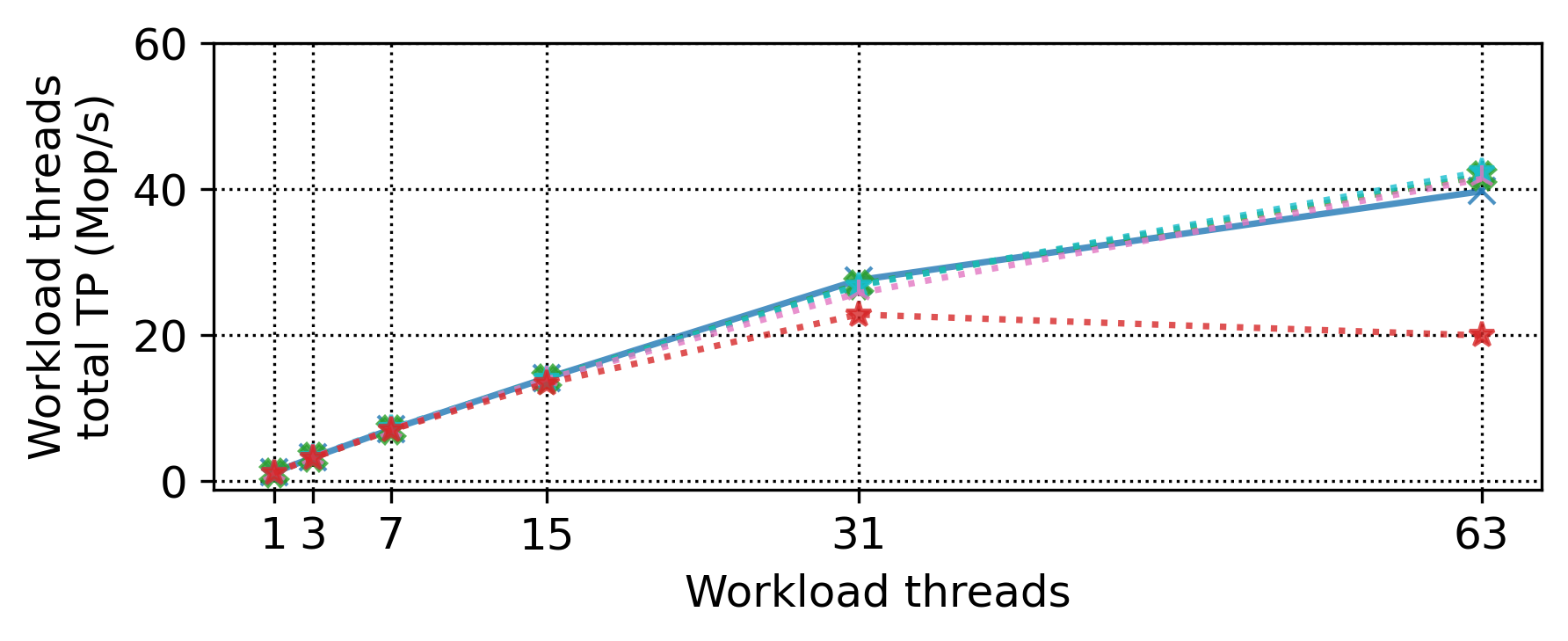}\par
	\includegraphics[width=.45\textwidth,trim={0 0 0 .2cm}]{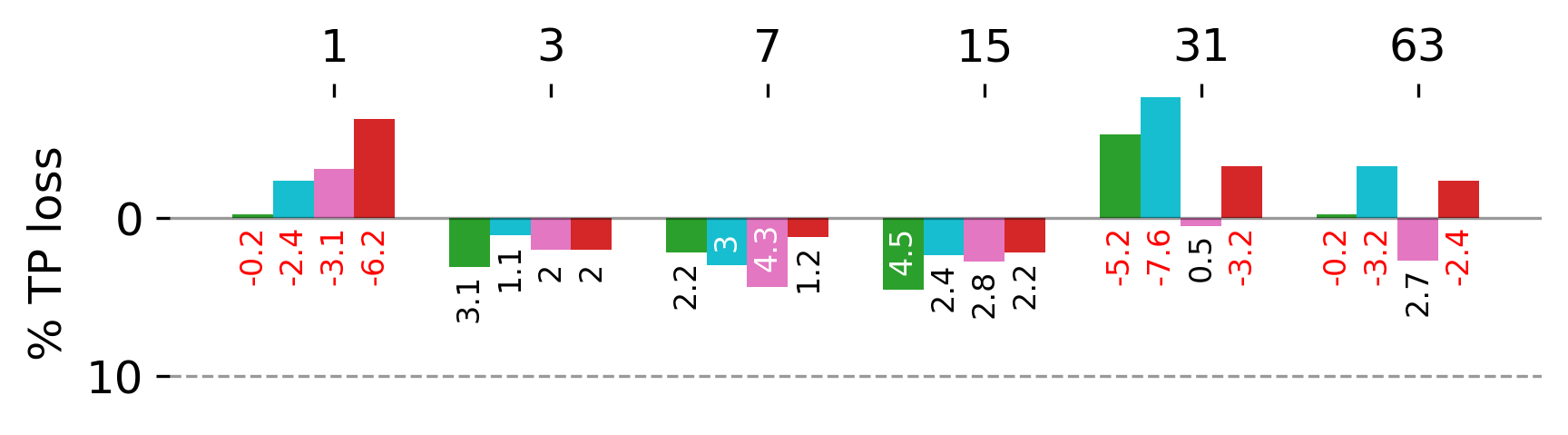}\hspace{2.5em}
	\includegraphics[width=.45\textwidth,trim={0 0 0 .2cm}]{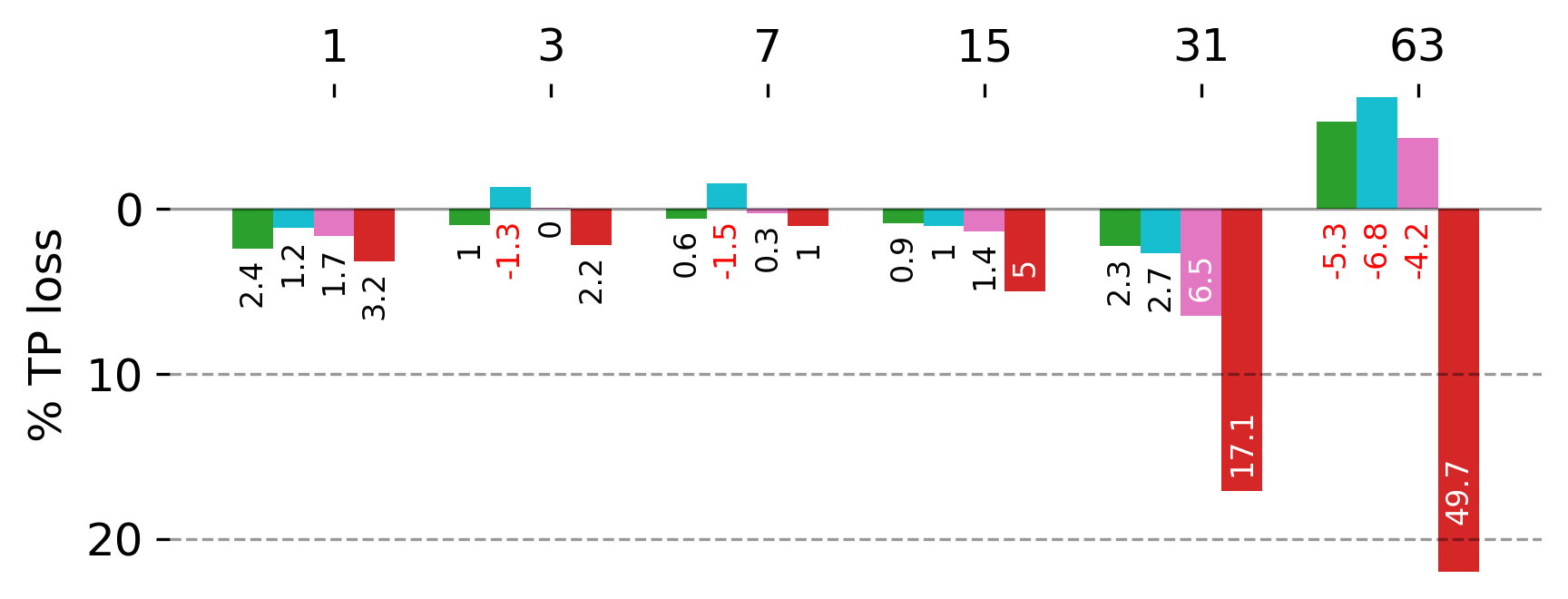}\par
	\caption{Overhead on BST operations}
	\label{fig:BST overhead}
\end{figure*}

\begin{figure*}[htbp]
	\centering
	\medskip
	\textit{Read heavy}\hspace{0.4em}
	\includegraphics[height=.018\textwidth]{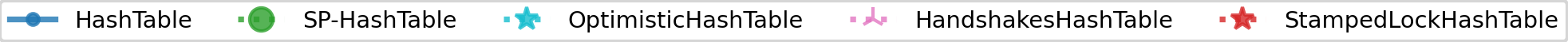}\hspace{0.4em}
	\textit{Update heavy}\par
	\medskip
	\text{Without a concurrent \size{} thread}\par
        \smallskip
	\includegraphics[width=.45\textwidth,trim={0 0 0 .1cm}]{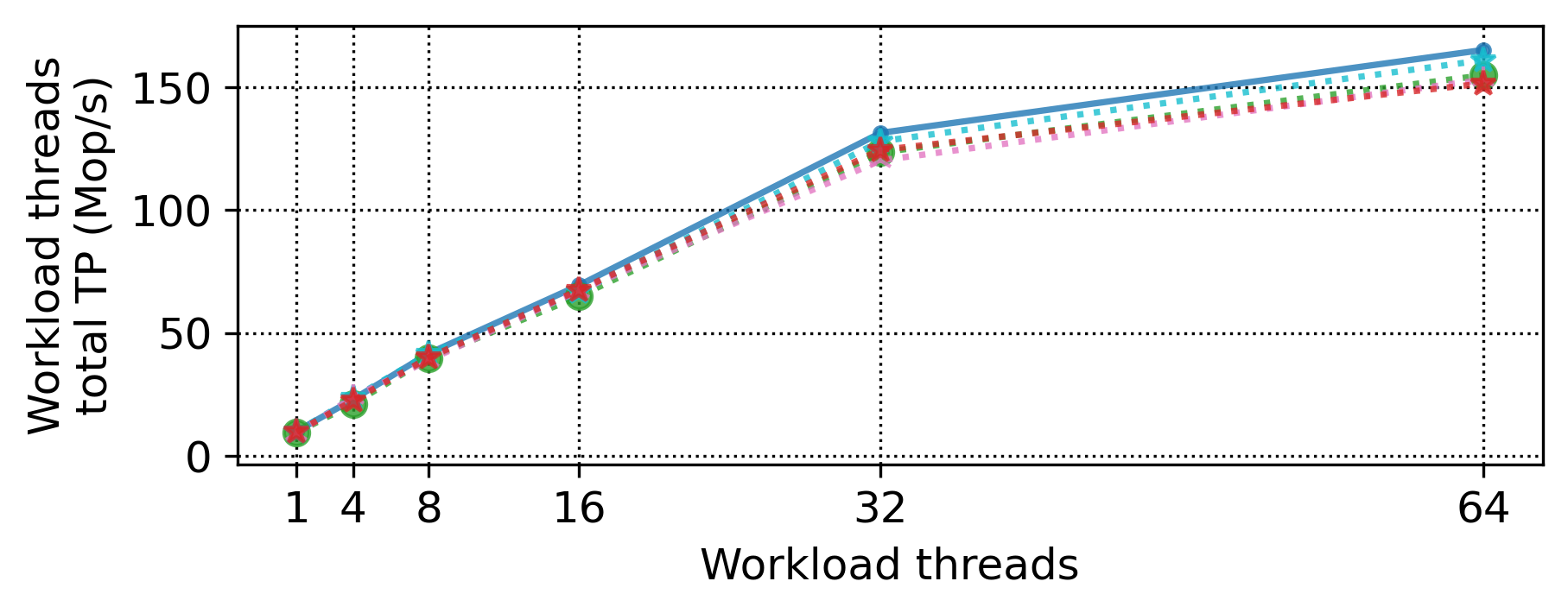}\hspace{2.5em}
	\includegraphics[width=.45\textwidth,trim={0 0 0 .1cm}]{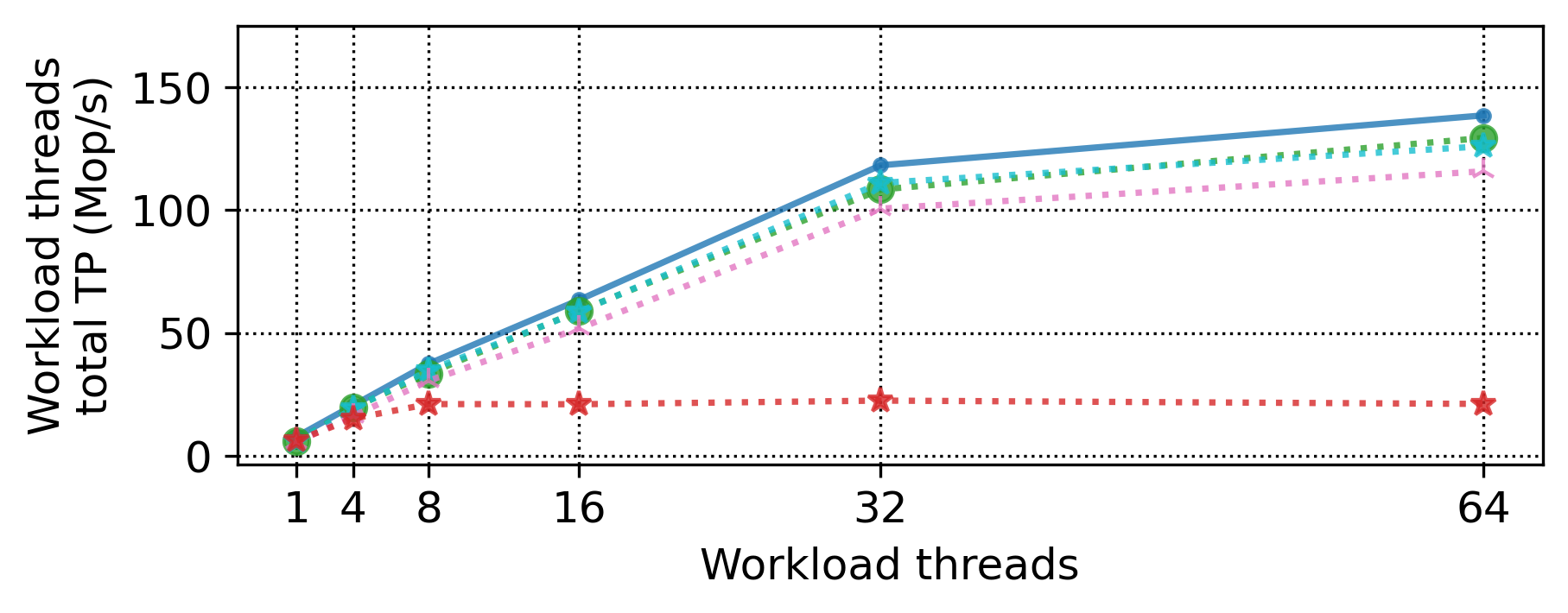}\par
	\includegraphics[width=.45\textwidth,trim={0 0 0 .2cm}]{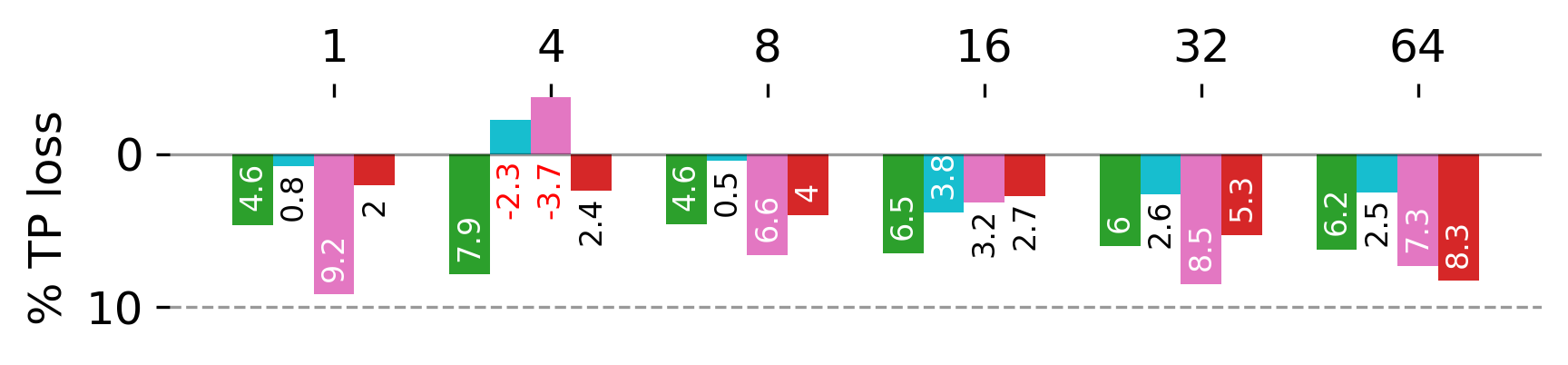}\hspace{2.5em}
	\includegraphics[width=.45\textwidth,trim={0 0 0 .2cm}]{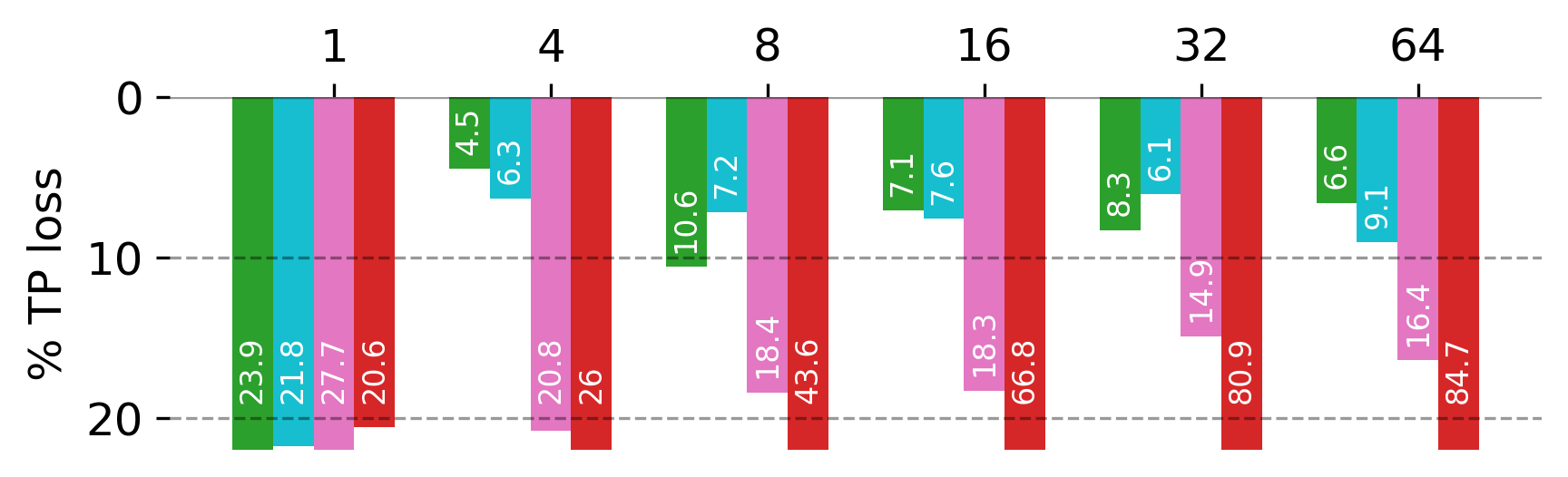}\par
	\medskip
	\text{With a concurrent \size{} thread and no delay}\par
	\includegraphics[width=.45\textwidth,trim={0 0 0 .1cm}]{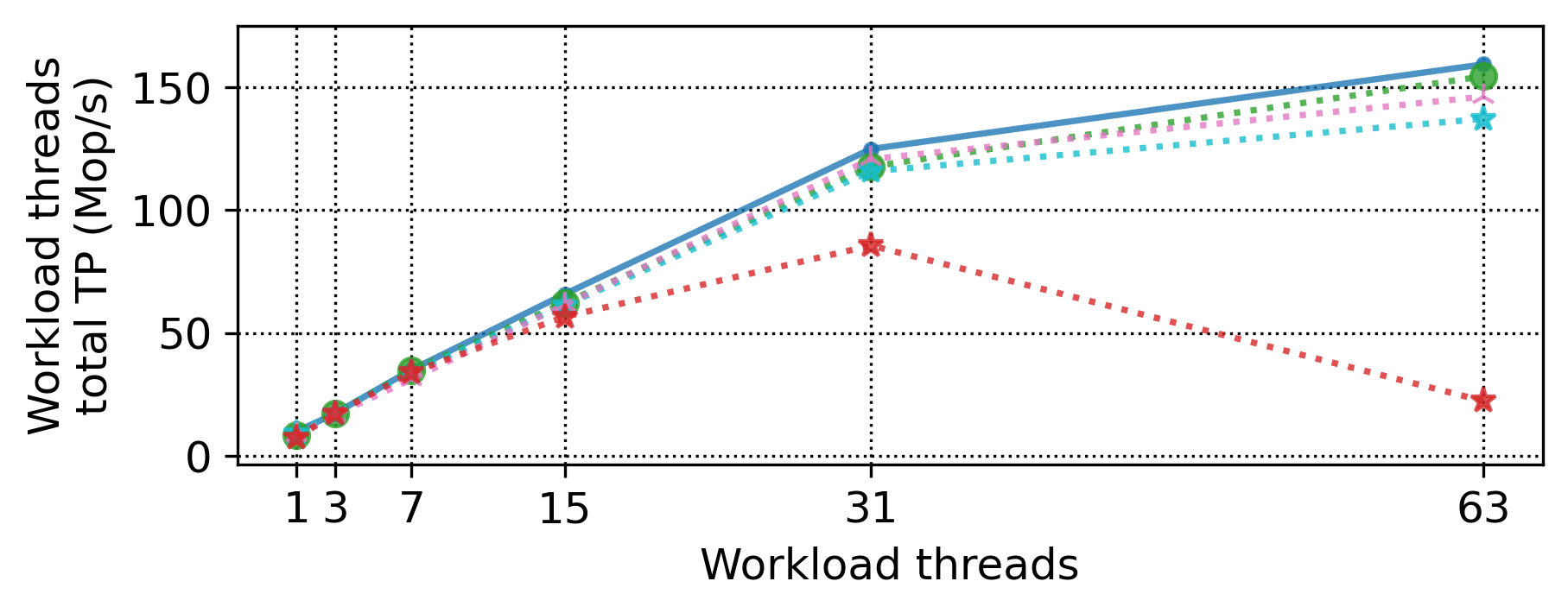}\hspace{2.5em}
	\includegraphics[width=.45\textwidth,trim={0 0 0 .1cm}]{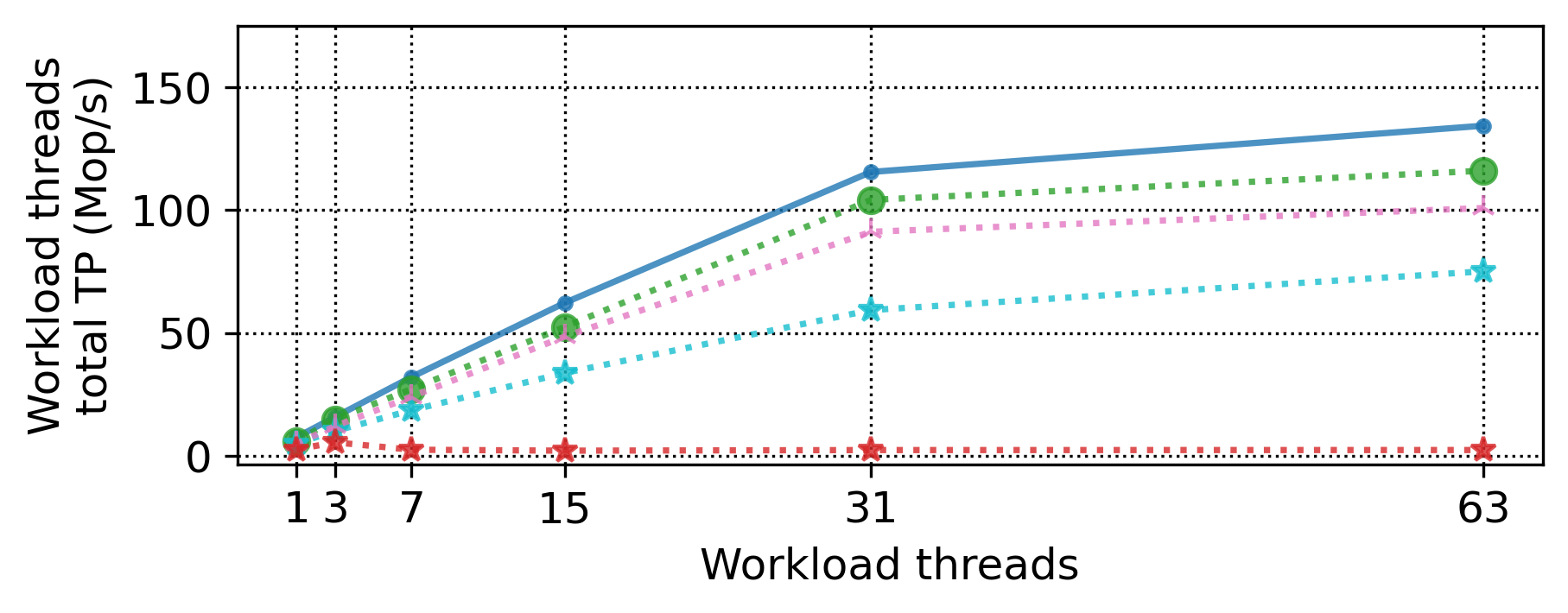}\par
	\includegraphics[width=.45\textwidth,trim={0 0 0 .2cm}]{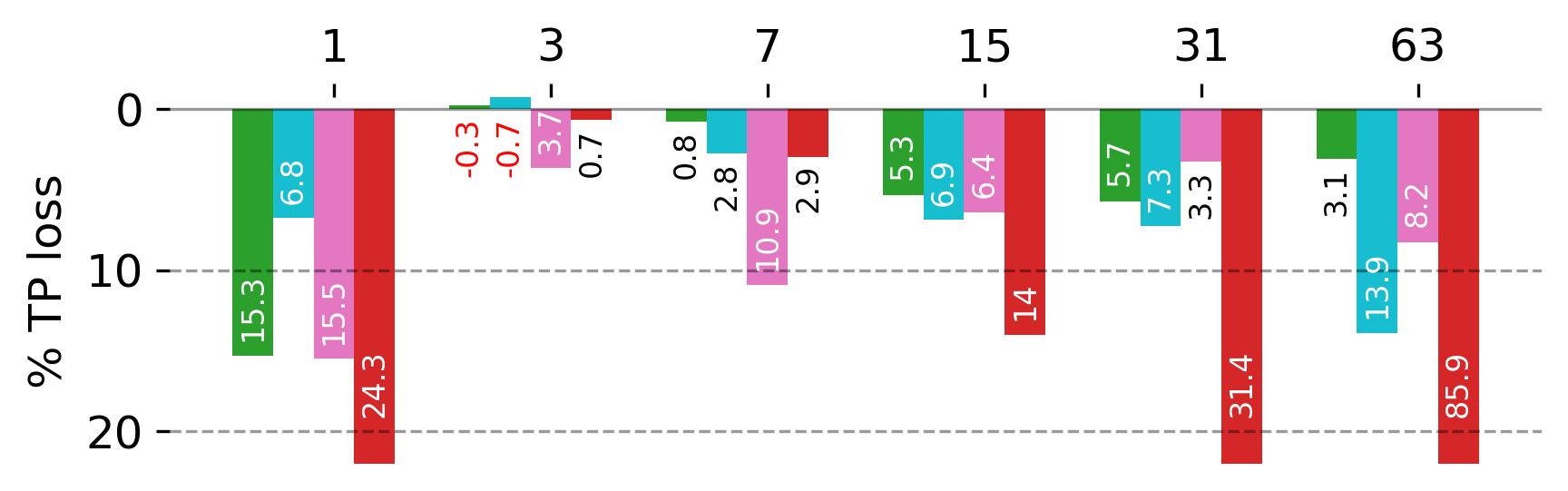}\hspace{2.5em}
	\includegraphics[width=.45\textwidth,trim={0 0 0 .2cm}]{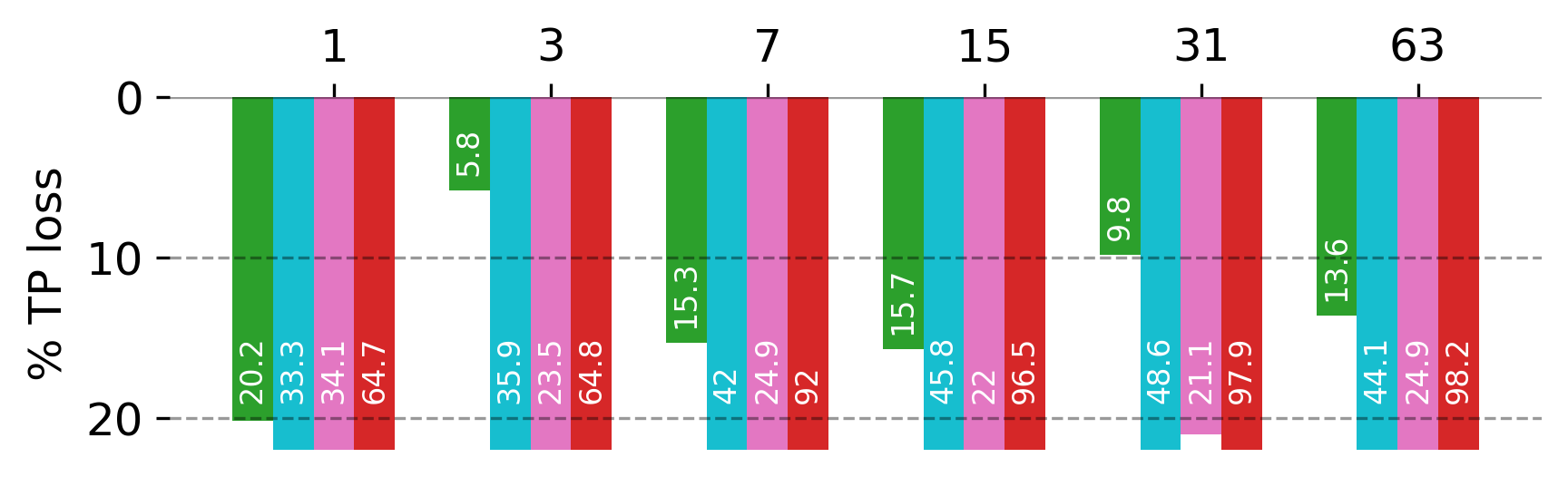}\par
	\medskip
	\text{With a concurrent \size{} thread and 700 \si{\micro\second} delay}\par
	\includegraphics[width=.45\textwidth,trim={0 0 0 .1cm}]{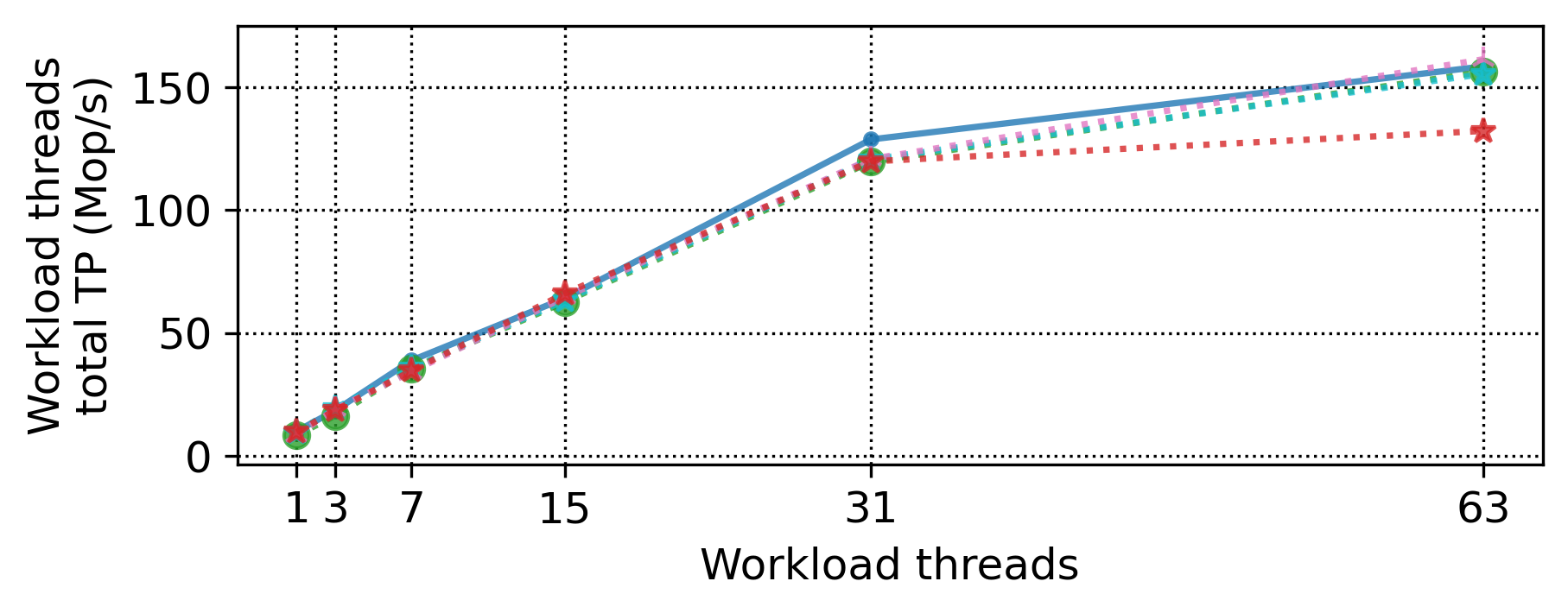}\hspace{2.5em}
	\includegraphics[width=.45\textwidth,trim={0 0 0 .1cm}]{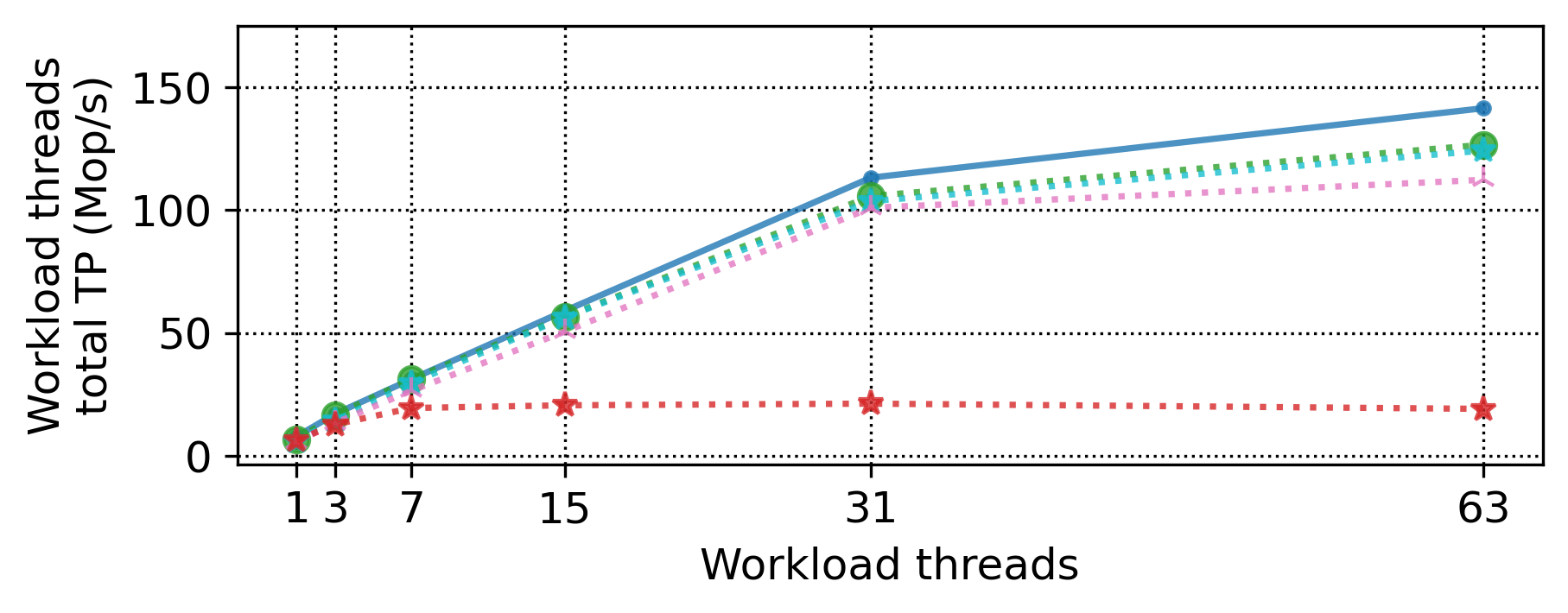}\par
	\includegraphics[width=.45\textwidth,trim={0 0 0 .2cm}]{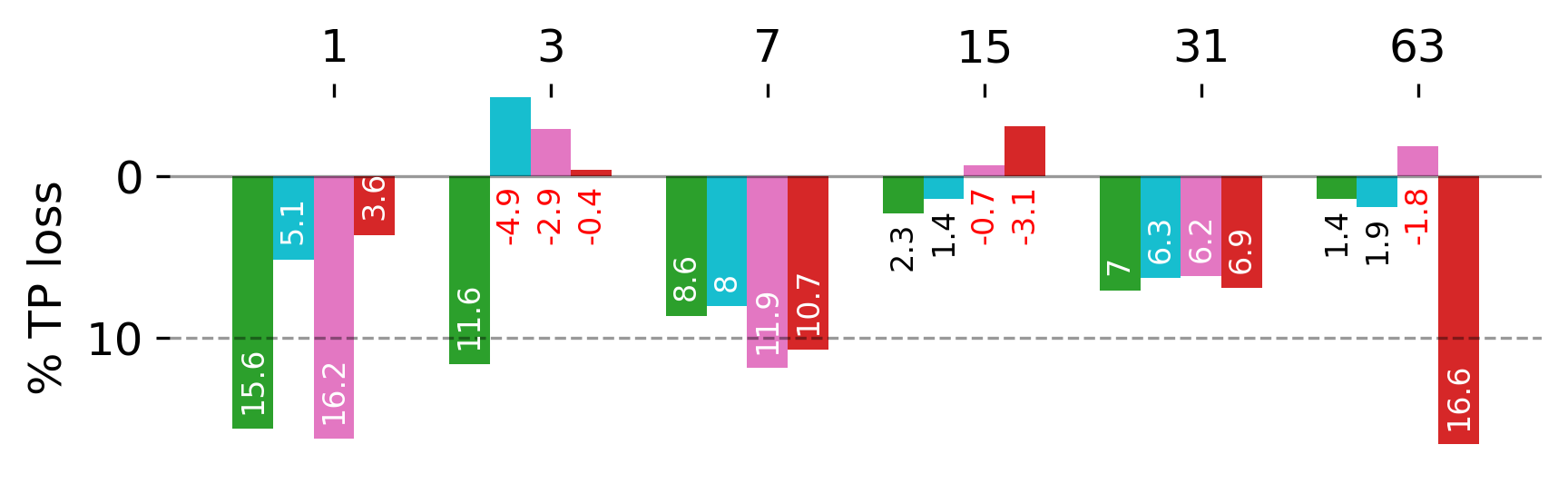}\hspace{2.5em}
	\includegraphics[width=.45\textwidth,trim={0 0 0 .2cm}]{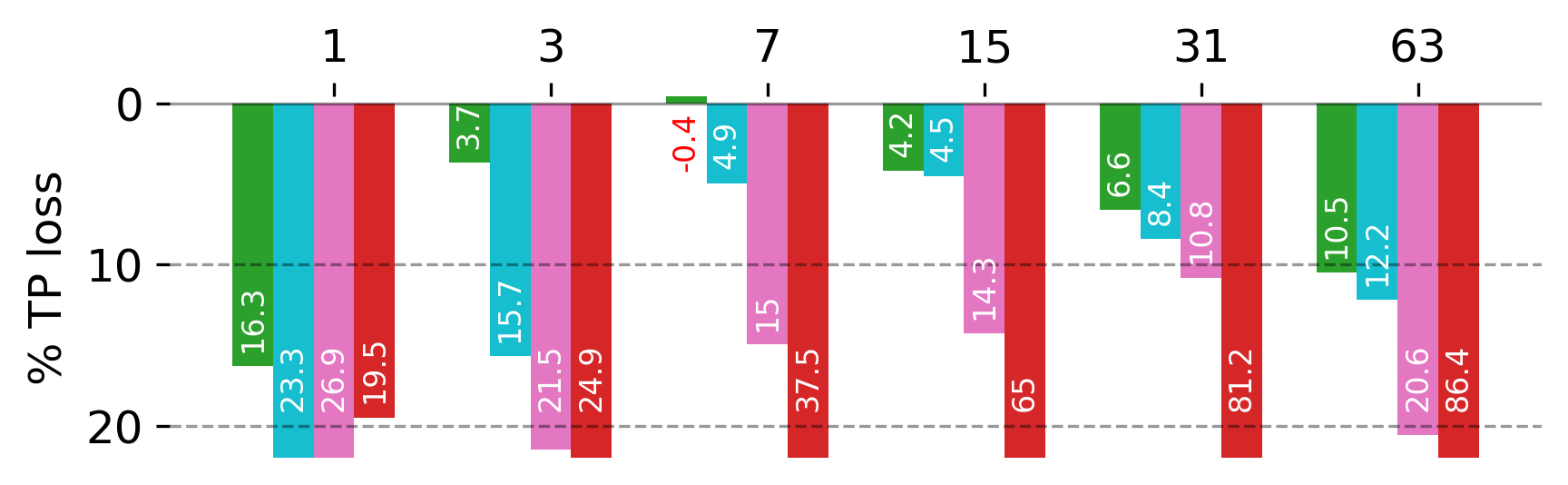}\par
	\caption{Overhead on hash table operations}
	\label{fig:HT overhead}
\end{figure*}

The X-axis represents the number of threads running operations concurrently (1, 4, 8, 16, 32, and 64 when no concurrent \size{} executes, and 1, 3, 7, 15, 31, 63 when one concurrent thread executes \size{} on a separate thread).
Each graph has an upper part and a lower part. The throughput is depicted in the upper part, with the Y-axis showing the number of million operations executed per second. On the lower part of each graph we depict the overhead percents compared to the throughput of the original data structure (that does not support a \size{} execution). The Y-axis shows $0\%$ when no throughput loss is demonstrated, and bigger percentage when overhead is witnessed. To keep the picture clear for small bars, we cut any long overhead bar at ~$20\%$ and write the lost percentage on the specific bar. 

It turns out that there is no one-size-fit-all method. Different scenarios call for different synchronization methods. The observed results vary by the chosen data structure, the contention levels, the frequency of utilizing the \size{} operation, and the workload (read intensive or update intensive). Notably, the hash function exhibits exceptionally fast operations, making the relative overhead on cooperation with size much higher when compared to both the skip list and the BST. In scenarios characterized by a write-heavy workload, the original \spsize{} approach emerges as the recommended strategy for the hash table. Its overhead averages around $10\%$, and generally within the range of $0-20\%$. Conversely, for read-intensive workloads, the optimistic approach proves optimal, showcasing an average overhead of around $4\%$ and fluctuating between $0-14\%$. 

The skip list and BST perform comparably, demonstrating commendable performance with both the lock-based and optimistic methods, particularly in read-heavy workloads. However, their efficacy diminishes significantly under write-heavy workloads, especially during heightened contention and when the \size{} operation is actively employed. The optimistic approach is somewhat less harmful in this scenario, but still performs poorly when employed with repeated \size{} calls. Consequently, when anticipating write-heavy workloads or when usage patterns are uncertain, both the \spsize{} and handshake approaches exhibit superior performance. Notably, the handshake approach on the skip list maintains an average overhead of approximately $4.4\%$, surpassing the \spsize{} approach by about $1\%$, and ranging between $0-12.9\%$. On the other hand, for the BST, the \spsize{} approach consistently outperforms the other approaches, showcasing an average overhead of $2.4\%$ and ranging between $0-7.1\%$ across all scenarios.

The handshake approach does not offer a performance advantage over \spsize{} in many scenarios due to several overheads inherent in its design. For instance, even when no \size{} operation is active, update operations on the fast path still perform some conditional checks, which are not performed in \spsize{}, as well as write to the shared \codestyle{opPhase} array. Additionally, \codestyle{contains{}} operations are always executed on the slow path, introducing additional checks (in comparison to the baseline data structures) to ensure correctness under concurrent, slow updates. The cumulative overheads reduce the expected performance gains of the handshake approach over \spsize{}, particularly for short operations common in hash tables.



\subsection{Size Scalability}

Next, we study the scalability of the \size{} operation across the studied synchronization methods. We measured the total throughput of threads executing the \size{} operation in each data structure, both in a read-oriented and a write-oriented workload. We ran 32 workload threads concurrently with \size{}-executing threads, whose number varies between 1 to 32 (so the overall number of running threads was bounded by 64). The results are depicted in~\Cref{fig:size scalability SL,fig:size scalability BST,fig:size scalability HT}.
\begin{figure*}[htbp]
    \centering
    \medskip
    \textit{Read heavy}\quad\quad
    \includegraphics[height=.0205\textwidth]{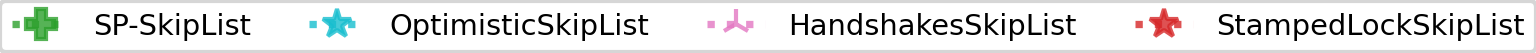}\quad\quad
    \textit{Update heavy}\par
    \includegraphics[width=.45\textwidth,trim={0 0 0 .1cm}]{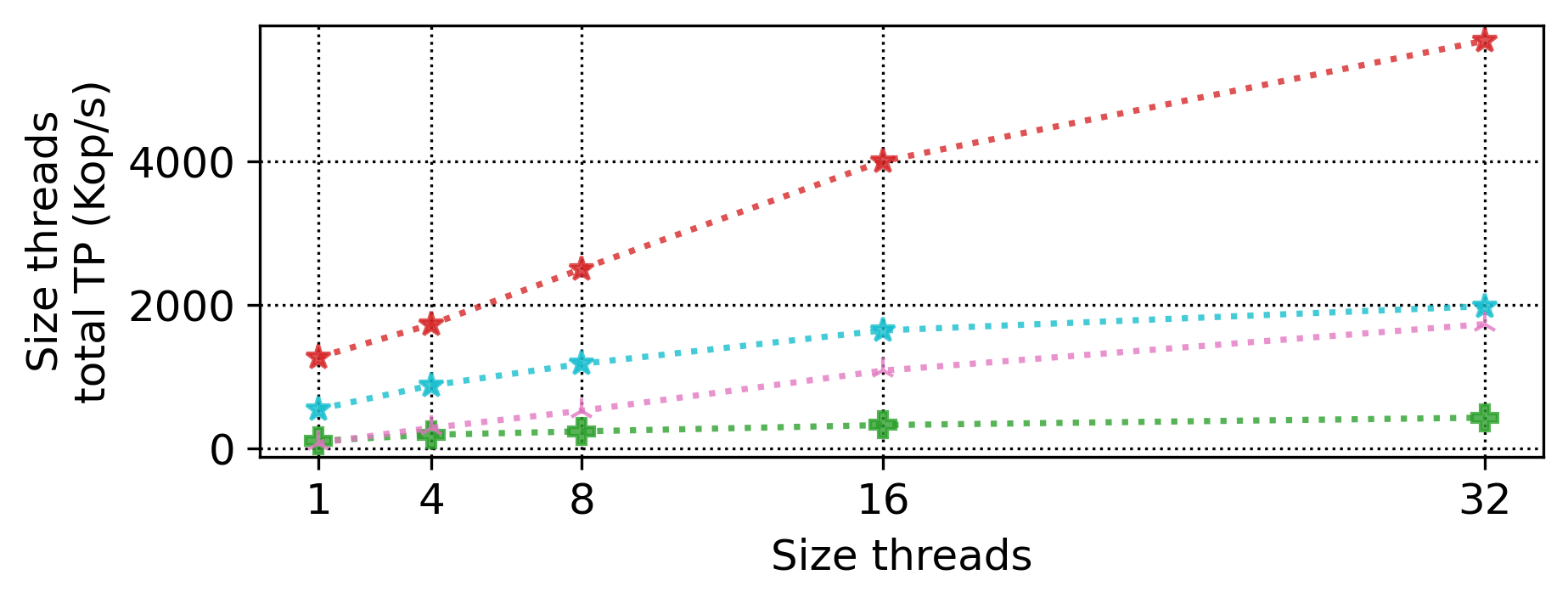}\hspace{2.5em}
    \includegraphics[width=.45\textwidth,trim={0 0 0 .1cm}]{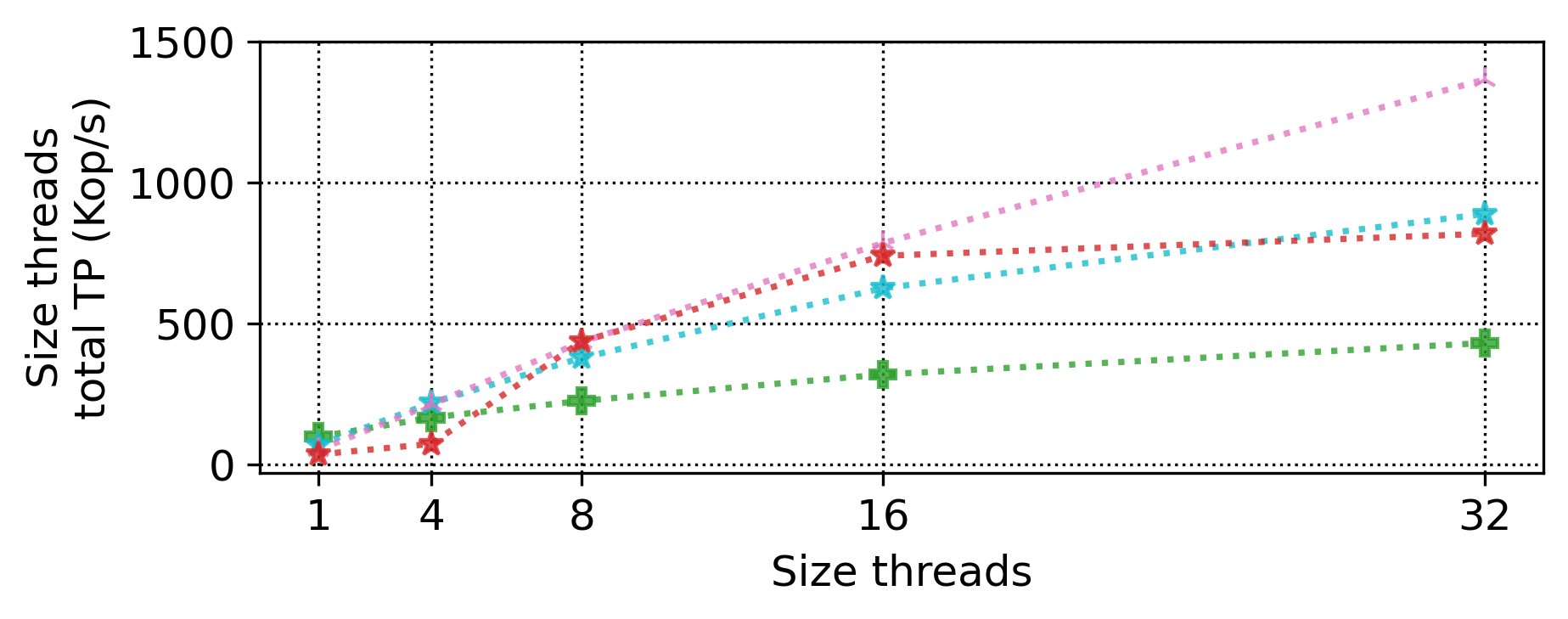}\vspace{-1.2em}
    \caption{size scalability in skip list}
    \label{fig:size scalability SL}
    \medskip
    \medskip
    \medskip
    \textit{Read heavy}\hspace{3.3em}
    \includegraphics[height=.021\textwidth]{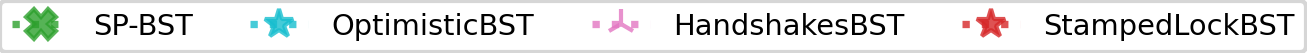}\hspace{3.3em}
    \textit{Update heavy}
    \includegraphics[width=.45\textwidth,trim={0 0 0 .1cm}]{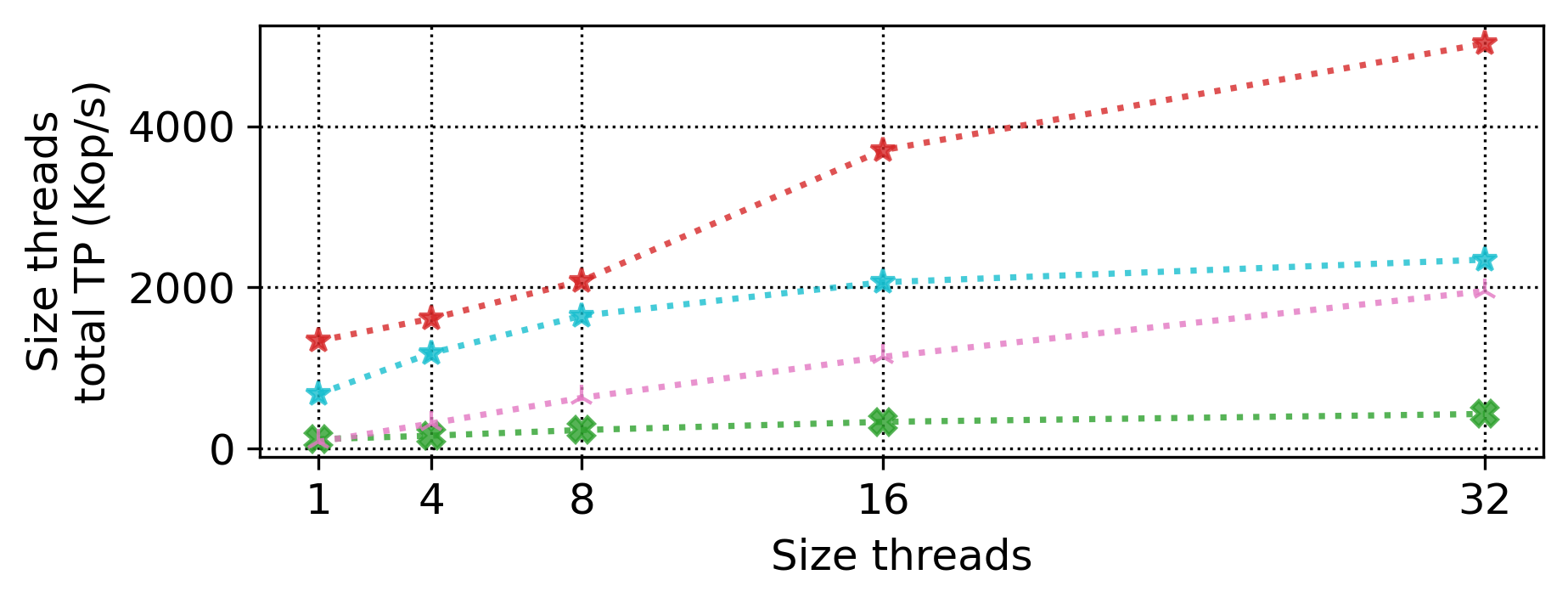}\hspace{2.5em}
    \includegraphics[width=.45\textwidth,trim={0 0 0 .1cm}]{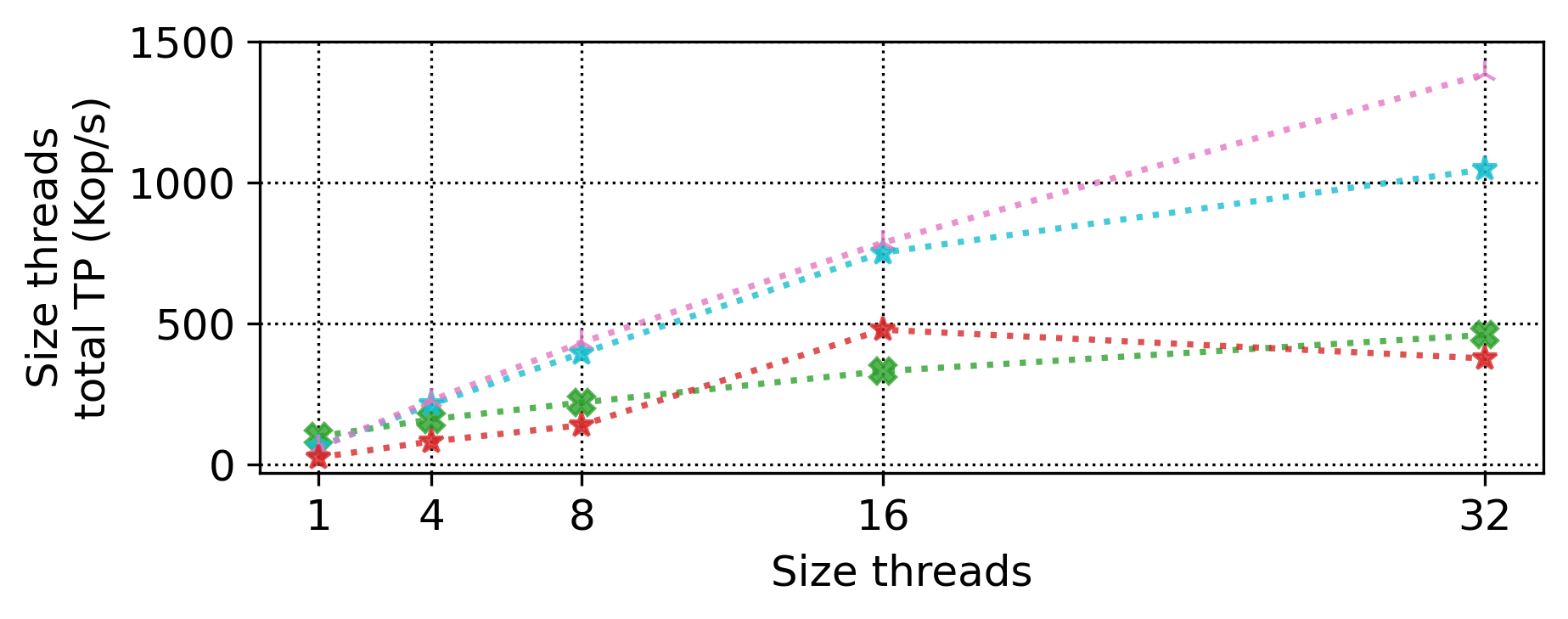}\vspace{-1.2em}
    \caption{size scalability in BST}
    \label{fig:size scalability BST}
    \medskip
    \medskip
    \medskip
    \textit{Read heavy}\hspace{0.2em}
    \includegraphics[height=.021\textwidth]{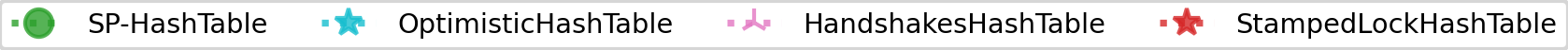}\hspace{0.2em}
    \textit{Update heavy}
    \includegraphics[width=.45\textwidth,trim={0 0 0 .1cm}]{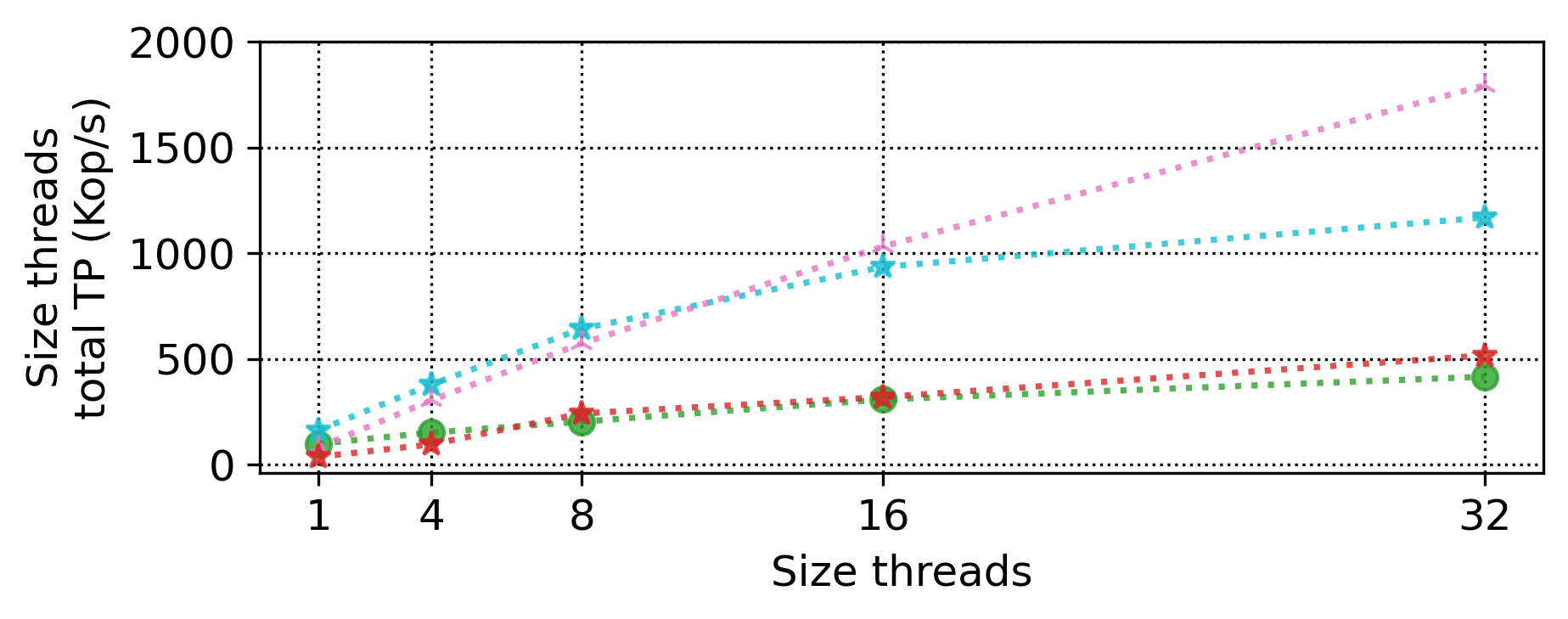}\hspace{2.5em}
    \includegraphics[width=.45\textwidth,trim={0 0 0 .1cm}]{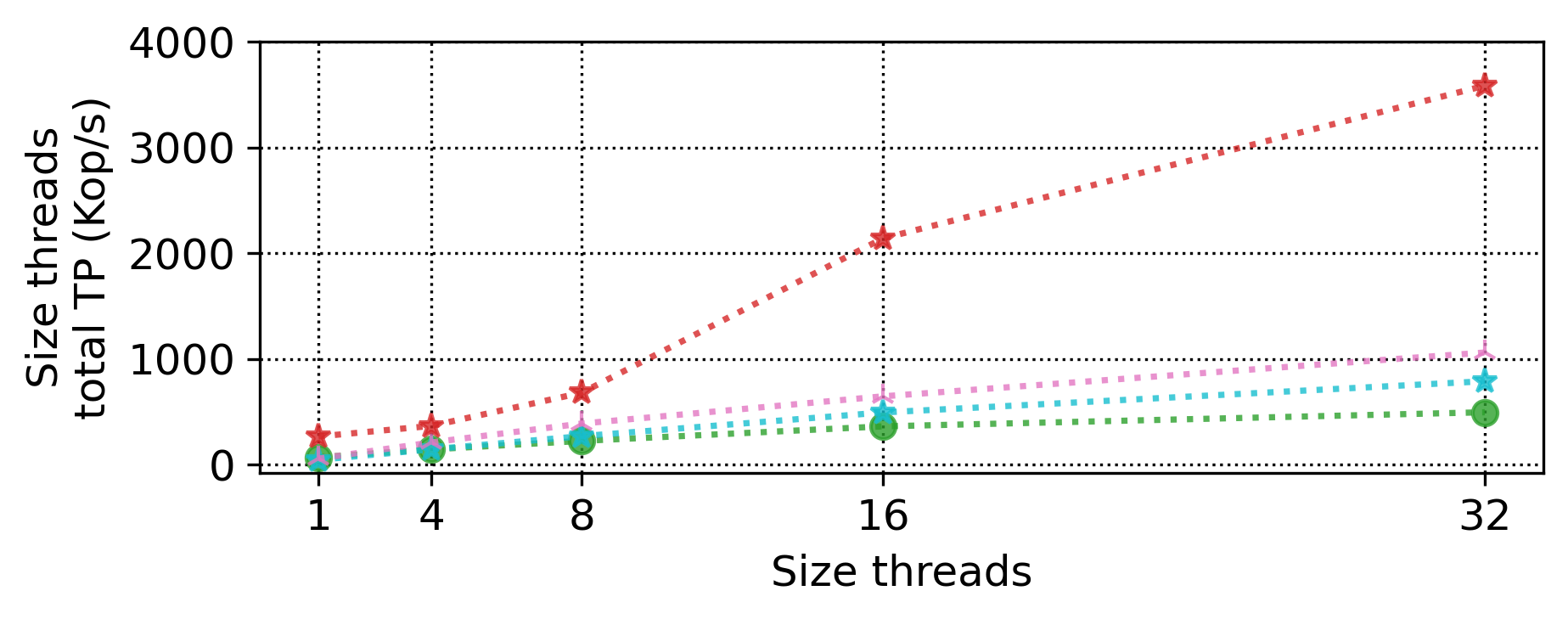}\vspace{-1.2em}
    \caption{size scalability in hash table}
    \label{fig:size scalability HT}
\end{figure*}

Again the results differ between the hash table and the other two data structures. For the BST and the skip list on a read-oriented workload, the lock-based method does significantly better. A \size{} operation that grabs the write lock can execute very fast with no interference from the data structure operations.

In contrast, when the data structure is updated frequently, the wait for acquiring the lock becomes dominant and the handshake synchronization wins, albeit not with such a significant advantage as locks have on a read-heavy workload.

Hence, if the scalability of the \size{} is of high importance, then the lock-based method should be favored with read-oriented workloads, and for write-heavy workloads the handshake synchronization does better than the lock-based synchronization. Normally, we expect the overhead on the operations themselves to be the more important consideration, as they typically occur more frequently.

Similarly to the investigation of overheads, the hash table behaves completely different. The handshake synchronization wins on the read-heavy workload and the lock-based synchronization wins on the write-heavy workload. Lock-based synchronization is not recommended for hash tables as the \size{} operations take over the lock and allow almost no data structure operations to execute, as can be seen in the middle row, right side, of \Cref{fig:HT overhead}. The handshake approach or the \spsize{} synchronization method may thus be the best overall choice for this case.

The demonstrated superior scalability of the handshake approach over \spsize{} is due to its relaxed progress guarantee. The \spsize{} method guarantees wait-freedom. To achieve a wait-free \size{}, a \size{}() operation in \spsize{} that detects a concurrent \size{}() execution, attempts to help it compute the size and competes with it on performing a compare-and-swap on the same snapshot cells. As more \size{} threads are added, the contention grows, resulting in degradation of performance and hindering scalability. 
In contrast, in the handshake method, the first concurrent \size{} caller becomes a leader and carries out the operation, while the rest of the concurrent \size{} calls simply block and wait for the result. 
Hence, the handshake method scales better in workloads with frequent concurrent \size{} operations: adding more \size{} threads introduces much less contention among them, resulting in better performance.

\ignore{
\subsection{Progress Guarantees}

\Cref{table:progress guarantees comparison} presents a comparison of the progress guarantees for the synchronization methods researched and presented in this paper for concurrent size computation. The "Size progress guarantees" column indicates whether each method provides strong progress guarantees, such as wait-freedom. The remaining columns evaluate whether each size computation methodology preserves the progress guarantees and asymptotic complexities of the set operations in the original data structures

As shown in the table, the handshakes approach preserves the original progress guarantees and complexity but does not offer specific guarantees for size computation. The optimistic and locks based methodologies, on the other hand, compromise both progress guarantees and complexity. In contrast, the \spsize{} algorithm~\cite{sela2021concurrentSize} provides a wait-free guarantee for the size computation while maintaining both the original progress guarantees and asymptotic complexity of the data structure.

\gnote{I don't see why \Cref{table:progress guarantees comparison} is necessary.}
\begin{table}[ht]
\centering
\begin{tabular}{|l|p{1.9cm}|p{1.9cm}|p{1.9cm}|}
\hline
\textbf{} &
  \textbf{Size progress guarantees} &
  \textbf{Maintaining original progress guarantees} &
  \textbf{Maintaining original\newline asymptotic complexity} \\ \hline
Handshakes &
  $-$ &
  \checkmark &
  \checkmark \\ \hline
Optimistic &
  $-$ &
  $-$ &
  $-$ \\ \hline
Locks &
  $-$ &
  $-$ &
  $-$ \\ \hline
\spsize{} \cite{sela2021concurrentSize} &
  Wait-Free $O(n)^*$ &
  \checkmark &
  \checkmark \\ \hline
\end{tabular}
\begin{footnotesize} 
\begin{itemize}
    \item[*] Where $n$ is the maximal number of threads in the system.
\end{itemize}
\end{footnotesize}
\caption{Progress guarantees comparison of synchronization methods for concurrent size computation.} 
\label{table:progress guarantees comparison}
\end{table}
}

\subsection{MAX\_TRIES measurements}
In \Cref{sec:max_tries} of the supplementary material we assess the impact of the \codestyle{MAX\_TRIES} variable on the optimistic method. We conducted additional measurements comparing its overhead on the original data structure's operations and on the performance of the \size{} operation with \codestyle{MAX\_TRIES} values ranging from 2 to 16. Since the results show a clear trade-off between the performance of the original operations and the performance of the \size{} operation, it is recommended to select a \codestyle{MAX\_TRIES} value that is well calibrated to the user preference and  system. The chosen value should consider the workload characteristics and the desired balance between minimizing overhead on standard operations while ensuring accurate and timely \size{} calculations.

\subsection{Zipfian measurements}\label{sec-zipfian}
To evaluate the robustness of our results under skewed access patterns, we conducted an additional set of overhead experiments using a Zipfian distribution for key selection in \contains{} operations. In contrast, \ins{} and \del{} operations continued to use keys drawn uniformly at random from the global key space which remained the same. We employed the \texttt{ScrambledZipfianGenerator} from the YCSB benchmark suite~\cite{cooper2010benchmarking}, which ensures a randomized mapping of hot keys across the key space. The skew parameter $\theta$ was set to $0.99$, a standard choice for simulating strong locality~\cite{cooper2010benchmarking}. Our findings, appearing in \Cref{sec:zipfian graphs}, reveal no major differences from the results obtained in \Cref{sec-overheads} using uniformly distributed workloads, suggesting that the methods evaluated maintain consistent performance characteristics under non-uniform read patterns as well.

\ignore{
\subsection{MAX\_TRIES measurements}
To assess the impact of the \codestyle{MAX\_TRIES} variable on the optimistic method, we conducted additional measurements comparing its overhead on the original data structure and the scalability of the \size{} operation with \codestyle{MAX\_TRIES} values ranging from 2 to 16. The results are presented in \Cref{fig:MAX_TRIES SL,fig:MAX_TRIES BST,fig:MAX_TRIES HT}. Since transitioning to the slow path is costly, making additional attempts to succeed optimistically improves performance. However, smaller \codestyle{MAX\_TRIES} values offer the benefit of more predictable latency, making the choice dependent on the user's requirements for the data structure.



\begin{figure*}[htbp]
	\centering
	\medskip
	\textit{Read heavy}\hspace{10cm}
	\textit{Update heavy}\par
	\medskip
	\includegraphics[height=.02\textwidth]{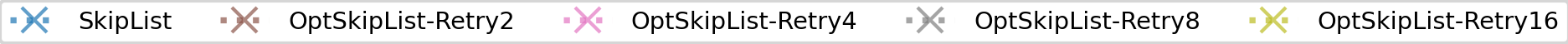}\vspace{0.5em}
	\includegraphics[width=.45\textwidth,trim={0 0 0 .1cm},height=2.1cm]{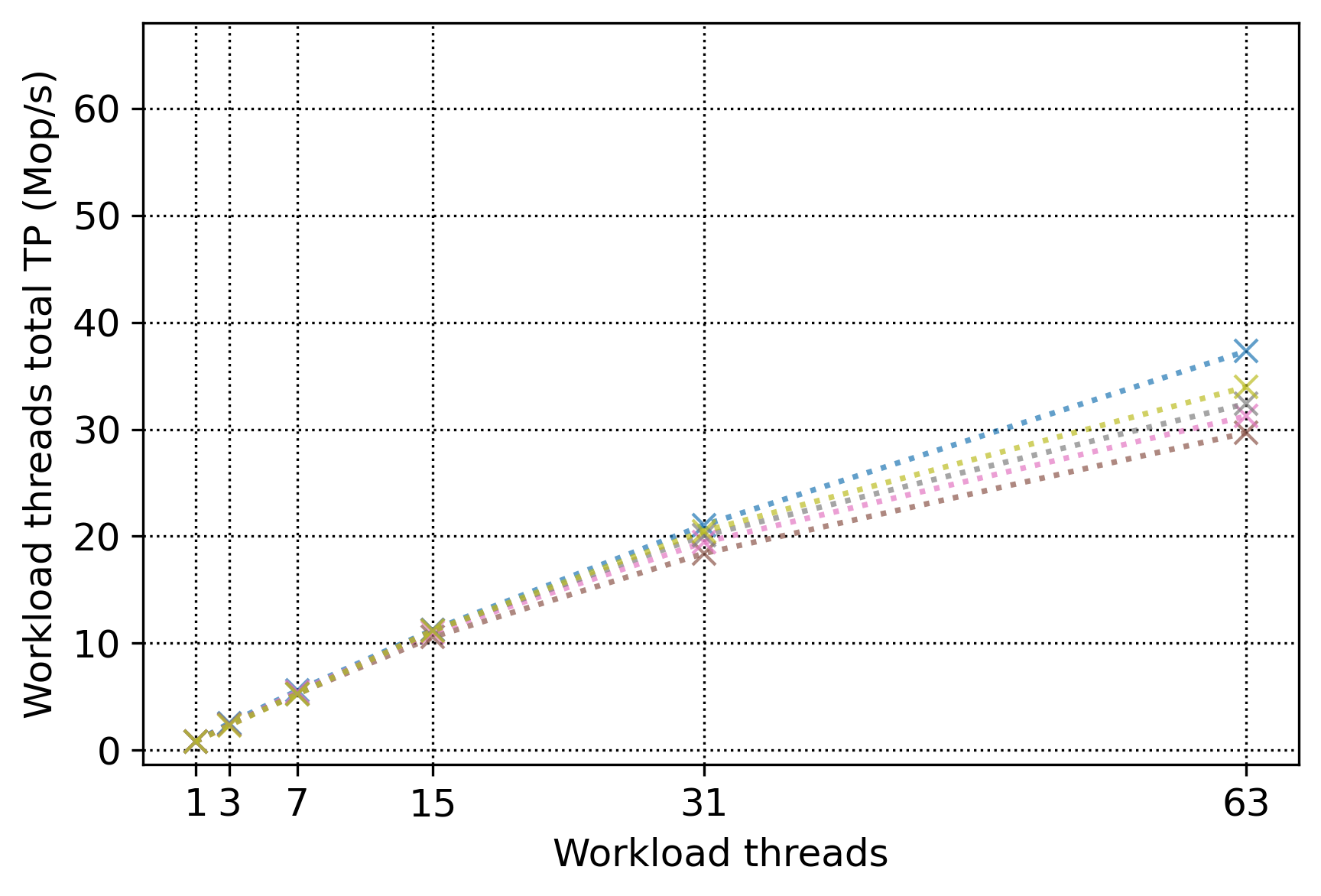}\hspace{2.5em}
	\includegraphics[width=.45\textwidth,trim={0 0 0 .1cm},height=2.1cm]{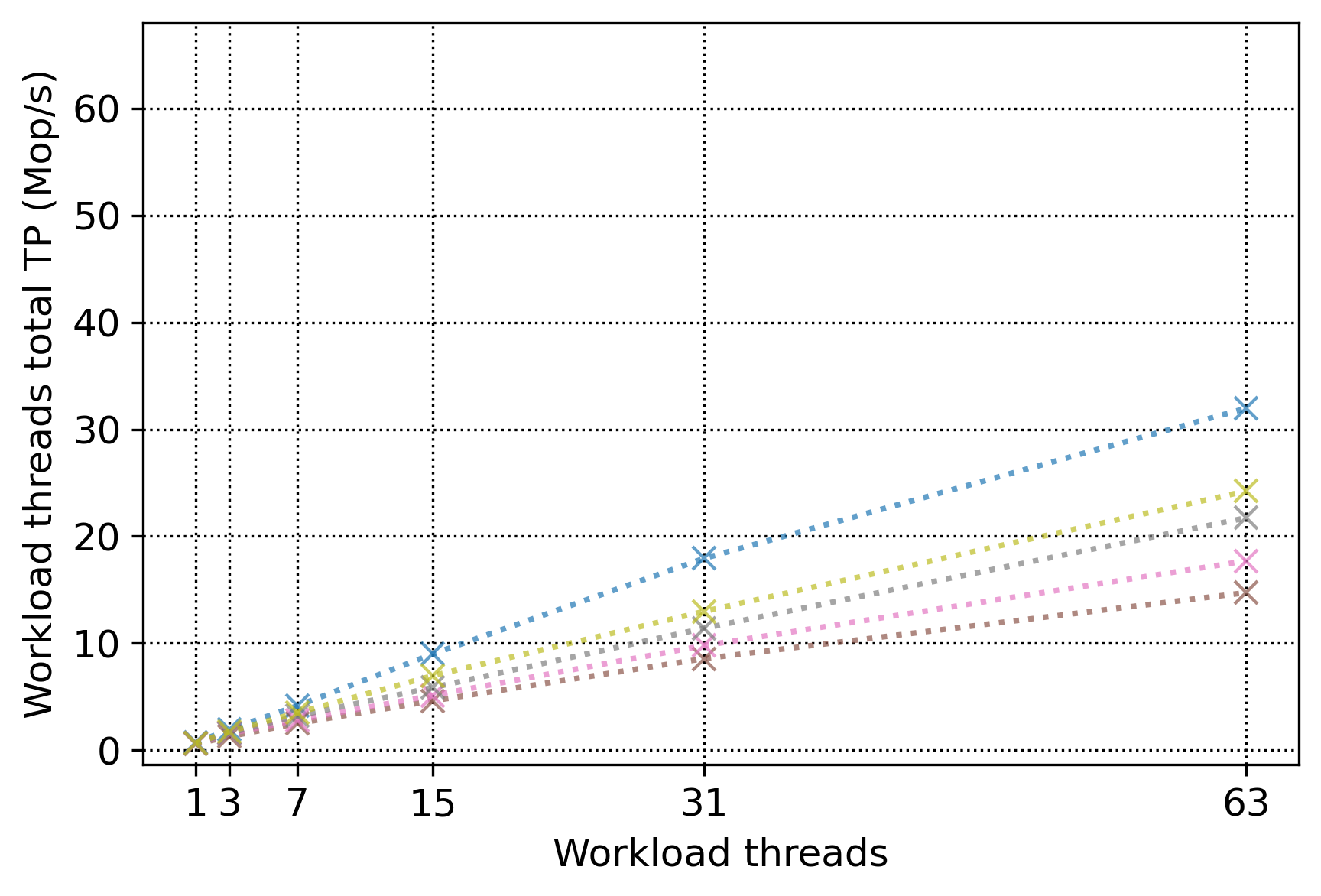}\par
	\includegraphics[width=.45\textwidth,trim={0 0 0 .1cm},height=2.1cm]{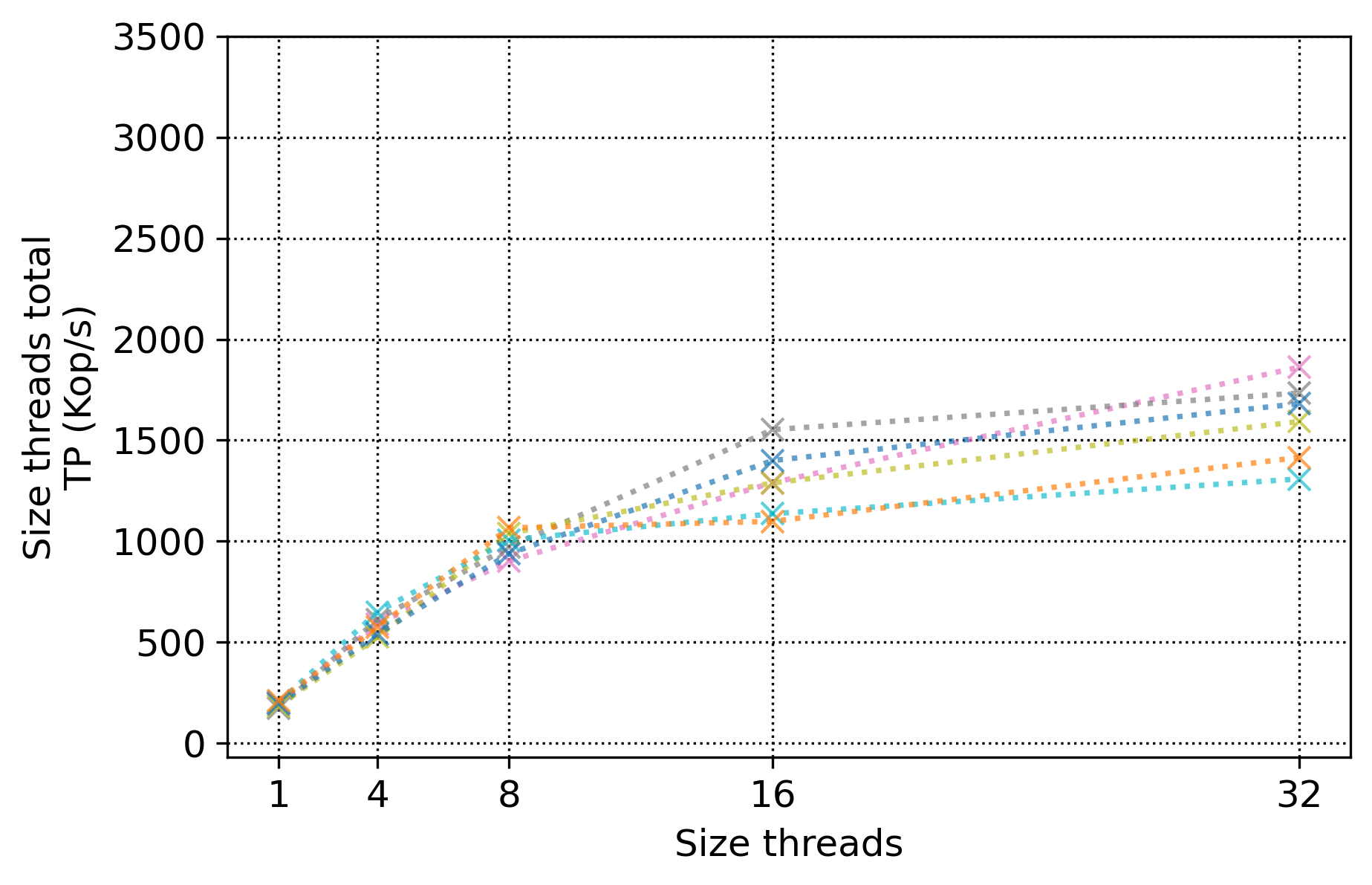}\hspace{2.5em}
	\includegraphics[width=.45\textwidth,trim={0 0 0 .1cm},height=2.1cm]{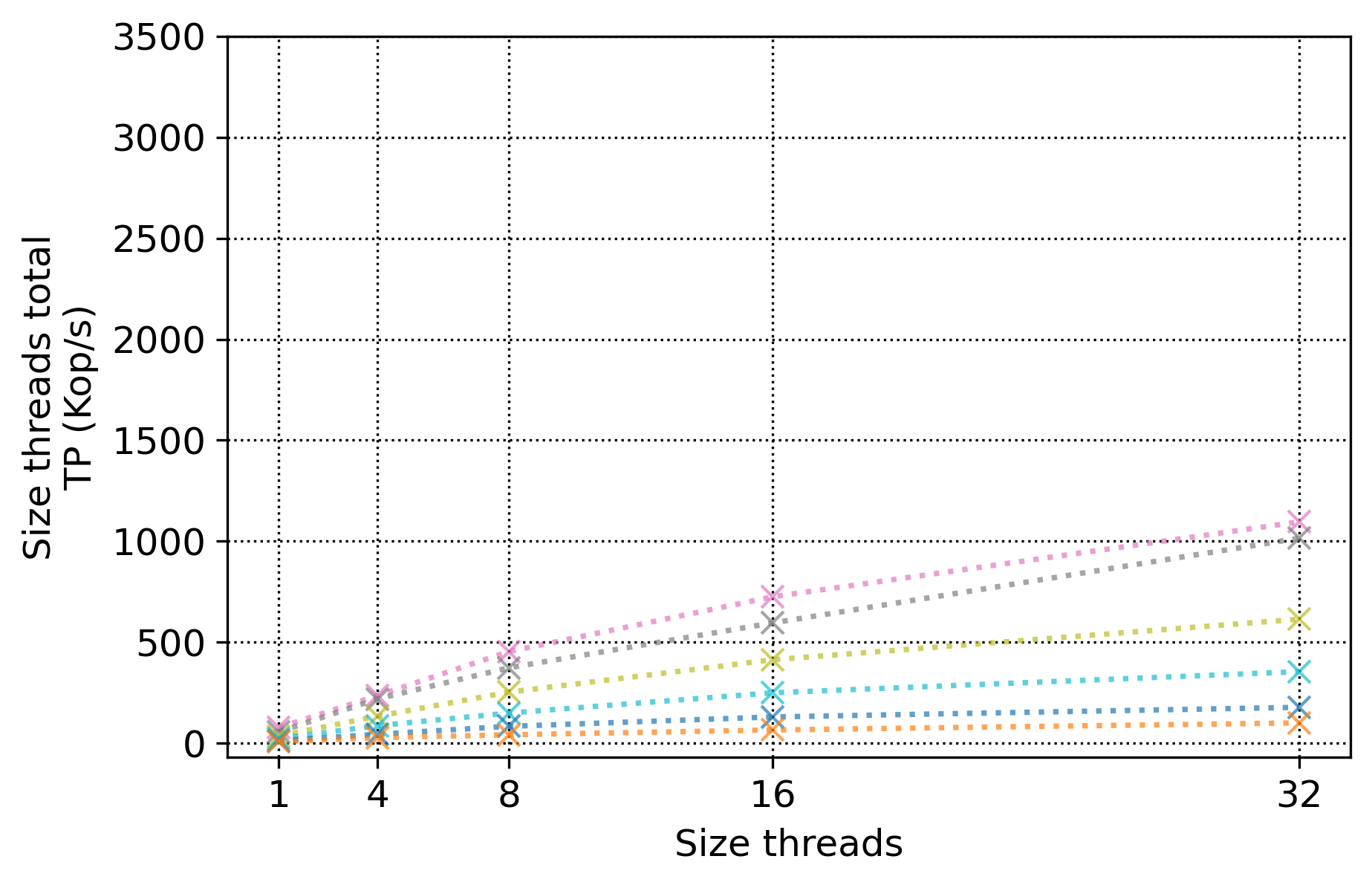}\vspace{-1.2em}
	\caption{MAX\_TRIES overhead and scalability in skip list}
	\label{fig:MAX_TRIES SL}
	
	\medskip
	\medskip
	\medskip
	\includegraphics[height=.02\textwidth]{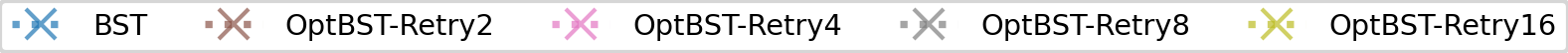}\vspace{0.5em}
	\includegraphics[width=.45\textwidth,trim={0 0 0 .1cm},height=2.1cm]{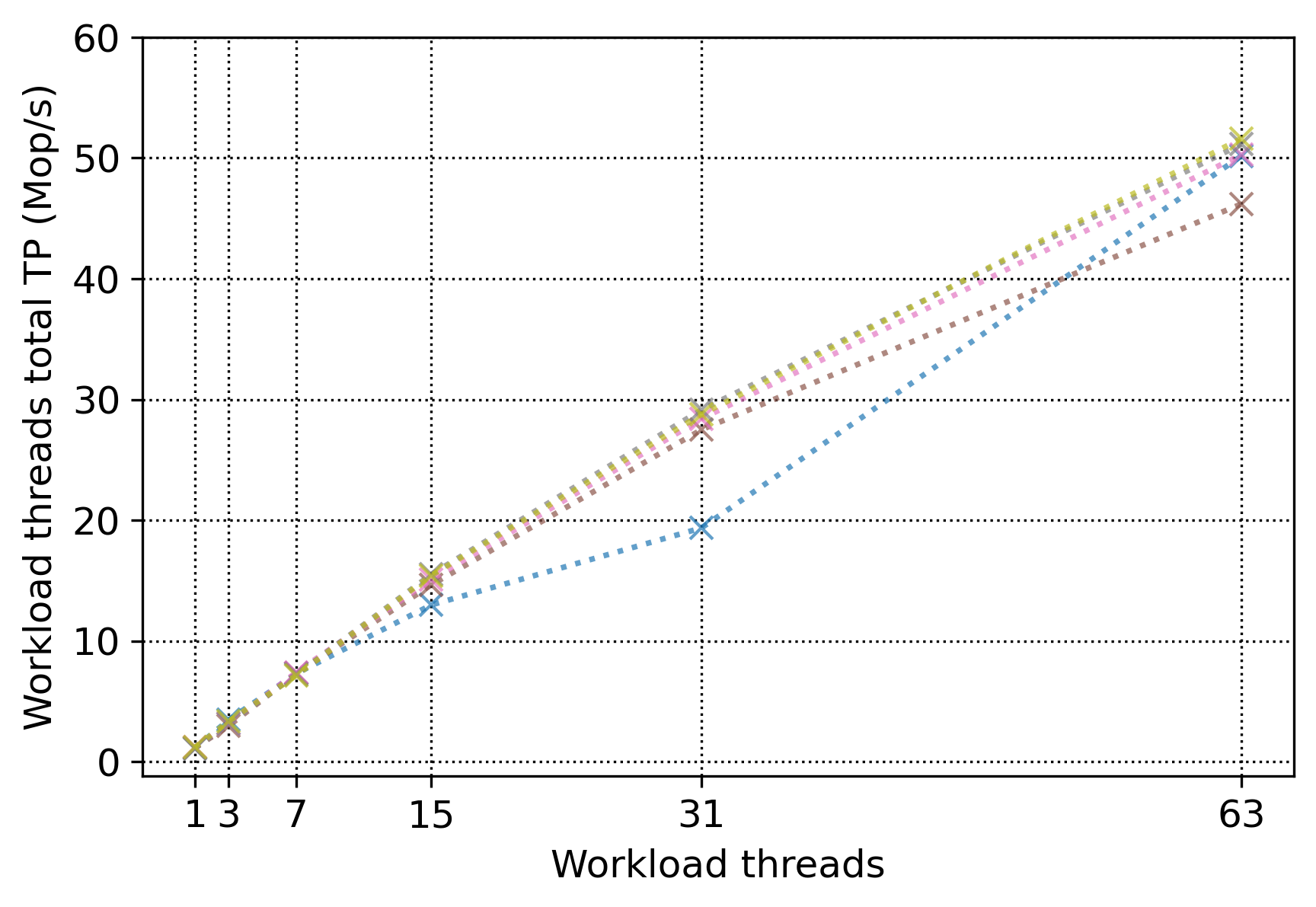}\hspace{2.5em}
	\includegraphics[width=.45\textwidth,trim={0 0 0 .1cm},height=2.1cm]{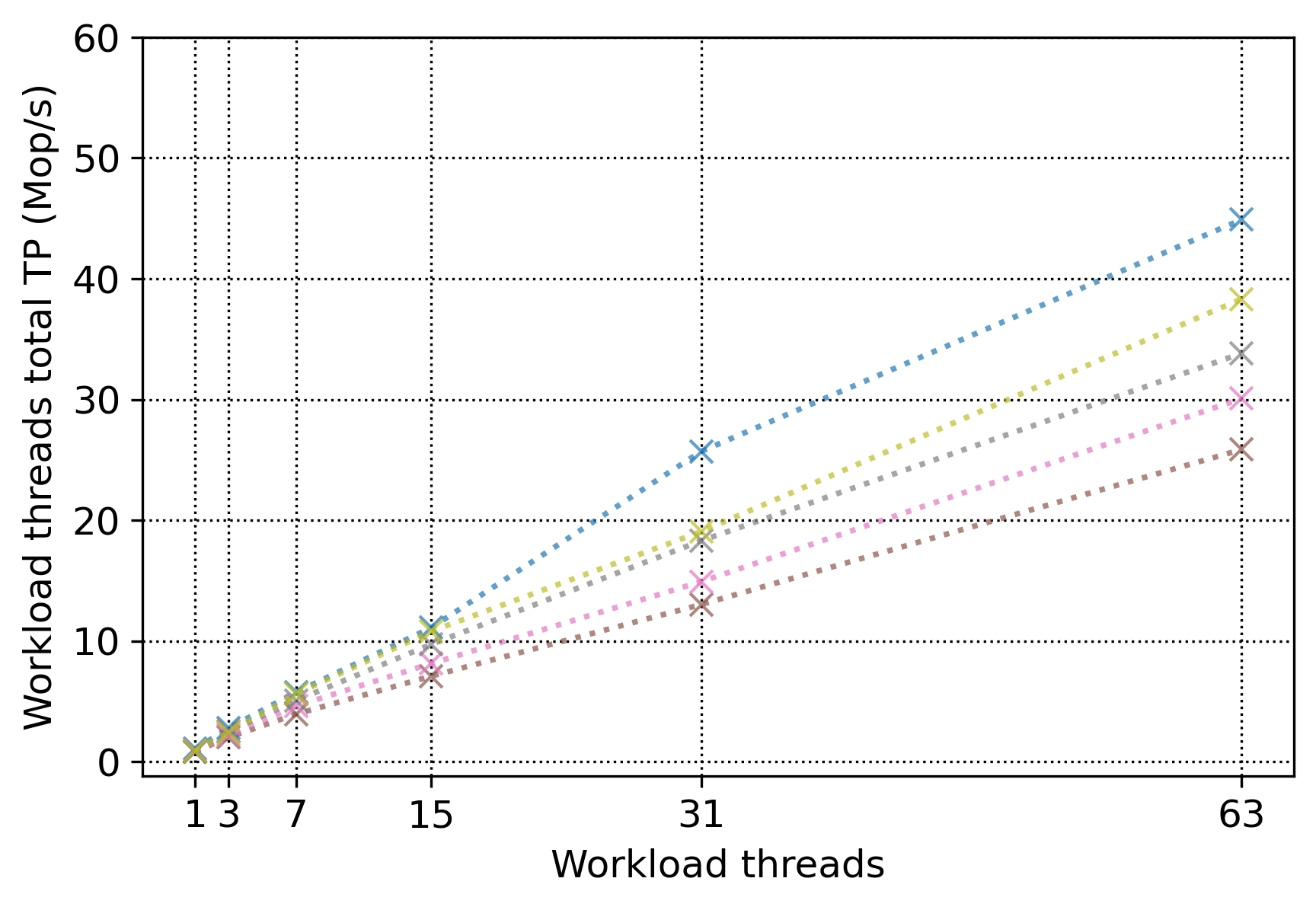}\par
	\includegraphics[width=.45\textwidth,trim={0 0 0 .1cm},height=2.1cm]{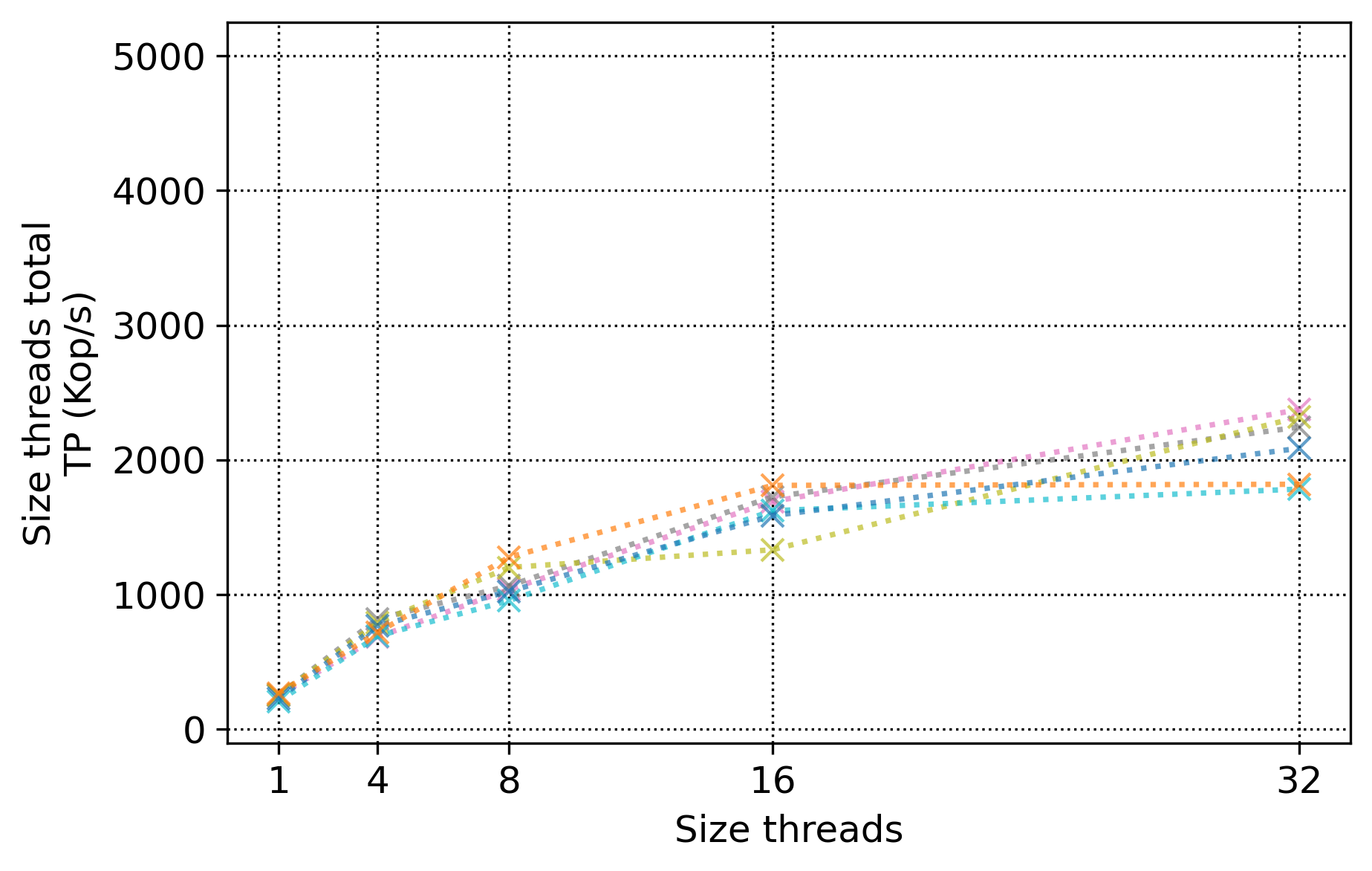}\hspace{2.5em}
	\includegraphics[width=.45\textwidth,trim={0 0 0 .1cm},height=2.1cm]{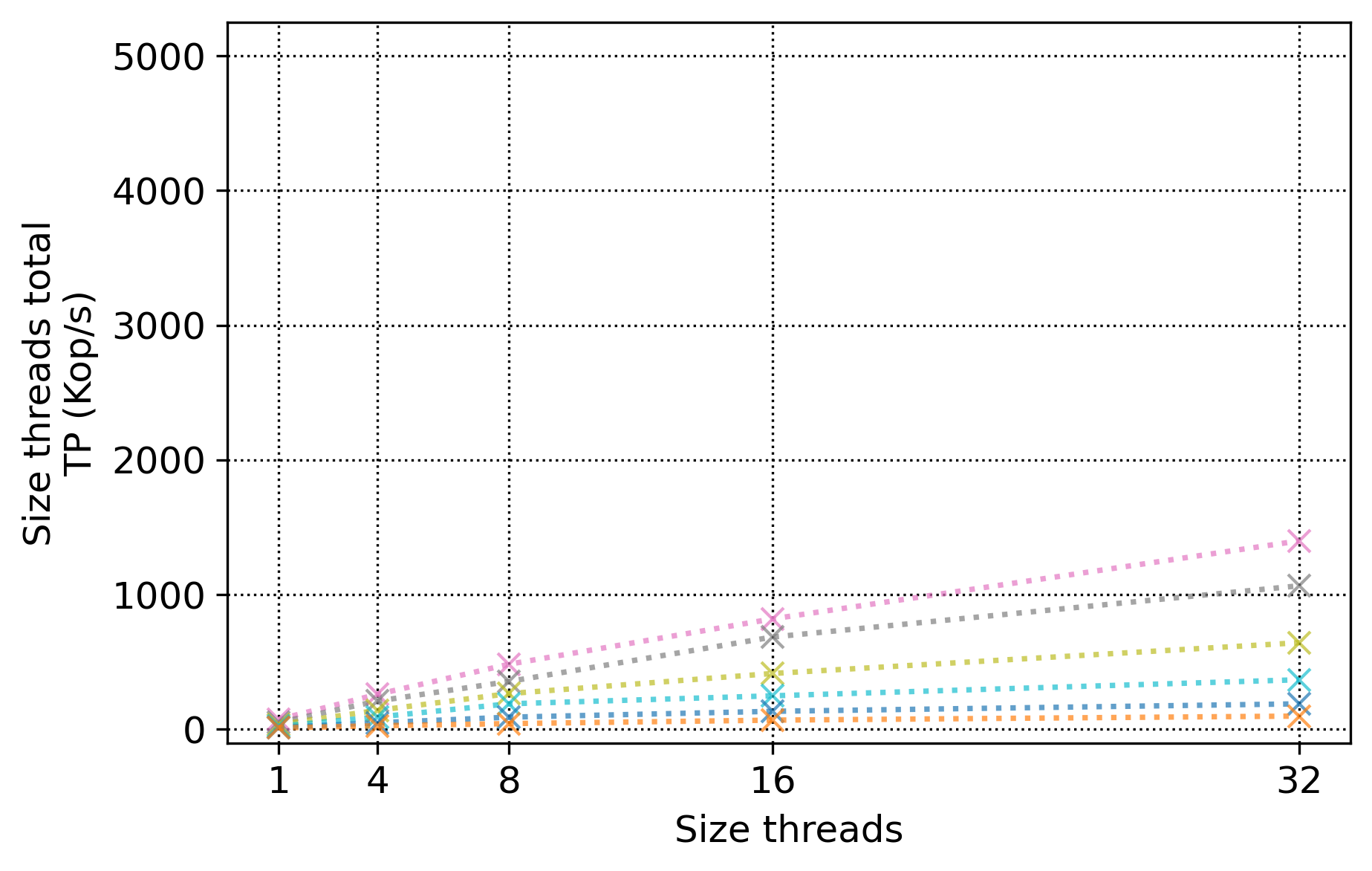}\vspace{-1.2em}
	\caption{MAX\_TRIES overhead and scalability in BST}
	\label{fig:MAX_TRIES BST}
\end{figure*}
\begin{figure*}[htbp]
	\centering
	\medskip
	\textit{Read heavy}\hspace{10cm}
	\textit{Update heavy}\par
	\medskip
	\includegraphics[height=.018\textwidth]{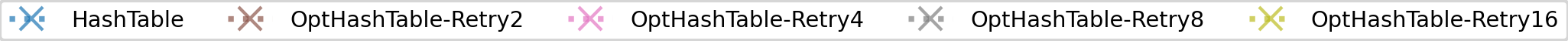}\vspace{0.5em}
	\includegraphics[width=.45\textwidth,trim={0 0 0 .1cm}]{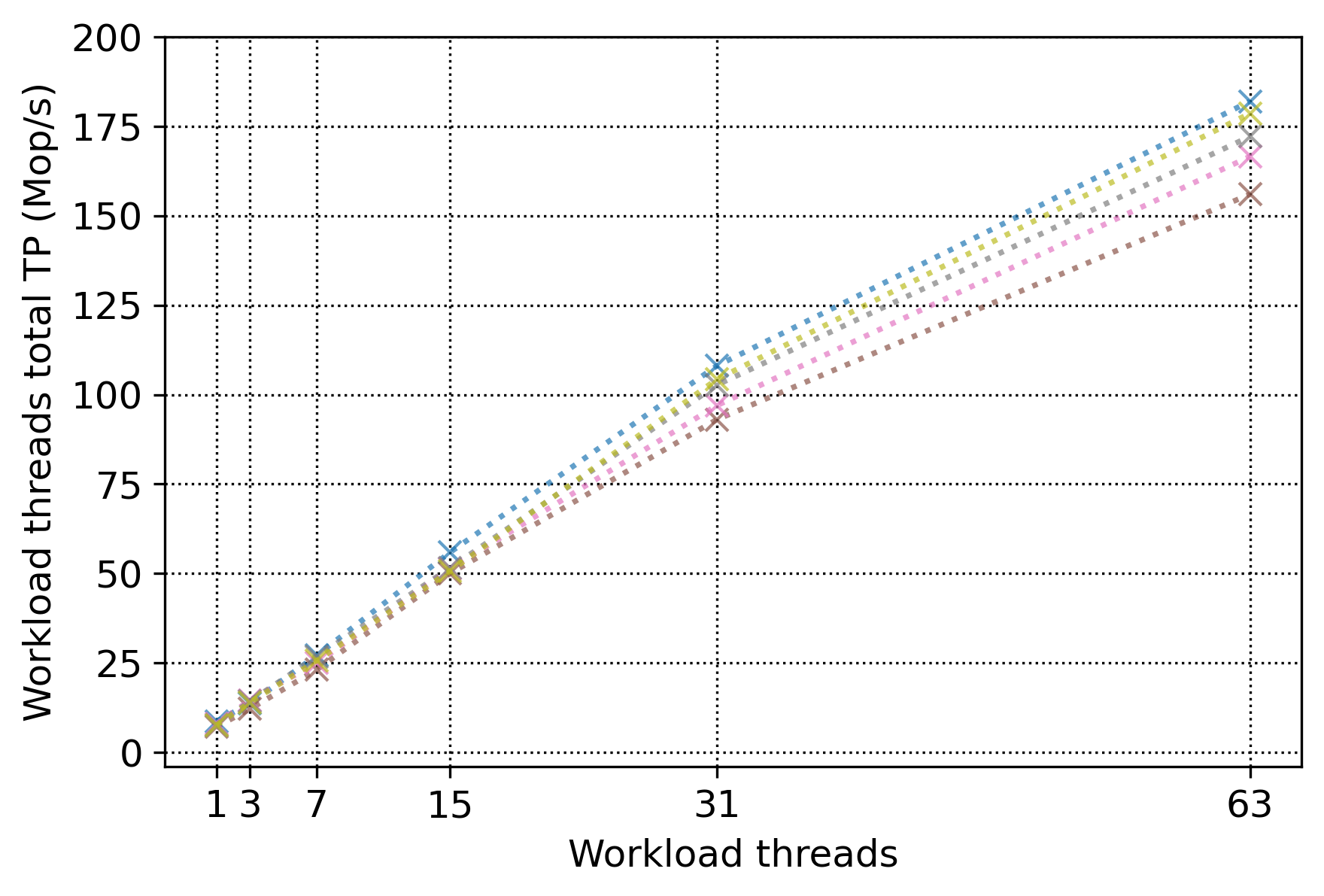}\hspace{2.5em}
	\includegraphics[width=.45\textwidth,trim={0 0 0 .1cm}]{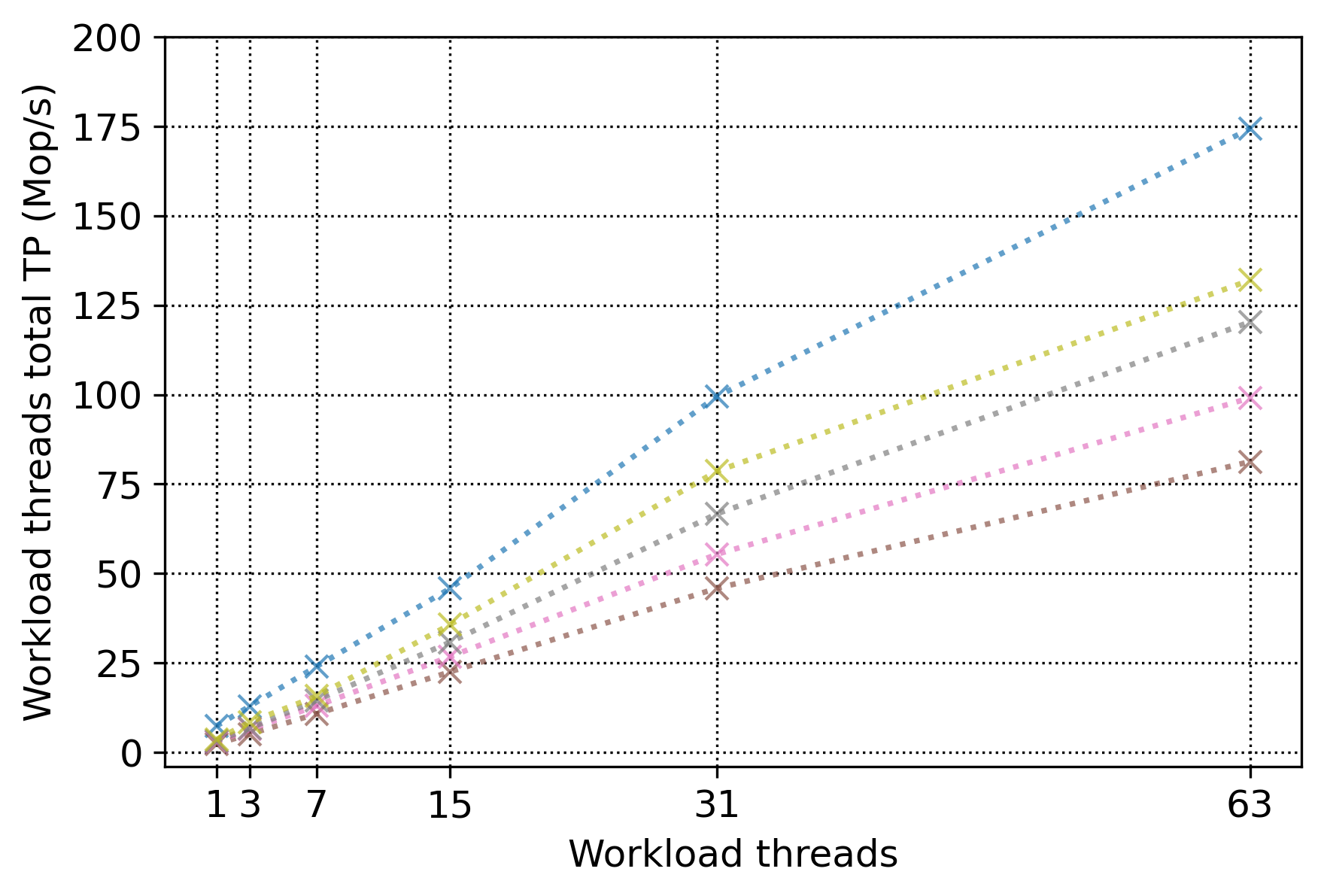}\par
	\includegraphics[width=.45\textwidth,trim={0 0 0 .1cm}]{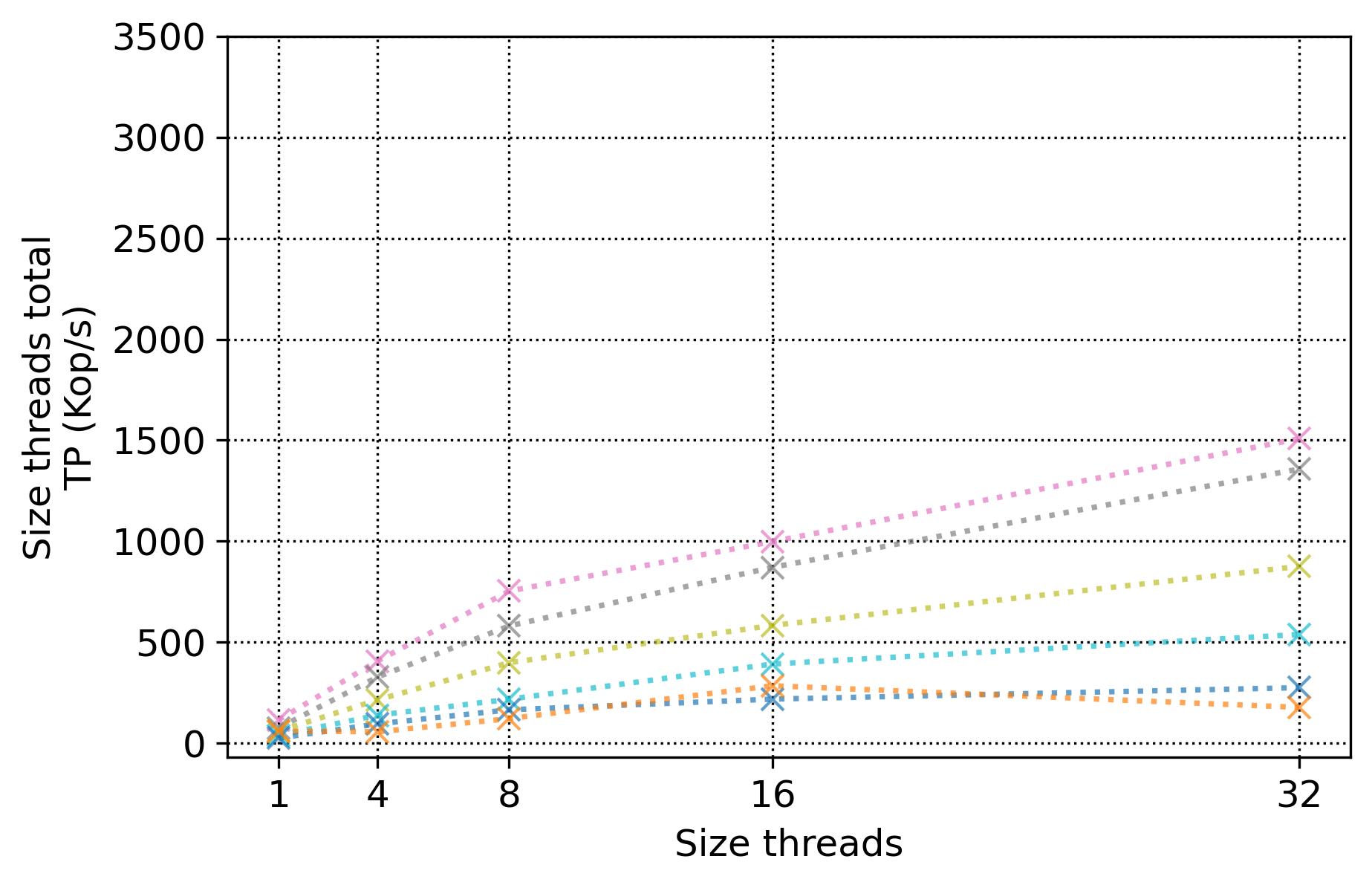}\hspace{2.5em}
	\includegraphics[width=.45\textwidth,trim={0 0 0 .1cm}]{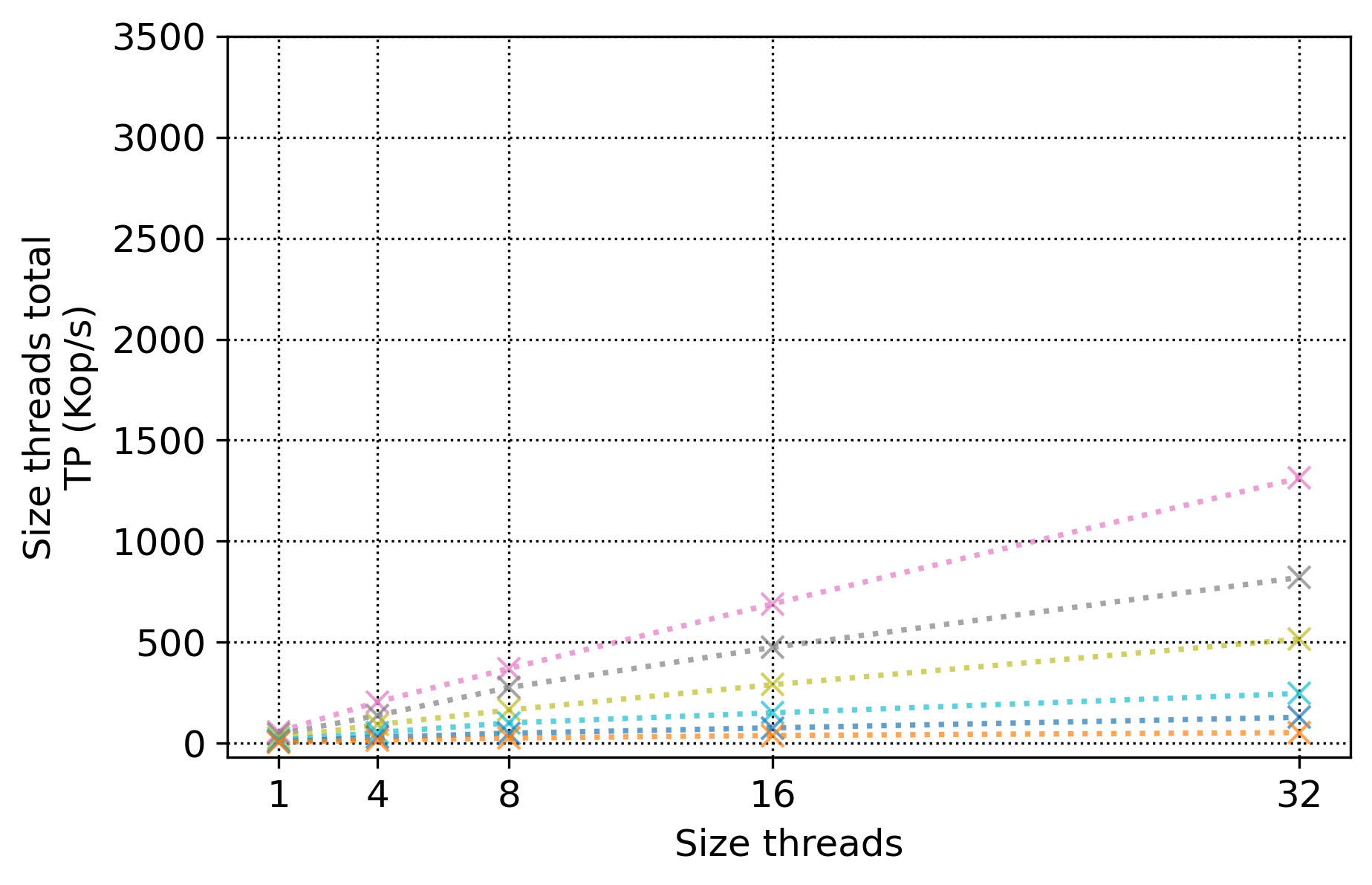}\vspace{-1.2em}
	\caption{MAX\_TRIES overhead and scalability in hash table}
	\label{fig:MAX_TRIES HT}
\end{figure*}

}

\section{Conclusion}\label{section:conclusion}
This paper investigated the use of various synchronization methods for the important \size{} property of concurrent data structures. Three designs—employing handshake, optimistic, and lock-based synchronization techniques—were proposed to implement a linearizable \size{} operation for sets and dictionaries. Their performance was evaluated against each other and compared to the existing method of~\cite{sela2021concurrentSize} using various workloads.
The evaluation showed that while no single scheme offers the best performance in all scenarios, the findings align with general trends in concurrent computing.
In low-contention scenarios, optimistic and lock-based approaches perform best, whereas, under high contention, the handshake and wait-free SP methods provide the most effective solutions.

An interesting direction for future work is to design an adaptive solution that monitors the workload and dynamically switches between the handshake, optimistic, and lock-based synchronization methods. Such a system would require lightweight workload identification and efficient mode transitions, and it remains to be seen whether the potential performance gains justify the added complexity.
\bibliographystyle{ACM-Reference-Format}
\bibliography{refs}
\newpage

\appendix

\section{MAX\_TRIES measurements}\label{sec:max_tries}
In this section we evaluate the impact of the \codestyle{MAX\_TRIES} parameter on the optimistic method in the different data structures. Presented in  \Cref{fig:MAX_TRIES SL,fig:MAX_TRIES BST,fig:MAX_TRIES HT}, we conducted additional measurements to compare both the overhead imposed on the original data structure and on the performance of the \size{} operation across \codestyle{MAX\_TRIES} values ranging from 2 to 16. Each experiment was performed under two workload conditions: a read-heavy scenario and a update-heavy scenario.

\begin{figure*}[b!]
	\centering
	\medskip
	\textit{Read heavy}\hspace{10cm}
	\textit{Update heavy}\par
        \medskip
        \includegraphics[height=.025\textwidth]{graphs/SkipList/legend_overhead_lines_optimistic_retries.png}\vspace{0.5em}
	\text{MAX\_TRIES effect on non-\size{} operations' performance in skip list}\par
        \medskip
	\includegraphics[width=.45\textwidth,trim={0 0 0 .1cm}]{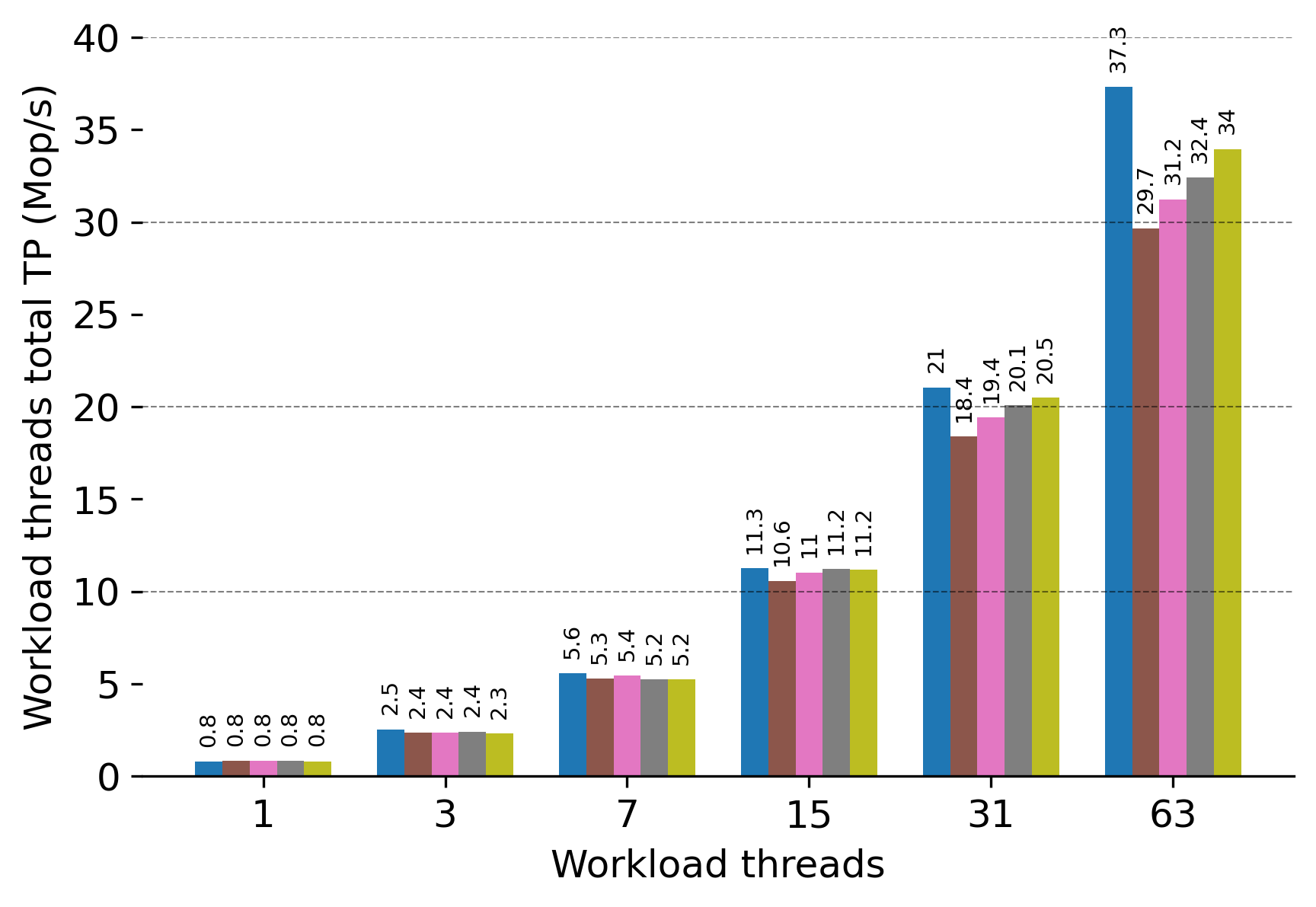}\hspace{2.5em}
	\includegraphics[width=.45\textwidth,trim={0 0 0 .1cm}]{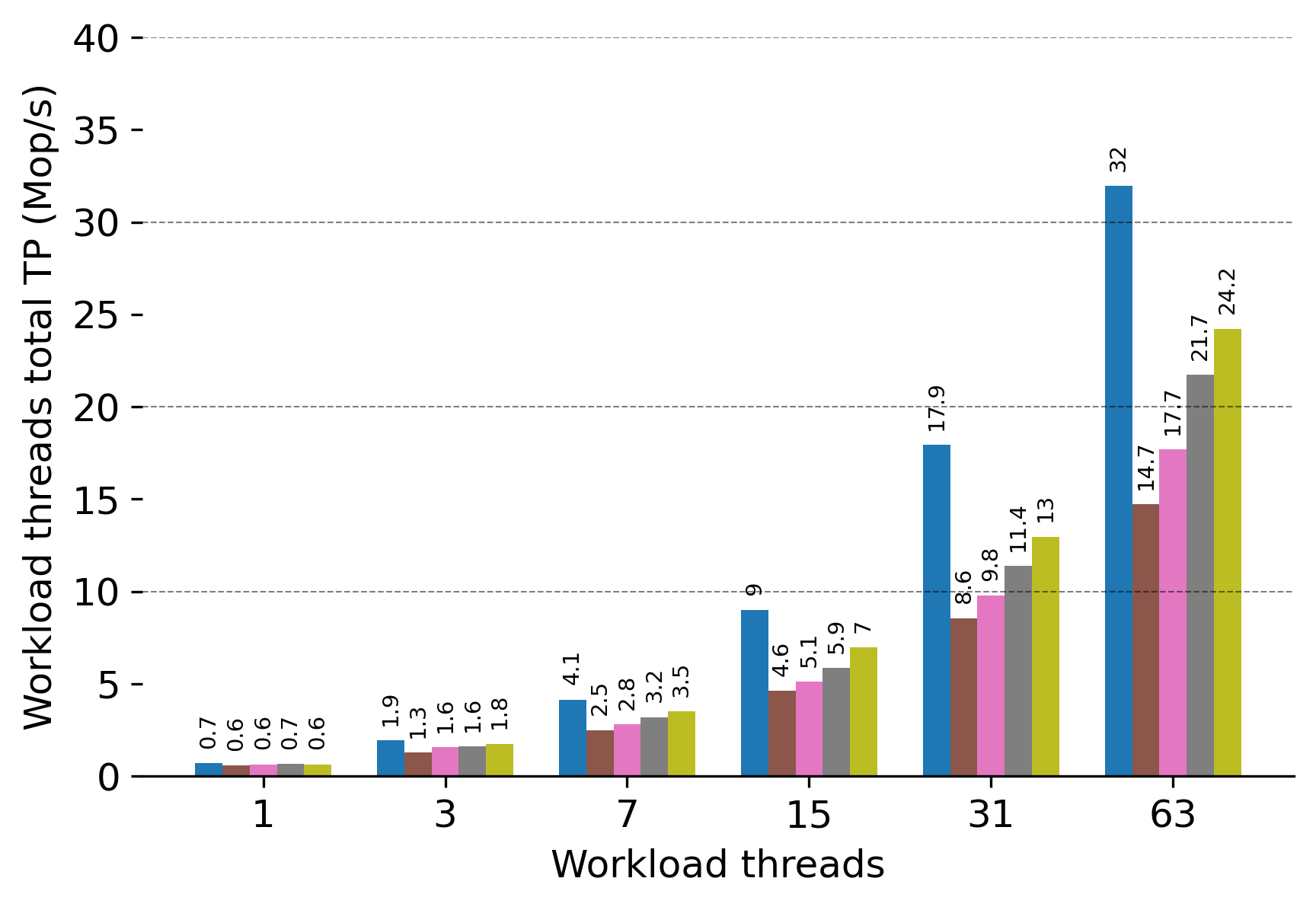}\vspace{-1.2em}

        \vspace{1.2em}
        \medskip
	\text{MAX\_TRIES effect on \size{} operations' performance in skip list}\par
        \medskip
	\includegraphics[width=.45\textwidth,trim={0 0 0 .1cm}]{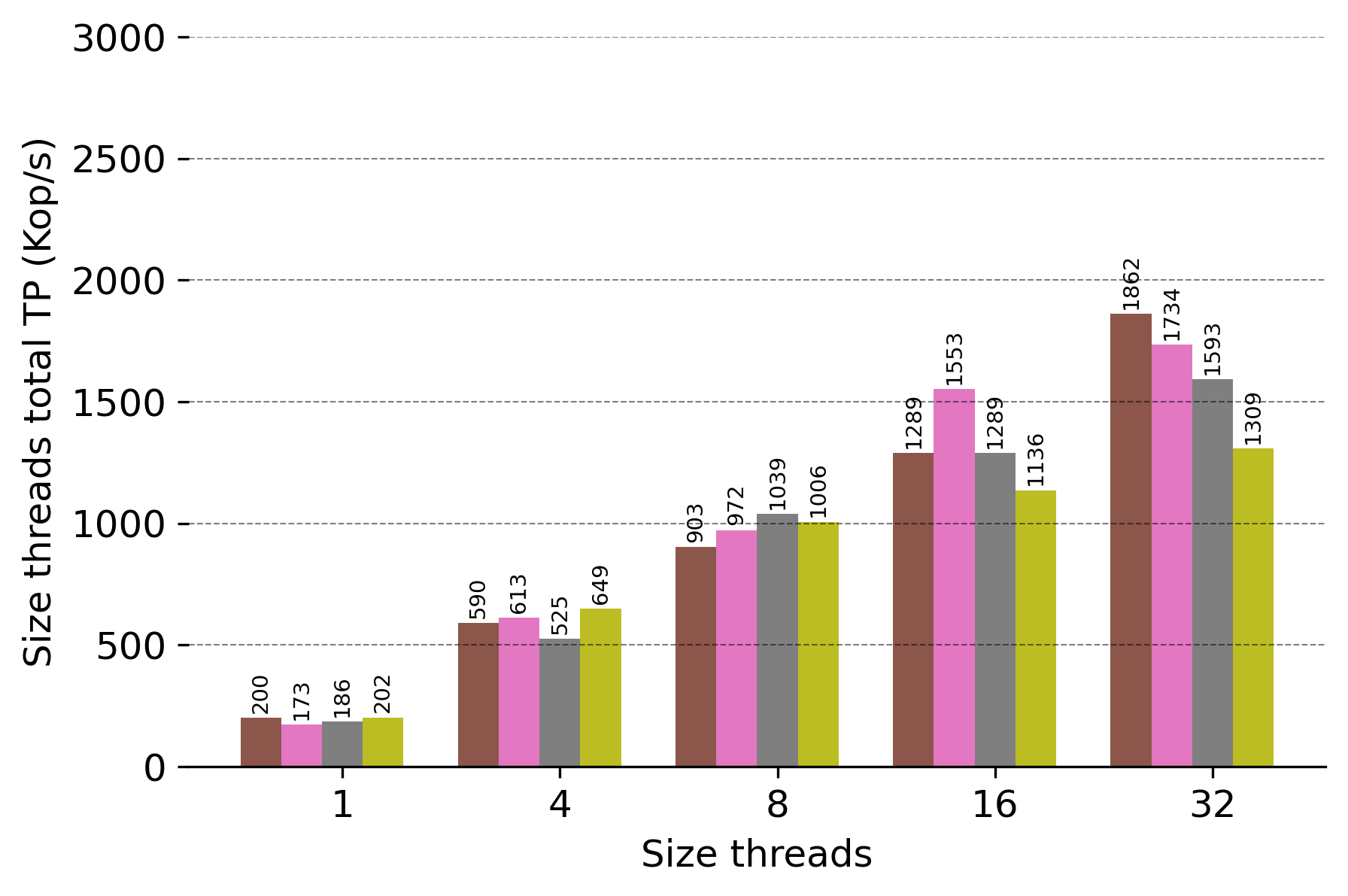}\hspace{2.5em}
	\includegraphics[width=.45\textwidth,trim={0 0 0 .1cm}]{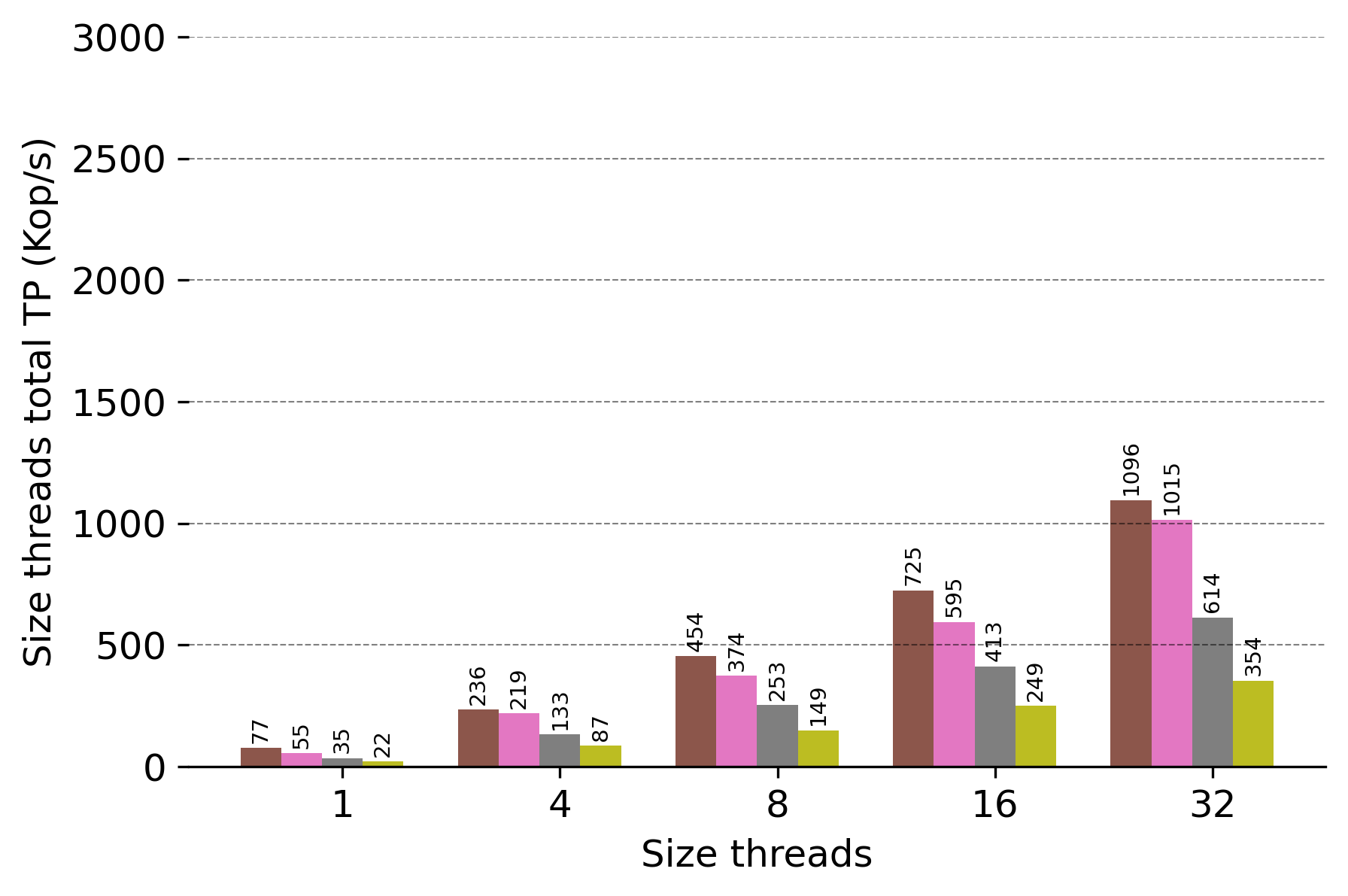}\vspace{-1.2em}
        \medskip
        \medskip
        \caption{MAX\_TRIES measurements in skip list}\par
	\label{fig:MAX_TRIES SL}
\end{figure*}

\begin{figure*}[htbp]
	\centering
	\medskip
	\textit{Read heavy}\hspace{10cm}
	\textit{Update heavy}\par
        \medskip
    \includegraphics[height=.025\textwidth]{graphs/BST/legend_overhead_lines_optimistic_retries.png}\vspace{0.5em}
\text{MAX\_TRIES effect on non-\size{} operations' performance in BST}\par
        \medskip
	\includegraphics[width=.45\textwidth,trim={0 0 0 .1cm}]{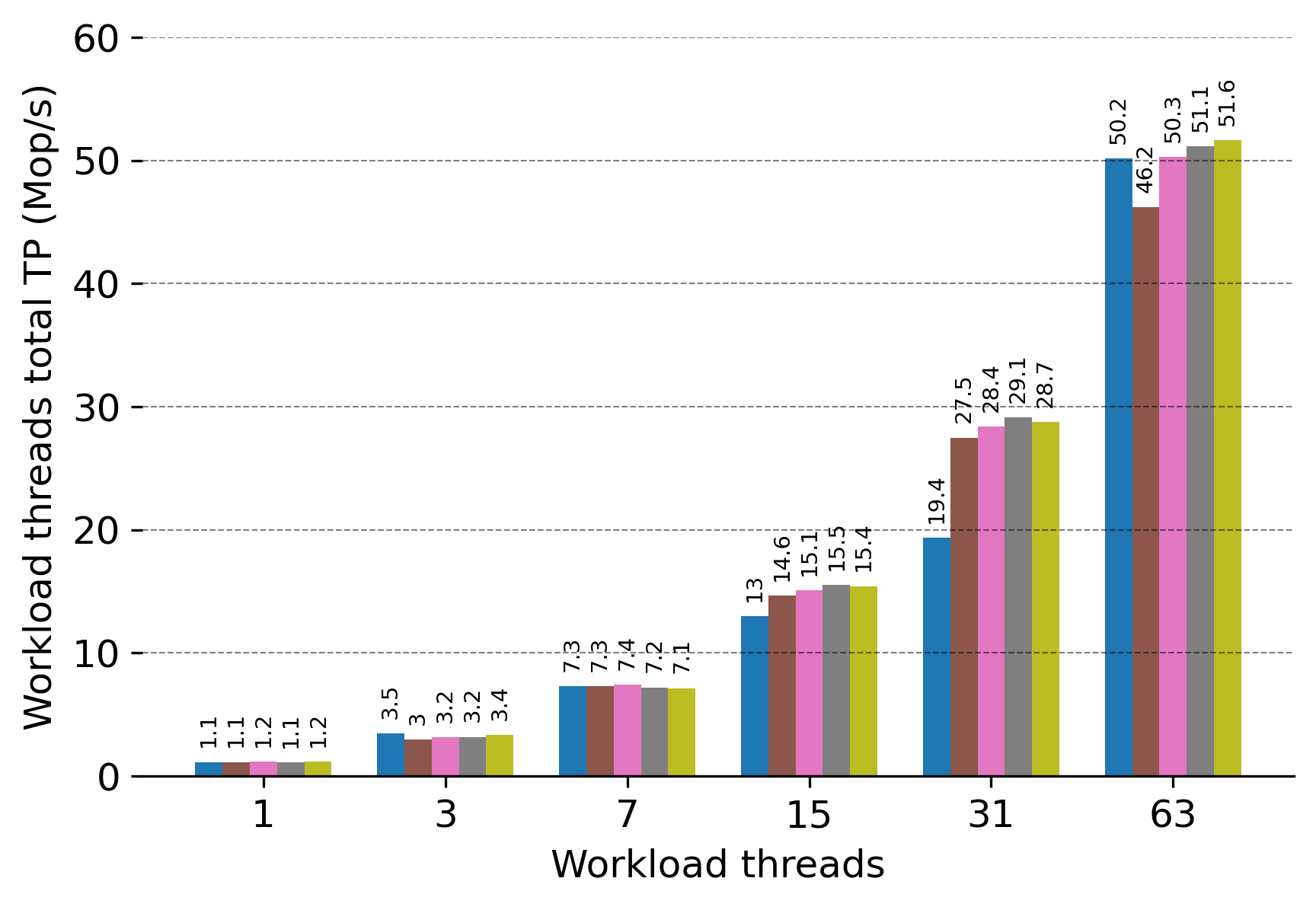}\hspace{2.5em}
	\includegraphics[width=.45\textwidth,trim={0 0 0 .1cm}]{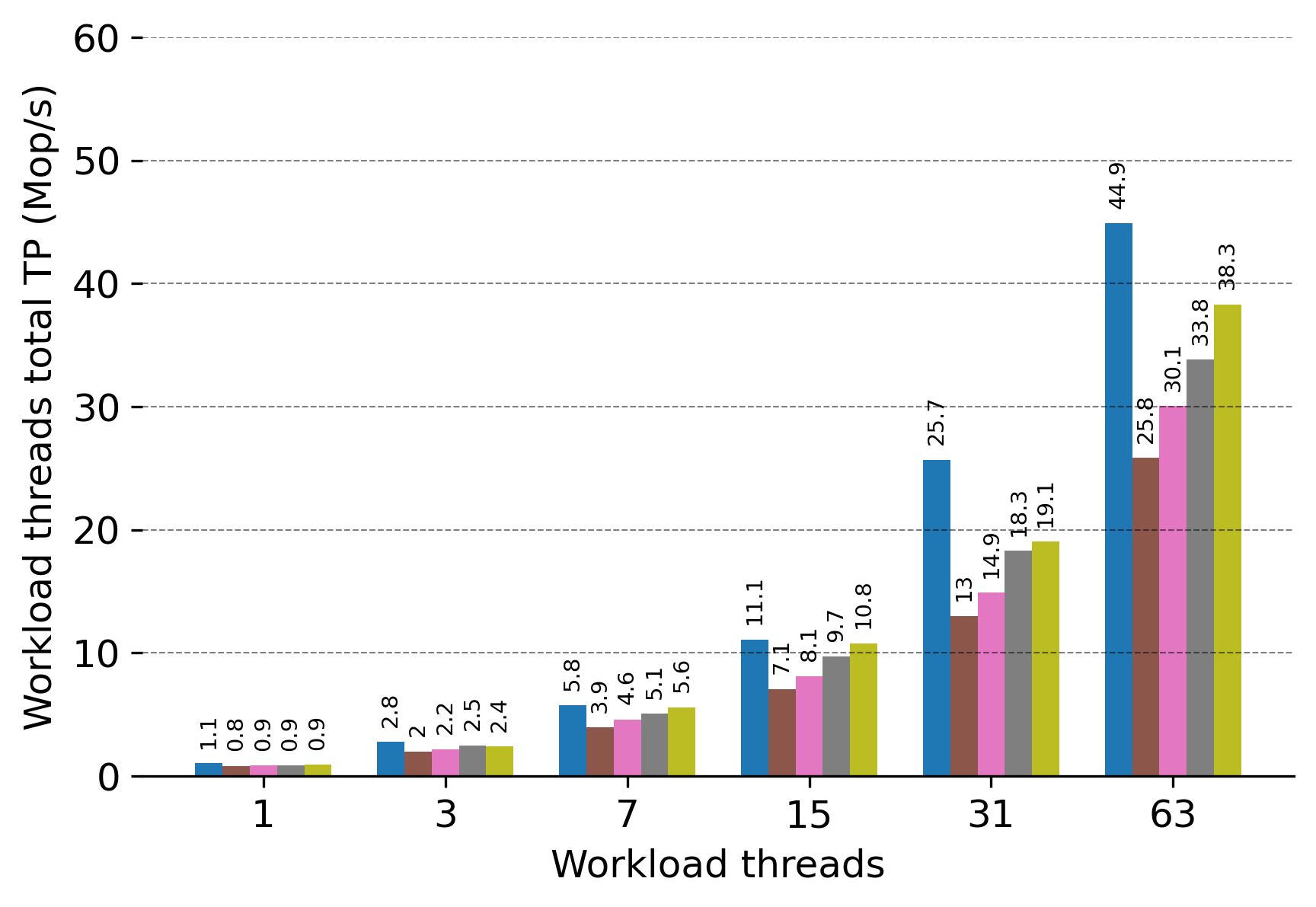}\vspace{-1.2em}

        \vspace{1.2em}
        \medskip
	\text{MAX\_TRIES effect on \size{} operations' performance in BST}\par
        \medskip
	\includegraphics[width=.45\textwidth,trim={0 0 0 .1cm}]{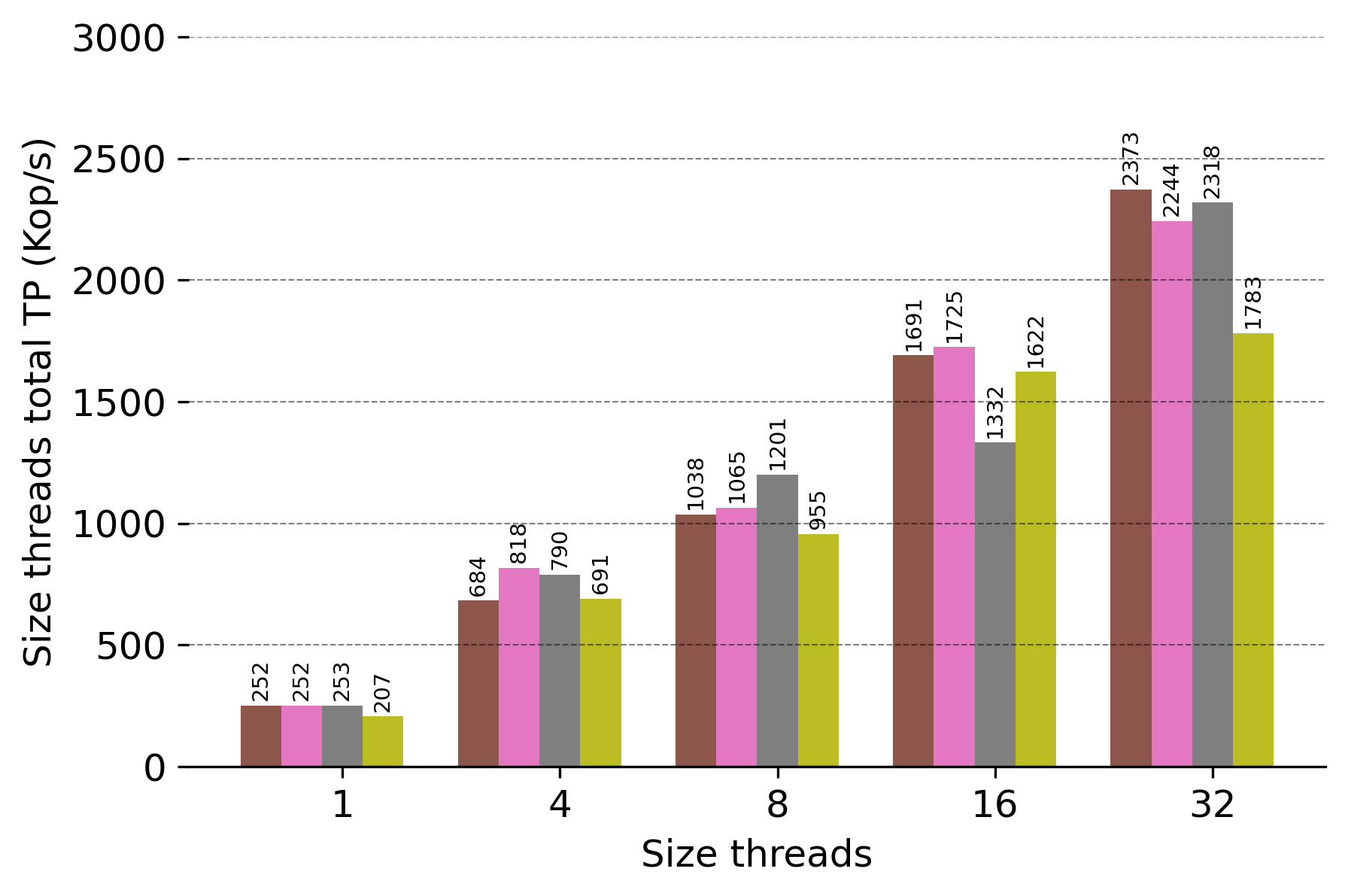}\hspace{2.5em}
	\includegraphics[width=.45\textwidth,trim={0 0 0 .1cm}]{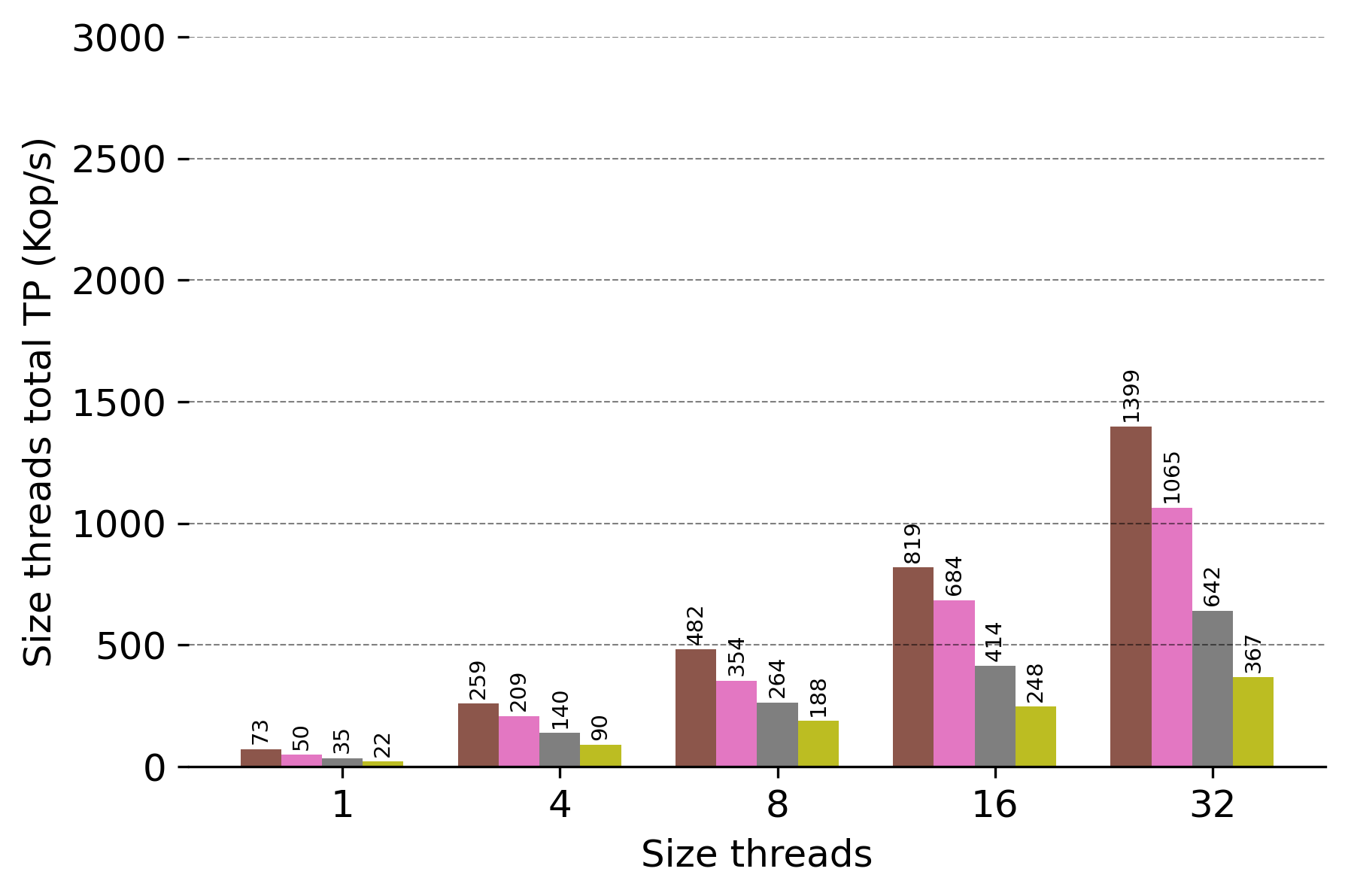}\vspace{-1.2em}
	\caption{MAX\_TRIES measurements in BST}
	\label{fig:MAX_TRIES BST}
\end{figure*}

\begin{figure*}[htbp]
	\centering
	\medskip
	\textit{Read heavy}\hspace{10cm}
	\textit{Update heavy}\par
	\medskip
	\includegraphics[height=.025\textwidth]{graphs/HashTable/legend_overhead_lines_optimistic_retries.png}\vspace{0.5em}
    \text{MAX\_TRIES effect on non-\size{} operations' performance in hash table}\par
        \medskip
	\includegraphics[width=.45\textwidth,trim={0 0 0 .1cm}]{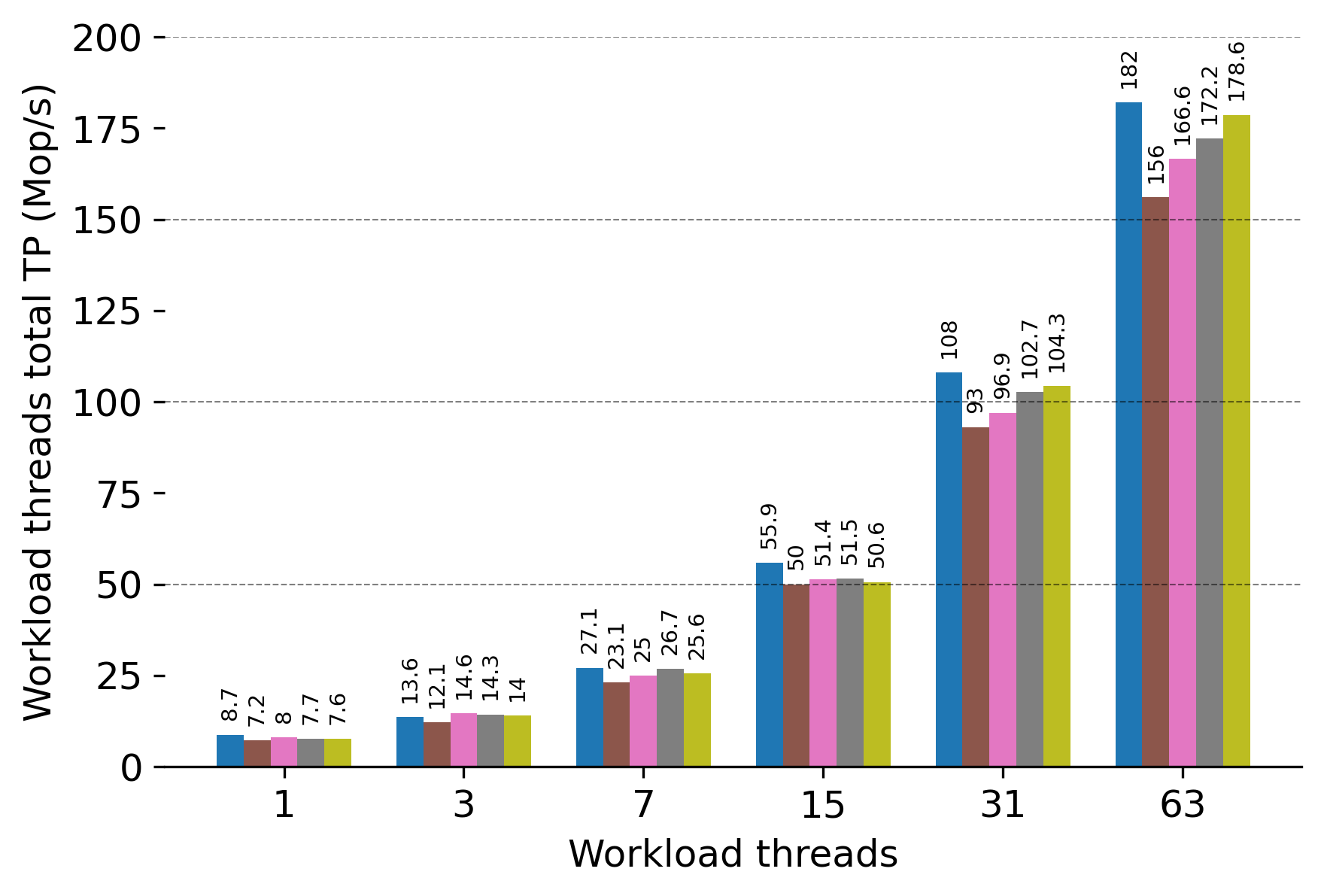}\hspace{2.5em}
	\includegraphics[width=.45\textwidth,trim={0 0 0 .1cm}]{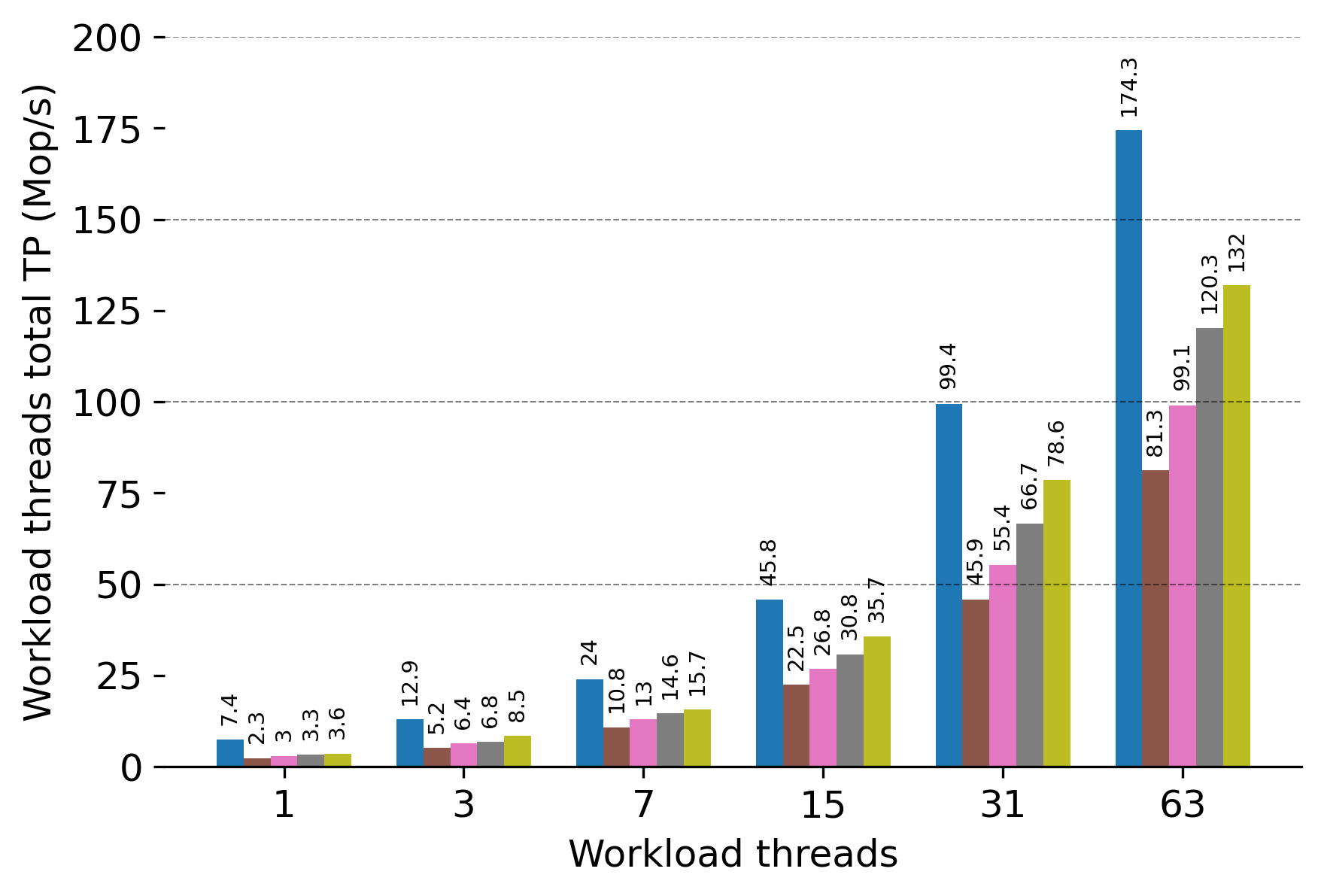}\vspace{-1.2em}

    	        \vspace{1.2em}
                \medskip
	\text{MAX\_TRIES effect on \size{} operations' performance in hash table}\par
        \medskip
	\includegraphics[width=.45\textwidth,trim={0 0 0 .1cm}]{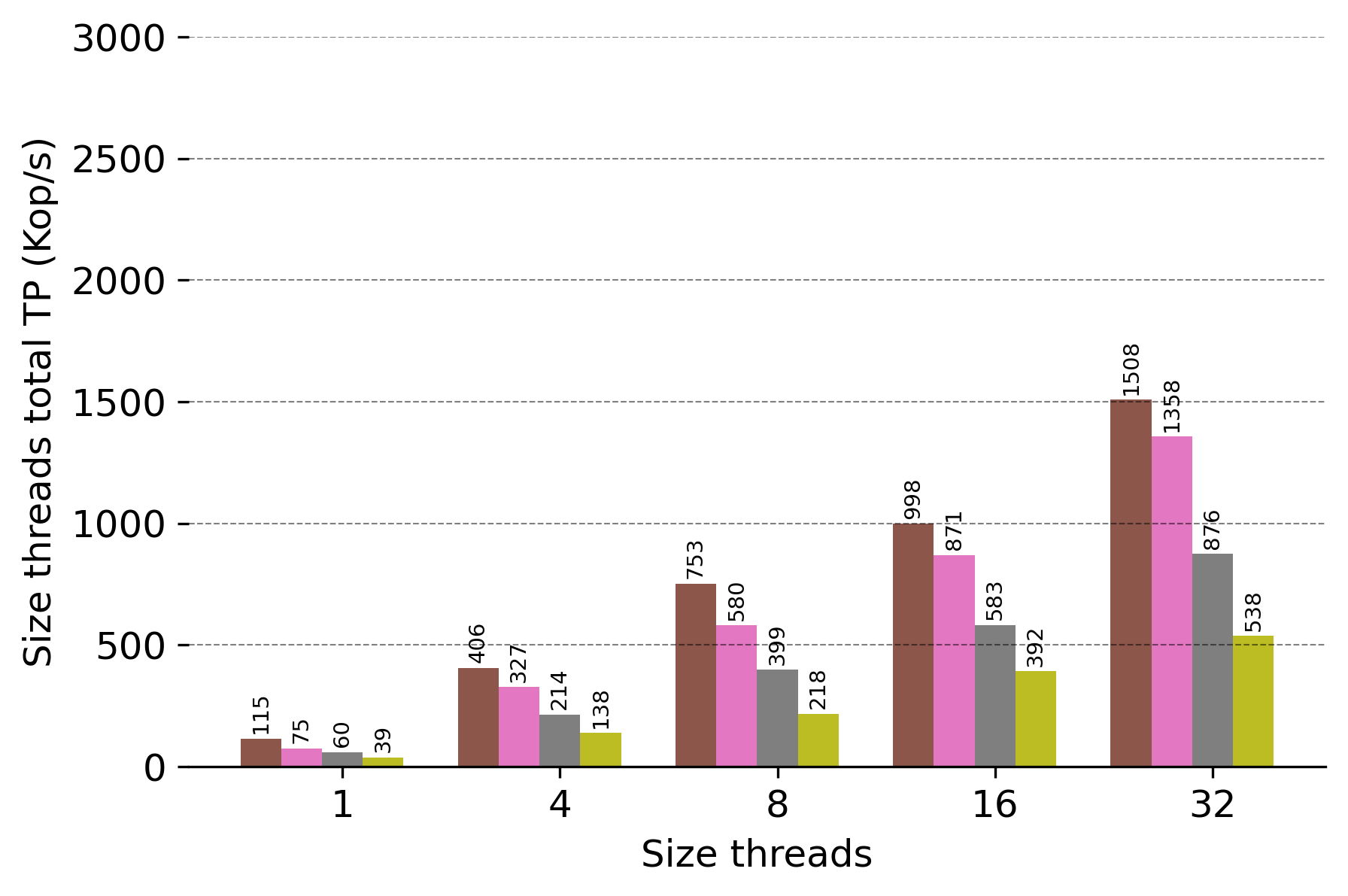}\hspace{2.5em}
	\includegraphics[width=.45\textwidth,trim={0 0 0 .1cm}]{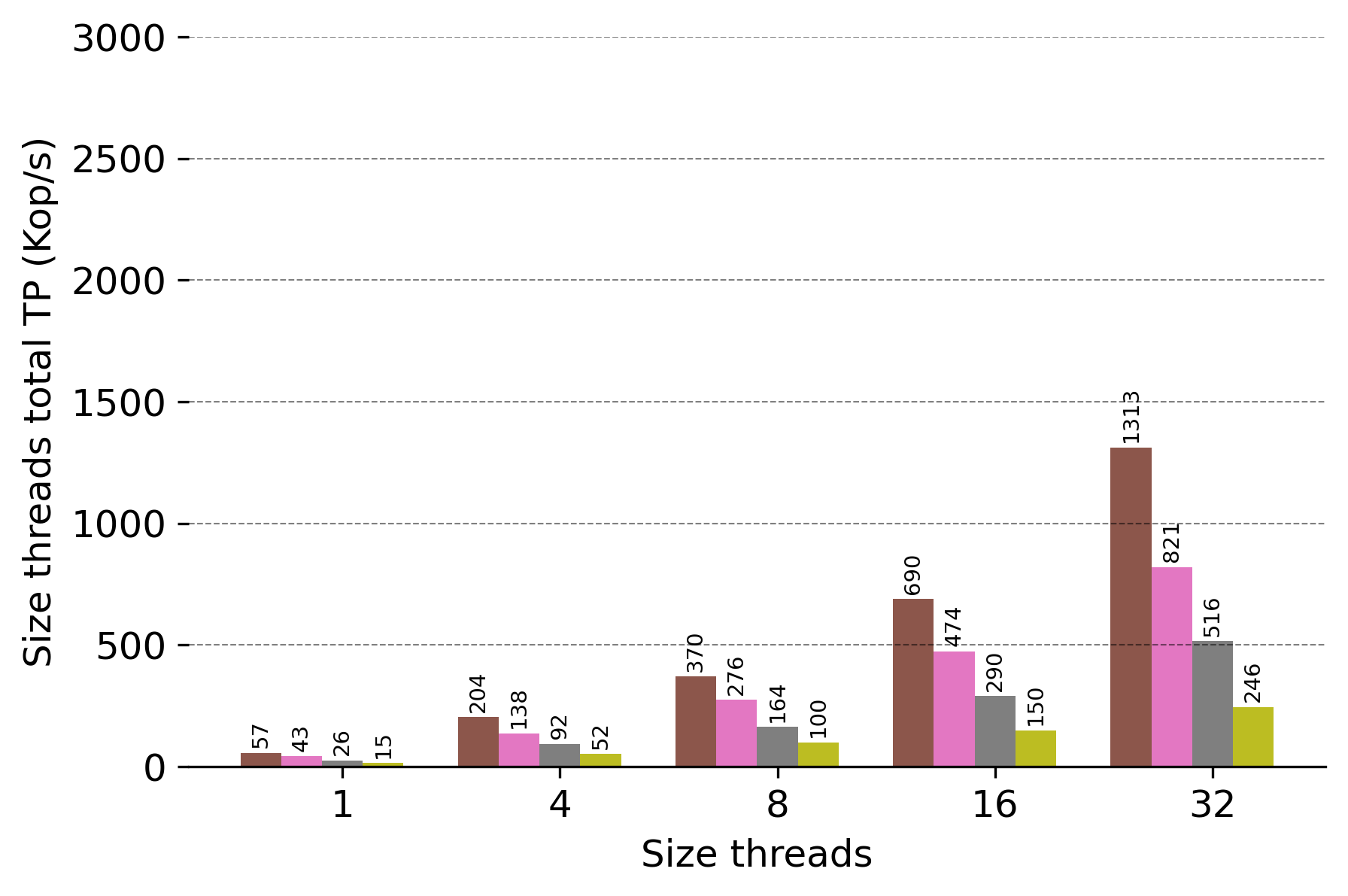}\vspace{-1.2em}
	\caption{MAX\_TRIES measurements in hash table}
	\label{fig:MAX_TRIES HT}
\end{figure*}

To assess the impact on the original data structure’s operations, we measured the total throughput of non-\size{} threads with a varying number of workload threads (non-\size{} threads) running alongside a single \size{} thread with a different \codestyle{MAX\_TRIES} value for each measurement. The \size{} thread continuously invokes the \size{} operation without any delay. 

The second measurement we conducted focused on the effect on performance of \size{} operations with respect to \codestyle{MAX\_TRIES}. In this experiment, the number of workload threads was held constant at 32, while the number of \size{} threads and \codestyle{MAX\_TRIES} were varied.

These graphs demonstrate a clear trend regarding the impact of the \codestyle{MAX\_TRIES} parameter on system performance. In general, increasing the \codestyle{MAX\_TRIES} value leads to improved performance for non-\size{} threads. This is because a larger \codestyle{MAX\_TRIES} value causes the \size{} threads to attempt the optimistic \size{} computation for a longer period before interrupting the non-\size{} threads to request assistance with calculating the data structure’s size. As a result, the non-\size{} threads experience fewer interruptions, allowing them to execute their operations with less interference.

Conversely, the performance of \size{} threads tends to degrade as \codestyle{MAX\_TRIES} increases. This degradation occurs because, under high contention — when many non-\size{} threads are actively operating on the data structure — optimistic size computations are more likely to fail. When the \codestyle{MAX\_TRIES} value is large, \size{} threads spend more time unsuccessfully retrying the optimistic approach before ultimately requesting assistance from non-\size{} threads. This extended period of unsuccessful retries increases the overall latency of the \size{} operation itself.

Although this general trend holds in most cases, the data also reveals certain exceptions, particularly in some read-heavy workloads where larger \codestyle{MAX\_TRIES} values lead to better performance for the \size{} operation as well. This happens because in read-heavy conditions, the optimistic size computation has a higher probability of succeeding without assistance. Consequently, in some cases, the overhead introduced by interrupting non-\size{} threads to help with size calculation outweighs the cost of simply retrying the optimistic approach several times. This trade-off highlights the importance of workload characteristics when tuning the \codestyle{MAX\_TRIES} parameter.

\newpage
\section{Zipfian measurements}\label{sec:zipfian graphs}
This section presents an additional evaluation using a Zipfian-distributed workload to examine the robustness of our findings under non-uniform access patterns. The results are provided in \Cref{fig:SkipList zipfian overhead,fig:BST zipfian overhead,fig:HashTable zipfian overhead}. As described in \Cref{sec-zipfian}, only \contains{} operations were modified to select keys according to a Zipfian distribution, while \ins{} and \del{} operations continued to use keys drawn uniformly at random. We employed the \texttt{ScrambledZipfianGenerator} from the YCSB suite~\cite{cooper2010benchmarking}, with a skew parameter $\theta = 0.99$. This generator avoids biasing access patterns toward low-value keys, which is particularly important for ordered data structures such as skip lists and binary search trees. All other experimental parameters remained consistent with those used in the uniform workload experiments.

As \Cref{fig:SkipList zipfian overhead,fig:BST zipfian overhead,fig:HashTable zipfian overhead} show, the performance trends and relative rankings of the synchronization methods under Zipfian access are largely consistent with the results obtained under uniform access. This suggests that the evaluated methodologies exhibit stable behavior under skewed key access patterns, as commonly encountered in real-world systems.

\clearpage
\vspace*{\fill}
\begin{figure*}[h]
	\centering
	\medskip
	\textit{Read heavy}\hspace{0.3em}
	\includegraphics[height=.02\textwidth]{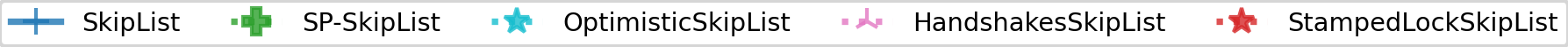}\hspace{0.3em}
	\textit{Update heavy}\par
	\medskip
	\text{Without a concurrent \size{} thread}\par
        \smallskip
	\includegraphics[width=.45\textwidth,trim={0 0 0 .1cm}]{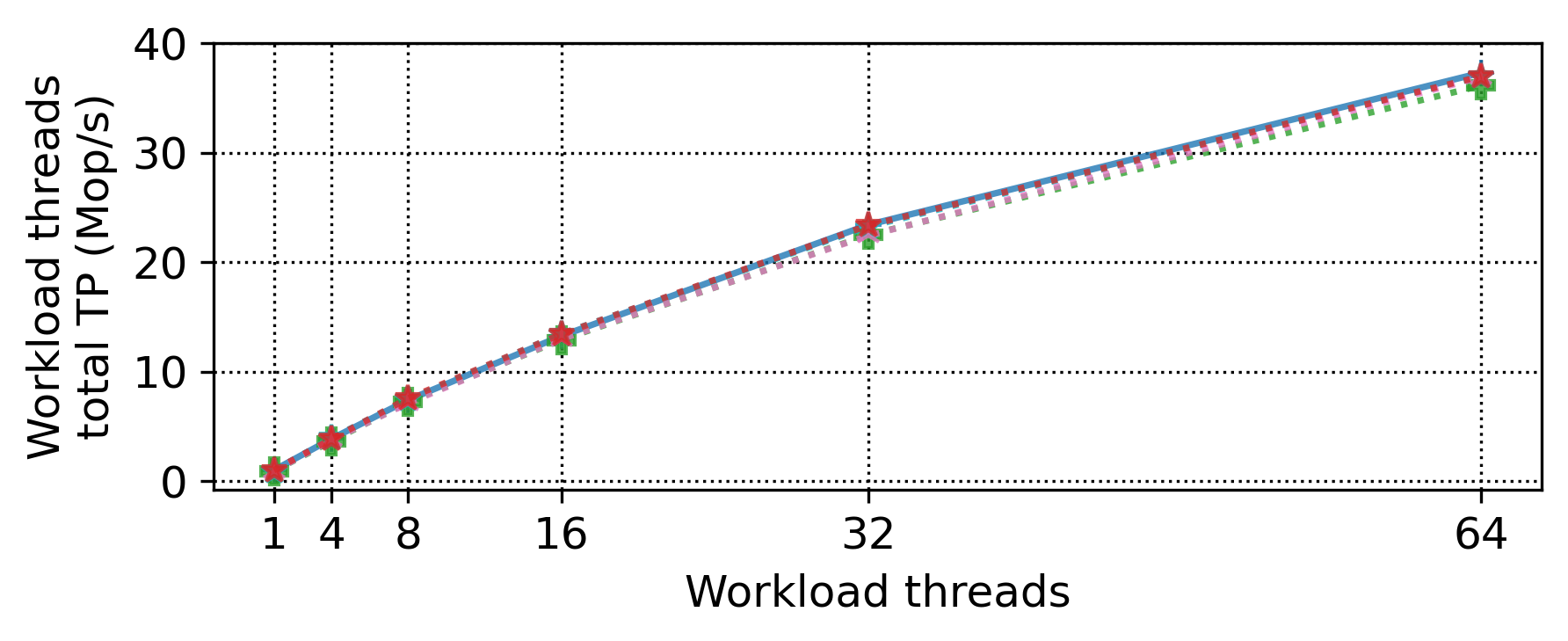}\hspace{2.5em}
	\includegraphics[width=.45\textwidth,trim={0 0 0 .1cm}]{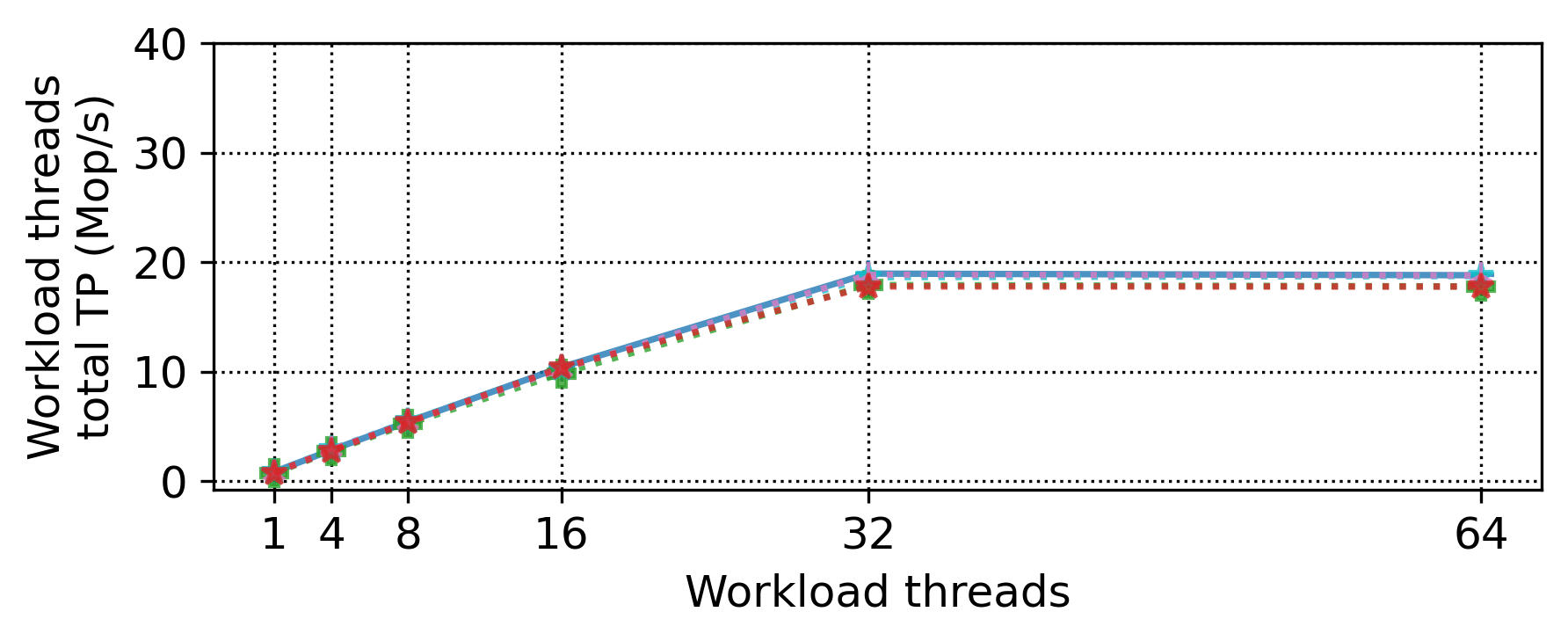}\par
	\includegraphics[width=.45\textwidth,trim={0 0 0 .2cm}]{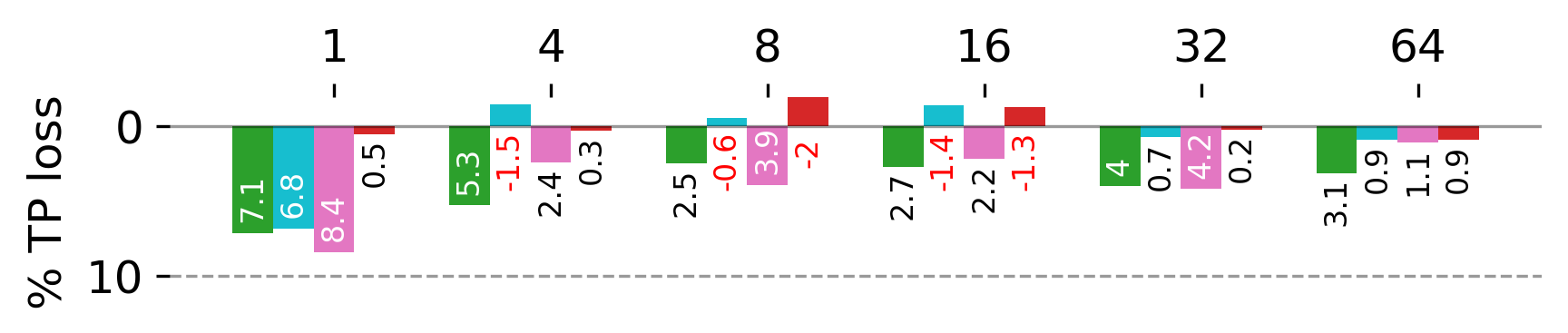}\hspace{2.5em}
	\includegraphics[width=.45\textwidth,trim={0 0 0 .2cm}]{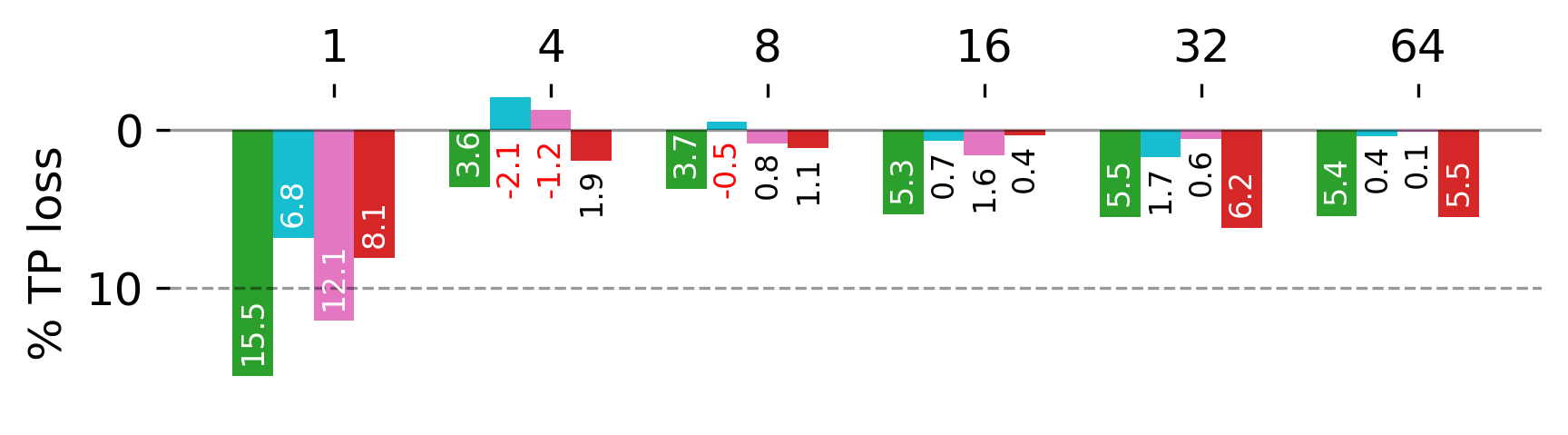}\par
	\medskip
	\text{With a concurrent \size{} thread and no delay}\par
	\includegraphics[width=.45\textwidth,trim={0 0 0 .1cm}]{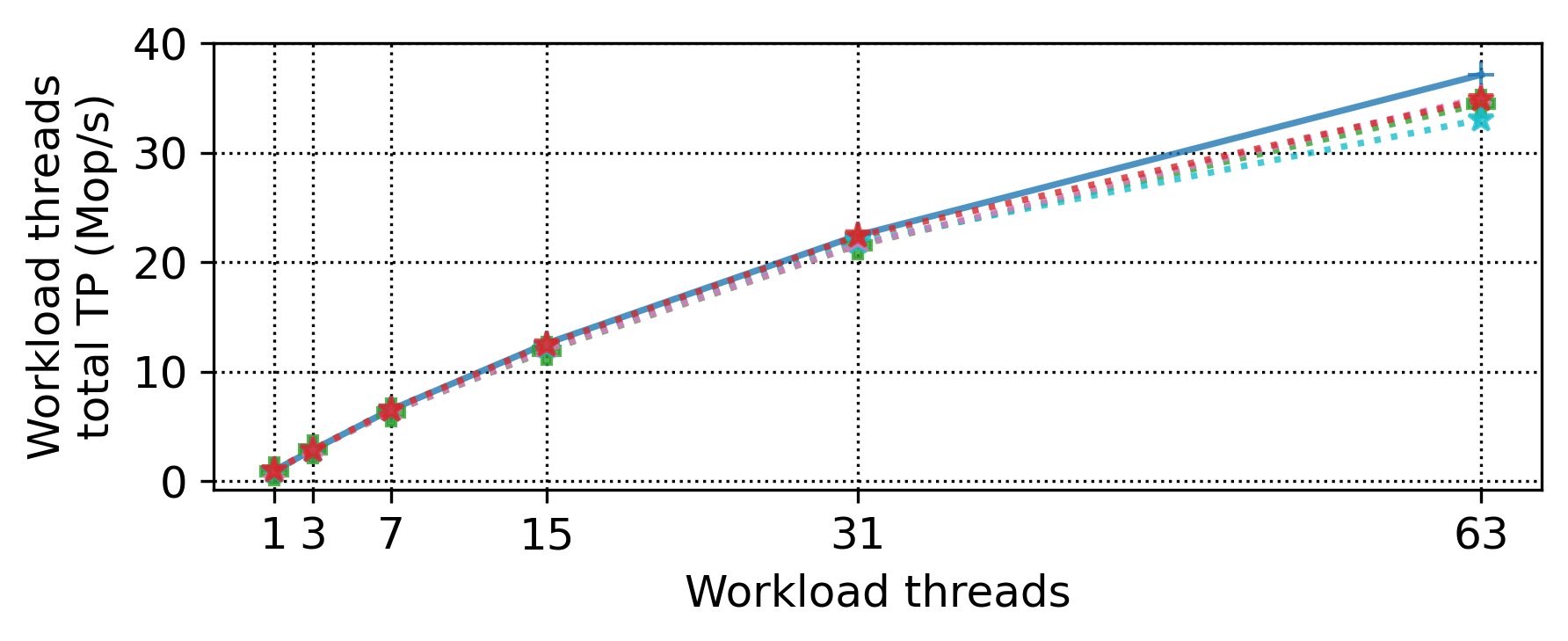}\hspace{2.5em}
	\includegraphics[width=.45\textwidth,trim={0 0 0 .1cm}]{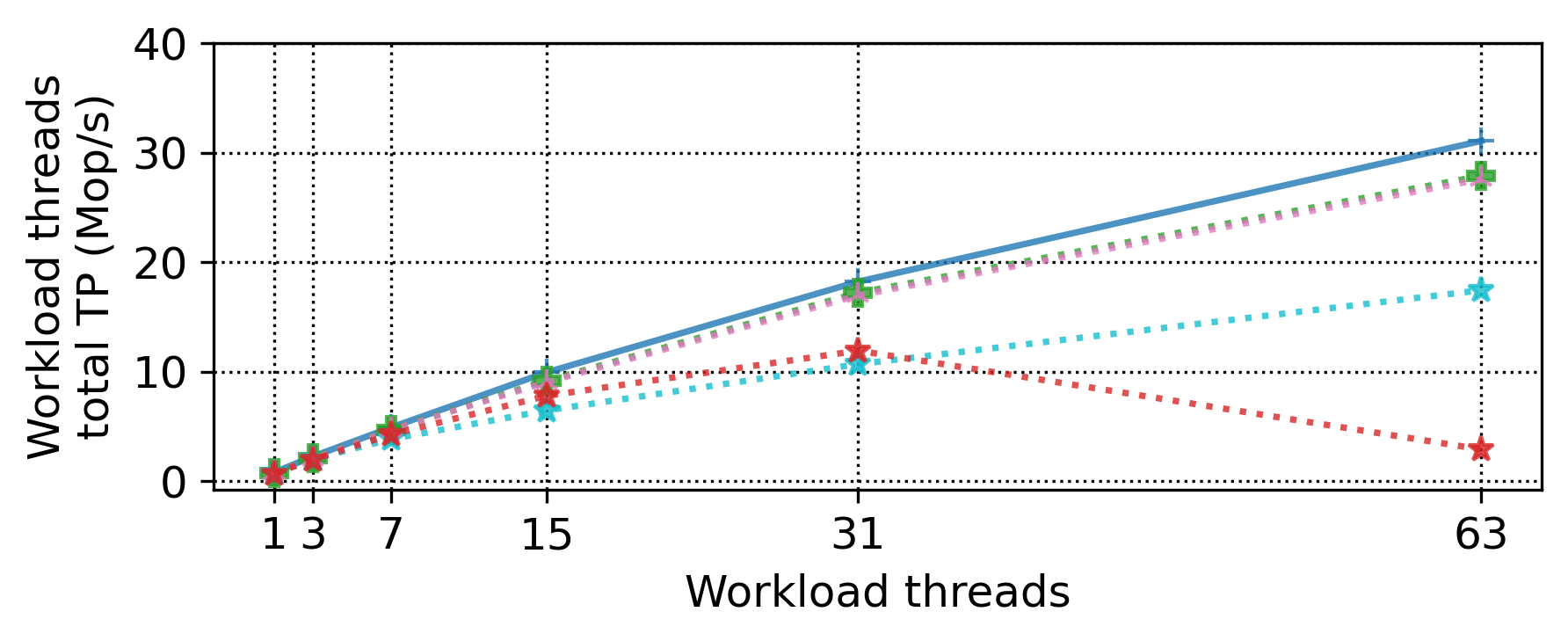}\par
	\includegraphics[width=.45\textwidth,trim={0 0 0 .2cm}]{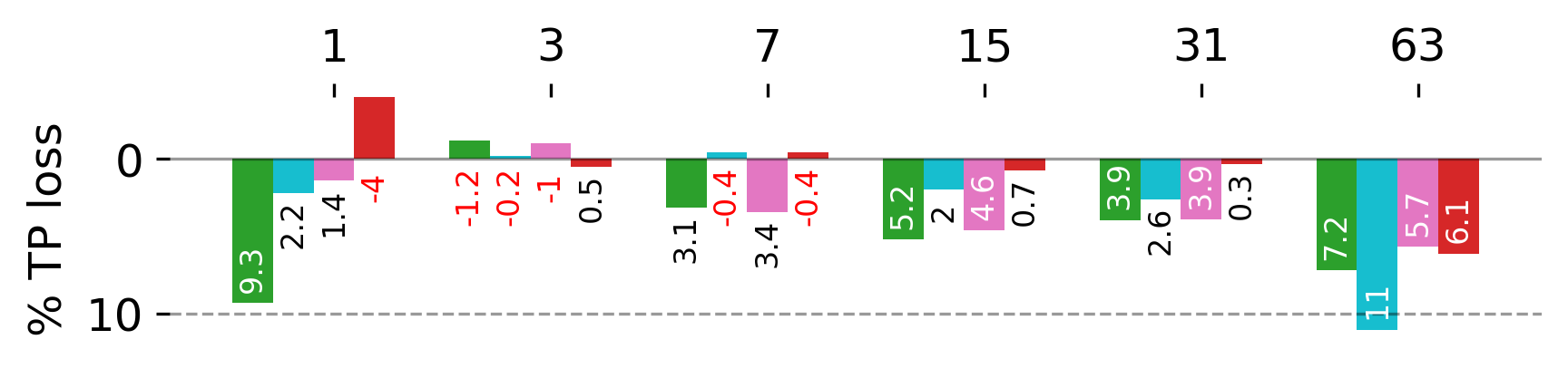}\hspace{2.5em}
	\includegraphics[width=.45\textwidth,trim={0 0 0 .2cm}]{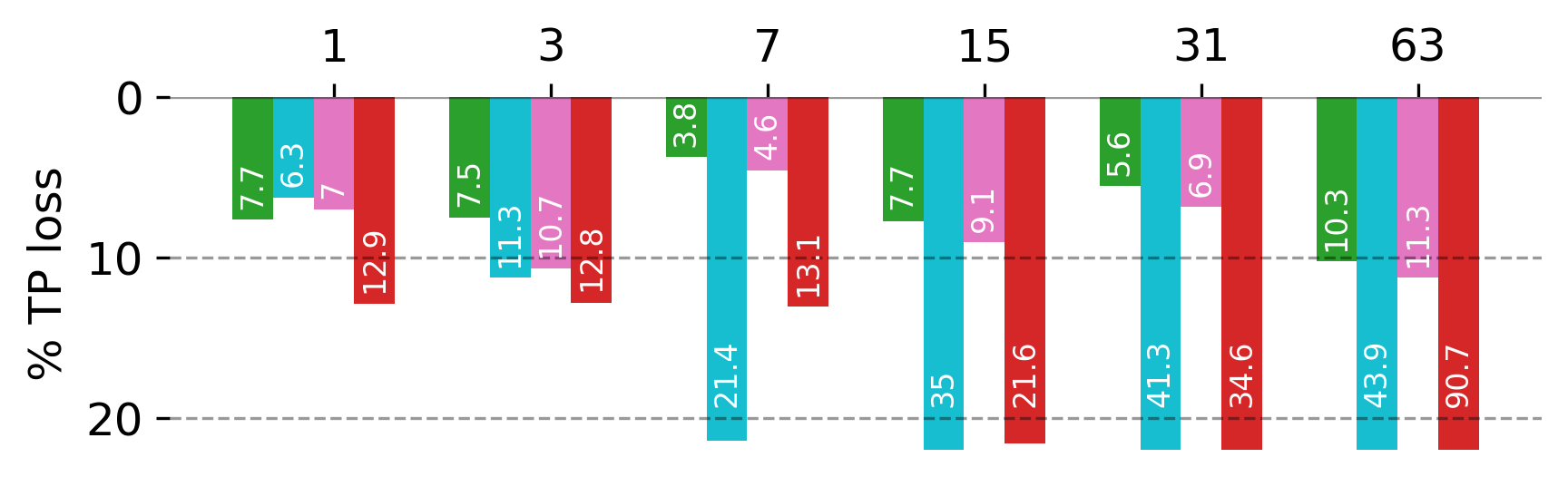}\par
	\medskip
	\text{With a concurrent \size{} thread and 700 \si{\micro\second} delay}\par
	\includegraphics[width=.45\textwidth,trim={0 0 0 .1cm}]{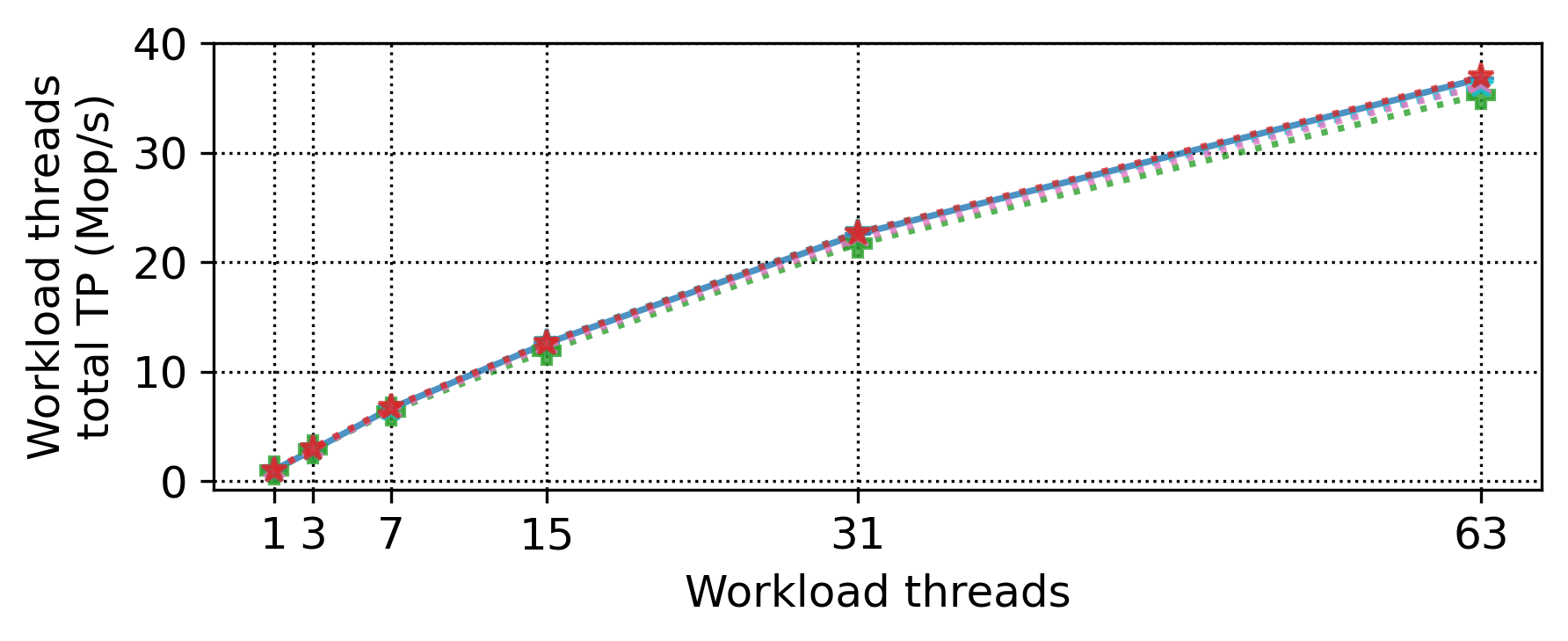}\hspace{2.5em}
	\includegraphics[width=.45\textwidth,trim={0 0 0 .1cm}]{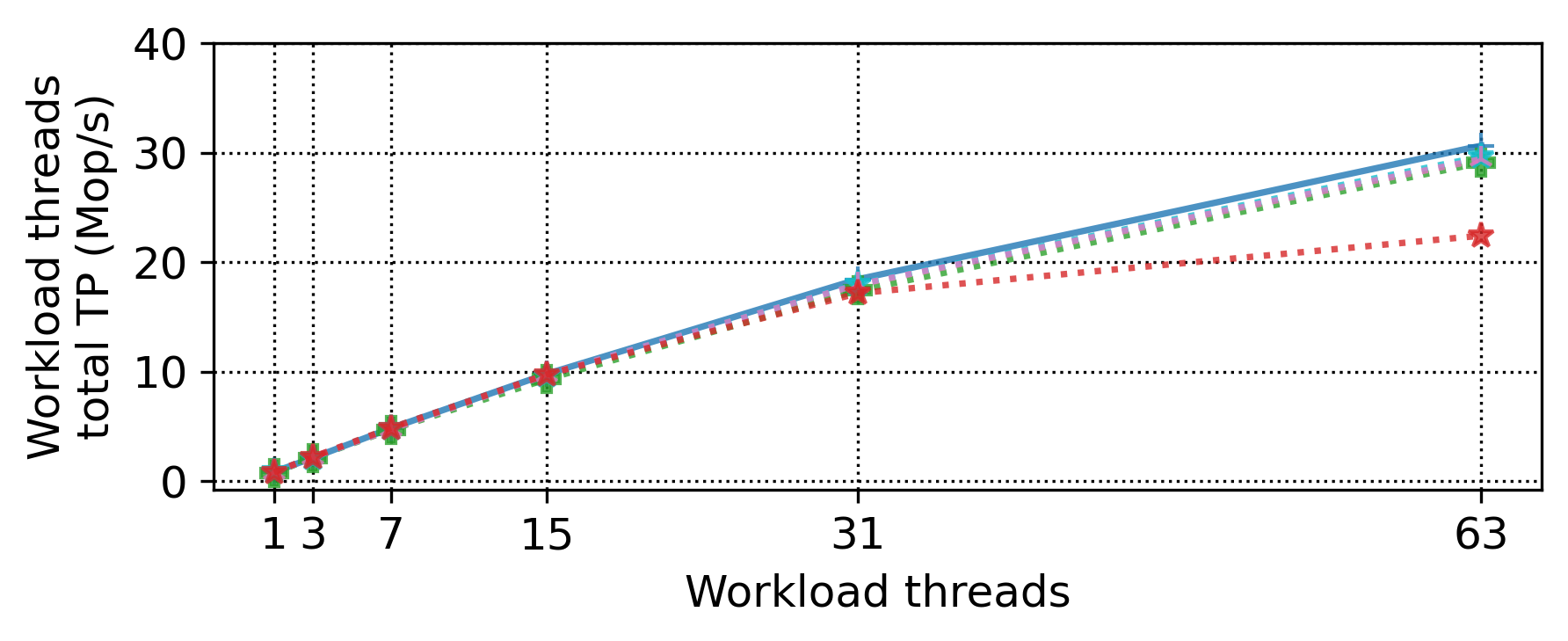}\par
	\includegraphics[width=.45\textwidth,trim={0 0 0 .2cm}]{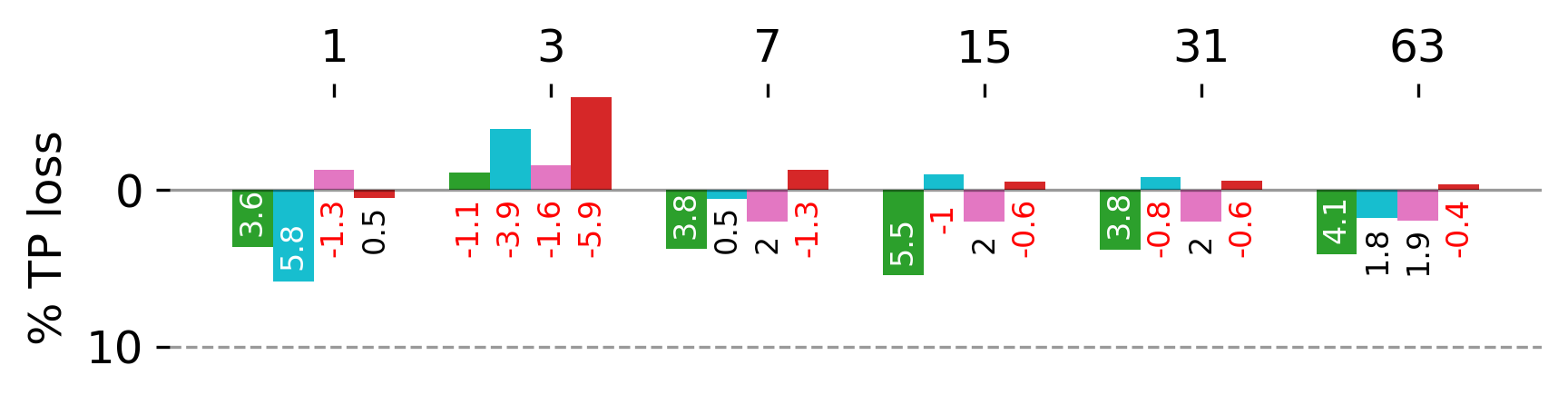}\hspace{2.5em}
	\includegraphics[width=.45\textwidth,trim={0 0 0 .2cm}]{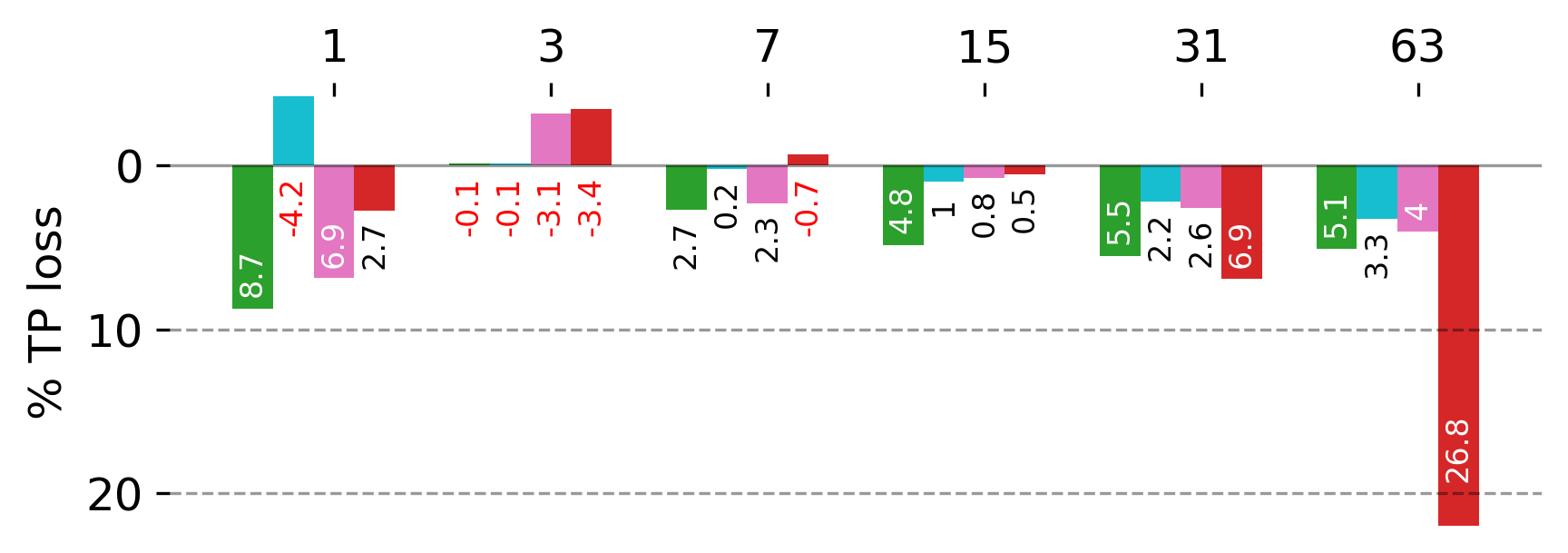}\par
	\caption{Overhead on skip list operations with Zipfian-distributed \contains{}}
	\label{fig:SkipList zipfian overhead}
\end{figure*}
\vspace*{\fill}
\clearpage

\clearpage
\vspace*{\fill}
\begin{figure*}[htbp]
	\centering
	\medskip
	\textit{Read heavy}\hspace{2.3em}
	\includegraphics[height=.02\textwidth]{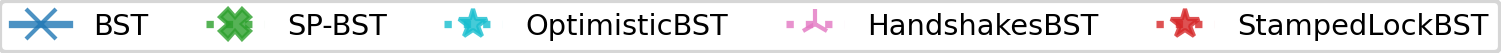}\hspace{2.3em}
	\textit{Update heavy}\par
	\medskip
	\text{Without a concurrent \size{} thread}\par
        \smallskip
	\includegraphics[width=.45\textwidth,trim={0 0 0 .1cm}]{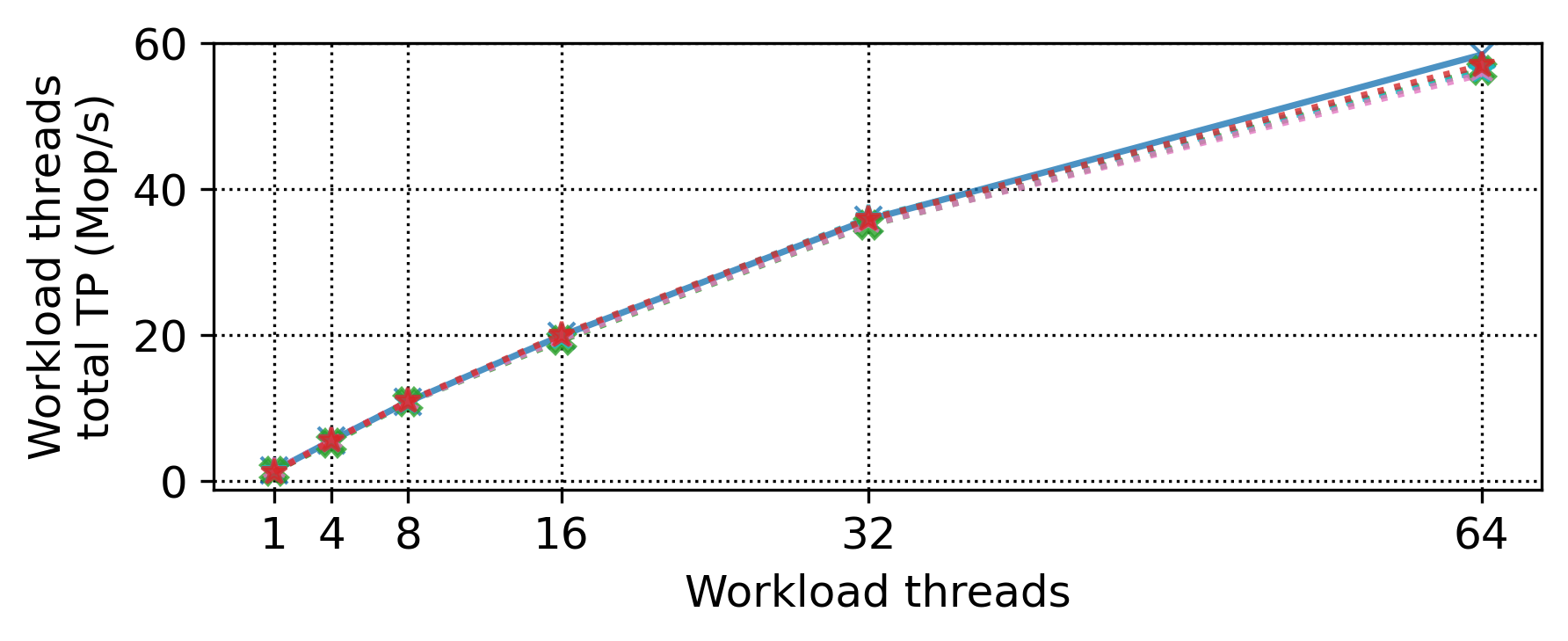}\hspace{2.5em}
	\includegraphics[width=.45\textwidth,trim={0 0 0 .1cm}]{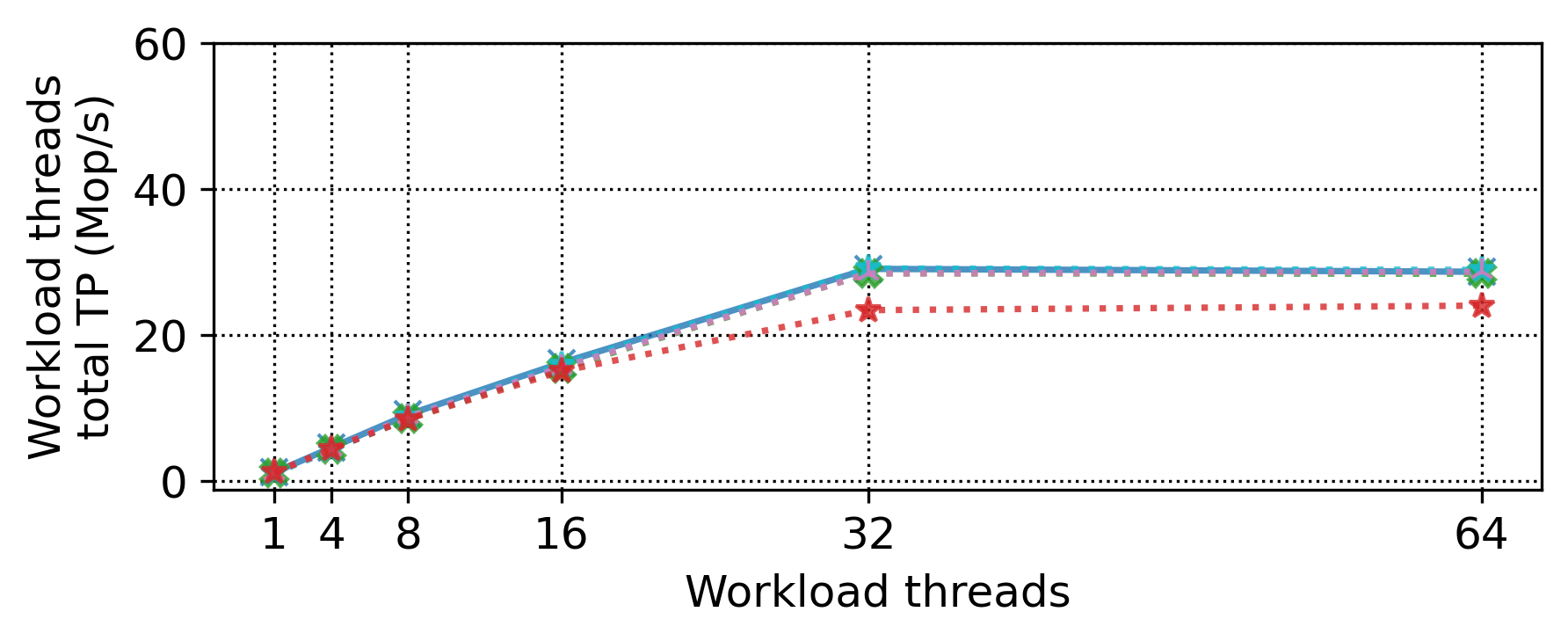}\par
	\includegraphics[width=.45\textwidth,trim={0 0 0 .2cm}]{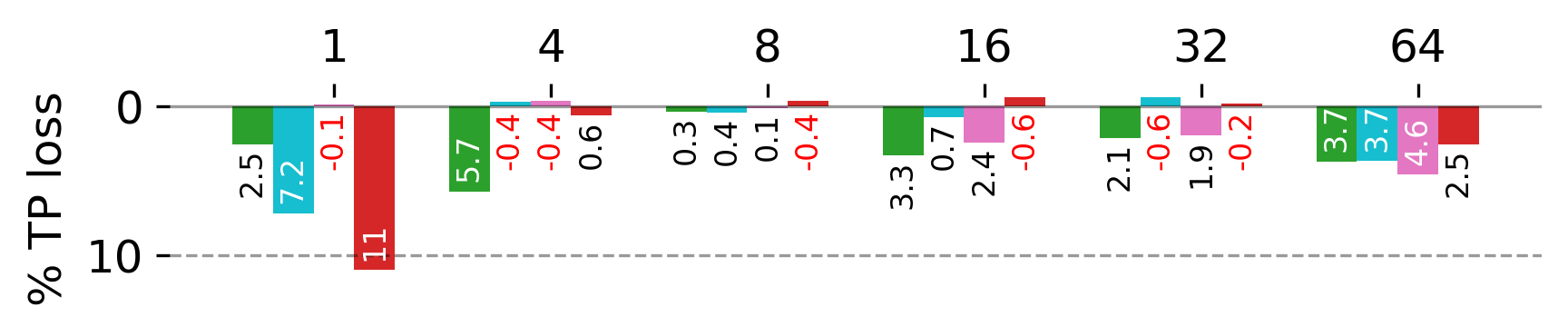}\hspace{2.5em}
	\includegraphics[width=.45\textwidth,trim={0 0 0 .2cm}]{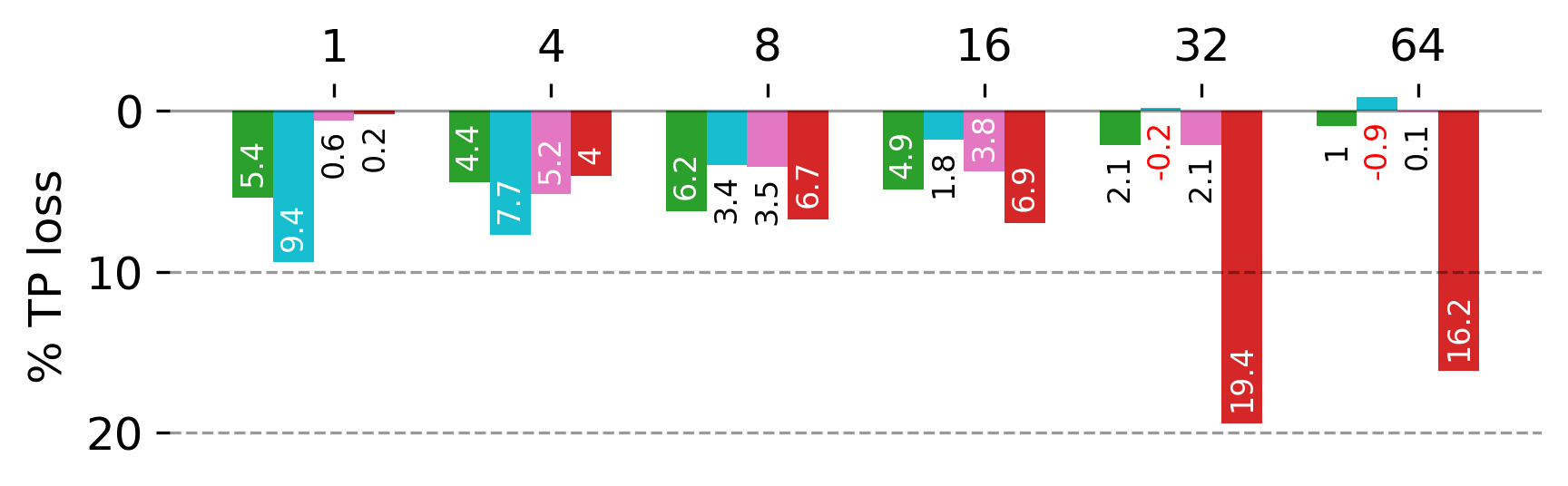}\par
	\medskip
	\text{With a concurrent \size{} thread and no delay}\par
	\includegraphics[width=.45\textwidth,trim={0 0 0 .1cm}]{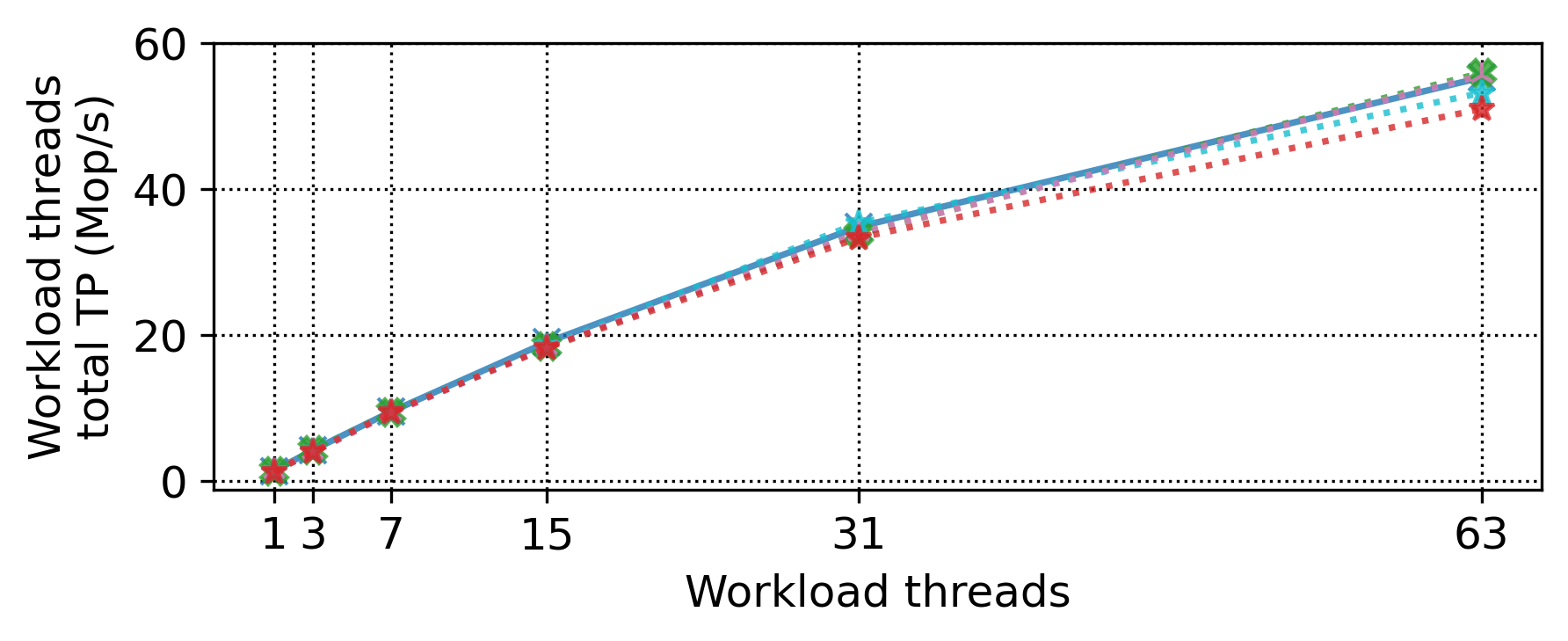}\hspace{2.5em}
	\includegraphics[width=.45\textwidth,trim={0 0 0 .1cm}]{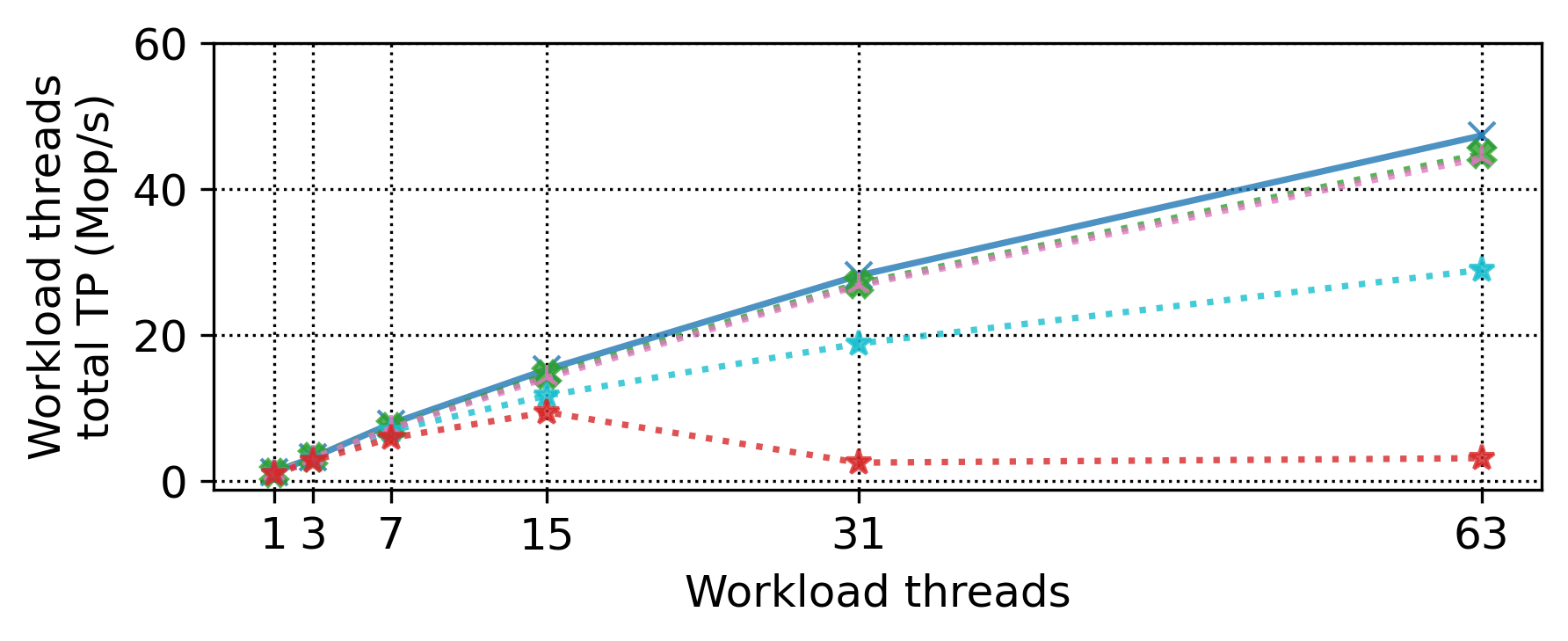}\par
	\includegraphics[width=.45\textwidth,trim={0 0 0 .2cm}]{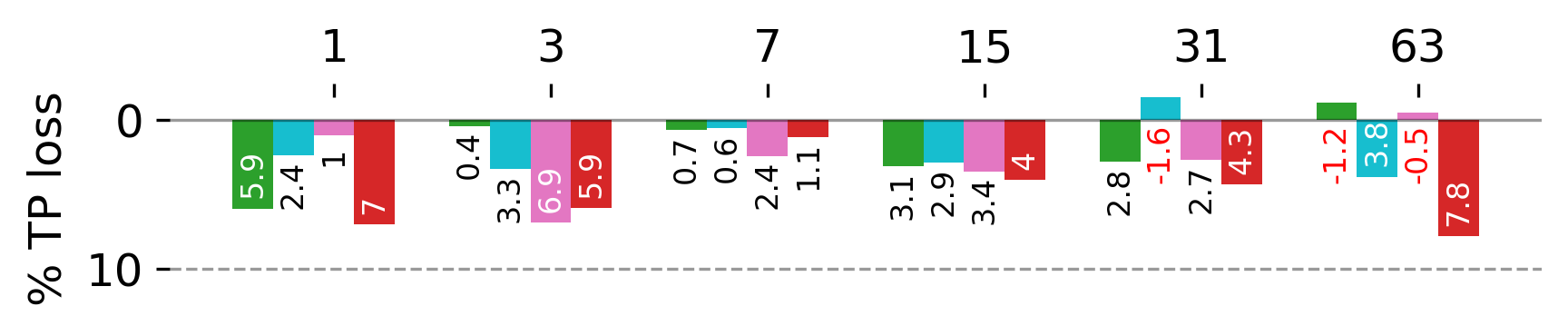}\hspace{2.5em}
	\includegraphics[width=.45\textwidth,trim={0 0 0 .2cm}]{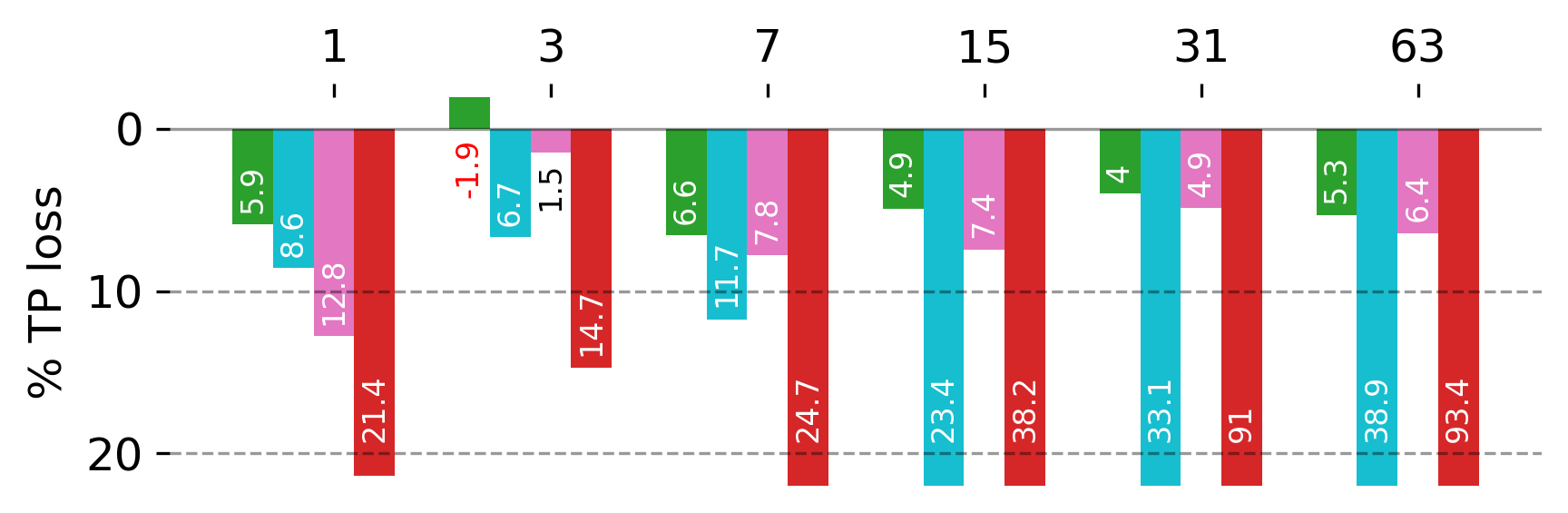}\par
	\medskip
	\text{With a concurrent \size{} thread and 700 \si{\micro\second} delay}\par
	\includegraphics[width=.45\textwidth,trim={0 0 0 .1cm}]{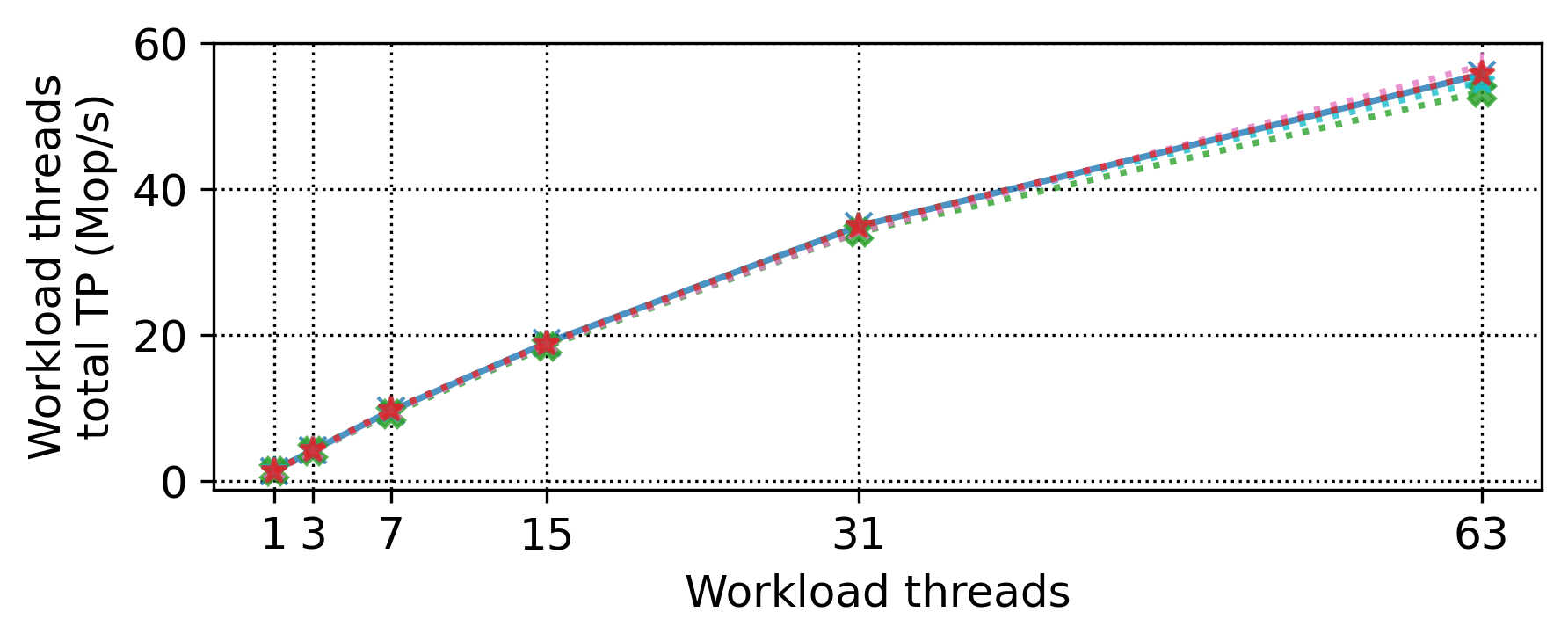}\hspace{2.5em}
	\includegraphics[width=.45\textwidth,trim={0 0 0 .1cm}]{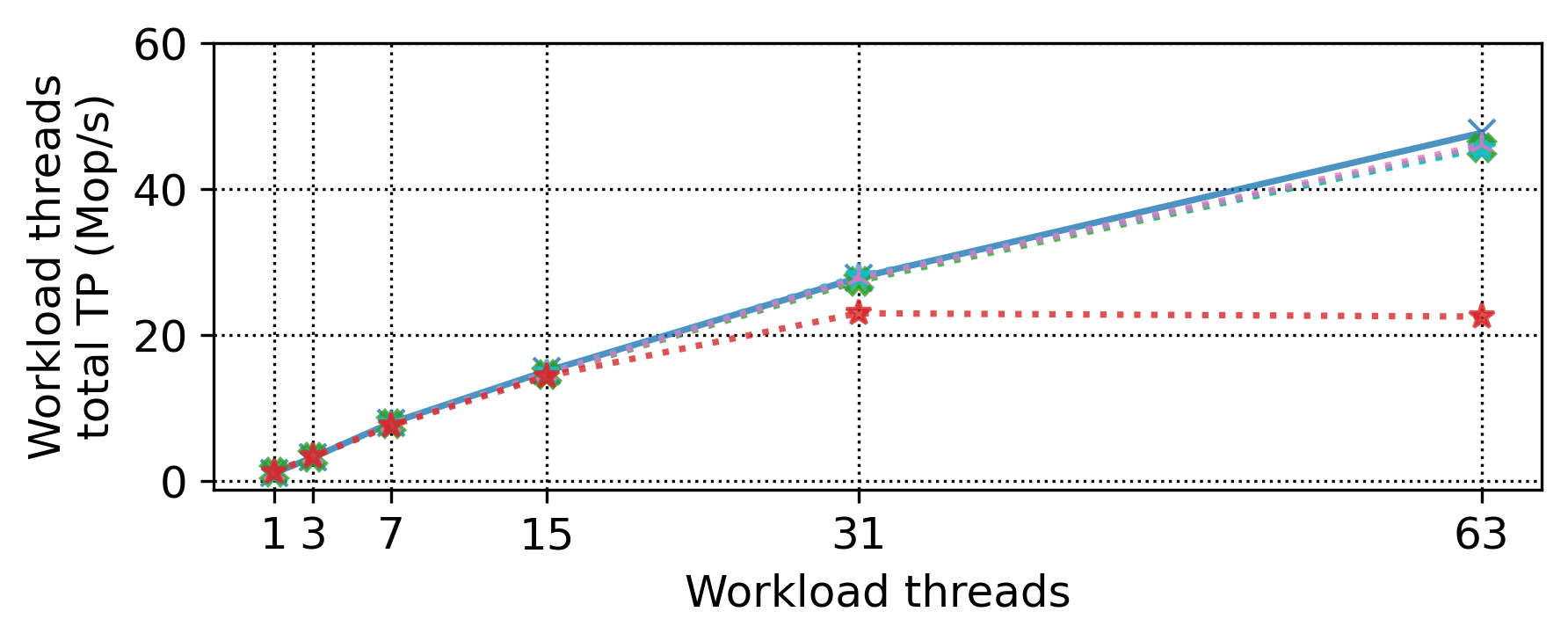}\par
	\includegraphics[width=.45\textwidth,trim={0 0 0 .2cm}]{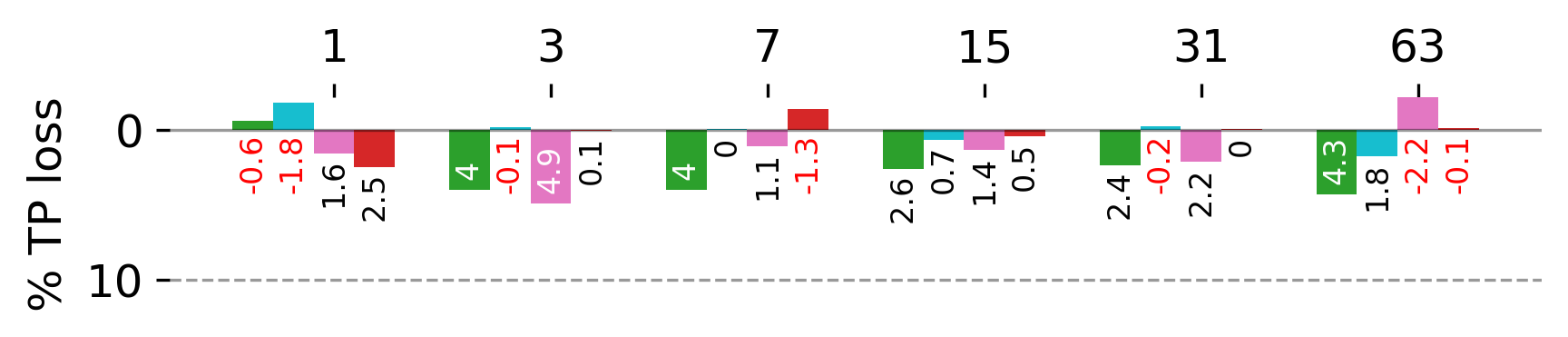}\hspace{2.5em}
	\includegraphics[width=.45\textwidth,trim={0 0 0 .2cm}]{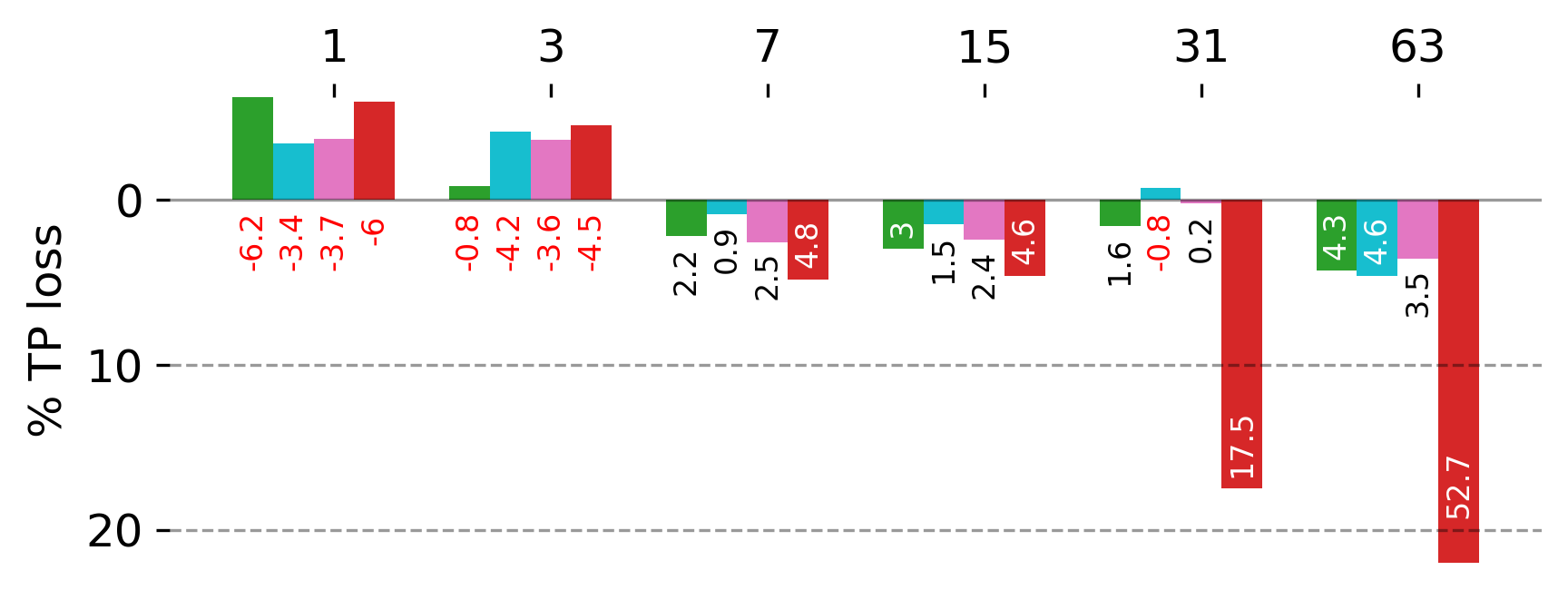}\par
	\caption{Overhead on BST operations with Zipfian-distributed \contains{}}
	\label{fig:BST zipfian overhead}
\end{figure*}
\vspace*{\fill}
\clearpage

\clearpage
\vspace*{\fill}
\begin{figure*}[htbp]
	\centering
	\medskip
	\textit{Read heavy}\hspace{0.3em}
	\includegraphics[height=.018\textwidth]{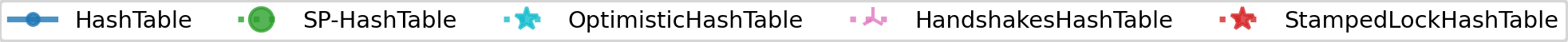}\hspace{0.3em}
	\textit{Update heavy}\par
	\medskip
	\text{Without a concurrent \size{} thread}\par
        \smallskip
	\includegraphics[width=.45\textwidth,trim={0 0 0 .1cm}]{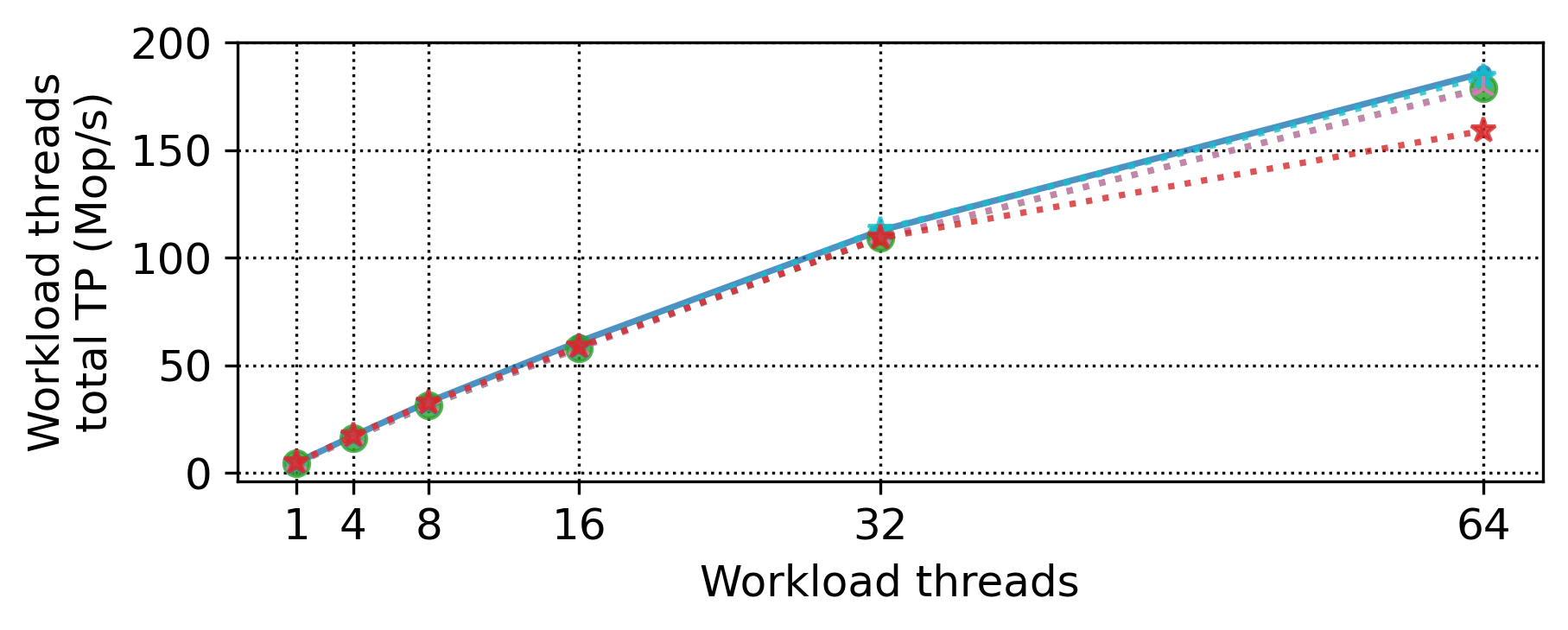}\hspace{2.5em}
	\includegraphics[width=.45\textwidth,trim={0 0 0 .1cm}]{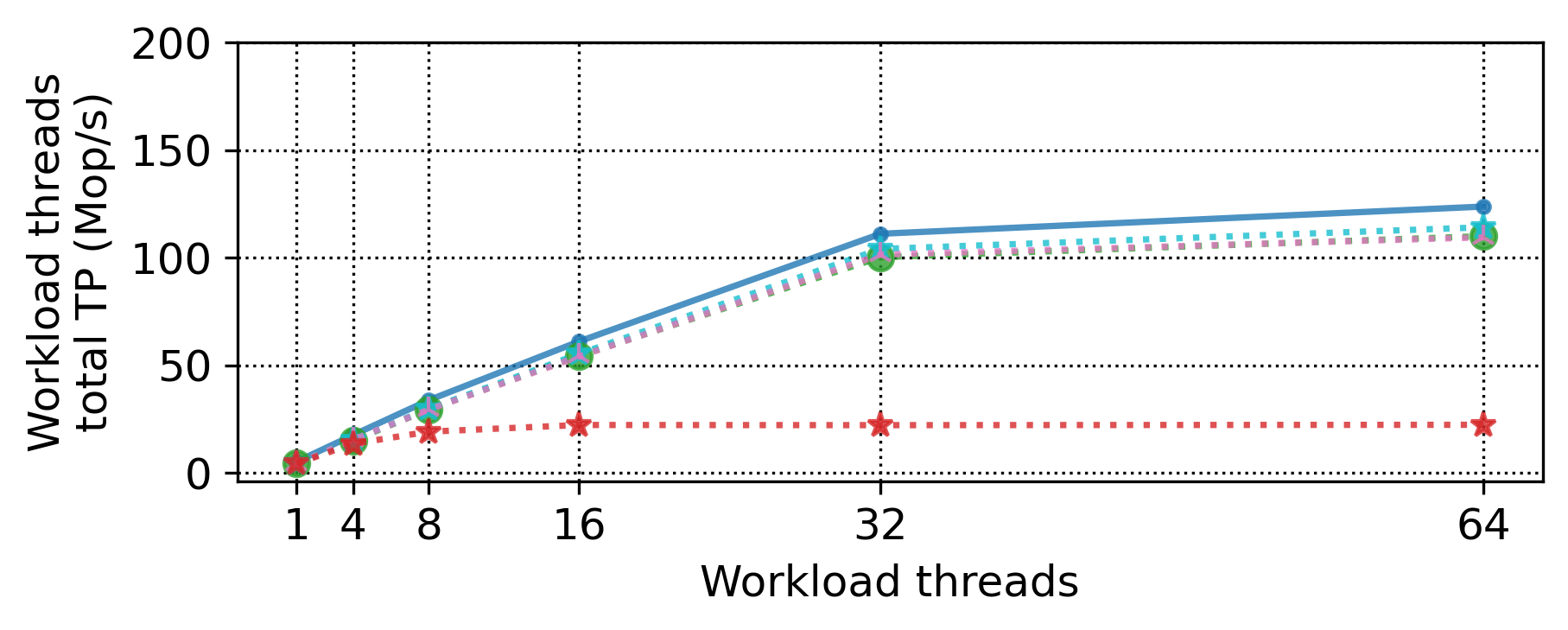}\par
	\includegraphics[width=.45\textwidth,trim={0 0 0 .2cm}]{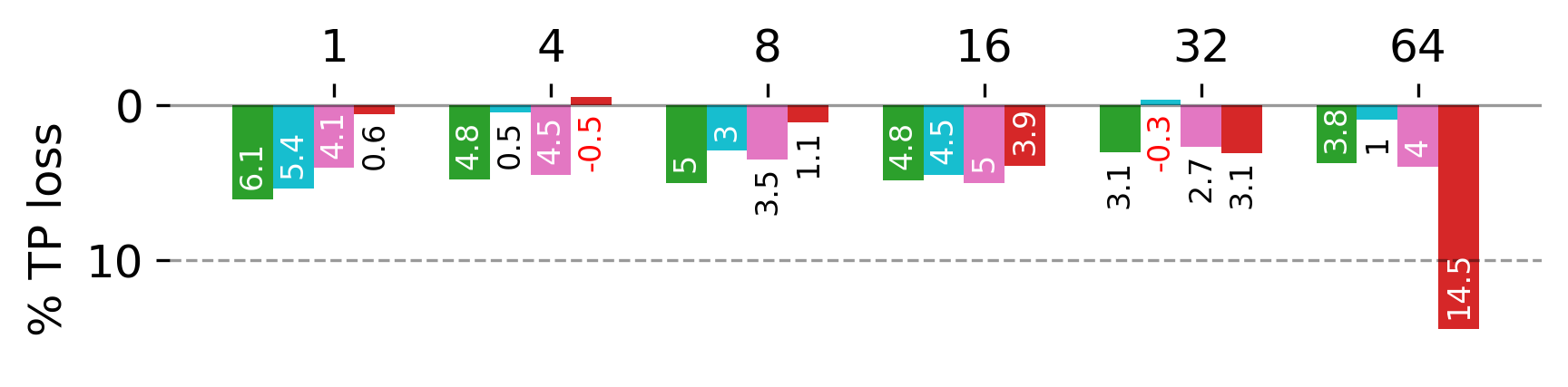}\hspace{2.5em}
	\includegraphics[width=.45\textwidth,trim={0 0 0 .2cm}]{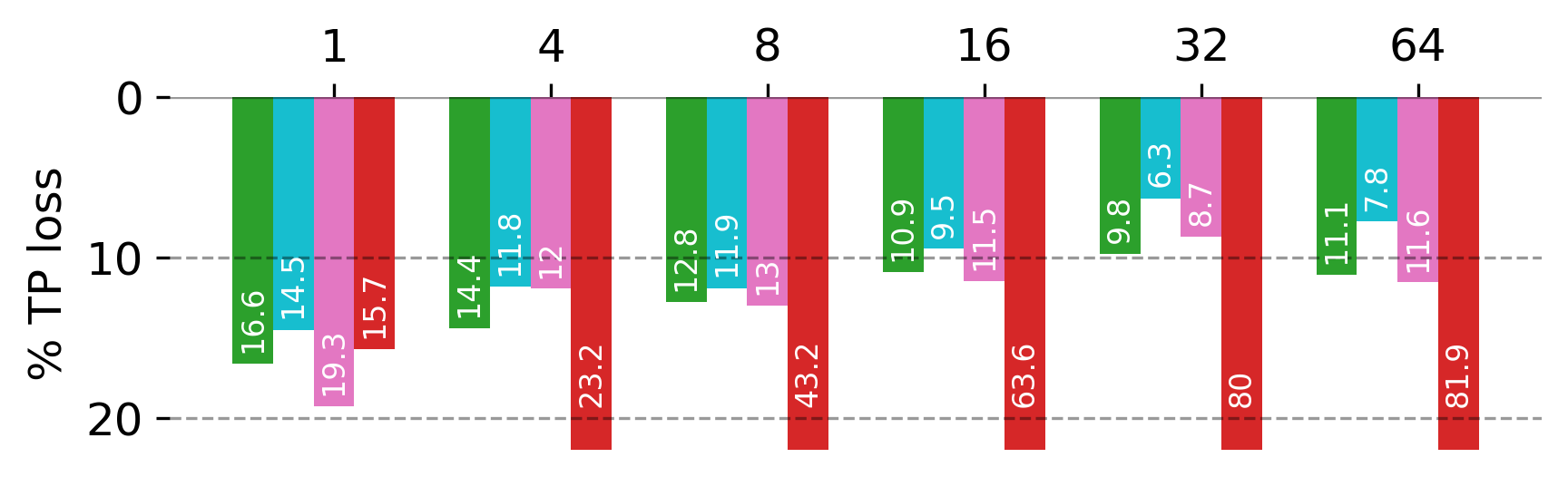}\par
	\medskip
	\text{With a concurrent \size{} thread and no delay}\par
	\includegraphics[width=.45\textwidth,trim={0 0 0 .1cm}]{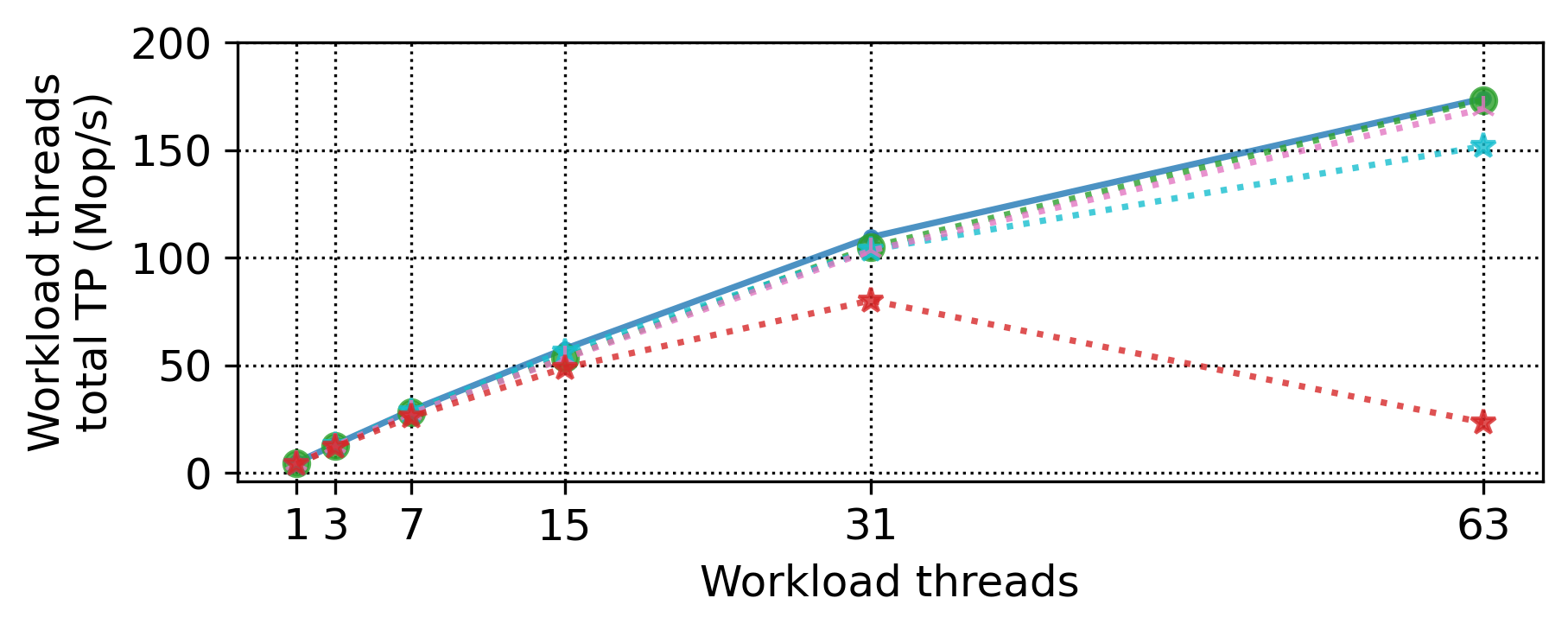}\hspace{2.5em}
	\includegraphics[width=.45\textwidth,trim={0 0 0 .1cm}]{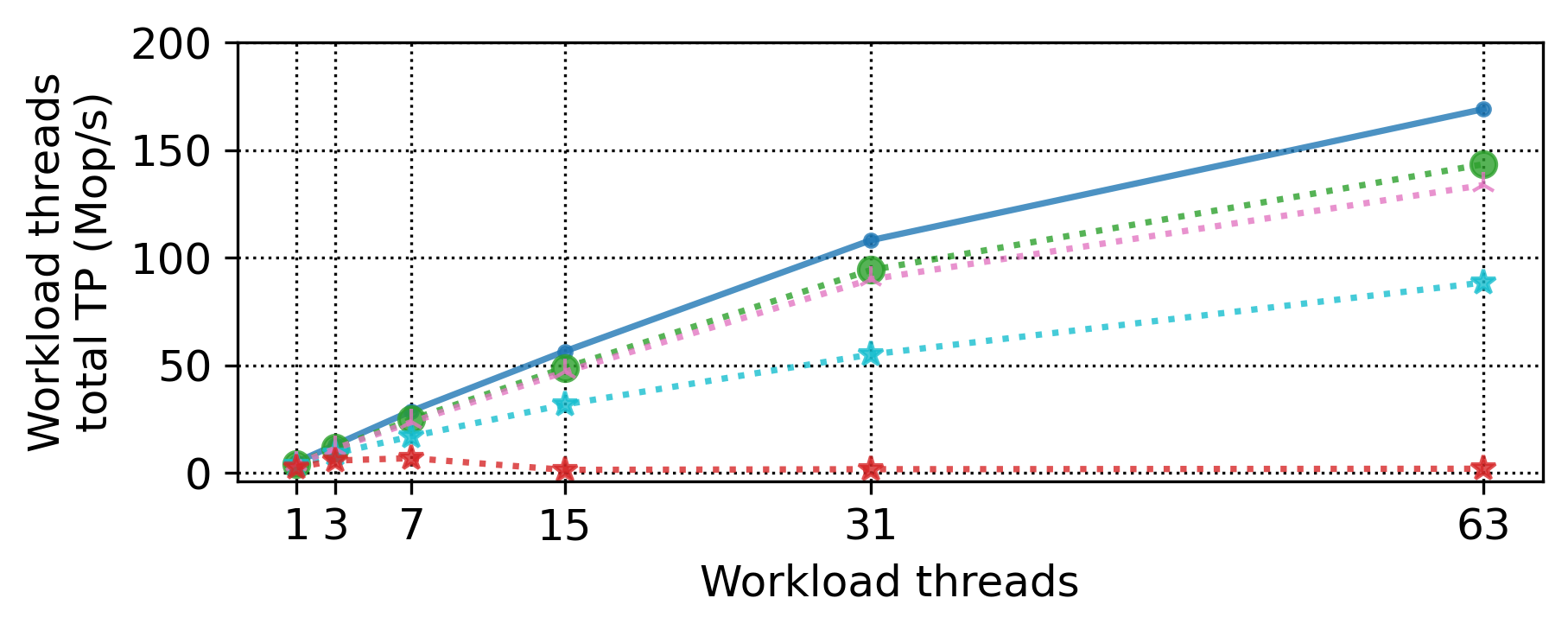}\par
	\includegraphics[width=.45\textwidth,trim={0 0 0 .2cm}]{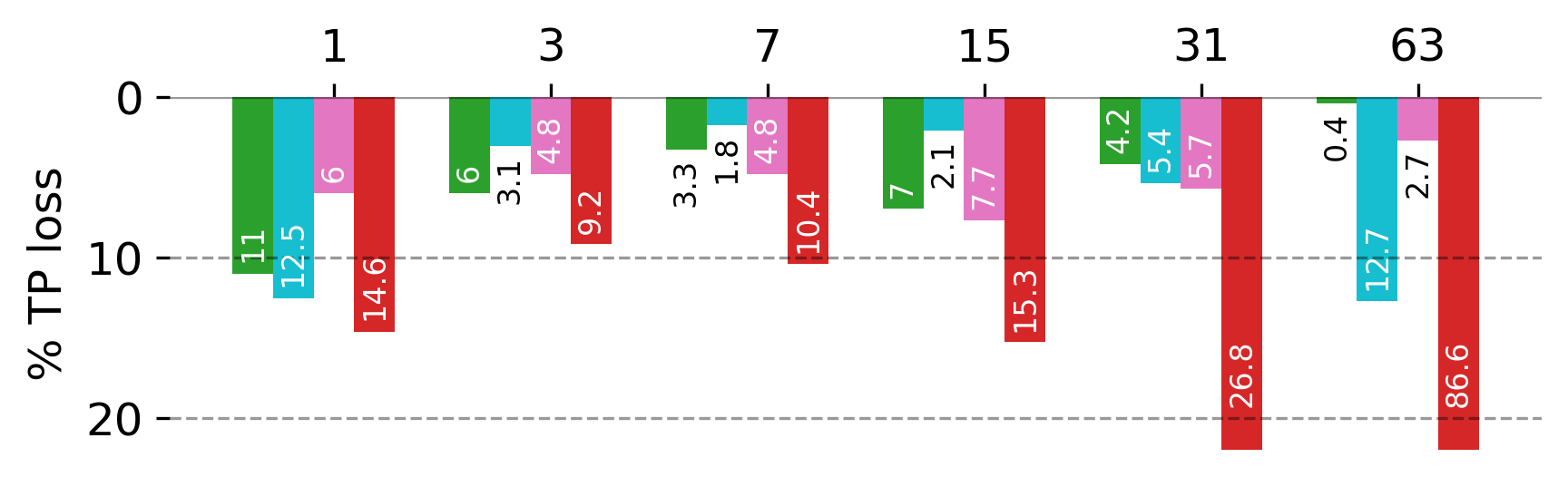}\hspace{2.5em}
	\includegraphics[width=.45\textwidth,trim={0 0 0 .2cm}]{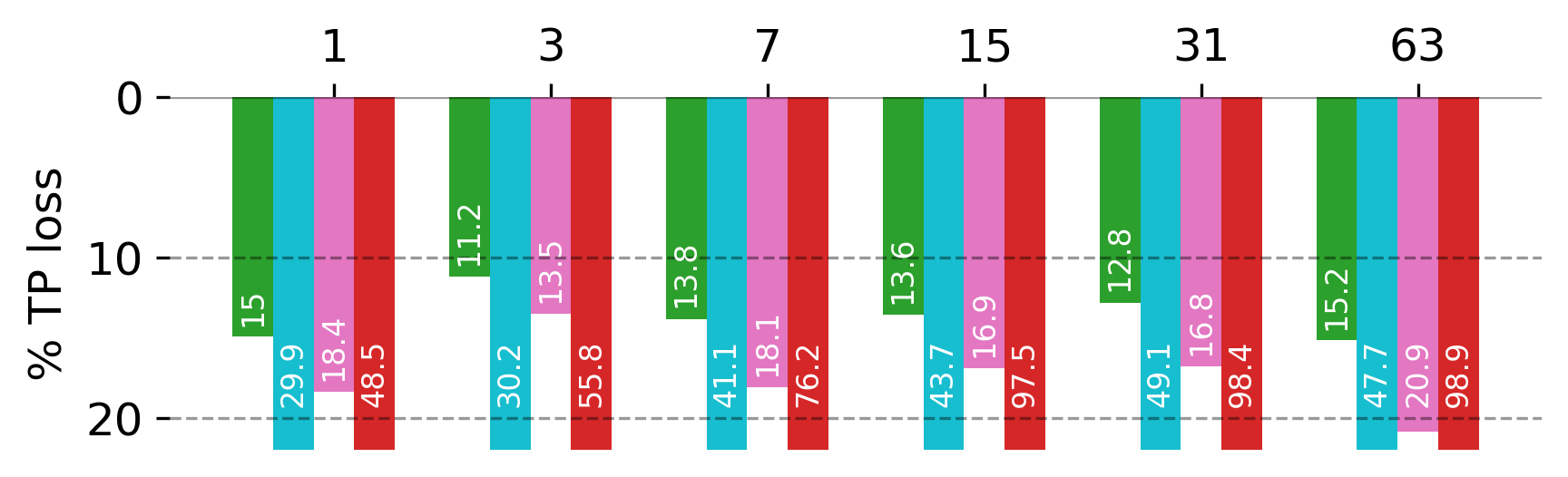}\par
	\medskip
	\text{With a concurrent \size{} thread and 700 \si{\micro\second} delay}\par
	\includegraphics[width=.45\textwidth,trim={0 0 0 .1cm}]{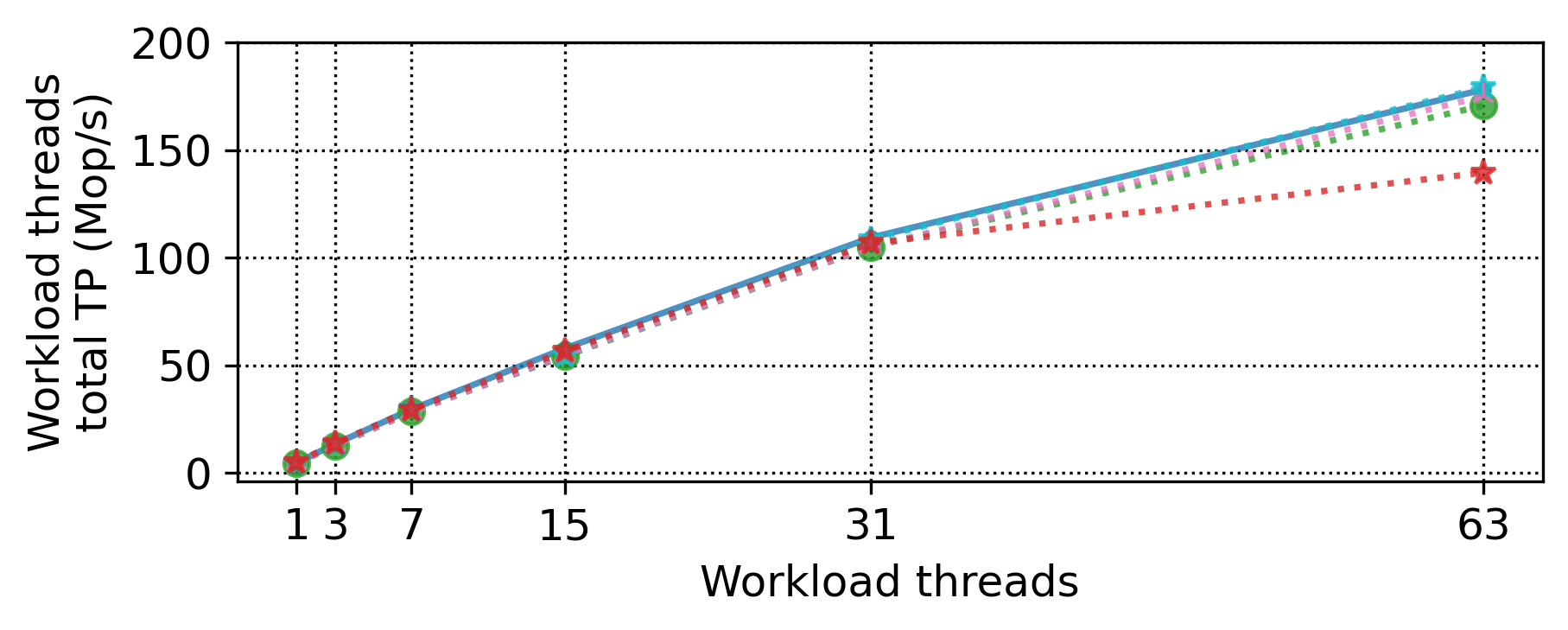}\hspace{2.5em}
	\includegraphics[width=.45\textwidth,trim={0 0 0 .1cm}]{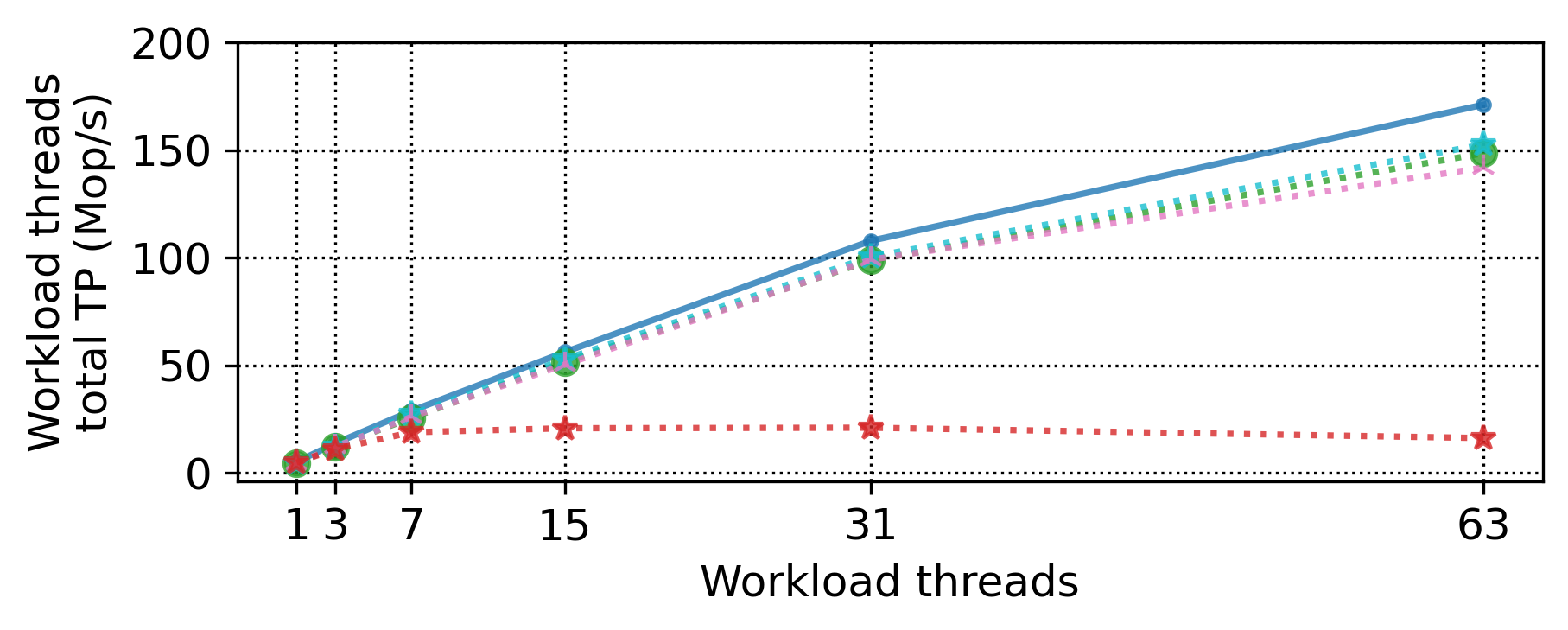}\par
	\includegraphics[width=.45\textwidth,trim={0 0 0 .2cm}]{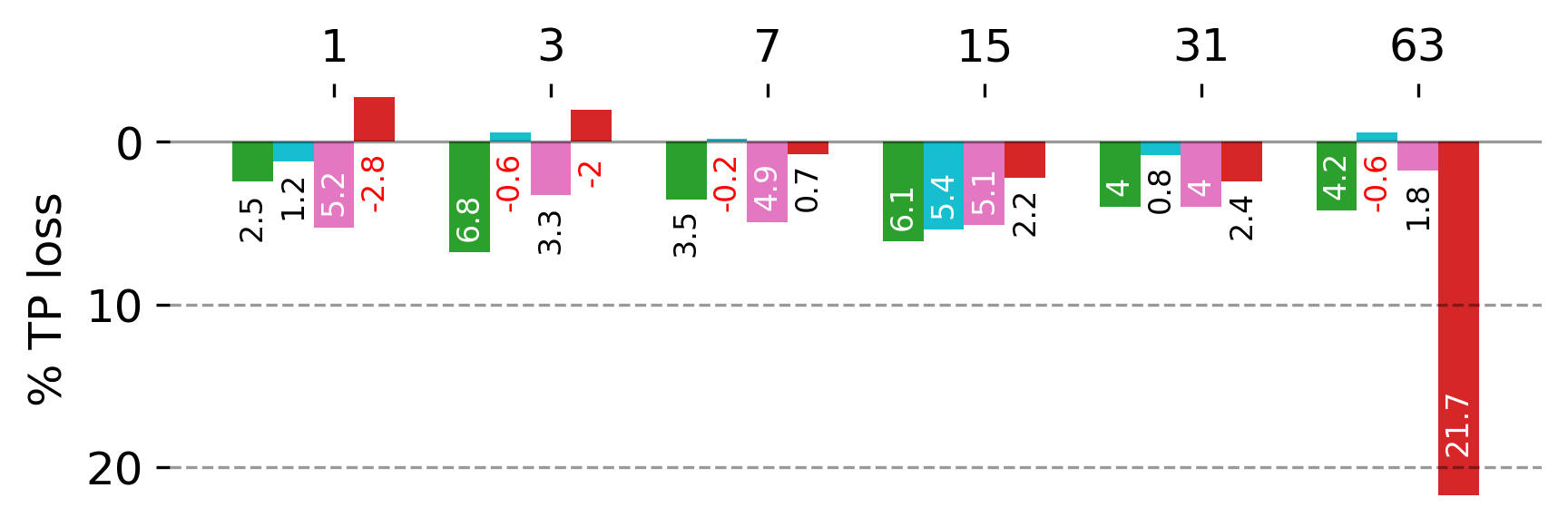}\hspace{2.5em}
	\includegraphics[width=.45\textwidth,trim={0 0 0 .2cm}]{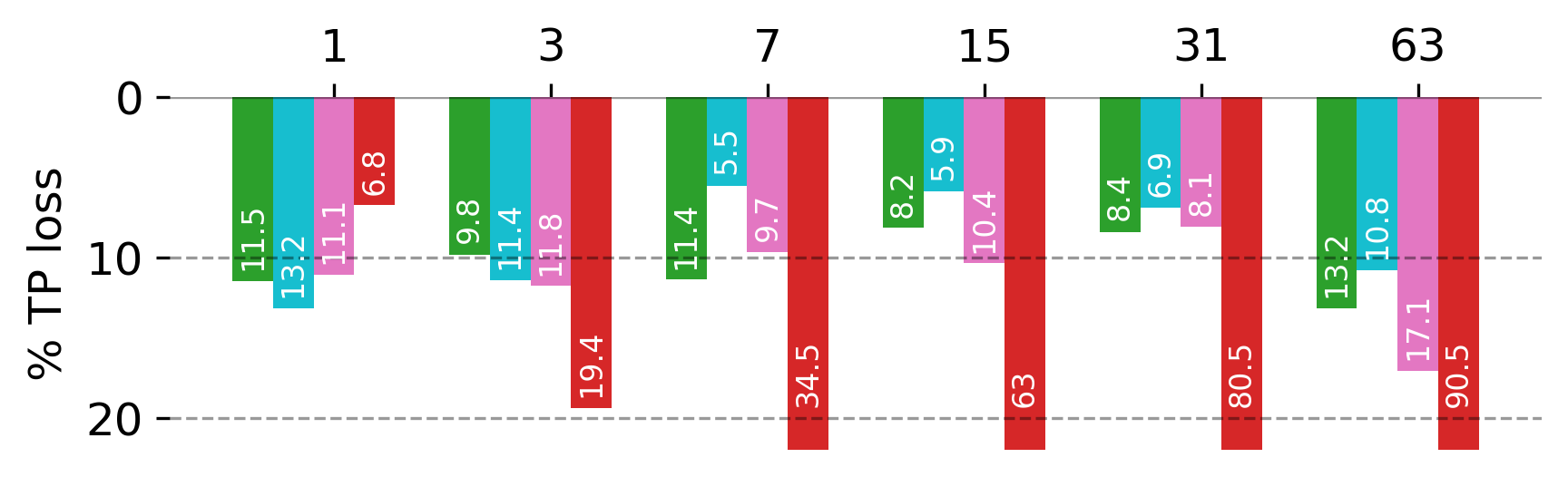}\par
	\caption{Overhead on hash table operations with Zipfian-distributed \contains{}}
	\label{fig:HashTable zipfian overhead}
\end{figure*}
\vspace*{\fill}
\clearpage

\section{Additional implementation details}\label{section:implementation-details}
\subsection{Complementary pseudocode}
We bring the pseudocode of the data structure transformation for our different methodologies in \Cref{fig: transformed data structure with handshakes,fig: transformed data structure with an optimistic scheme,fig: transformed data structure with a readers-writer lock}. In addition, the methods of the \codestyle{HandshakeCountersSnapshot} object appear in \Cref{fig:HandshakeCountersSnapshot}, and the methods of the \codestyle{LocksSizeCalculator} object appear in \Cref{fig:LocksSizeCalculator}.

\begin{figure}[h]
\begin{lstlisting}
IDLE_PHASE = 0, FAST_PHASE = 1
@\underline{class TransformedDataStructureWithHandshakes}@:
    @\underline{TransformedDataStructureWithHandshakes()}@:
        Initialize as originally.
        this.sizeCalculator = new HandshakeSizeCalculator()
    @\underline{$fast\_op$(k)}@:
        @Perform the original operation$^{*}$. For an insert or a delete operation that succeeded call @this@.sizeCalculator.fastUpdateMetadata with the relevant opKind.@
        Return the result of the original operation.
    @\underline{$slow\_op$(k)}@:
        @Perform the transformed operation defined in \cite{sela2021concurrentSize}$^{**}$ and return its result.@
    @\underline{$op$(k)}@: // this transformation is for insert/delete operations@\label{trans_op}@
        this.sizeCalculator.setOpPhaseVolatile(FAST_PHASE)@\label{trans_op: set fast phase}@
        currSizePhase = this.sizeCalculator.getSizePhase() @\label{trans_op: read size phase}@
        if currSizePhase%4 == 0: @\label{trans_op: start perform operation}@
            ret = @$fast\_op$(k)@ @\label{trans_op: execute fastop}@
        else: // Some thread runs size()
            this.sizeCalculator.setOpPhase(currSizePhase)@\label{trans_op: set slow phase}@
            ret = $slow\_op$(k) @\label{trans_op: execute slowop}@
        this.sizeCalculator.setOpPhase(IDLE_PHASE) @\label{trans_op: set idle}@
        return ret @\label{trans_op: return}@
    @\underline{$contains$(k)}@:
        return $slow\_op$(k)
    @\underline{size()}@:
        return this.sizeCalculator.compute()
@$^*$Consider also nodes with a non-$\nul{}$ deleteInfo field as marked for deletion.@
@$^{**}$Do not call @this@.sizeCalculator.updateMetadata before unlinking nodes marked using the original data structure's marking scheme (call it only for nodes marked using a non-$\nul{}$ deleteInfo field).@

\end{lstlisting}
\caption{A transformed data structure with handshakes}\label{fig: transformed data structure with handshakes}
\end{figure}

\clearpage
\vspace*{\fill}
\begin{figure}
    \begin{lstlisting}
@\underline{class TransformedDataStructureOptimistic}@:
    @\underline{TransformedDataStructureOptimistic()}@:
        Initialize as originally
        this.sizeCalculator = new OptimisticSizeCalculator()
    @\underline{contains(k)}@:
        @Perform the original contains operation@
    @\underline{insert / delete(k)}@:
        this.sizeCalculator.helpSize()
        @Search as originally for the place to insert k in case of insert / for a node with key k in case of delete@
        @Return on failure (if k is present in an unmarked node in case of insert / not present in case of delete)@
        this.sizeCalculator.incrementActivityCounter()@\label{code:incrementActivityCounter before modification}@
        @Perform the original modification attempt, if successful perform @this.sizeCalculator.updateMetadata(INSERT / DELETE)
        this.sizeCalculator.incrementActivityCounter()@\label{code:incrementActivityCounter after modification}@
        Return the result of the original modification attempt
    @\underline{size()}@:
        return this.sizeCalculator.computeSize()
\end{lstlisting}
\caption{A transformed data structure with an optimistic scheme}\label{fig: transformed data structure with an optimistic scheme}
\end{figure}

\begin{figure}
\begin{lstlisting}
@\underline{class TransformedDataStructureWithRWLock}@:
    @\underline{TransformedDataStructureWithRWLock()}@:
        Initialize as originally
        this.sizeCalculator = new LocksSizeCalculator()
    @\underline{contains(k)}@:
        @Perform the original contains operation@
    @\underline{insert / delete(k)}@:
        @Search as originally for the place to insert k in case of insert / for a node with key k in case of delete@
        @Return on failure (if k is present in an unmarked node in case of insert / not present in case of delete)@
        this.sizeCalculator.readWriteLock.readLock()$^*$
        @Perform the original modification attempt, if successful call this.sizeCalculator.updateMetadata(INSERT / DELETE)@
        this.sizeCalculator.readWriteLock.readUnlock()$^*$
        Return the result of the original modification attempt
    @\underline{size()}@:
        return this.sizeCalculator.computeSize()
@$^*$The locking and unlocking scheme depends on the implementation of the lock used and may look different (e.g. when using a StampedLock).
    
\end{lstlisting}
\caption{A transformed data structure with a readers-writer lock}\label{fig: transformed data structure with a readers-writer lock}
\end{figure}
\vspace*{\fill}
\clearpage

\begin{figure}
\begin{lstlisting}
@\underline{\textbf{class} HandshakeCountersSnapshot}@:
    @\underline{HandshakeCountersSnapshot()}@:@\label{code:HandshakeCountersSnapshot ctor}@
        this.snapshot = new long[numThreads][2]
        setVolatile all cells of this.snapshot to INVALID
        this.collecting.setVolatile(true)
        this.size.setVolatile(INVALID)
    @\underline{computeSize(fastSize)}@:
        computedSize = fastSize@\label{code:HandshakeCountersSnapshot computeSize compute start}@
        for each tid:
            computedSize += this.snapshot[tid][INSERT].getVolatile() - @ @ this.snapshot[tid][DELETE].getVolatile() // INSERT=0, DELETE=1@\label{code:HandshakeCountersSnapshot computeSize compute end}@
        this.size.setVolatile(computedSize)@\label{code:HandshakeCountersSnapshot computeSize CAS}@
        return computedSize
\end{lstlisting}
\caption{\codestyle{HandshakeCountersSnapshot} methods}\label{fig:HandshakeCountersSnapshot}
\end{figure}

\begin{figure}
\begin{lstlisting}
@\underline{\textbf{class} LocksSizeCalculator}@:
    @\underline{LocksSizeCalculator()}@:
        this.metadataCounters = new long[numThreads]
        this.readWriteLock = new ReadWriteLock()
        this.sizeInfo = new SizeInfo()
    @\underline{updateMetadata(opKind)}@: @\label{code: locks updatemetadata}@
        tid = ThreadID.threadID.get()  
        if opKind == INSERT: @\label{code:locks opkind check if}@
            this.metadataCounters[tid].setVolatile(1+this.metadataCounters[tid].getVolatile())
        else:
            this.metadataCounters[tid].setVolatile(-1+this.metadataCounters[tid].getVolatile())
    @\underline{computeSize()}@:
        currentSizeInfo = this.sizeInfo.getVolatile() 
        if currentSizeInfo.size.getVolatile() != INVALID_SIZE: 
            newSizeInfo = new SizeInfo()
            witnessedSizeInfo = this.sizeInfo.compareAndExchange(currentSizeInfo, newSizeInfo)
            if witnessedSizeInfo == currentSizeInfo:
                size = _computeSize()
                newSizeInfo.size.setVolatile(size)
                return size
            currentSizeInfo = witnessedSizeInfo                
        return _waitForComputing(currentSizeInfo)
    @\underline{\_computeSize()}@:
        sum = 0
        this.readWriteLock.writeLock()*
        for each tid:
            sum += this.metadataCounters[tid].getVolatile()
        this.readWriteLock.writeUnlock()*
        return sum
    @\underline{\_waitForComputing(currentSizeInfo)}@:
        while true:
            currentSize = currentSizeInfo.size
            if currentSize != INVALID_SIZE:
                return currentSize
    @$^*$The locking and unlocking scheme depends on the implementation of the lock used and may look different (e.g. when using a StampedLock).

\end{lstlisting}
\caption{\codestyle{LocksSizeCalculator} methods}\label{fig:LocksSizeCalculator}
\end{figure}

\ignore{
\subsection{Details of the Optimistic Approach}\label{app-optimistic}
The overall design of the optimistic approach was described in \Cref{section:optimistic}. Here we provide the details of the implementation including the specific pseudo-code.

\subsection{Optimistic Approach}
\label{optimistic: algorithm in detail}
The data structure transformation for the optimistic approach uses an \codestyle{OptimisticSizeCalcu\-lator} object whose methods appear in \Cref{fig:OptimisticSizeCalculator1,fig:OptimisticSizeCalculator2} to calculate the size. Next we bring the transformation details (the full pseudocode appears in \Cref{fig: transformed data structure with an optimistic scheme}).
Like in the other methodologies, an array named \codestyle{metadataCoun\-ters} with per-thread size metadata is maintained and updated upon a successful \ins{} or \del{} operation.
The time gap between updating the data structure and updating the \size{} metadata in \ins{} or \del{} operations can lead to non-linearizable size results for \size{} operations that observe the \size{} metadata during this period.
To prevent this, we maintain an activity counter per thread in an array named \codestyle{activityCounters}. Each thread performing an \ins{} or \del{} increments its cell in the \codestyle{activityCounters} array before making any changes to the data structure, and increments it again after updating the \size{} metadata regardless of whether the operation was successful or not. Using this activity counter array, a \size{} operation can determine whether the metadata was updated during its execution, and if so, it can retry the operation, as follows.

\begin{figure}
\begin{lstlisting}
@\underline{\textbf{class} OptimisticSizeCalculator}@:
    @\underline{OptimisticSizeCalculator()}@:@\label{code:OptimisticSizeCalculator ctor}@
        MAX_TRIES = 3
        this.metadataCounters = new long[numThreads]
        this.activityCounters = new long[numThreads]
        this.awaitingSizes = 0
        this.sizeInfo = new SizeInfo()
    @\underline{incrementActivityCounter()}@:
        tid = ThreadID.threadID.get()
        this.activityCounters[tid].setVolatile(1+this.activityCounters[tid].getVolatile())
    @\underline{helpSize()}@:
        if this.awaitingSizes.getVolatile() == 0: return@\label{code:check awaitingSizes}@
        currentSizeInfo = this.sizeInfo.getVolatile()
        while true:
            if currentSizeInfo.size.getVolatile() != INVALID_SIZE:
                break
            size = _tryComputeSize()
            if size != INVALID_SIZE:
                activeSizeInfo.size.compareAndSet(INVALID_SIZE, size)
                break
    @\underline{updateMetadata(opKind)}@:
        tid = ThreadID.threadID.get()
        if opKind == INSERT: @\label{code:OptimisticSizeCalculator opkind check if}@
            this.metadataCounters[tid].setVolatile(1+this.metadataCounters[tid].getVolatile())
        else:
            this.metadataCounters[tid].setVolatile(-1+this.metadataCounters[tid].getVolatile())
    @\underline{computeSize()}@:
        count = 0
        <activeSizeInfo, isReturnableSizeInfo> = _obtainActiveSizeInfo()
        while true:
            if (size = activeSizeInfo.size.getVolatile()) != INVALID_SIZE:
                if isReturnableSizeInfo: break
                else:
                    <activeSizeInfo, _> = _obtainActiveSizeInfo()
                    isReturnableSizeInfo = true
            if count == MAX_TRIES:
                this.awaitingSizes.getAndAdd(1)@\label{code:awaitingSizes inc}@
            if count <= MAX_TRIES:
                count++
            size = _tryComputeSize()
            if size != INVALID_SIZE:
                activeSizeInfo.size.compareAndSet(INVALID_SIZE, size)
                break
        if count == MAX_TRIES + 1:
            this.awaitingSizes.getAndAdd(-1)@\label{code:awaitingSizes dec}@
        return size
\end{lstlisting}
\caption{\codestyle{OptimisticSizeCalculator} interface methods}\label{fig:OptimisticSizeCalculator1}
\end{figure}

\begin{figure}
\begin{lstlisting}
		@\underline{\_readActivityCounters()}@:
			status = new long[numThreads]
			for each tid:
				wait until ((status[tid] = this.activityCounters[tid].getVolatile())%2 == 0) @\label{code: optimistic wait for even numbers}@
			return status
		@\underline{\_retryActivityCounters(status)}@:
			for each tid:
				if status[tid] != this.activityCounters[tid].getVolatile():
					return false
			return true
	    @\underline{\_tryComputeSize()}@:
	        status = _readActivityCounters()@\label{code:read activityCounters first time}@
	        sum = 0@\label{code:sum counters start}@
	        for each tid:
            	sum += this.metadataCounters[tid].getVolatile()@\label{code:sum counters end}@
        	if _retryActivityCounters(status):@\label{code:read activityCounters second time}@
            	return sum
        	return INVALID_SIZE
		@\underline{\_obtainActiveSizeInfo()}@:
			currentSizeInfo = this.sizeInfo.getVolatile()
			if currentSizeInfo.size.getVolatile() == INVALID_SIZE:
				activeSizeInfo = currentSizeInfo
				isNewlyInstalledSizeInfo = false
			else:
   				isNewlyInstalledSizeInfo = true
   				newSizeInfo = new SizeInfo()
				witnessedSizeInfo = this.sizeInfo.compareAndExchange(currentSizeInfo, newSizeInfo) @\label{code: install size object}@
				if witnessedSizeInfo == currentSizeInfo:
					activeSizeInfo = newSizeInfo
				else:
					activeSizeInfo = witnessedSizeInfo
			return <activeSizeInfo, isNewlyInstalledSizeInfo>
    \end{lstlisting}
    \caption{\codestyle{OptimisticSizeCalculator} auxiliary methods}\label{fig:OptimisticSizeCalculator2}
\end{figure}

To calculate the size, a \size{} operation calls the \codestyle{\_tryComputeSize} method, which starts by making a copy named \codestyle{status} of the \codestyle{activ\-ityCounters} array (\Cref{code:read activityCounters first time}). It is important to note that this copy is not obtained using a snapshot mechanism. For any obtained odd value, which means that the corresponding thread is executing an \ins{} or a \del{} operation, the cell in the \codestyle{status} array is re-read until obtaining an even activity counter value. Once obtaining a \codestyle{status} array with no odd values, the \size{} operation proceeds to calculating the size by summing up the values in the \codestyle{metadataCounters} array (\Cref{code:sum counters start,code:sum counters end}).
Then the \size{} operation accesses the \codestyle{activityCounters} array again and compares its values with the values previously obtained in the \codestyle{status} array (\Cref{code:read activityCounters second time}). If they do not match, it restarts. Otherwise, the \size{} operation finishes and returns the computed size.

To prevent the \size{} operation from restarting indefinitely, we set a limit named \codestyle{MAX\_TRIES} on its number of retries, which determines the maximal number of attempts the \size{} will go through before making concurrent \ins{} and \del{} operations assist it. Once this limit is reached, the \size{} operation increments a counter called \codestyle{awaitingSizes} (\Cref{code:awaitingSizes inc}), which it will later decrement before it returns (\Cref{code:awaitingSizes dec}).
The \ins{} and \del{} operations check this counter before they start operating. In case its value is positive, they help the \size{} operation by trying to compute the size themselves in a similar fashion to \size{} operations---by obtaining \codestyle{activityCounters} values before and after the computation (see the \codestyle{helpSize} method in \Cref{fig:OptimisticSizeCalculator1}).
The \codestyle{MAX\_TRIES} variable has a big effect on the transformed data structure's performance. If it is too small, \ins{} and \del{} operations may be interrupted frequently by \size{} operations requiring them to help before performing their operation and therefore harming their performance. If it is too big, \size{} operations may take a long time to complete, deteriorating the performance of \size{} operations.

Helping a \size{} operation compute the size (both by \ins{} and \del{} operations and by other \size{} operations) is coordinated using a shared object named \codestyle{SizeInfo}, which has a single field named \codestyle{size} initialized to \codestyle{INVALID\_SIZE} and intended to hold the result of a \size{} operation. \size{} operations install such an instance in \codestyle{OptimisticSizeCalculator.sizeInfo}, and concurrent \size{}, \ins{} and \del{} operations that observe an installed instance with \codestyle{INVALID\_SIZE} size value attempt to compute the size and write the obtained size onto the size field.
The reason a \size{} operation needs to obtain a \codestyle{SizeInfo} instance, installed in \codestyle{OptimisticSizeCalculator.sizeInfo} by itself or by a concurrent \size{}, is to be able to retrieve from it a size value computed by another thread, as the \size{} operation might keep failing to obtain two identical copies of even activity counters and compute a correct size on its own. After obtaining a \codestyle{SizeInfo} instance, the \size{} operation keeps attempting to obtain two such activity counters copies and compute the size in between. On a successful attempt, it returns the computed value while also writing it to the obtained \codestyle{SizeInfo} instance for helping others. On a failing attempt, if another thread succeeded and wrote its computed size to the \codestyle{SizeInfo} instance, it returns this computed size.
However, a size written by another thread to the first \codestyle{SizeInfo} instance obtained by \size{} may not be returned, since it might have been computed (by summing the \codestyle{countersMetadata} values) before this \size{}'s interval (and so the \size{} operation would have been linearized outside its interval).
Once observing that the \size{} field in the first obtained \codestyle{SizeInfo} instance is set, a new instance should be installed and obtained, and the size value---that will be later computed and written to it---may be legally returned.

The optimistic methodology does not maintain the progress guarantees of \ins{} and \del{} due to the blocking wait in \Cref{code: optimistic wait for even numbers}. It does maintain them for the \contains{} operation as it does not modify it. 

} 

\subsection{Details of the Locks approach}\label{app-locks-details}
In this section we detail the data structure transformation to make it support our lock-based size mechanism. The  transformation pseudo-code appears in \Cref{fig: transformed data structure with a readers-writer lock}.
We add a readers-writer lock to the data structure in the form of a field named \codestyle{readWriteLock} placed in a \codestyle{LocksSizeCalculator} object (see \Cref{fig:LocksSizeCalculator} for its full method pseudocode). Different implementations of such a lock can be used; we used Java's \texttt {StampedLock} class from the \texttt{java.util.concurren\-t.locks} package in our evaluation as it provided the best results out of the tested lock implementations.
Additionally, we add an array named \codestyle{metadataCount\-ers} to the  \codestyle{LocksSizeCalculator} object, with a cell per thread to keep track of the size metadata for each thread.

An \ins{} operation starts with a search to find the insertion point. If an unmarked node with the required key is already found, it returns a failure. Otherwise, the read lock is acquired by invoking the \codestyle{readLock()} method on the \codestyle{readWriteLock} object. Following this, an insertion attempt is executed as in the original data structure. If it concludes successfully, the current thread's cell in the \codestyle{metadataCounters} array is incremented by 1. To wrap up the process, the read lock is released by calling the \codestyle{readUnlock()} method on the \codestyle{readWriteLock} object, and the result of the insertion attempt is returned.

Similarly, a \del{} operation begins by searching for a node with the key it wishes to delete. If such a node is not located, the operation promptly returns a failure. However, if found, the read lock is acquired by invoking the \codestyle{readLock()} method on the \codestyle{readWrite\-Lock} object. The operation then advances to execute a deletion attempt like in the original data structure. If successfully completed, the current thread's cell in the \codestyle{metadataCounters} array is decremented by 1. Finally, the read lock is unlocked by calling the \codestyle{read\-Unlock()} method on the \codestyle{readWriteLock} object, and the outcome of the deletion attempt is returned.

The \contains{} operation remains as in the original data structure. 
Lastly, the \size{} operation sums the values of all the cells in the \codestyle{metadataCounters} array. To do so, the write lock is acquired by invoking the \codestyle{writeLock()} method on the \codestyle{readWriteLock} object. The operation then iterates over the array and sums the values of all the cells. Once the summation is complete, the write lock is released using the \codestyle{writeUnlock()} method on the \codestyle{readWriteLock} object and the result of the summation is returned. The acquisition of the write lock ensures that no \ins{} or \del{} operation is in an inconsistent state while the summation is executed. 

The updates and summation of the \codestyle{metadataCounters} array are not executed using a special snapshot algorithm. This is because due to the mutual exclusion guaranteed by the acquisition of the write lock, when the \size{} operation is accessing the array, no other thread can update it. This makes a simple pass over the array sufficient to correctly compute the size.

To allow multiple \size{} operations to be performed concurrently in an efficient manner, we place a field holding a shared object of type \codestyle{SizeInfo} in the \codestyle{LocksSizeCalculator} object, which has a single field for holding the computed size. At the start of a \size{} operation, it checks if the \codestyle{SizeInfo} instance currently installed in that field has a valid size value written to it. If not, the operation waits until a valid size value is written to the \codestyle{SizeInfo} instance and then returns that value. Otherwise, the operation attempts to replace the existing \codestyle{SizeInfo} instance with a new one with a size field initialized to \codestyle{INVALID\_SIZE} using \codestyle{compareAndExchange}. If the \codestyle{compareAndExchange} fails, the operation waits until a valid size value is written to the \codestyle{SizeInfo} instance and then returns it. If the \codestyle{compareAndExchange} succeeds, the \size{} operation is responsible for computing the size by acquiring the write lock, summing the metadata array, releasing the write lock and writing the computed size value into the \codestyle{SizeInfo} instance; it then returns the computed size.

\subsection{Thread registration} \label{subsection: Thread registration}

\begin{figure}[b]
\begin{lstlisting}
@\underline{\textbf{class} ThreadID}@:
    MAX_THREADS = 128
    this.threadID = ThreadLocal<Integer>()
    this.pool = PriorityBlockingQueue<Integer>(MAX_THREADS)
    this.nextId = AtomicInteger(0)
    @\underline{register()}@:
        if this.threadID.get() is not null:
            throw new RuntimeException("Thread already registered")
        tid = this.pool.poll()
        if tid is null:
            tid = this.nextId.getAndIncrement()
            if tid >= MAX_THREADS:
                throw new RuntimeException("Too many threads")
        this.threadID.set(tid)
    @\underline{deregister()}@:
        tid = this.threadID.get()
        if tid is null:
            throw new RuntimeException("Thread not registered")
        this.pool.add(tid)
        this.threadID.set(null)

\end{lstlisting}
\caption{\codestyle{ThreadID} class methods}\label{fig:ThreadID}
\end{figure}

Each methodology we study utilizes a metadata array to effectively track the count of insertions and deletions on a per-thread basis.
Within this metadata array, every thread is allocated a distinct cell.
To allocate a cell to each thread, we incorporate a registration mechanism, assigning a unique ID to each thread that aligns with a cell in the array.
To facilitate this thread identification and management in a concurrent environment, we introduce the \codestyle{ThreadID} class presented in \Cref{fig:ThreadID}, in which we have implemented a mechanism to manage thread registration.
Within this class, an \texttt{AtomicInteger} variable is utilized to keep track of the next thread ID that has not yet been used.
A pool is maintained to store the thread IDs that have been released by other threads, ensuring reuse of these IDs for subsequent thread registrations.
Before performing any operation on the data structure a thread must call \codestyle{ThreadID.r\-egister()}.
When it is done using the data structure it should call \codestyle{ThreadID.deregister()} to release its thread ID allowing other threads to use it.

It is important to note that the reassignment of a thread ID from one thread to another does not compromise the correctness of the \size{} operation.
A thread should perform deregistration only after it is done operating on the data structure, and each operation on the data structure returns only after it has been finalized.
Consequently, a specific thread ID is allocated to one operating thread only at any given time, and if a new thread is allocated a previously used ID, the data structure continues to reflect the cumulative effects of all operations conducted under that ID. 
Therefore, the data structure maintains its integrity and correctness even as thread IDs are dynamically allocated and deallocated among different threads. 

We used Java's \codestyle{PriorityBlockingQueue} class from the \texttt{java.ut\-il\-.concurrent} package to serve as our concurrent pool, ensuring the management of concurrent accesses.
This class implements the \codestyle{poll()} method to allow extraction of an element from the pool and the \codestyle{add()} method to allow insertion of a new element into the pool.
In addition, we utilized Java's \texttt{AtomicInteger} from the \texttt{java.util.concurrent} package to keep track of the next available thread ID in an atomic manner.

\subsection{General Optimizations} \label{subsection: general optimizations}
\subsubsection{Avoid false sharing}
To prevent false sharing among threads while accessing arrays that hold per-thread data, for each array of this kind utilized in the different methodologies (\codestyle{meatadataCounters}, \codestyle{fastMeatadataCount\-ers}, \codestyle{opPhase} and \codestyle{activityCounters}), we pad its cells so that the data of each thread occupies a full cacheline.

\subsubsection{Partial array iteration} \label{subsubsection: partial array iteration opt}
The usage of the registration scheme described in \Cref{subsection: Thread registration} enables the determination of the maximum number of threads that have been operating concurrently on the data structure. This number is represented as the value of the \codestyle{nextId} variable. Consequently, when going over any metadata array sized to the number of threads in each methodology (for example, the \codestyle{activityCounters} array in the optimistic methodology), it is possible to go over only the first \codestyle{nextId} cells and not iterate over the entire array. This can improve performance in cases where the maximal number of concurrent active threads is fewer than the maximal number of threads (which determines the size of the \size{} metadata array).
To implement this, in places where we iterate over the \size{} metadata array (not including the initialization of this array), we first read the value of the \codestyle{nextId} variable. Then, we iterate over only the first few cells of the array based on the value we read from \codestyle{nextId}.
In order to address the race condition in which a new thread has registered causing \codestyle{nextId} to increment after we have read the value of the \codestyle{nextId} variable allowing that new registered thread to modify a cell which we are not aware exists, we read this value again when we have finished iterating. If the value has changed (which would only be an increase), we repeat the process by reading the \codestyle{nextId} variable again. We will only finish once we reach an iteration where this value has not changed.
This verification loop is not necessary in all cases, for example, in the optimistic methodology when executing the \codestyle{\_readActivityCounters} function there is no need to verify the value of \codestyle{nextId} as if we missed a thread's registration - in the case that thread has modified the \codestyle{activityCounters} array we will find out about it in the \codestyle{\_retryActivityCounters} function, causing the \size{} attempt to retry.
To ensure a fair comparison, we also incorporated this optimization into the \spsize{} implementation in our measurements.

\subsubsection{Usage of tailored \codestyle{opKinds}} \label{subsubsection: tailored opkinds opt}

The \size{} metadata in \cite{sela2021concurrentSize} includes 2-cell per-thread arrays containing separate insertion counter and deletion counter.
The \codestyle{opKind} values in \cite{sela2021concurrentSize} are 0 (for \codestyle{INSERT}) and 1 (for \codestyle{DELETE}), since they are used as an index when accessing those arrays. This is also the way \codestyle{opKind} values are used in slow operations in the handshakes methodology in this paper.

However, the rest of our methodologies (including fast operations in the handshakes methodology) include a single counter per thread, and \codestyle{opKind} is used to distinguish whether to add 1 to a counter (in case of insertion) or to subtract 1 from the counter (in case of deletion).
Therefore, in these cases, we can make usage of \codestyle{opKind} values that will allow us to eliminate the conditional statements when updating the \size{} metadata, appearing in \Cref{code:HandshakeSizeCalculator opkind check if,code:OptimisticSizeCalculator opkind check if,code:locks opkind check if}. To achieve this, we define new, different \codestyle{opKind} values to be \codestyle{INSERT=1} and \codestyle{DELETE=-1}. When updating the metadata, rather than checking whether the \codestyle{opKind} indicates an \ins{} or a \del{} operation to decide whether to increase or decrease the counter, we simply add the \codestyle{opKind} value directly to the corresponding cell in the \size{} metadata array.

\subsection{Memory Model}\label{subsection: Memory Model}
In our Java implementation, volatile semantics are consistently used in read, write, and \cas{} operations on non-final fields of shared objects. This approach is carried out through the usage of volatile variables, \codestyle{VarHandle}s and \codestyle{AtomicReferenceFieldUpdater}s.

Under the Java memory model, any access that utilizes volatile semantics is treated as a synchronization action. The model, in turn, commits to a synchronization order for these actions. This allows for a total order that seamlessly aligns with each thread's program order. Moreover, any read operation executed on a volatile variable is promised to fetch its most recently written value, in accordance with the synchronization order.

\section{Correctness of the handshake-based methodology}\label{handshakes:linearizability}
Let us prove that the \size{} method that uses the handshake-based synchronization is correct.

\ignore{
	
\subsection{Two handshake rationale}

In the slow path of our handshake-based methodology, dependent operations help the (update) operations they rely on to update the metadata before executing their own operations. In contrast, operations in the fast path do not assist in updating metadata on behalf of the operations they depend on. The use of handshakes allows a slow path operation to execute concurrently with a fast path operation. In such cases, the slow path operation may linearize first, with the fast path operation depending on it. Since the fast operation does not help the slow one, this could lead to a situation where the metadata is updated for the second (fast) operation before being updated for the first (slow) operation.

With a single handshake, a concurrent \size{} operation might then execute concurrently with the slow path operation and linearize before the slow path updates the metadata. In this scenario, the \size{} operation might only account for the fast (second) operation while missing the (first) slow operation it depends on. 
This scenario is demonstrated in \Cref{fig:concurrent-execution-with-single-handshake-example}, where an \ins{} operation acknowledges the handshake initiated by the \size{} operation and starts running in slow mode, while a concurrent \del{} operation that started operating in fast mode before the beginning of the handshake is still running.

\ignore{
\begin{figure}[h]
    \begingroup
    \renewcommand{\codestyle}[1]{{\fontsize{6}{8.5}\selectfont\texttt{#1}}} 
    \fontsize{6}{9.5}\selectfont
    \setlength{\unitlength}{0.065mm}
    \begin{picture}(800,230)
        \put(-250,150){\size{}():}
        \put(-250,0){\del(1):}
        \put(-250,-150){\ins(1):}

        \put(700,170){\line(0,-1){30}} 
        \put(700,155){\line(1,0){230}} 
        \put(930,170){\line(0,-1){30}} 
        \put(915,180){-1}
        \put(720,145){\color{purple} \line(0,-1){20}}
        \put(720,135){\color{purple} \line(1,0){200}}
        \put(920,145){\color{purple} \line(0,-1){20}}
        \put(725,105){\color{purple} compute size}

        \put(80,25){\line(0,-1){30}} 
        \put(80,10){\line(1,0){600}} 
        \put(680,25){\line(0,-1){30}} 
        \put(490,20){\color{teal} \line(0,-1){20}} 
        \put(300,-30){\color{teal} delete 1 from the}
        \put(320,-65){\color{teal} data structure}
        \put(630,20){\color{purple} \line(0,-1){20}} 
        \put(580,-30){\color{purple} update}
        \put(565,-65){\color{purple} metadata}

        \put(250,-130){\line(0,-1){30}} 
        \put(250,-145){\line(1,0){760}} 
        \put(1010,-130){\line(0,-1){30}} 
        \put(450,-135){\color{teal} \line(0,-1){20}} 
        \put(415,-185){\color{teal} insert 1 to the}
        \put(415,-220){\color{teal} data structure}
        \put(960,-135){\color{purple} \line(0,-1){20}}
        \put(910,-185){\color{purple} update}
        \put(890,-220){\color{purple} metadata}

    \end{picture}
    \vspace{70pt}
    \caption{\texttt{size} concurrent with an \texttt{insert} operation that ran concurrently with a dependent \texttt{delete} operation.}
    \label{fig:concurrent-execution-with-fast-path-example}
    \endgroup
\end{figure}
} 

\begin{figure}[h]
	\begingroup
	\renewcommand{\codestyle}[1]{{\fontsize{6}{8.5}\selectfont\texttt{#1}}} 
	\fontsize{6}{9.5}\selectfont
	\setlength{\unitlength}{0.065mm}
	\begin{picture}(800,230)
		\put(-250,150){\size{}():}
		\put(-250,0){fast \del(1):}
		\put(-250,-150){slow \ins(1):}
		
		\put(120,170){\line(0,-1){30}} 
		\put(120,155){\line(1,0){810}} 
		\put(930,170){\line(0,-1){30}} 
		\put(915,180){-1}
		\put(235,165){\color{blue} \line(0,-1){20}} 
		\put(60,115){\color{blue} increment \codestyle{sizePhase} to 1}
		\put(720,145){\color{purple} \line(0,-1){20}}
		\put(720,135){\color{purple} \line(1,0){200}}
		\put(920,145){\color{purple} \line(0,-1){20}}
		\put(725,105){\color{purple} compute size}
		
		\put(80,25){\line(0,-1){30}} 
		\put(80,10){\line(1,0){600}} 
		\put(680,25){\line(0,-1){30}} 
		\put(150,20){\color{blue} \line(0,-1){20}}
		\put(120,-30){\color{blue} read}
		\put(50,-65){\color{blue} \codestyle{sizePhase}==0}
		\put(490,20){\color{teal} \line(0,-1){20}} 
		\put(300,-30){\color{teal} delete 1 from the}
		\put(320,-65){\color{teal} data structure}
		\put(630,20){\color{purple} \line(0,-1){20}} 
		\put(580,-30){\color{purple} update}
		\put(550,-65){\color{purple} fast metadata}
		
		\put(250,-130){\line(0,-1){30}} 
		\put(250,-145){\line(1,0){760}} 
		\put(1010,-130){\line(0,-1){30}} 
		\put(450,-135){\color{teal} \line(0,-1){20}} 
		\put(415,-185){\color{teal} insert 1 to the}
		\put(415,-220){\color{teal} data structure}
		\put(960,-135){\color{purple} \line(0,-1){20}}
		\put(910,-185){\color{purple} update}
		\put(850,-220){\color{purple} slow metadata}
		\put(295,-135){\color{blue} \line(0,-1){20}}
		\put(265,-185){\color{blue} read}
		\put(195,-220){\color{blue} \codestyle{sizePhase}==1} 
		
	\end{picture}
	\vspace{70pt}
	\caption{An execution with a single handshake in which the size computation is concurrent with a slow \texttt{insert} that ran concurrently with a fast dependent \texttt{delete}.}
	\label{fig:concurrent-execution-with-single-handshake-example}
	\endgroup
\end{figure}

To ensure linearizability, we ensure that when a \size{} operation computes the size, all concurrent data structure operations have their metadata adequately updated in an order that respects operation dependencies. This is achieved by having the \size{} operation execute concurrently only with slow path operations that have not previously executed concurrently with fast path operations.

After the first handshake completes, we know that all threads are operating in the slow path. However, at this point, some slow path operations may have already executed concurrently with fast path operations. Therefore, a second handshake is performed to wait for all threads to complete their previous operations. Once the second handshake completes, we know that any currently executing operation is in the slow path and has never executed concurrently with a fast path operation.

} 

\subsection{Linearization points} 
Our handshake-based methodology is linearizable. We detail the methods' linearization points next, and bring the linearizability proof in \Cref{section:handshake linearizability proof}.

A \size{} operation that has managed to successfully install its \codestyle{HandshakeCoun\-tersSnapshot} object in \Cref{code:HandshakeSizeCalculator compareAndExchange} is linearized as in \cite{sela2021concurrentSize}, namely, when the collecting field of the \codestyle{CountersSnapshot} instance it operates on is set to \textit{false} for the first time. Otherwise, a \size{} operation that had to wait on another \size{}'s \codestyle{HandshakeCoun\-tersSnapshot} object (i.e., called \codestyle{\_waitForComputing}) is linearized at the linearization point of the \size{} which installed the \codestyle{Handshake\-CountersSnapshot} object it obtained in \Cref{code: HandshakeSizeCalculator read counterSnapshot}.

Fast operations are always linearized according to the original linearization point. Any successful slow \ins{} or \del{} operation that has seen the phase number of the first handshake (i.e., has read \codestyle{size\_phase} $\equiv_{4}{1}$ in \Cref{trans_op: read size phase}) is linearized according to its original linearization point. Otherwise, a successful slow \ins{} or \del{} operation that has seen the number of the second handshake phase (\codestyle{size\_phase} $\equiv_{4}{2}$ in \Cref{trans_op: read size phase}) is linearized according to the linearization point defined in \cite{sela2021concurrentSize}. Specifically, if no \size{} operation is collecting at the time of the metadata update, the operation is linearized at the metadata update itself. Otherwise, if a \size{} operation is collecting, the operation is linearized according to that \size{} operation: if the \size{} operation takes the update into account, then the operation is linearized at the metadata update; otherwise, it is retrospectively linearized immediately after the linearization point of that \size{} operation. However, if a dependent concurrent fast operation is linearized between the original linearization point and the linearization point in \cite{sela2021concurrentSize}, the slow operation is linearized at the latter of its original linearization point and right after the linearization point of the last concurrent \size{} operation that does not take it into account (due to a scenario we elaborate on in the next paragraph).
Note that two slow operations can be linearized at the same time (immediately after a \size{} operation), in which case we order them according to the order of the linearization points defined in \cite{sela2021concurrentSize}. 
The linearization points of \contains{} and failing slow \ins{} or \del{} operations follow a methodology similar to that presented in \cite{sela2021concurrentSize} with the slight modification that we now linearize them based on the new linearization points of \ins{} or \del{} operations.
In detail, an operation $op$ which is a \contains{} or a failing slow \ins{} or \del{} is linearized at the original linearization point unless the operation it depends on (namely, the last successful update operation on $k$ whose original linearization point precedes $op$'s original linearization point) is a slow operation that has yet to be linearized (according to the new linearization point defined above) at $op$'s original linearization point, in which case we linearize $op$ immediately after that operation is linearized.

The scenario in which the metadata related to a successful slow \ins{} or \del{} operation, is updated later than the linearization point of a dependent fast operation, requires special handling of the slow operation's linearization point. It cannot be linearized like in \cite{sela2021concurrentSize} at its metadata update as it occurs after the linearization point of the dependent fast operation. Instead, we linearize it beforehand as we demonstrate on the scenario illustrated in \Cref{fig:execution-with-fast-concurrent-with-old-slow-example}, where a slow \ins{}(1) updates the metadata only after a dependent fast \del(1) is linearized. The \size{} operation in the figure is the one that set \codestyle{sizePhase} to the value obtained by the \ins{}. This is the last \size{} operation that does not take the \ins{} into account (this \size{} operation does not take the \ins{} into account since it must have completed before the fast \del{} ran which happened in turn before the \ins{} updated the size metadata; following \size{} operations perform handshakes before computing the size and may hence compute the size from the metadata only after the \ins{} completes including its metadata update). 
This \size{} operation may be linearized either before the \ins{} inserts 1 to the data structure or afterwards. If it is linearized first, then we linearize the \ins{} when it inserts 1 to the data structure. Else, we linearize the \ins{} right after the linearization point of the \size{}. In both cases, the chosen linearization guarantees that the \ins{} is linearized after the \size{} which did not take it into account and before any dependent operation.

\begin{figure}
    \begingroup
    \renewcommand{\codestyle}[1]{{\fontsize{6}{8.5}\selectfont\texttt{#1}}} 
    \fontsize{6}{9.5}\selectfont
    \setlength{\unitlength}{0.065mm}
    \begin{picture}(700,230)
        \put(-300,150){\size{}():}
        \put(-300,0){fast \del(1):}
        \put(-300,-150){slow \ins(1):}

        \put(-40,170){\line(0,-1){30}} 
        \put(-40,155){\line(1,0){370}} 
        \put(330,170){\line(0,-1){30}} 
        \put(130,180){\color{purple} \line(0,-1){35}}
        \put(145,163){\color{purple} --- \emph{or} ---}
        \put(260,180){\color{purple} \line(0,-1){35}}
        \put(-10,115){\color{purple} announce collection completion}
        \put(80,80){\color{purple} = linearization point}

        \put(390,25){\line(0,-1){30}} 
        \put(390,10){\line(1,0){400}} 
        \put(790,25){\line(0,-1){30}} 
        \put(570,20){\color{teal} \line(0,-1){20}} 
        \put(365,-30){\color{teal} delete 1 from the data structure} 
        \put(440,-65){\color{teal} = linearization point} 

        \put(-35,-130){\line(0,-1){30}} 
        \put(-35,-145){\line(1,0){1025}} 
        \put(990,-130){\line(0,-1){30}} 
        \put(70,-135){\color{blue} \line(0,-1){20}}
        \put(40,-90){\color{blue} read}
        \put(-30,-125){\color{blue} \codestyle{sizePhase}==2} 
        \put(195,-135){\color{teal} \line(0,-1){20}} 
        \put(125,-185){\color{teal} insert 1 to the data structure}
        \put(120,-220){\color{teal} = original linearization point} 
        \put(900,-135){\color{purple} \line(0,-1){20}}
        \put(850,-185){\color{purple} update}
        \put(830,-220){\color{purple} metadata}
    \end{picture}
    \vspace{50pt}
    \caption{An execution with a slow \texttt{insert} that takes part in a second handshake with \texttt{size} and updates the metadata after a dependent fast \texttt{delete} executes}
    \label{fig:execution-with-fast-concurrent-with-old-slow-example}
    \endgroup
\end{figure}

\subsection{Linearizability proof}\label{section:handshake linearizability proof}
In this section we prove the linearizability of the handshake-based methodology presented in \Cref{section:handshakes}, using the linearization points stated in \Cref{handshakes:linearizability}.
We will prove linearizability in a manner similar to \cite{sela2021concurrentSize}.
This requires us to show (1) each linearization point occurs within the operation's execution time, and (2) ordering an execution's operations (with their results) according to their linearization points forms a legal sequential history.
We prove Property (1) in \Cref{claim:lin within op} in \Cref{linearization timing} and Property (2) in \Cref{claim:results comply} in \Cref{linearization is legal}.

Compared to the proof presented in \cite{sela2021concurrentSize}, the main distinction in this context lies in having to consider new linearization points as well as to closely examine the handshake mechanism. It is essential to establish and prove significant observations related to this mechanism. These observations are important to determining which types of operations (fast path operations and slow path operations) can be executed concurrently to a \size{} operation and which can not. Our linearizability proof will rely closely on these observations.

\subsubsection{Linearization points occur within operations' intervals}\label{linearization timing}

\begin{claim}\label{claim:lin within op}
The linearization point of each operation occurs within its execution time.
\end{claim}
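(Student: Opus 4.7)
The plan is to proceed by case analysis on the linearization-point definitions in \Cref{handshakes:linearizability}. For each operation type I will exhibit two events known to lie in the operation's execution interval and argue that its linearization point lies between them, relying on the handshake protocol and the shared \codestyle{sizePhase}/\codestyle{opPhase} invariants to order events involving a \size{}.

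The straightforward cases are fast \ins{}/\del{} operations and slow \ins{}/\del{} operations that observe \codestyle{sizePhase}$\equiv_{4} 1$, each of which is linearized at its original linearization point and therefore inside the call by the assumed linearizability of the underlying operation. A \size{} operation that installs its own \codestyle{HandshakeCountersSnapshot} via the CAS in \Cref{code:HandshakeSizeCalculator compareAndExchange} is linearized at the write in \Cref{code: HandshakeSizeCalculator set collecting false}, which is executed in its body. A \size{} operation that eventually calls \codestyle{\_waitForComputing} with a snapshot obtained in \Cref{code: HandshakeSizeCalculator read counterSnapshot} is linearized at the installing \size{}'s linearization point; this point is necessarily after the caller's invocation (otherwise the \codestyle{collecting} field would have been observed as \codestyle{false}, leading to the CAS branch rather than the wait branch) and necessarily before the caller's return, since \codestyle{\_waitForComputing} only exits once the \codestyle{size} field is set, which happens after \codestyle{collecting} becomes \codestyle{false}.

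The central case is a successful slow \ins{}/\del{} observing \codestyle{sizePhase}$\equiv_{4} 2$ that is retrospectively linearized immediately after the last concurrent \size{} operation $S$ whose collection does not reflect its update. Before performing any data-structure modification, the slow operation writes its \codestyle{opPhase[tid]} in \Cref{trans_op: set slow phase} to a value $\geq$ the \codestyle{sizePhase} it read. Any \size{} whose second handshake completes after this write has either observed this value or has waited until the slow operation returns to \codestyle{IDLE\_PHASE}; in either case $S$ cannot set \codestyle{collecting} to \codestyle{false} before the slow operation's invocation. On the other side, by assumption $S$ does not see the slow operation's metadata update, so $S$'s linearization point precedes that update, which in turn precedes the slow operation's return. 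The further adjustment that pushes the point past a dependent fast operation only takes the maximum of the retrospective point and the original linearization point, both of which lie in the interval by the above, so this case is settled.

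The main obstacle is bounding the linearization point of a \contains{} or a failing slow \ins{}/\del{} whose rule advances it to immediately after the (possibly adjusted) linearization point of the successful update it depends on: a priori, that point could fall after \contains{} returns. The resolution exploits the fact that in the handshake-based methodology \contains{} and failing slow updates always take the slow path and, following the \spsize{} helping mechanism, assist in updating the metadata of the node they depend on before they return. Hence the dependency's metadata update is no later than the dependent operation's return, which means that any \size{} linearized after that return does take the dependency into account. Consequently, the ``last \size{} that does not take the dependency into account'' is linearized before the dependent operation returns, so the dependency's adjusted linearization point, and then the point assigned to \contains{} or to the failing update right after it, both lie within the dependent operation's interval; combined with the fact that the adjustment only applies when the dependency's point exceeds the dependent's original linearization point (so the assigned point is strictly after the dependent's invocation), this yields the claim.
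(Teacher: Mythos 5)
Your case decomposition matches the paper's: original linearizability for fast operations and first-handshake slow operations, the \spsize{} argument for \size{}, a bound on the retrospective point for second-handshake slow updates, and the helping mechanism for \contains{} and failing slow updates. Your resolution of what you call the main obstacle is exactly the paper's: the dependent operation helps update the dependency's metadata before returning, so the dependency is linearized by the time that helping call returns (the paper packages this as \Cref{corolarry:lin when updateMetadata returns}).

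One step in your central case is justified with a claim that is false in precisely the sub-case it addresses, although the conclusion survives. You write that since $S$ does not take the slow operation's update into account, ``$S$'s linearization point precedes that update.'' But the retrospective rule applies exactly when $S$ is still collecting at the time of the metadata update, and $S$'s linearization point is the \emph{later} write of \codestyle{collecting} to \codestyle{false}; so in that sub-case $S$'s linearization point comes after the metadata update, not before, and nothing you have said prevents it from drifting past the operation's response. What actually bounds $S$'s linearization point is the \spsize{} forwarding mechanism: if $S$ were still collecting when the operation's \codestyle{updateMetadata} call completed, that call would forward the new counter value into $S$'s snapshot and $S$ would take the operation into account, contradicting the choice of $S$. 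Hence $S$ sets \codestyle{collecting} to \codestyle{false} before \codestyle{updateMetadata} returns, which is before the operation returns---the same forwarding argument you already invoke for the \contains{} case. The paper sidesteps this pitfall by instead sandwiching the new point between the original linearization point and the \spsize{} linearization point (\Cref{lemma: lin point of succ ins/del is after orig lin} and \Cref{lemma:lin point is before lin point from paper}), both already known to lie in the interval.
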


\begin{proof}
    For fast \ins{}, fast \del{}, slow \ins{}, slow \del{} and \contains{} operations that are linearized according to the original linearization point, the claim follows from the linearizability of the original data structure.\footnote{\label{fn: original lin points are within execution}There is a selection of linearization points for every linearizable data structure such that each of them is placed within the execution period of the relevant operation. We only use linearization points that meet this criterion when we discuss linearization points of the original data structure.}
    For \size{} operations, they are linearized in the same manner as in \cite{sela2021concurrentSize}. As the \codestyle{CountersSnapshot} object each \size{} operation holds is obtained in the same way as in \cite{sela2021concurrentSize} and the operation is linearized according to \cite{sela2021concurrentSize}, the same arguments in \emph{Claim 8.1} in \cite{sela2021concurrentSize} can be applied and the claim is valid.
    For successful slow \ins{} or slow \del{} that are not linearized at the original linearization point, from \Cref{lemma:lin point is before lin point from paper} and \Cref{lemma: lin point of succ ins/del is after orig lin} we conclude that the linearization point of such an operation occurs within its execution time.
    It is left to show the claim that for \contains{} and failing slow \ins{} or slow \del{} operations that are not linearized according to the original linearization point.
    Let $op$ be such an operation.
    Since $op$ was not linearized at its original linearization point and from the way we defined our linearization points we know that there must exist some slow operation $op_2$ that $op$ depends on and has yet to be linearized at $op$'s original linearization point.
    In this case, $op$ is linearized immediately after the linearization point $op_2$. 
    Since $op_2$ was yet to be linearized at the time of $op$'s original linearization, we know that $op_2$'s linearization must occur after $op$'s original linearization point.
    Moreover, because $op_2$ is a successful slow operation by \Cref{lemma:lin point is before lin point from paper} we conclude that $op$ observes $op_2$'s metadata and calls \codestyle{updateMetadata} on behalf of $op_2$.
    By \Cref{corolarry:lin when updateMetadata returns}, $op_2$ must be linearized by the time \codestyle{updateMetadata} returns and thus $op$ is linearized by that time as well. 
    Therefore, $op$ is linearized within its execution time.
\end{proof}
When proving \Cref{claim:lin within op} we rely on the validity of \emph{Claim 8.1} from \cite{sela2021concurrentSize}, this validity holds only if \emph{Lemma 8.2} in \cite{sela2021concurrentSize} still holds. We next show in \Cref{corolarry:lin when updateMetadata returns} that this is indeed the case.
To show this we recall the following lemma from \cite{sela2021concurrentSize}, the correctness of these lemma follows from the proof of \emph{Lemma 8.2} in \cite{sela2021concurrentSize}:

\begin{lemma}\label{lemma:lin of paper when updateMetadata returns} 
When a call to \codestyle{updateMetadata} returns, the operation whose \codestyle{updateInfo} was passed to the call is guaranteed to have reached the linearization point defined in \cite{sela2021concurrentSize}.
\end{lemma}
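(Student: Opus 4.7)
The plan is to invoke Lemma 8.2 of Sela and Petrank (SP) after verifying that the \codestyle{updateMetadata} procedure used in our slow path is behaviorally identical to the one analyzed there. Concretely, I would first check that (a) the internal code of \codestyle{updateMetadata} is unchanged from SP, (b) the \codestyle{UpdateInfo} objects passed in are constructed in the same way, and (c) the \codestyle{CountersSnapshot} state machine observed by \codestyle{updateMetadata} is the very object SP reasons about. All three follow directly from \Cref{section:handshakes}, since slow operations invoke the SP code verbatim and the handshake wrapper only selects which path (slow or fast) an \ins{} or \del{} takes---it does not touch the body of \codestyle{updateMetadata} or the transitions of \codestyle{UpdateInfo}.

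With identity of behavior established, the argument proceeds by case analysis on what happened during the call to \codestyle{updateMetadata}. Either (i) the call (or a concurrent helper) managed to insert the per-thread delta into an actively collecting \codestyle{CountersSnapshot}, in which case the SP linearization point coincides with that \cas{} and the call only returns after it; or (ii) the delta was written directly to the per-thread metadata counter while no size was collecting, so the SP linearization point is that write; or (iii) the delta was not incorporated into any collecting snapshot because the last concurrent size had already cleared its \codestyle{collecting} flag, in which case SP linearizes the operation retroactively to the moment immediately after that size's linearization point, a moment that is already in the past when \codestyle{updateMetadata} returns. Each branch is exactly the case analysis in SP's proof of Lemma 8.2, so the conclusion transfers.

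The main obstacle is justifying that the additional concurrency permitted by the handshake mechanism---slow and fast paths coexisting during the first handshake and at the tail of a \size{} execution, and the reinterpretation of ``marked'' to include both \codestyle{deleteInfo} marks and the original marker---does not introduce a new execution of \codestyle{updateMetadata} that escapes the three cases above. I would address this by emphasizing that \codestyle{updateMetadata} is only ever invoked on behalf of a slow operation's \codestyle{UpdateInfo} (by the owner or by a slow helper that encountered a \codestyle{deleteInfo}-marked node it is about to unlink), and that the evolution of such an \codestyle{UpdateInfo} is driven solely by the SP protocol, independently of concurrent fast-path activity, which never touches \codestyle{deleteInfo} fields or the \codestyle{CountersSnapshot}. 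Hence no new \codestyle{updateMetadata} execution is possible beyond those considered by SP, and the original proof applies verbatim, yielding the claim.
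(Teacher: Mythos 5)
Your proposal is correct and follows essentially the same route as the paper, which disposes of this lemma in one line by asserting that its correctness follows from the proof of \emph{Lemma 8.2} in \cite{sela2021concurrentSize}, given that \codestyle{updateMetadata}, the \codestyle{UpdateInfo} objects, and the \codestyle{CountersSnapshot} machinery are taken verbatim from that work. Your additional check that the fast path never writes \codestyle{deleteInfo} fields, never touches the \codestyle{CountersSnapshot}, and never invokes \codestyle{updateMetadata} --- so the handshake-induced fast/slow concurrency cannot create executions outside SP's case analysis --- is a sound and welcome elaboration of what the paper leaves implicit.
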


Now, as a direct conclusion from \Cref{lemma:lin of paper when updateMetadata returns,lemma:lin point is before lin point from paper}:
\begin{corollary}
    \label{corolarry:lin when updateMetadata returns}
    When a call to \codestyle{updateMetadata} returns, the operation whose \codestyle{updateInfo} was passed to the call is guaranteed to be linearized.
\end{corollary}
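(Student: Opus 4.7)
The plan is to argue the corollary as a one-step deduction from the two lemmas cited immediately above it. First I would invoke \Cref{lemma:lin of paper when updateMetadata returns}: the moment a call to \codestyle{updateMetadata} returns, the operation $op$ whose \codestyle{updateInfo} was passed in has provably reached the linearization point defined by the \spsize{} methodology of \cite{sela2021concurrentSize}. This is the inherited guarantee from the underlying wait-free scheme, and we may use it verbatim since the slow-path \codestyle{updateMetadata} routine is imported unchanged from \spsize{}.

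Next I would appeal to \Cref{lemma:lin point is before lin point from paper}, which establishes that for any successful slow \ins{} or \del{} the \emph{new} linearization point we assigned in \Cref{handshakes:linearizability} occurs no later than the \spsize{} linearization point of the same operation. Composing these two facts: by the time \codestyle{updateMetadata} returns, the \spsize{} linearization point of $op$ has already occurred, and since $op$'s new linearization point sits at or before its \spsize{} linearization point, the new linearization point of $op$ has also already occurred. Hence $op$ is linearized by the time \codestyle{updateMetadata} returns, which is exactly the statement of the corollary.

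The only subtlety worth flagging, rather than a real obstacle, is that \codestyle{updateMetadata} is only invoked on behalf of successful slow \ins{} or \del{} operations, which is precisely the class of operations for which \Cref{lemma:lin point is before lin point from paper} relocates the linearization point; for fast operations and for \contains{} the comparison to the \spsize{} linearization point is not meaningful, but these are never the operations referenced in the corollary's hypothesis. I would note this scope restriction in one sentence to justify that the two lemmas apply to the same object $op$, after which the corollary follows immediately with no further calculation.
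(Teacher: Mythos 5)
Your proof is correct and matches the paper's own reasoning exactly: the paper derives this corollary as "a direct conclusion" of \Cref{lemma:lin of paper when updateMetadata returns} and \Cref{lemma:lin point is before lin point from paper}, composed in precisely the way you describe. Your added remark about the scope restriction (that \codestyle{updateMetadata} is only invoked on behalf of successful slow \ins{} or \del{} operations, the class covered by \Cref{lemma:lin point is before lin point from paper}) is a sensible clarification the paper leaves implicit.
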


The following lemmas are also used as part of \Cref{claim:lin within op}'s proof. 

\begin{lemma}\label{lemma:lin point is before lin point from paper}
    The linearization point of any successful slow \ins{} or \del{} operation occurs no later than the linearization point of that operation as defined in \cite{sela2021concurrentSize}.
\end{lemma}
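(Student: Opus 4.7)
The plan is to proceed by case analysis on the linearization rule for a successful slow \ins{} or \del{} operation $op$, distinguishing (i) whether $op$ observed the first handshake phase (\codestyle{sizePhase} $\equiv_{4}{1}$) or the second handshake phase (\codestyle{sizePhase} $\equiv_{4}{2}$) in \Cref{trans_op: read size phase}, and (ii) in the second case, whether a dependent concurrent fast operation is linearized within the interval between $op$'s original linearization point and its \cite{sela2021concurrentSize} linearization point. Let me denote the former point by $O$ and the latter by $S$; the goal reduces to showing, in each subcase, that the new linearization point assigned by our rule is at most $S$.

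For an operation that observed \codestyle{sizePhase} $\equiv_{4}{1}$, the rule assigns the new linearization point to be $O$, so I would only need to show $O \le S$. In \cite{sela2021concurrentSize}, $S$ is either the moment of $op$'s metadata update---which by the structure of the transformed code strictly follows the original modification of the data structure and hence follows $O$---or it is placed immediately after the linearization point of the \size{} operation that was collecting when $op$ updated its metadata. In the latter subcase the collecting \size{}'s lin point itself occurs strictly after $op$'s metadata update, because the \codestyle{collecting} flag was still true at that update and the \size{} is linearized only once the flag becomes false; hence $S$ again follows $O$.

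When $op$ observed \codestyle{sizePhase} $\equiv_{4}{2}$ and no dependent fast operation is linearized within $(O,S)$, the rule sets the new linearization point directly to $S$ and the inequality is immediate. The substantive case is when such a dependent fast operation exists, so the new linearization point is $\max(O,T)$, where $T$ is the moment immediately after the linearization point of the last concurrent \size{} that does not take $op$ into account. Since $O \le S$ by the previous paragraph, the remaining step is to show $T \le S$. I would handle this by cases on $S$: if $S$ equals $op$'s metadata update (because either no \size{} was collecting at that moment or the collecting \size{} counts $op$), I would argue that every later concurrent \size{} must install a fresh snapshot that observes $op$'s already-written metadata and therefore counts $op$, so every non-counting concurrent \size{} has its lin point before $S$; if instead $S$ is right after the lin point of the \size{} collecting at $op$'s metadata update, then that very \size{} is the last concurrent non-counting one, giving $T = S$.

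The main obstacle I anticipate is this last argument: one must carefully exploit the SP snapshot mechanism, in which multiple \size{} calls may share a single \codestyle{HandshakeCountersSnapshot} and are linearized together at the moment its \codestyle{collecting} field becomes false, to establish that any snapshot installed \emph{after} the one active at $op$'s metadata update necessarily starts after $op$'s metadata is visible and therefore counts $op$. Once this snapshot-succession argument is in place, no concurrent non-counting \size{} can have a lin point past $S$, yielding the needed inequality $T \le S$ and closing the case analysis.
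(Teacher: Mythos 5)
Your proposal is correct and follows essentially the same route as the paper: reduce to the case where the new linearization point is placed right after the last concurrent non-counting \size{}, then split on whether that \size{} (equivalently, the snapshot active at $op$'s metadata update) is the one defining the \cite{sela2021concurrentSize} point---giving $T=S$---or not, in which case any non-counting \size{} must have finished collecting before $op$'s metadata update (else it would have seen the counter increment, cf.\ \Cref{lemma:counter inc after first 2 lines of updateMetadata}), giving $T<S$. Your ``every later snapshot counts $op$'' step is just the contrapositive of the paper's argument, and the $O\le S$ part is the paper's \Cref{lemma:lin points from paper are after original points}.
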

\begin{proof}
    Denote by $op$ some successful slow \ins{} or \del{} operation.
    If $op$ is linearized at its original linearization point or at the linearization point as defined in \cite{sela2021concurrentSize} then by \Cref{lemma:lin points from paper are after original points} the lemma holds. 
    
    Otherwise, $op$ is linearized immediately after a concurrent \size{} operation that does not take it into account and that is linearized after $op$'s original linearization point: If the \size{} operation is collecting when $op$ is performing its metadata update, then by definition $op$'s linearization point as defined in \cite{sela2021concurrentSize} is immediately after that \size{} and we are done \footnote{Note that because we order successful update operations that are linearized at the same time in the same manner as in \cite{sela2021concurrentSize} then our linearization point will be the same as the one in \cite{sela2021concurrentSize}.}.
    Otherwise, if the \size{} operation is not collecting when $op$ is performing its metadata update, then the $op$' linearization point as in \cite{sela2021concurrentSize} will be at the metadata update.
    This metadata update must occur after \size{} finished collecting; otherwise, by \Cref{lemma:counter inc after first 2 lines of updateMetadata} \size{} should have seen the update.
    Thus, because \size{} is linearized when the collecting field is set to false, then the \size{} operation must be linearized before $op$'s linearization point as in \cite{sela2021concurrentSize}, and consequently $op$ will be linearized before its linearization point as in \cite{sela2021concurrentSize}.
\end{proof}

\begin{lemma}\label{lemma:counter inc after first 2 lines of updateMetadata}
    Consider a call to \codestyle{updateMetadata} on behalf of $op$, which is the $c$-th successful slow \ins{} or \del{} operation by a thread $T$. 
    After this call executes Lines 78-79 in \cite{sela2021concurrentSize}, the relevant metadata counter's value is $\geq c$.
    \end{lemma}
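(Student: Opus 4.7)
The plan is to prove the claim by induction on $c$, leveraging three facts about the slow path that follows the SP-size methodology: (i) thread $T$ executes its own operations sequentially, so the $(c-1)$-th successful slow \ins{} or \del{} by $T$ must fully complete, including $T$'s own call to \codestyle{updateMetadata} on its behalf, before $op$ is invoked; (ii) by the UpdateInfo-based design of SP-size, the CAS in lines 78--79 is tied to the UpdateInfo object of the specific operation, so exactly one CAS per UpdateInfo can succeed, ensuring the counter is incremented exactly once for each successful slow update; and (iii) $T$'s per-thread counter is monotonically non-decreasing, because the only writes to it are CAS-based unit increments.

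For the base case $c = 1$, the counter starts at $0$, and when \codestyle{updateMetadata} on behalf of $op$ executes lines 78--79, either its own CAS succeeds and raises the counter to at least $1$, or a concurrent helper has already applied the update on behalf of $op$, in which case the counter is already at least $1$.

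For the inductive step, I would apply the hypothesis to the $(c-1)$-th successful slow update by $T$: after its \codestyle{updateMetadata} call executes lines 78--79, the counter is $\geq c-1$. Since $T$ itself waits for that \codestyle{updateMetadata} call to return (the corresponding update has by then committed in the sense of fact (ii)) and only then proceeds to issue $op$, monotonicity guarantees the counter remains $\geq c-1$ throughout the interval up to the moment the call on behalf of $op$ executes lines 78--79. At that point either the call's own CAS succeeds, raising the counter from some $v \geq c-1$ to $v+1 \geq c$; or some concurrent call's CAS on behalf of $op$ has already succeeded, putting the counter at $\geq c$.

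The chief obstacle is justifying fact (ii) rigorously, i.e., that the helping mechanism neither over- nor under-increments the counter for a given operation. This follows from the invariant in \cite{sela2021concurrentSize} that the CAS in lines 78--79 uses an expected value determined by the UpdateInfo object, so once any concurrent helper or $T$ itself has succeeded, no further CAS on behalf of the same UpdateInfo can succeed (no over-increment), while the helping protocol ensures that some CAS on behalf of $op$ commits before \codestyle{updateMetadata} returns (no under-increment). I would cite the relevant invariants from \cite{sela2021concurrentSize} rather than reprove them, keeping the focus on the interaction between the per-thread sequentiality of $T$'s operations and the monotonicity of the counter, which together with the UpdateInfo discipline yield the lower bound.
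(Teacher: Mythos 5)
Your proposal is sound in substance, but it is worth noting that the paper does not actually prove this lemma at all: it states that the lemma ``correlates to Lemma~8.3'' of \cite{sela2021concurrentSize} and that ``the proof remains the same,'' so your inductive argument is essentially a reconstruction of that cited proof rather than a parallel to anything written here. Two remarks on the reconstruction itself. First, your case analysis in the inductive step is not exhaustive as stated: the call's own \cas{} can fail without any \cas{} on behalf of $op$ having succeeded, because the counter may have been advanced by a \cas{} issued on behalf of a different operation of $T$. The clean way to close this branch is the monotonicity you already list as fact~(iii): every call on behalf of $op$ reads a value $v \geq c-1$ at Line~78 (since the read occurs after the $(c-1)$-th operation's \codestyle{updateMetadata} returned, whether the caller is $T$ itself or a helper that obtained $op$'s \codestyle{UpdateInfo}), and a failed \cas{} means the counter was advanced past $v$ in the interim, hence is already $\geq c$ --- no attribution of the successful \cas{} to $op$ is needed, and your fact~(ii) is really only required for the matching upper bound, not for this lemma. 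Second, the one consideration that is genuinely new in the handshake setting --- and that neither you nor the paper makes explicit --- is that $T$ may interleave fast-path operations between its $(c-1)$-th and $c$-th successful slow operations; the induction over successful \emph{slow} operations survives only because fast operations update the disjoint \codestyle{fastMetadataCounters} array and never touch the slow per-thread counter, which is the fact that justifies the paper's claim that the proof of Lemma~8.3 carries over unchanged.
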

    
\begin{lemma}\label{lemma:lin points from paper are after original points}
    The linearization point of each successful insert or delete operation as defined in \cite{sela2021concurrentSize} occurs after its original linearization point.
\end{lemma}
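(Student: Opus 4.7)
The plan is to analyze the code path of a successful slow \ins{} or \del{} operation and to perform a case analysis on the three alternatives in the linearization-point definition of~\cite{sela2021concurrentSize}. Structurally, a slow update operation first executes the original data-structure modification (the insertion CAS, or the marking step for deletion) and only afterwards invokes \codestyle{updateMetadata}. Letting $t_o$ denote the wall-clock time of the original linearization point and $t_m$ the wall-clock time of the metadata update, this code order immediately yields $t_o < t_m$; this simple observation is what the rest of the plan chains onto.

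I would then split according to the \cite{sela2021concurrentSize} definition. If no \size{} operation has its \codestyle{collecting} field equal to \codestyle{true} at time $t_m$, the \cite{sela2021concurrentSize} linearization point is placed at $t_m$ itself, so $t_o < t_m$ finishes this case. If some concurrent \size{} operation is collecting at time $t_m$ and takes the current update's metadata into account, then again the \cite{sela2021concurrentSize} linearization point is $t_m$, and the same inequality settles it.

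The remaining case is the one I expect to be the main obstacle: some \size{} operation $S$ has \codestyle{collecting=true} at time $t_m$ but does not take this operation's metadata update into account. Here the \cite{sela2021concurrentSize} linearization point is defined to sit immediately after $S$'s linearization point, i.e., immediately after the moment $t_s$ at which $S$ flips \codestyle{collecting} to \codestyle{false}. My plan is to show $t_s > t_m$ directly from the hypothesis of this case: since $S$ still had \codestyle{collecting=true} at the atomic instant $t_m$, the write flipping \codestyle{collecting} to \codestyle{false} must occur strictly after $t_m$. Chaining with $t_o < t_m$ gives $t_o < t_s$, so the \cite{sela2021concurrentSize} linearization point, which sits immediately after $t_s$ in the total order, is after $t_o$.

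The subtle point that will need to be spelled out carefully is the meaning of the phrase ``collecting at the time of the metadata update'' used in~\cite{sela2021concurrentSize}: it must be interpreted with respect to the atomic instant $t_m$, since otherwise a \size{} operation that flips \codestyle{collecting} between $t_o$ and $t_m$ would place us in case (a) or (b) rather than case (c), and the clean chain $t_o < t_m < t_s$ would fail. Once this clarification is in place, the argument is pure time-ordering bookkeeping that rests only on the code layout of slow update operations and does not require invoking the earlier auxiliary lemmas such as \Cref{lemma:counter inc after first 2 lines of updateMetadata}.
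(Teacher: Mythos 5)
Your argument is correct, but it takes a different route from the paper: the paper does not prove this lemma locally at all --- it simply notes that the statement ``is shown as part of \emph{Lemma B.3} in \cite{sela2021concurrentSize}'' and that the proof carries over unchanged. You instead reconstruct a self-contained argument, which is essentially the content of that cited lemma: the original modification precedes the metadata update in time ($t_o < t_m$), and in every branch of the \cite{sela2021concurrentSize} linearization-point definition the chosen point is either $t_m$ itself or sits immediately after a \size{} linearization point that, by the ``still collecting at $t_m$'' hypothesis, must occur after $t_m$. Your explicit treatment of the interpretive issue (that ``collecting at the time of the metadata update'' must be read at the atomic instant $t_m$) is a genuine clarification the paper leaves implicit. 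The one step you should not wave through as ``code order'' is $t_o < t_m$: in the \spsize{} scheme the metadata update on behalf of an operation may be performed by a \emph{helping} thread rather than by the operation's own thread, so it is not literally program order within a single thread. The inequality still holds, because any helper can only call \codestyle{updateMetadata} after observing the installed \codestyle{UpdateInfo} object, which becomes visible no earlier than the original linearization point (the link CAS for \ins{}, the marking step for \del{}); spelling this out would close the only small gap in an otherwise sound and more explicit proof than the paper provides.
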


\Cref{lemma:counter inc after first 2 lines of updateMetadata} correlates to \emph{Lemma 8.3} in \cite{sela2021concurrentSize} and \Cref{lemma:lin points from paper are after original points} is shown as part of \emph{Lemma B.3} in \cite{sela2021concurrentSize}, the proofs of both of these Lemmas remain the same.

\subsubsection{The linearization is legal}\label{linearization is legal}

In what follows, we denote the set's $i$-th successful \ins$(k)$ operation (by {\em $i$-th} we refer to the linearization order, namely, to the $i$-th successful \ins$(k)$ to be linearized) by \ins$_i(k)$, its linearization time by $t_{insert_i(k)}$, the time of its original linearization by $orig\_t_{insert_i(k)}$ and the time of its linearization point defined in \cite{sela2021concurrentSize} by $g\_t_{insert_i(k)}$ (this is defined only for slow operations).
We further denote the analogous \del{} operation and its related times by \del$_i(k)$, $t_{delete_i(k)}$, $orig\_t_{delete_i(k)}$, and $g\_t_{delete_i(k)}$.

\begin{claim}\label{claim:results comply}
Consider a sequential history formed by ordering an execution's operations (with their results) according to their linearization points defined in \Cref{handshakes:linearizability}. 
Then operation results in this history comply with the sequential specification of a set.
\end{claim}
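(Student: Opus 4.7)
The plan is to follow closely the structure of the analogous correctness proof in \cite{sela2021concurrentSize}, extending it to handle the three-way split in the linearization-point rule for successful slow updates and the two mechanisms unique to this methodology: the fast-path metadata and the two-handshake protocol. For each operation $op$ in the sequential history, the goal is to show that its returned value is consistent with the state of the set obtained by replaying all operations whose linearization points strictly precede $op$'s. I will proceed by case analysis on the type of $op$.

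For fast \ins{}/\del{}, \contains{}, and failing slow updates that inherit their original linearization point, I would appeal to the linearizability of the underlying base data structure, observing that the new linearization order restricted to these operations coincides with their original order. The only subtlety arises because certain successful slow updates are reordered earlier (to immediately after a \size{}) and some dependent \contains{} or failing updates are pulled to just after such reordered slow updates. For a reordered successful slow $op_1$ operating on key $k$, I would use the handshake invariants to show that no other successful \ins{}$(k)$ or \del{}$(k)$ has its (new) linearization point strictly between $op_1$'s new position and its original one, so per-key order on the data structure is preserved. The dependent-operation case then follows because the set state at $op_1$'s new position matches exactly what those operations observed in the data structure.

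For \size{} operations, correctness reduces to showing that at the linearization point (when the \codestyle{collecting} field of the obtained \codestyle{HandshakeCountersSnapshot} is set to \codestyle{false}), the returned value equals the number of keys in the set according to the linear order. I would decompose this into three subclaims: (i) the fast metadata summed by \codestyle{\_computeFastSize} counts exactly the successful fast updates linearized before this \size{}'s first handshake completes; (ii) the collected slow metadata, by the snapshot argument of \cite{sela2021concurrentSize}, counts exactly the successful slow updates linearized at or before the clearing of \codestyle{collecting}; and (iii) the partition of all successful updates linearized before the \size{} into fast-counted and slow-counted parts is exact --- no update is counted twice or missed. The two handshakes are central to (iii): after the first, every active thread has transitioned to the slow path; after the second, every currently running slow operation is one that never executed concurrently with any fast operation, so its metadata update cannot be intercepted as in the special-case reordering.

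The main obstacle will be the case that motivated the special linearization rule: a successful slow $op_1$ whose metadata update physically occurs after a dependent fast $op_2$ has already been linearized at its original point. The proof must show that relocating $op_1$'s linearization point to immediately after the last \size{} that does not count it (a) places it strictly before $op_2$ so $op_2$ correctly observes the dependency, (b) places it strictly after any earlier update on the same key already counted by that \size{}, and (c) lies within $op_1$'s own interval --- the last being \Cref{claim:lin within op}. The second handshake is what ensures no such $op_2$ on the fast path can run concurrently with the collection phase of any later \size{}, so the chosen linearization slot is always available and the order on key $k$ cannot be inverted. Once these properties are established, the per-key consistency of (successful and failing) ins/del/contains outcomes, and the correctness of \size{}'s returned count, follow by the same bookkeeping as in \cite{sela2021concurrentSize}, completing the legal-history argument.
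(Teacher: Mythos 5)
Your plan matches the paper's proof in both structure and substance: the paper likewise argues per-key alternation of successful updates (via a chain of inequalities interleaving original and new linearization points), places \contains{} and failing updates between the surrounding successful updates on their key, and establishes the \size{} count by combining a frozen-fast-counter argument (the handshake lemmas imply no fast operation runs concurrently with the collection) with the SP snapshot argument for the slow counters, with the relocated linearization point of a slow update handled exactly by your properties (a)--(c). The only difference is packaging --- the paper phrases your subclaims (i)--(iii) as a per-thread, per-counter statement that the $j$-th counted update is linearized before the \size{} and the $(j{+}1)$-st after it --- so this is essentially the same approach.
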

\begin{proof}

The correctness of the results of successful update operations follows from \Cref{corollary:alternating ins and del}.
Now, let us examine the results of \contains{} operations and failing update operations.
Let $op$ be such an operation on a key $k$, and let the operation it depends on be $insert_i(k)$ for some $i\geq 1$, a similar proof can be made for a \del{} operation.
Because $insert_i(k)$ is an insertion then $op$ must be a failing insertion or a \contains{} returning true. 
Our goal is to show that \ins$_i(k)$ is indeed the last successful operation on key $k$ to be linearized before $op$.
Let $orig\_t_{op}$ be the original linearization moment of $op$ and $t_{op}$ its actual linearization point according to \Cref{handshakes:linearizability}.
If $op$ is a slow operation that is linearized immediately after $t_{insert_i(k)}$ then we are done.
Otherwise, $op$ must be linearized according to its original linearization point.
We begin by showing that $t_{insert_i(k)}<t_{op}$.
If $insert_i(k)$ is linearized according to its original linearization point then $t_{insert_i(k)}=orig\_t_{insert_i(k)} \overset{\text{the way we chose $i$}}{<} orig\_t_{op}=t_{op}$.
Otherwise, $t_{insert_i(k)}$ must be a slow operation and because $op$ is linearized at its original linearization point we know that by time $t_{op}$ it must be that $insert_i(k)$ is already linearized (this is due to the way we defined our linearization points in \Cref{handshakes:linearizability}).
Therefore, $t_{insert_i(k)} < t_{op}$.
We are left to show that if there is a successful delete operation linearized after \ins$_i(k)$ then $t_{op}<t_{delete_i(k)}$.
If there is no such operation then we are done, otherwise, from the way we chose $i$ and due to the linearizability of the original data structure it holds that $t_{op}=orig\_t_{op}<orig\_t_{delete_i(k)}\overset{\Cref{lemma: lin point of succ ins/del is after orig lin}}{\leq}t_{delete_i(k)}$.
\\ $ \Longrightarrow \text{In all cases it holds that } t_{insert_i(k)} < t_{op} < t_{delete_i(k)} \\ \text{ and we are done}$. \\

Finally, we analyze the linearization of a \size{} operation. Denote such an operation by $op$. From \Cref{corollary:compute only runs with slow op} we know there are no concurrent fast operations when \codestyle{fastCompute()} is executing thus the fast counter array values stay untouched when \codestyle{fastCompute()} is performed.
Therefore, any fast operation that $op$ has seen must be linearized before $op$ and any fast operation that $op$ didn't see must be linearized after $op$.

Moreover, from \Cref{corollary:compute only runs with slow op} all slow operations executed during \codestyle{compute()} have seen the second handshake phase and thus are linearized accordingly.
Let $j$ be the value that \term{determiningSize} obtained from the insertion counter of some thread $T$ in \term{countersSnapshot}.\codestyle{snapshot}.
We will prove that $T$'s $j$-th successful slow \ins{} is linearized before $op$ and $T$'s $(j+1)$-st successful slow \ins{} (if such operation occurs) is linearized after it (the proof for \del{} is the same). 
We will denote $T$'s $j$-th successful slow \ins{} by \ins$_{T,j}$, its linearization time as $t_{insert_{T,j}}$, its original linearization time as $orig\_t_{insert_{T,j}}$ and its linearization point according to \cite{sela2021concurrentSize} as $g\_t_{insert_{T,j}}$. 
The same notations are used accordingly for $T$'s $(j+1)$-st successful slow \ins{} as \ins$_{T,j+1}$, $t_{insert_{T,j+1}}$, $orig\_t_{insert_{T,j+1}}$, $g\_t_{insert_{T,j+1}}$. 

According to the linearizability proof in \emph{Claim B.1} in \cite{sela2021concurrentSize} we know that $g\_t_{insert_{T,j}}<t_{op}$ and together with \Cref{lemma:lin point is before lin point from paper} we get that $t_{insert_{T,j}}<t_{op}$.

If $T$'s $(j+1)$-st successful slow \ins{} has not seen a second handshake then it is linearized according to the original linearization point and from the linearizability proof in \emph{Claim B.1} in \cite{sela2021concurrentSize} we know that $t_{op}<g\_t_{insert_{T,j+1}}$.
Therefore, using \Cref{corollary:compute only runs with slow op} and since $g\_t_{insert_{T,j+1}}$ happens during $insert_{T,j+1}$'s execution of $slow\_op(k)$ we conclude that $t_{op}<t_{insert_{T,j+1}}$.

In the case where $insert_{T,j+1}$ had a concurrent fast operation which was linearized between $orig\_t_{insert_{T,j+1}}$ and $g\_t_{insert_{T,j+1}}$ then $insert_{T,j+1}$ is linearized at the latter of $orig\_t_{insert_{T,j+1}}$ and $sz\_t$ where $sz\_t$ is immediately after the linearization of the last concurrent \size{} operation that did not see $insert_{T,j+1}$.
From the linearizability proof in \emph{Claim B.1} in \cite{sela2021concurrentSize} we know that $t_{op}<g\_t_{insert_{T,j+1}}$.

If $insert_{T,j+1}$ and $op$ are not concurrent then from \Cref{claim:lin within op} and \emph{Claim 8.1} in \cite{sela2021concurrentSize} which states that $g\_t_{insert_{T,j+1}}$ happens within $insert_{T,j+1}$'s execution we conclude that $t_{op}<t_{insert_{T,j+1}}$.

If $insert_{T,j+1}$ and $op$ are concurrent then because we know $op$ does not see $insert_{T,j+1}$ and because $insert_{T,j+1}$ is linearized at the latter of 2 points one of which being immediately after the linearization of the last concurrent \size{} operation that did not see $insert_{T,j+1}$ we conclude that $insert_{T,j+1}$ must be linearized after $op$ thus $t_{op}<t_{insert_{T,j+1}}$.
\end{proof}

The proof of \Cref{claim:results comply} uses the following:

\begin{lemma}\label{lemma: lin point of succ ins/del is after orig lin}
The linearization point of a successful $insert_i(k)$ or $delete_i(k)$ happens in its original linearization point or after it.
\end{lemma}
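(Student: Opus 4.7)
\emph{Plan.} The proof is a short case analysis on which linearization rule, among those catalogued in \Cref{handshakes:linearizability}, applies to the given successful \ins{} or \del{} operation $op$. There are four buckets: (a) $op$ is fast; (b) $op$ is slow and read $\texttt{sizePhase}\equiv_{4}1$ in \Cref{trans_op: read size phase}; (c) $op$ is slow, read $\texttt{sizePhase}\equiv_{4}2$ there, and is placed at its \cite{sela2021concurrentSize}-style linearization point $g\_t_{op}$; and (d) the variant of (c) in which a dependent concurrent fast operation is linearized between $orig\_t_{op}$ and $g\_t_{op}$, so that $op$ is placed at the later of $orig\_t_{op}$ and the moment right after the last concurrent \size{} that misses $op$.

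Cases (a) and (b) are immediate because the new linearization point is \emph{defined} to coincide with the original one, and case (d) is immediate because the new linearization point is explicitly a maximum one of whose arguments is $orig\_t_{op}$. Case (c) is the only one requiring a small argument. There, $op$'s new linearization point is either (c1) $op$'s own metadata-update step, or (c2) the moment immediately after the linearization point of a concurrent \size{} that was collecting at the time of $op$'s metadata update. For (c1), I would appeal to the code flow of $slow\_op$ (inherited from \cite{sela2021concurrentSize}): the modification of the data structure—which is exactly $orig\_t_{op}$—strictly precedes the call to \codestyle{updateMetadata}, so $g\_t_{op}>orig\_t_{op}$. For (c2), the relevant \size{} had not yet flipped \codestyle{collecting} to \codestyle{false} by the time $op$'s metadata update began, and hence, by monotonicity of the \codestyle{collecting} field, the \size{}'s linearization point (the flip itself) occurs after the start of $op$'s metadata update, which in turn occurs after $orig\_t_{op}$ by the argument used in (c1).

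The one delicate point I anticipate is making (c2) fully precise: one has to argue carefully that ``\size{} is collecting at the time of $op$'s metadata update'' really implies $t_{\size{}}>orig\_t_{op}$. The key ingredient is that once \codestyle{collecting} is flipped to \codestyle{false} it is never revived, so the witnessed \codestyle{collecting=true} value at the start of $op$'s metadata update forces the flip, and thus $t_{\size{}}$, to lie strictly later in the execution; combining this with the strict order $orig\_t_{op}<\text{metadata update}$ from (c1) yields the desired inequality. This observation is already implicit in the state-of-the-art proof, so I expect the lemma to follow from a clean definitional case split without any additional machinery.
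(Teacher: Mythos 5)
Your proposal is correct and follows essentially the same route as the paper: a case split over the linearization rules from \Cref{handshakes:linearizability}, with the fast, first-handshake, and ``latter of two points'' cases being immediate. The only difference is that for the case where the operation is placed at its \cite{sela2021concurrentSize}-style linearization point you re-derive the inequality inline from the $slow\_op$ code flow and the monotonicity of \codestyle{collecting}, whereas the paper simply invokes \emph{Lemma B.3} of \cite{sela2021concurrentSize} (restated as \Cref{lemma:lin points from paper are after original points}); both arguments are sound.
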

\begin{proof}
For fast and slow operations which are linearized according to their original linearization point the lemma holds immediately.
Otherwise, the operation is either linearized at the linearization point from \cite{sela2021concurrentSize} which is shown to occur after the original linearization point as part of \emph{Lemma B.3} in \cite{sela2021concurrentSize}, or it is linearized at the latter of two options one of which being the original linearization point.
    
$\Rightarrow$ For each key $k$ and each $i\geq1:$ $orig\_t_{insert_i(k)} \leq t_{insert_i(k)}$ and $orig\_t_{delete_i(k)} \leq t_{delete_i(k)}$.
\end{proof}

\begin{observation}\label{observation:alternating original lps}
    The original linearization points of successful insertions and deletions of each key $k$ are alternating. 
\end{observation}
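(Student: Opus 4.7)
The plan is to derive the observation directly from the sequential specification of a set together with the linearizability of the original data structure. In a sequential set, a successful \ins{}$(k)$ can occur only when $k$ is not in the set immediately before the operation, and a successful \del{}$(k)$ can occur only when $k$ is in the set. I intend to use these two facts to rule out two successive successful insertions (respectively, deletions) of the same key in the linearization order.

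Concretely, fix a key $k$ and consider the subsequence $S_k$ of successful \ins{}$(k)$ and \del{}$(k)$ operations arranged by the original linearization order $orig\_t_{\cdot}$. By the linearizability of the original data structure (see \Cref{fn: original lin points are within execution} and the linearizability assumption on the baseline), there exists a sequential history $H$ consistent with this order whose results agree with those observed in the concurrent execution. In $H$, let $b_i\in\{0,1\}$ denote whether $k$ is a member of the set right before the $i$-th operation in $S_k$ executes sequentially. Starting from $b_1 = 0$ if $k$ was not initially present (or $b_1 = 1$ otherwise), the sequential specification of the set forces $b_{i+1} = 1$ after a successful \ins{}$(k)$ and $b_{i+1} = 0$ after a successful \del{}$(k)$, and, conversely, forbids a successful \ins{}$(k)$ when $b_i = 1$ and a successful \del{}$(k)$ when $b_i = 0$.

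I would then finish by induction on $i$: if two consecutive entries in $S_k$ were both successful \ins{}$(k)$'s, the second one would execute sequentially on a state in which $k$ is already present, contradicting its success; the symmetric argument rules out two consecutive successful \del{}$(k)$'s. Hence the operations in $S_k$ must strictly alternate between \ins{}$(k)$ and \del{}$(k)$, which is exactly the statement of the observation.

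I do not anticipate a real obstacle here, since the claim is a direct corollary of the set's sequential specification; the only subtlety is being careful that the alternation is stated with respect to the \emph{original} linearization order (as the observation is about $orig\_t_{insert_i(k)}$ and $orig\_t_{delete_i(k)}$, not the new, post-transformation linearization order), which is precisely the order witnessed by the original data structure's linearizability.
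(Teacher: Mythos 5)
Your proposal is correct and takes essentially the same route as the paper, which justifies this observation in a single sentence by appealing to exactly the two ingredients you use: the linearizability of the original data structure and the sequential specification of a set. Your write-up merely spells out the induction that the paper leaves implicit, so there is no gap and no divergence in approach.
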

This follows from the linearizability of the original data structure and the sequential specification of a set.
    
\begin{lemma}\label{lemma:ins and del order}
For each key $k$ and each $i\geq 1$: 
\[
orig\_t_{insert_i(k)} \leq t_{insert_i(k)} < orig\_t_{delete_i(k)} \leq t_{delete_i(k)} < orig\_t_{insert_{i+1}(k)}
\]
\end{lemma}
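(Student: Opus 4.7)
The plan is to assemble the chain of inequalities by combining \Cref{observation:alternating original lps} with \Cref{lemma: lin point of succ ins/del is after orig lin}, and then closing the two strict inequalities by a case analysis on how each operation is linearized. \Cref{observation:alternating original lps} yields $orig\_t_{insert_i(k)} < orig\_t_{delete_i(k)} < orig\_t_{insert_{i+1}(k)}$, and \Cref{lemma: lin point of succ ins/del is after orig lin} immediately gives the two weak inequalities $orig\_t_{insert_i(k)} \leq t_{insert_i(k)}$ and $orig\_t_{delete_i(k)} \leq t_{delete_i(k)}$. The substantive work is to show $t_{insert_i(k)} < orig\_t_{delete_i(k)}$ and, symmetrically, $t_{delete_i(k)} < orig\_t_{insert_{i+1}(k)}$.

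I would proceed by cases on how $insert_i(k)$ is linearized. In the easy case---namely a fast operation, or a slow operation that saw the first handshake phase ($size\_phase \equiv_4 1$) and is therefore linearized at its original point---we have $t_{insert_i(k)} = orig\_t_{insert_i(k)}$, so the strict inequality follows immediately from \Cref{observation:alternating original lps}. In the intermediate case of a slow operation linearized at the \cite{sela2021concurrentSize} point $g\_t_{insert_i(k)}$ (either at its metadata update or immediately after a collecting \size{}), if the alternating $delete_i(k)$ is itself slow, then by the helping mechanism inherited from \cite{sela2021concurrentSize} $delete_i(k)$ must call \codestyle{updateMetadata} on behalf of $insert_i(k)$ before it can mark the corresponding node for deletion; by \Cref{corolarry:lin when updateMetadata returns} this guarantees $g\_t_{insert_i(k)} \leq orig\_t_{delete_i(k)}$, and combined with \Cref{observation:alternating original lps} this suffices.

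The main obstacle is the exception case specific to the handshake methodology, in which $insert_i$ is slow and a dependent fast $delete_i(k)$ is linearized between $orig\_t_{insert_i(k)}$ and $g\_t_{insert_i(k)}$. Here $t_{insert_i(k)}$ is the later of $orig\_t_{insert_i(k)}$ and the moment immediately following the linearization point of the last concurrent \size{} that does not account for $insert_i$. I would close the argument using the handshake invariant: because $delete_i(k)$ runs on the fast path, the executing thread must have read a $size\_phase$ indicating no active \size{} before entering $fast\_op$, and by the structure of the two handshakes any subsequent \size{} waits for this thread to either return to idle or acknowledge the phase (essentially the content of \Cref{corollary:compute only runs with slow op}). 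Consequently, any \size{} whose linearization point falls before $insert_i$'s metadata update and thus does not count $insert_i$ must have completed collecting strictly before $orig\_t_{delete_i(k)}$, so the moment immediately after that \size{}'s linearization point is still strictly before $orig\_t_{delete_i(k)}$. Combined with $orig\_t_{insert_i(k)} < orig\_t_{delete_i(k)}$, we obtain $t_{insert_i(k)} < orig\_t_{delete_i(k)}$.

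The argument for $t_{delete_i(k)} < orig\_t_{insert_{i+1}(k)}$ is entirely symmetric, swapping the roles of \ins{} and \del{} throughout. The delicate point will be making the handshake-timing observations used in the exception case precise; I would therefore prove (or cite from earlier in \Cref{handshakes:linearizability}) the helper observations about when fast operations may coexist with \size{} before applying them here.
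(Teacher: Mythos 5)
Your proposal follows essentially the same route as the paper's proof: the two weak inequalities come from \Cref{lemma: lin point of succ ins/del is after orig lin}, the strict ones from \Cref{observation:alternating original lps} plus a case analysis on linearization modes, and the hard case---a slow insert with a dependent fast delete---is resolved by the same handshake-window argument that the paper isolates as \Cref{lemma: successful slow ins/del linearized when size runs orig_t and t after the other}, whose proof rests on \Cref{lemma:first handshake to last inc no fast op} (the precise form of the invariant you invoke only informally via \Cref{corollary:compute only runs with slow op}). The one concrete gap is that your split by linearization mode does not cover the full cross-product with the mode of $delete_i(k)$: in your intermediate case ($insert_i(k)$ linearized at $g\_t_{insert_i(k)}$) you treat only a slow $delete_i(k)$, and in your exception case you identify the triggering dependent fast operation with $delete_i(k)$ itself, omitting executions where the exception is triggered by some other dependent fast operation while $delete_i(k)$ is slow, or where $delete_i(k)$ is fast but $insert_i(k)$ is linearized at $g\_t_{insert_i(k)}$. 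These close with tools you already have---for a slow $delete_i(k)$, \Cref{lemma:lin point is before lin point from paper} gives $t_{insert_i(k)} \leq g\_t_{insert_i(k)}$ in \emph{every} linearization mode and Lemma B.3 of \cite{sela2021concurrentSize} gives $g\_t_{insert_i(k)} < orig\_t_{delete_i(k)}$ (strictly, which your helping argument only delivers as a weak inequality); for a fast $delete_i(k)$, a linearization before $g\_t_{insert_i(k)}$ would itself put $insert_i(k)$ into the exception case---but as written these executions are not covered. One further small slip: the helping step needs \Cref{lemma:lin of paper when updateMetadata returns} (reaching the \cite{sela2021concurrentSize} linearization point), not \Cref{corolarry:lin when updateMetadata returns}.
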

\begin{proof}
    From \Cref{lemma: lin point of succ ins/del is after orig lin} we know that $orig\_t_{insert_i(k)} \leq t_{insert_i(k)}$ and $orig\_t_{delete_i(k)} \leq t_{delete_i(k)}$. Thus it is only left to show that $t_{insert_i(k)} < orig\_t_{delete_i(k)}$ and $t_{delete_(k)} < orig\_t_{insert_{i+1}(k)}$.

    $t_{insert_i(k)} < orig\_t_{delete_i(k)}$:\\
    If $insert_i(k)$ is a fast operation or a slow operation that has seen a 1st handshake, then it is linearized according to its original linearization point.
    Therefore, $t_{insert_i(k)}=orig\_t_{insert_i(k)}$ and since $orig\_t_{insert_i(k)}<orig\_t_{delete_i(k)}$ by \Cref{observation:alternating original lps}, then $t_{insert_i(k)}<orig\_t_{delete_i(k)}$.
    Otherwise, $insert_i(k)$ is a slow operation that has seen a second handshake and there are 2 cases:
    \begin{itemize}
        \item $delete_i(k)$ is a slow operation - from \emph{Lemma B.3} in \cite{sela2021concurrentSize} we know $g\_t_{insert_i(k)}<orig\_t_{delete_i(k)}$
        and together with \Cref{lemma:lin point is before lin point from paper} we get that $t_{insert_i(k)}\leq g\_t_{insert_i(k)}$. Therefore, $t_{insert_i(k)}<orig\_t_{delete_i(k)}$.
        \item $delete_i(k)$ is a fast operation - from \Cref{lemma: successful slow ins/del linearized when size runs orig_t and t after the other,observation:alternating original lps} we conclude that $t_{insert_i(k)} < orig\_t_{delete_i(k)}$.  
    \end{itemize}

    The proof for $t_{delete_(k)} < orig\_t_{insert_{i+1}(k)}$ is similar with minor changes.

\end{proof}

\begin{corollary}\label{corollary:alternating ins and del}
The linearization points of successful insertions and deletions of each key $k$ are alternating. 
\end{corollary}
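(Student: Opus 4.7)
The plan is to derive the alternation directly from \Cref{lemma:ins and del order} by chaining its inequalities, with a brief appeal to \Cref{lemma: lin point of succ ins/del is after orig lin} to convert the ``original'' linearization bound $orig\_t_{insert_{i+1}(k)}$ on the right-hand side into the actual linearization time $t_{insert_{i+1}(k)}$. In more detail, I would fix an arbitrary key $k$ and an arbitrary index $i \geq 1$, and isolate two inequalities from the chain in \Cref{lemma:ins and del order}: first, $t_{insert_i(k)} < orig\_t_{delete_i(k)} \leq t_{delete_i(k)}$, which gives $t_{insert_i(k)} < t_{delete_i(k)}$; and second, $t_{delete_i(k)} < orig\_t_{insert_{i+1}(k)}$, which combined with \Cref{lemma: lin point of succ ins/del is after orig lin} (stating $orig\_t_{insert_{i+1}(k)} \leq t_{insert_{i+1}(k)}$) gives $t_{delete_i(k)} < t_{insert_{i+1}(k)}$.

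Chaining these over all $i \geq 1$ yields the strictly increasing sequence
\[
t_{insert_1(k)} < t_{delete_1(k)} < t_{insert_2(k)} < t_{delete_2(k)} < \cdots,
\]
which by definition is an alternating interleaving of successful insertions and deletions on key $k$. Since the index $i$ and the key $k$ were arbitrary, this establishes alternation for every key.

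There is no real obstacle here: the entire content is already packaged in \Cref{lemma:ins and del order} plus \Cref{lemma: lin point of succ ins/del is after orig lin}; the corollary is essentially a transcription of those inequalities into the language of ``alternation.'' The only minor care needed is to confirm that the strict inequalities (rather than weak ones) hold between consecutive successful operations on the same key, which they do because $t_{insert_i(k)} < orig\_t_{delete_i(k)}$ and $t_{delete_i(k)} < orig\_t_{insert_{i+1}(k)}$ are strict in the hypothesis. Thus the proof reduces to a one-line argument citing \Cref{lemma:ins and del order} and \Cref{lemma: lin point of succ ins/del is after orig lin}.
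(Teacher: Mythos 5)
Your proposal is correct and matches the paper's intent exactly: the paper states this as an immediate corollary of \Cref{lemma:ins and del order}, whose chain of inequalities (applied for each $i$, together with \Cref{lemma: lin point of succ ins/del is after orig lin} to pass from $orig\_t_{insert_{i+1}(k)}$ to $t_{insert_{i+1}(k)}$) yields precisely the strictly increasing alternating sequence you describe. You have simply written out the chaining that the paper leaves implicit.
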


\begin{lemma}\label{lemma:update preceding size} 
    Let \term{countersSnapshot} be a \codestyle{HandshakeCountersSnap\-shot} instance.
    Any non-\codestyle{INVALID} value written to a counter in the \term{countersSnapshot}.\codestyle{snapshot} array must have been written to the corresponding counter in the \codestyle{metadataCounters} array (of the \codestyle{SizeCalc\-ulator} instance held by the set) \emph{before} the \term{countersSnapshot}.\codestyle{collecting} field is set to \codestyle{false}.
    \end{lemma}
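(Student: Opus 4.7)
The plan is to trace every path by which a non-\codestyle{INVALID} value can reach a cell of \term{countersSnapshot}.\codestyle{snapshot} and show that in each case the value had first been written to the corresponding cell of \codestyle{metadataCounters}, and that this sequence completes before \term{countersSnapshot}.\codestyle{collecting} is flipped to \codestyle{false}. Inspecting the code (the \codestyle{\_collect} routine inherited from \spsize{}, and \codestyle{updateMetadata}), such a write to \term{countersSnapshot}.\codestyle{snapshot} can be produced only in two places: (a) by the \size{}-executing thread itself inside \codestyle{\_collect}, which reads \codestyle{metadataCounters}\-$[tid][opKind]$ and then installs the read value into the snapshot via \codestyle{compareAndSet}; and (b) by a concurrent slow update operation whose \codestyle{updateMetadata} call observes \term{countersSnapshot} still collecting and \emph{forwards} the counter's newly-written value into the snapshot.

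First I would handle case (a). The \codestyle{getVolatile} of \codestyle{metadataCounters}\-$[tid][opKind]$ performed inside \codestyle{\_collect} strictly precedes the subsequent \codestyle{compareAndSet} that publishes the value into \term{countersSnapshot}.\codestyle{snapshot}. The read therefore reads a value already present in \codestyle{metadataCounters}, so the write to \codestyle{metadataCounters} happened before the write to the snapshot. Finally, \codestyle{\_collect} is invoked in \codestyle{compute} on \Cref{code:HandshakeSizeCalculator compute collect}, which is executed before \Cref{code: HandshakeSizeCalculator set collecting false} where \codestyle{collecting} is set to \codestyle{false}; in the same thread program order, so this closes the chain for (a).

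Next I would handle case (b). By the structure of \codestyle{updateMetadata} inherited from \spsize{}, a slow update writes its new value into \codestyle{metadataCounters}\-$[tid][opKind]$ before it attempts to forward that value into any active \term{countersSnapshot} (and only forwards after reading \term{countersSnapshot}.\codestyle{collecting} as \codestyle{true}). Hence if a forwarded value lands in \term{countersSnapshot}.\codestyle{snapshot}, it had already been written to \codestyle{metadataCounters}. For the ordering relative to \codestyle{collecting}, I would argue by contradiction: if the forward were to linger past the moment when the \size{} thread sets \codestyle{collecting} to \codestyle{false}, the helper would have seen \codestyle{collecting} $=$ \codestyle{false} at its pre-forward check and refrained from writing; more formally, the CAS forwarding only modifies cells that still hold \codestyle{INVALID}, and by the point \codestyle{collecting} is set to \codestyle{false} (on \Cref{code: HandshakeSizeCalculator set collecting false}) the \size{} thread has already finished \codestyle{\_collect}, so the argument reduces to the one used for the corresponding lemma in \cite{sela2021concurrentSize}. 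I would also remark that in the handshake variant fast operations do not touch the slow snapshot infrastructure at all: by \Cref{corollary:compute only runs with slow op}, during the collection phase only slow updates run, so no additional write path to \term{countersSnapshot}.\codestyle{snapshot} exists.

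The main obstacle will be case (b): making precise the happens-before argument that a concurrent helper's forward cannot race past the \codestyle{collecting}$\,\gets\,$\codestyle{false} step. The rest is largely bookkeeping about program order and the volatile semantics described in \Cref{subsection: Memory Model}. I would address the obstacle by mirroring the \spsize{} argument verbatim, since the \codestyle{HandshakeCountersSnapshot} inherits the \codestyle{CountersSnapshot} helping protocol unchanged, and by pointing out that the two preceding handshakes guarantee no residual fast update can be in the middle of something that might still write to the snapshot.
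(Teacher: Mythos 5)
Your proposal is correct and takes essentially the same route as the paper: the paper's entire proof of this lemma is the observation that the argument from \cite{sela2021concurrentSize} carries over unchanged, and your expansion faithfully reconstructs that argument (the two write paths into the snapshot, the read-from-\codestyle{metadataCounters}-before-CAS ordering, and the fact that \codestyle{\_collect} fills every cell before \codestyle{collecting} is set to \codestyle{false}, so a lingering forward's CAS cannot succeed afterwards). Your added remark that fast operations never touch the slow snapshot infrastructure is a sensible sanity check for the handshake setting, though the paper leaves it implicit.
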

The proof of \Cref{lemma:update preceding size} remains the same as in \cite{sela2021concurrentSize}.

\begin{lemma} \label{lemma: successful slow ins/del linearized when size runs orig_t and t after the other}
    For any successful \ins{} or \del{} operation $op_s$ that has obtained \codestyle{sizePhase} $\equiv_{4}{2}$ in \Cref{trans_op: read size phase} and for any depending fast operation $op_f$ it holds that: If $orig\_t_{op_s} < orig\_t_{op_f}$ then $t_{op_s}<orig\_t_{op_f}$
\end{lemma}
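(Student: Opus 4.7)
The strategy is to pinpoint the \size{} operation $sz$ whose second-handshake phase $op_s$ observed at \Cref{trans_op: read size phase}, and to prove that $op_f$ cannot execute until $sz$ has completed and reset \codestyle{sizePhase} to a value $\equiv_4 0$; once this is established $t_{sz}<orig\_t_{op_f}$, from which the lemma follows by a short case split on $op_s$'s linearization. The pivotal step is ruling out $op_f$ being alive before $sz$'s first handshake completes, which I would do by contradiction. Because $op_f$ takes the fast path, it set its \codestyle{opPhase} to \codestyle{FAST\_PHASE}$=1$ at \Cref{trans_op: set fast phase} before reading \codestyle{sizePhase}. Since \codestyle{sizePhase} is initialized to $4$ and only grows, any phase value at which $sz$'s first handshake busy-waits in \Cref{perform_hanshake: wait for size phase} is at least $5>1$, so $sz$'s first handshake cannot complete until $op_f$ sets \codestyle{opPhase} to \codestyle{IDLE\_PHASE} at \Cref{trans_op: set idle}, that is, until $op_f$ has fully returned. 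But $op_s$'s observation of \codestyle{sizePhase}$\equiv_4 2$ can only occur after $sz$'s first handshake has already completed, so $op_s$'s original linearization---which sits inside its ensuing \textit{slow\_op} call---lies after $op_f$'s, contradicting $orig\_t_{op_s}<orig\_t_{op_f}$. The same argument kills any earlier $\equiv_4 0$ window belonging to a prior \size{} operation, so $op_f$'s read of \codestyle{sizePhase} occurs after $sz$ writes the post-compute value at \Cref{code: HandshakeSizeCalculator compute finish handshake}, giving $t_{sz}<orig\_t_{op_f}$.

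With that bound in hand, I would case-split on $op_s$'s linearization rule from \Cref{handshakes:linearizability}. If the special case does not trigger, $t_{op_s}$ equals the paper's linearization point, and the very non-triggering of the special case together with $orig\_t_{op_s}<orig\_t_{op_f}$ forces this value to lie below $orig\_t_{op_f}$. In the special case, $t_{op_s}=\max(orig\_t_{op_s},sz\_t)$, where $sz\_t$ sits immediately after the linearization of the last concurrent \size{} that does not take $op_s$ into account. I would argue this last \size{} must be $sz$ itself: any \size{} that starts later while $op_s$ is still active must block on $op_s$'s non-idle \codestyle{opPhase} at its own first-handshake loop (\Cref{perform_hanshake: wait for size phase}), and by the time it crosses that loop $op_s$'s \textit{slow\_op} has already executed its metadata update, so it inevitably takes $op_s$ into account. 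Hence $sz\_t$ sits immediately after $t_{sz}$, which together with $t_{sz}<orig\_t_{op_f}$ and the hypothesis $orig\_t_{op_s}<orig\_t_{op_f}$ gives $t_{op_s}<orig\_t_{op_f}$.

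The most delicate part is pinning $sz$ down as the only candidate for the last non-accounting \size{}, and in the process excluding the sub-scenario where $sz$ itself did take $op_s$ into account while the special case still fired. In that sub-scenario the paper's linearization point would coincide with $op_s$'s metadata update and would lie after $orig\_t_{op_f}$; yet by \Cref{lemma:update preceding size} such a metadata update must precede $t_{sz}$, which we already placed strictly before $orig\_t_{op_f}$---a contradiction. Making these handshake-invariant arguments airtight, rather than the final arithmetic, is where the real work lies.
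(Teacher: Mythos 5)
Your proposal is correct and takes essentially the same route as the paper: the paper's proof likewise splits on whether $op_s$ is linearized at its original point, at the point from \cite{sela2021concurrentSize}, or at $sz\_t$, and in the last case places both $orig\_t_{op_s}$ and $sz\_t$ inside the window in which the observed \size{} has performed its first increment but not yet its final one, invoking \Cref{lemma:first handshake to last inc no fast op} to exclude any fast linearization there --- the very fact you re-derive inline. One caution: your inlined handshake argument assumes the first handshake necessarily observes $op_f$'s \codestyle{FAST\_PHASE}, but if $op_f$ publishes \codestyle{FAST\_PHASE} only after the handshake has already inspected its slot, the handshake does not block on it, and you must instead fall back on the monotonicity of \codestyle{sizePhase} (which you already note) to conclude that $op_f$'s read of a value $\equiv_{4}{0}$ in \Cref{trans_op: read size phase} must postdate \Cref{code: HandshakeSizeCalculator compute finish handshake} --- exactly the case split carried out in the paper's proof of \Cref{lemma:first handshake to last inc no fast op}.
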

\begin{proof}
    Let $op_s$ be some successful \ins{} or \del{} operation that has obtained \codestyle{sizePhase} $\equiv_{4}{2}$ in \Cref{trans_op: read size phase} and let $op_f$ be some fast operation such that $orig\_t_{op_s} < orig\_t_{op_f}$.
    If $op_s$ was linearized according to its original linearization point then we are done. 
    Otherwise, from the definition of our linearization points we know there are 2 cases:
    \begin{itemize}
        \item $op_s$ was linearized as in \cite{sela2021concurrentSize}.
        In this case, since we know that there is no depending fast operation whose original linearization point is between $orig\_t_{op_s}$ and $t_{op_s}$ and because every fast operation is linearized at its original linearization point. Then, $orig\_t_{op_s} \leq t_{op_s} < orig\_t_{op_f}$. 
        
        \item $op_s$ is linearized at $sz\_t$ where $sz\_t$ is immediately after the linearization of the last concurrent \size{} operation does not take $op_s$ into account and whose linearization point comes after $orig\_t_{op_s}$. 
        $op_s$ has obtained \codestyle{sizePhase} $\equiv_{4}{2}$ in \Cref{trans_op: read size phase}. Therefore, it must have executed \Cref{trans_op: read size phase} after some \size{} operation has executed \Cref{code: HandshakeSizeCalculator doFirstAndSecondHandshakes second inc} and because $orig\_t_{op_s}$ happens during \Cref{trans_op: execute slowop} then $orig\_t_{op_s}$ happened after that \size{} operation executed \Cref{code: HandshakeSizeCalculator doFirstAndSecondHandshakes second inc}.
        By definition, $sz\_t$ must have happened before that \size{} executed \Cref{code: HandshakeSizeCalculator compute finish handshake} (because \size{} is linearized at \codestyle{compute()}).
        Finally, because $orig\_t_{op_s}<sz\_t$ then both $orig\_t_{op_s}$ and $sz\_t$ happened when some \size{} has finished executing \Cref{code: HandshakeSizeCalculator doFirstAndSecondHandshakes first inc} and has yet to execute \Cref{code: HandshakeSizeCalculator compute finish handshake}. 
        Therefore, from \Cref{lemma:first handshake to last inc no fast op} we conclude that $op_f$ could not have been linearized between $orig\_t_{op_s}$ and $sz\_t=t_{op_s}$ and $orig\_t_{op_s} \leq t_{op_s} < orig\_t_{op_f}$.
    \end{itemize} 
\end{proof} 

In the next following lemmas and observations we will undertake a detailed analysis of the handshake mechanism which is essential for some of the lemmas and claims part of the linearization proof above.

\begin{observation}
    \label{observation: every countersSnapshot obj installed is diff}
    Every \codestyle{HandshakeCountersSnapshot} object is initialized in \Cref{code:HandshakeSizeCalculator init countersSnapshots} and is installed onto the \codestyle{HandshakeSizeCalc\-ulator.countersSnapshot} field exactly once in \Cref{code:HandshakeSizeCalculator compareAndExchange}.
\end{observation}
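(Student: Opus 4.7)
The statement reduces to two code-inspection facts combined with the atomicity of \codestyle{compareAndExchange}. The plan is to argue separately that (i) every \codestyle{HandshakeCountersSnapshot} object originates at \Cref{code:HandshakeSizeCalculator init countersSnapshots}, and (ii) any such object reaches \codestyle{HandshakeSizeCalculator.countersSnapshot} exactly once, and only through the \codestyle{compareAndExchange} at \Cref{code:HandshakeSizeCalculator compareAndExchange}.

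For (i), I would scan \Cref{fig:HandshakeSizeCalculator1,fig:HandshakeSizeCalculator2,fig:HandshakeCountersSnapshot}: the call \codestyle{new HandshakeCountersSnapshot()} appears only at \Cref{code:HandshakeSizeCalculator init countersSnapshots} and inside the \codestyle{HandshakeSizeCalculator} constructor. The constructor's instance is published by a plain write prior to any \codestyle{compute()} call and has \codestyle{collecting} set to \codestyle{false} from the start, so it is merely the initial placeholder; every subsequent object allocated during the lifetime of the calculator originates at \Cref{code:HandshakeSizeCalculator init countersSnapshots}.

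For (ii), when a thread executes \Cref{code:HandshakeSizeCalculator init countersSnapshots}, the freshly allocated reference is held only in its local variable \codestyle{newCountersSnapshot}. Between that allocation and the \codestyle{compareAndExchange} at \Cref{code:HandshakeSizeCalculator compareAndExchange}, the reference does not escape---it is neither written to any shared field nor passed to any other method---so no other thread can obtain it. Atomicity of \codestyle{compareAndExchange} then guarantees the object is installed at most once: on success the field is updated and the local variable falls out of scope at function return; on failure the object is simply discarded. Inspecting the figures confirms that \codestyle{countersSnapshot} is written at only these two sites (the constructor and the labelled CAS), so no other path could reinstall the same object later.

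The only step requiring care is the non-escape argument for \codestyle{newCountersSnapshot}, but this is immediate from the intervening pseudocode, which performs a single local assignment before invoking \codestyle{compareAndExchange} with no aliasing. Once non-escape is in hand, uniqueness of installation follows directly from CAS atomicity, completing the proof.
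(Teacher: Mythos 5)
Your proof is correct and takes the only reasonable route here, which is also the paper's: the statement is presented as an \emph{Observation} with no written proof, resting on exactly the code-inspection facts you make explicit --- a single allocation site for \codestyle{HandshakeCountersSnapshot} inside \codestyle{compute()} (\Cref{code:HandshakeSizeCalculator init countersSnapshots}), non-escape of the fresh reference before the \codestyle{compareAndExchange} at \Cref{code:HandshakeSizeCalculator compareAndExchange}, and that CAS (whose new-value argument is always freshly allocated) being the only write site to \codestyle{countersSnapshot} outside the constructor. Your explicit treatment of the constructor-allocated initial instance as a placeholder with \codestyle{collecting=false} is a welcome extra precision that the paper's statement glosses over.
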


\begin{lemma}
    \label{lemma: no writes to countersSnapshot when between read and exchange}
    For any \size{} operation denoted as $op_{size}$ that had a successful \codestyle{compareAndExchange} in \Cref{code:HandshakeSizeCalculator compareAndExchange}: no successful writes to the \codestyle{HandshakeSizeCalculator.countersSnapshot} field by other \size{} operations were performed when $op_{size}$'s next line to execute pointed to one of \Crefrange{code: HandshakeSizeCalculator if collecting}{code:HandshakeSizeCalculator compareAndExchange}.
\end{lemma}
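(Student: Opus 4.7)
My plan is to argue by contradiction, leveraging Observation~\ref{observation: every countersSnapshot obj installed is diff} in an essential way. Suppose that while $op_{size}$'s next line to execute pointed to one of \Crefrange{code: HandshakeSizeCalculator if collecting}{code:HandshakeSizeCalculator compareAndExchange}, some other \size{} operation $op'$ performed a successful write to \codestyle{HandshakeSizeCalculator.countersSnapshot}. By inspection of the pseudocode, the only writes to this field (other than in the constructor) occur at \Cref{code:HandshakeSizeCalculator compareAndExchange}, and by Observation~\ref{observation: every countersSnapshot obj installed is diff} each such write installs a freshly allocated \codestyle{HandshakeCountersSnapshot} object, and each object is installed at most once.

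The key step is then the following. Let $v_0$ denote the value \codestyle{currCountersSnapshot} that $op_{size}$ read from the field at \Cref{code: HandshakeSizeCalculator read counterSnapshot}, which was executed before the interval in question. For $op_{size}$'s \codestyle{compareAndExchange} at \Cref{code:HandshakeSizeCalculator compareAndExchange} to succeed, the field must hold $v_0$ at the instant the CAS is performed. By the uniqueness part of Observation~\ref{observation: every countersSnapshot obj installed is diff}, once the field is overwritten with a different freshly allocated object, it can never subsequently be restored to $v_0$, since that would require a second installation of $v_0$. Hence the hypothesized successful write by $op'$ would leave the field holding an object distinct from $v_0$ for the remainder of the execution, and in particular at the moment $op_{size}$ executes its CAS. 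This contradicts the successful \codestyle{compareAndExchange}, completing the argument.

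\textbf{Main obstacle.} The only real subtlety I foresee is making sure the uniqueness argument correctly rules out every interleaving in which the field might appear to briefly revert to $v_0$. This is handled entirely by the fact that writes happen only through \Cref{code:HandshakeSizeCalculator compareAndExchange} and each candidate object is newly allocated at \Cref{code:HandshakeSizeCalculator init countersSnapshots}, so object identity alone suffices---no content-level reasoning about \codestyle{snapshot} or \codestyle{collecting} is required. It is also worth being careful that the argument actually establishes the slightly stronger statement that \emph{no} successful write occurs between $op_{size}$'s read at \Cref{code: HandshakeSizeCalculator read counterSnapshot} and its CAS at \Cref{code:HandshakeSizeCalculator compareAndExchange}; the lemma as stated restricts attention to the sub-interval starting at \Cref{code: HandshakeSizeCalculator if collecting}, which follows immediately since that sub-interval is contained in the larger one.
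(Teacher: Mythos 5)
Your proof is correct and follows essentially the same route as the paper's: a contradiction argument using Observation~\ref{observation: every countersSnapshot obj installed is diff} to conclude that any intervening successful write would change the field's value away from what $op_{size}$ read at \Cref{code: HandshakeSizeCalculator read counterSnapshot}, causing its \codestyle{compareAndExchange} to fail. Your additional remark that unique installation rules out the field ever reverting to $v_0$ (the ABA concern) just makes explicit what the paper's proof relies on implicitly.
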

\begin{proof}
If any successful writes to the \codestyle{HandshakeSizeCalculat\-or.countersSnapshot} field by other \size{} operations were performed when $op_{size}$'s next line to execute pointed to one of \Crefrange{code: HandshakeSizeCalculator if collecting}{code:HandshakeSizeCalculator compareAndExchange} then from \Cref{observation: every countersSnapshot obj installed is diff} the value of the \codestyle{HandshakeSizeCal\-culator.countersSnapshot} field would have changed from the time $op_{size}$ has read it in \Cref{code: HandshakeSizeCalculator read counterSnapshot} and therefore $op_{size}$'s invocation of the \codestyle{compareAndExchange} operation in \Cref{code:HandshakeSizeCalculator compareAndExchange} would fail in contradiction to the way we defined $op_{size}$.
\end{proof}

\begin{corollary}
    \label{corollary: no interleaving read writes of successful compareandexchange sizes}
    From \Cref{lemma: no writes to countersSnapshot when between read and exchange} we conclude that no two \size{} operations that had a successful \codestyle{compareAndExchange} in \Cref{code:HandshakeSizeCalculator compareAndExchange} (i.e successfully wrote to \codestyle{HandshakeSizeCalculator.countersSnap\-shot}) can have their next line to execute point to \Crefrange{code: HandshakeSizeCalculator if collecting}{code:HandshakeSizeCalculator compareAndExchange} simultaneously.
    In other words, at a given moment there is at most one \size{} operation whose \codestyle{compareAndExchange} is about to succeed and whose next line to execute points to one of \Crefrange{code: HandshakeSizeCalculator if collecting}{code:HandshakeSizeCalculator compareAndExchange}.
\end{corollary}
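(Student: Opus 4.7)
The plan is to argue by contradiction and reduce the statement directly to \Cref{lemma: no writes to countersSnapshot when between read and exchange}. Suppose, toward a contradiction, that at some moment $t$ there exist two distinct \size{} operations $op_1 \neq op_2$ satisfying both (i) each will eventually perform a successful \codestyle{compareAndExchange} on \codestyle{HandshakeSizeCalculator.countersSnapshot} at \Cref{code:HandshakeSizeCalculator compareAndExchange}, and (ii) at time $t$, the next line to execute of each of them lies in \Crefrange{code: HandshakeSizeCalculator if collecting}{code:HandshakeSizeCalculator compareAndExchange}.

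Let $t_1$ and $t_2$ denote the times at which $op_1$ and $op_2$ perform their successful \codestyle{compareAndExchange} invocations, respectively. Since both are executing a line in \Crefrange{code: HandshakeSizeCalculator if collecting}{code:HandshakeSizeCalculator compareAndExchange} at time $t$, and \Cref{code:HandshakeSizeCalculator compareAndExchange} is the last line of that range, each of $t_1, t_2$ is at least $t$. Without loss of generality, assume $t_1 < t_2$ (the two cannot be exactly equal as they are atomic writes to the same field, and if $t_1 = t_2$ the argument is symmetric). By \Cref{observation: every countersSnapshot obj installed is diff}, $op_1$'s successful \codestyle{compareAndExchange} at $t_1$ constitutes an install of a fresh \codestyle{HandshakeCountersSnapshot} into \codestyle{HandshakeSizeCalculator.countersSnapshot}, i.e., a successful write to this field by a \size{} operation other than $op_2$.

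Now, at time $t_1$, $op_2$ has not yet executed its own \codestyle{compareAndExchange} (because $t_2 > t_1$), so the next line $op_2$ is about to execute is still within \Crefrange{code: HandshakeSizeCalculator if collecting}{code:HandshakeSizeCalculator compareAndExchange}: indeed, $op_2$ has already reached this range by time $t \leq t_1$ and it cannot leave it without completing \Cref{code:HandshakeSizeCalculator compareAndExchange}. Therefore, applying \Cref{lemma: no writes to countersSnapshot when between read and exchange} to $op_2$ (which qualifies, as it will perform a successful \codestyle{compareAndExchange}) yields that no successful write to \codestyle{HandshakeSizeCalc\-ulator.countersSnapshot} by any other \size{} operation can occur at time $t_1$. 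This directly contradicts $op_1$'s successful write at $t_1$, completing the proof.

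The main obstacle, minor as it is, lies simply in formalizing the program-counter bookkeeping: one must ensure that $op_2$'s next line to execute remains within the specified range throughout $[t, t_1]$, which requires observing that $op_2$ cannot advance past \Cref{code:HandshakeSizeCalculator compareAndExchange} without performing its own CAS and that a pending CAS keeps the next-line pointer equal to \Cref{code:HandshakeSizeCalculator compareAndExchange} until it is executed. Once this is explicit, the corollary follows as an immediate instantiation of the preceding lemma.
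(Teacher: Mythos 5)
Your proof is correct and matches the paper's intent: the paper states this corollary as an immediate consequence of \Cref{lemma: no writes to countersSnapshot when between read and exchange} without spelling out a separate argument, and your contradiction argument (ordering the two successful \codestyle{compareAndExchange} writes and applying the lemma to the later one) is exactly the derivation being implied. The program-counter bookkeeping you flag is handled correctly, since an operation that eventually performs a successful \codestyle{compareAndExchange} cannot exit the range \Crefrange{code: HandshakeSizeCalculator if collecting}{code:HandshakeSizeCalculator compareAndExchange} except by executing that line.
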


\begin{lemma}
    \label{lemma: at most one size during doFirst until compute}
    At any given moment, at most one \size{} operation has their next line to execute point to one of \Crefrange{code:HandshakeSizeCalculator compute doFirstAndSecondHandshakes}{code: HandshakeSizeCalculator set collecting false}.
\end{lemma}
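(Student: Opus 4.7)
The plan is to derive a contradiction using the invariant that between the moment a \size{} operation $op$ installs its \codestyle{HandshakeCountersSnapshot} $S$ via \Cref{code:HandshakeSizeCalculator compareAndExchange} and the moment $op$ executes \Cref{code: HandshakeSizeCalculator set collecting false}, $S$ is the value held in \codestyle{HandshakeSizeCalculator.countersSnapshot} and $S.\codestyle{collecting}$ equals \codestyle{true}. Suppose for contradiction that at some moment $T$ two distinct \size{} operations $op_1$ and $op_2$ both have their next line to execute within \Crefrange{code:HandshakeSizeCalculator compute doFirstAndSecondHandshakes}{code: HandshakeSizeCalculator set collecting false}. Then each has previously executed a successful \codestyle{compareAndExchange} in \Cref{code:HandshakeSizeCalculator compareAndExchange}, installing (by \Cref{observation: every countersSnapshot obj installed is diff}) distinct snapshots $S_1$ and $S_2$, and neither has yet executed \Cref{code: HandshakeSizeCalculator set collecting false}, so at moment $T$ both $S_1.\codestyle{collecting}$ and $S_2.\codestyle{collecting}$ remain \codestyle{true}.

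Without loss of generality, $op_1$'s \codestyle{compareAndExchange} took effect first. I would next establish an auxiliary claim: no further successful install of a new \codestyle{HandshakeCountersSnapshot} can take place strictly between $op_1$'s install and moment $T$. Any would-be installer must first read (in \Cref{code: HandshakeSizeCalculator read counterSnapshot}) a snapshot whose \codestyle{collecting} field equals \codestyle{false} in order to pass the check in \Cref{code: HandshakeSizeCalculator if collecting}; but as long as $op_1$ has not yet executed \Cref{code: HandshakeSizeCalculator set collecting false} the only snapshot visible as current since $op_1$'s install is $S_1$, whose \codestyle{collecting} is still \codestyle{true}. An easy induction on the sequence of successful installs then rules out any chain of intermediate installers. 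Consequently, at the moment $op_2$ performs its \codestyle{compareAndExchange}, the current value of \codestyle{countersSnapshot} is exactly $S_1$.

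The proof concludes by case-splitting on when $op_2$ performed its read of \codestyle{countersSnapshot} in \Cref{code: HandshakeSizeCalculator read counterSnapshot}. If this read happened at or after $op_1$'s install, then $op_2$ read $S_1$, observed $S_1.\codestyle{collecting}=\codestyle{true}$ at \Cref{code: HandshakeSizeCalculator if collecting}, and never reached \Cref{code:HandshakeSizeCalculator compareAndExchange}, a contradiction. Otherwise $op_2$'s read returned some earlier snapshot $S_0 \neq S_1$, so $op_2$'s \codestyle{compareAndExchange} witnesses $S_1 \neq S_0$ and fails, contradicting $op_2$'s successful install of $S_2$. The main obstacle I expect is the auxiliary claim that rules out intermediate installers; it is essentially an invariant that \codestyle{countersSnapshot} carries a snapshot with \codestyle{collecting}=\codestyle{true} throughout the interval during which $op_1$'s next line lies in the stated range, and I would isolate it as a short preliminary observation before the main contradiction. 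Everything else is routine case analysis on the interleaving of the two operations.
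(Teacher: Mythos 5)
Your proof is correct, and it rests on the same underlying mechanism as the paper's: a successful installer leaves a freshly allocated \codestyle{HandshakeCountersSnapshot} with \codestyle{collecting} equal to \codestyle{true} in the shared field, and that snapshot blocks every later install attempt---either at the check in \Cref{code: HandshakeSizeCalculator if collecting} or at the \codestyle{compareAndExchange} itself---until the installer reaches \Cref{code: HandshakeSizeCalculator set collecting false}. The decomposition differs, though. The paper first proves \Cref{lemma: no writes to countersSnapshot when between read and exchange} and \Cref{corollary: no interleaving read writes of successful compareandexchange sizes}, which mutually exclude the read-to-CAS windows of successful installers; combined with choosing $op_2$ minimally (the first operation to reach the stated range after $op_1$), this forces $op_2$'s read in \Cref{code: HandshakeSizeCalculator read counterSnapshot} to occur after $op_1$'s install and to return $op_1$'s snapshot, so $op_2$ fails the \codestyle{collecting} check. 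You instead prove directly, by a minimal-counterexample induction, that no successful install can occur while $op_1$'s snapshot is current with \codestyle{collecting} still \codestyle{true}, and then dispatch $op_2$ by a two-case analysis on whether its read preceded $op_1$'s install (stale expected value, so the \codestyle{compareAndExchange} fails, using \Cref{observation: every countersSnapshot obj installed is diff} to see the expected value is indeed stale) or followed it (the \codestyle{collecting} check fails). Your version is self-contained---it does not need the paper's preliminary corollary---and it is more explicit about ruling out third-party installers between $op_1$'s install and $op_2$'s CAS, a point the paper passes over rather quickly when it asserts that $op_2$ ``must have read the object $op_1$ wrote.'' The price is that your auxiliary invariant redoes, in slightly stronger form, the work of the paper's two preliminary results; when writing it up you should make sure the induction step explicitly covers installers whose read predates $op_1$'s install (they fail at the CAS, not at the check), exactly as your final case split does for $op_2$. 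Either route is sound.
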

\begin{proof}
    We prove by contradiction.
    Let $op_1,op_2$ be two \size{} operations whose next operation to execute points to one of \Crefrange{code:HandshakeSizeCalculator compute doFirstAndSecondHandshakes}{code: HandshakeSizeCalculator set collecting false} at the same time. 
    w.l.o.g assume $op_1$ got to \Cref{code: HandshakeSizeCalculator if collecting} before $op_2$ (i.e $op_1$'s next line to execute pointed to \Cref{code: HandshakeSizeCalculator if collecting} before $op_2$'s next line to execute pointed to it) and we choose $op_2$ to be the first operation whose next line to execute reaches \Crefrange{code:HandshakeSizeCalculator compute doFirstAndSecondHandshakes}{code: HandshakeSizeCalculator set collecting false} after $op_1$.

    For a \size{} operation to reach \Crefrange{code:HandshakeSizeCalculator compute doFirstAndSecondHandshakes}{code: HandshakeSizeCalculator set collecting false} it must enter the \codestyle{if} clause in \Cref{code:HandshakeSizeCalculator if compareAndExchange was successful}. Therefore, its \codestyle{compareAndExchange} operation in \Cref{code:HandshakeSizeCalculator compareAndExchange} must have succeeded. 
    Therefore, from \Cref{corollary: no interleaving read writes of successful compareandexchange sizes} when $op_1$'s next line to execute pointed to one of \Crefrange{code: HandshakeSizeCalculator if collecting}{code:HandshakeSizeCalculator compareAndExchange} then $op_2$'s next line to execute did not point to any of \Crefrange{code: HandshakeSizeCalculator if collecting}{code:HandshakeSizeCalculator compareAndExchange}.
    
    Therefore $op_2$'s next line to execute reached \Cref{code: HandshakeSizeCalculator if collecting} only after $op_1$'s next operation to execute has pointed to \Cref{code:HandshakeSizeCalculator if compareAndExchange was successful} which is after $op_1$ executed its \codestyle{compareAndExchange} operation in \Cref{code:HandshakeSizeCalculator compareAndExchange} and successfully wrote to \codestyle{HandshakeSizeCalculator.countersSnap\-shot}. Therefore, $op_2$ must have read in \Cref{code: HandshakeSizeCalculator read counterSnapshot} the object $op_1$ wrote to \codestyle{HandshakeSizeCalculator.countersSnapshot}. Because we chose $op_2$ to be the first operation after $op_1$ to reach \Crefrange{code:HandshakeSizeCalculator compute doFirstAndSecondHandshakes}{code: HandshakeSizeCalculator set collecting false} and since $op_1$ could not have reached \Cref{code: HandshakeSizeCalculator set collecting false} (because its next line to execute needs to be one of \Crefrange{code:HandshakeSizeCalculator compute doFirstAndSecondHandshakes}{code: HandshakeSizeCalculator set collecting false} when $op_2$ reaches \Cref{code:HandshakeSizeCalculator compute doFirstAndSecondHandshakes}) then the value of the \codestyle{countersSnapshot.collecting} of the object $op_2$ read in \Cref{code: HandshakeSizeCalculator read counterSnapshot} must be true until $op_2$ reaches \Cref{code:HandshakeSizeCalculator compute doFirstAndSecondHandshakes}. Therefore, $op_2$ does enter the \codestyle{if} clause in \Cref{code: HandshakeSizeCalculator if collecting} which is in contradiction to the fact that $op_2$ reaches \Crefrange{code:HandshakeSizeCalculator compute doFirstAndSecondHandshakes}{code: HandshakeSizeCalculator set collecting false} during its execution.
\end{proof}

\begin{lemma} \label{lemma: no concurrent sizes from first inc to finish handshake}
    At any given moment, at most one \size{} operation has their next line to execute point to one of   \multicrefrange{code: HandshakeSizeCalculator doFirstAndSecondHandshakes first inc}{code: HandshakeSizeCalculator doFirstAndSecondHandshakes return}{code:HandshakeSizeCalculator compute collect}{code: HandshakeSizeCalculator compute finish handshake}.
\end{lemma}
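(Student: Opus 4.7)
The plan is to argue by contradiction. I will assume two distinct \size{} operations $op_1$ and $op_2$ each have their next line to execute in the stated range at some common moment $t^*$, and derive a contradiction by combining \Cref{lemma: at most one size during doFirst until compute} with a short analysis of \codestyle{sizePhase}~mod~$4$.

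First I would observe that the new range differs from the range in \Cref{lemma: at most one size during doFirst until compute} only by excluding the wait-until line at the start of \codestyle{\_doFirstAndSecondHandshakes} and by adding two lines past \Cref{code: HandshakeSizeCalculator set collecting false}, namely \Cref{code:HandshakeSizeCalculator compute computeSize} and \Cref{code: HandshakeSizeCalculator compute finish handshake}. So if neither $op_1$ nor $op_2$ sits at these two extra lines, both have next line inside \Crefrange{code:HandshakeSizeCalculator compute doFirstAndSecondHandshakes}{code: HandshakeSizeCalculator set collecting false}, and \Cref{lemma: at most one size during doFirst until compute} closes the case immediately. The interesting case is therefore when, WLOG, $op_1$'s next line is \Cref{code:HandshakeSizeCalculator compute computeSize} or \Cref{code: HandshakeSizeCalculator compute finish handshake}; then $op_1$ has already executed \Cref{code: HandshakeSizeCalculator doFirstAndSecondHandshakes first inc,code: HandshakeSizeCalculator doFirstAndSecondHandshakes second inc} and \Cref{code: HandshakeSizeCalculator set collecting false}, but not yet \Cref{code: HandshakeSizeCalculator compute finish handshake}.

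In that remaining case my goal is to show that $op_2$ must be stuck at its own wait-until line inside \codestyle{\_doFirstAndSecondHandshakes}, which lies outside the stated range. Tracking \codestyle{sizePhase}~mod~$4$: since $op_1$'s own wait-until passed when it read \codestyle{sizePhase}$=X\equiv_{4}0$, the two writes at \Cref{code: HandshakeSizeCalculator doFirstAndSecondHandshakes first inc,code: HandshakeSizeCalculator doFirstAndSecondHandshakes second inc} bring \codestyle{sizePhase} to $X{+}2\equiv_{4}2$, and only \Cref{code: HandshakeSizeCalculator compute finish handshake} will restore it to a multiple of $4$. Since $op_1$ has not yet executed \Cref{code: HandshakeSizeCalculator compute finish handshake} at $t^*$, $op_2$'s wait-until guard (which requires \codestyle{sizePhase}$\equiv_{4}0$) cannot have fired in the window that began once $op_1$ performed \Cref{code: HandshakeSizeCalculator doFirstAndSecondHandshakes first inc}.

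The main obstacle will be ruling out that some \emph{third} \size{} operation slipped in between $op_1$ and $op_2$ and temporarily brought \codestyle{sizePhase} back to a multiple of $4$ in a way that would let $op_2$ past its wait-until. My plan for this auxiliary step is to argue, using the same ingredients already exploited in the proof of \Cref{lemma: at most one size during doFirst until compute} --- namely that a successful \codestyle{compareAndExchange} in \Cref{code:HandshakeSizeCalculator compareAndExchange} requires the currently installed \codestyle{HandshakeCountersSnapshot.collecting} to be \codestyle{false} --- that the intervals between a successful \codestyle{compareAndExchange} and the matching \Cref{code: HandshakeSizeCalculator set collecting false} are totally ordered across \size{} operations, and within each such interval exactly two writes to \codestyle{sizePhase} occur, with a multiple-of-$4$ value reached again only when the trailing \Cref{code: HandshakeSizeCalculator compute finish handshake} fires. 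Once this is in hand, $op_2$ is blocked at its wait-until at time $t^*$, contradicting the assumption that $op_2$'s next line lies in the stated range.
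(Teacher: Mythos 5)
Your overall strategy---contradiction, reduction to \Cref{lemma: at most one size during doFirst until compute}, and tracking \codestyle{sizePhase} modulo $4$ to show the second operation is trapped at its wait-until---is the same as the paper's, and your first reduction (if neither operation sits at the two extra lines, \Cref{lemma: at most one size during doFirst until compute} finishes immediately) is fine. The gap is in the remaining case. You choose $op_1$ by its \emph{position} at $t^*$ (the one at \Cref{code:HandshakeSizeCalculator compute computeSize} or \Cref{code: HandshakeSizeCalculator compute finish handshake}), but your argument only excludes $op_2$ passing its wait-until \emph{after} $op_1$ executes \Cref{code: HandshakeSizeCalculator doFirstAndSecondHandshakes first inc}. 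It does not exclude $op_2$ having passed its wait-until \emph{before} that write, which is exactly what happens when $op_2$ traversed the handshake region first---for instance when both operations sit at the two extra lines at $t^*$. In that configuration $op_2$ is not at its wait-until at all, so the conclusion ``$op_2$ is blocked at its wait-until at time $t^*$'' is false as stated and no contradiction is reached.

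That configuration is still impossible, but for the mirrored reason, and your write-up does not supply it: if $op_2$ passed its wait-until before $op_1$'s first increment, then by \Cref{lemma: at most one size during doFirst until compute} $op_2$'s entire stay in that region must precede $op_1$'s, so $op_2$ executed \Cref{code: HandshakeSizeCalculator doFirstAndSecondHandshakes second inc} (leaving \codestyle{sizePhase} $\equiv_4 2$) before $op_1$ entered \codestyle{\_doFirstAndSecondHandshakes}; since $op_2$ has not executed \Cref{code: HandshakeSizeCalculator compute finish handshake} by $t^*$, it is now $op_1$ that could never have passed \emph{its} wait-until, contradicting $op_1$ being at \Cref{code:HandshakeSizeCalculator compute computeSize}. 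The clean fix is to order the pair temporally---by which operation reaches \Cref{code:HandshakeSizeCalculator compute doFirstAndSecondHandshakes} first---rather than by position at $t^*$; this is what the paper does, and it additionally takes $op_2$ to be the \emph{first} \size{} operation to reach that call after $op_1$, which is also what rigorously discharges the third-party interference you defer to your last paragraph (no other writer of \codestyle{sizePhase} between $op_1$'s second increment and $op_2$'s read).
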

\begin{proof}
    We prove by contradiction.
    Let $op_1,op_2$ be two \size{} operations whose next line to execute points to one of   \multicrefrange{code: HandshakeSizeCalculator doFirstAndSecondHandshakes first inc}{code: HandshakeSizeCalculator doFirstAndSecondHandshakes return}{code:HandshakeSizeCalculator compute collect}{code: HandshakeSizeCalculator compute finish handshake} at the same time. 
    w.l.o.g assume $op_1$ got to \Cref{code:HandshakeSizeCalculator compute doFirstAndSecondHandshakes} before $op_2$ (i.e $op_1$'s next line to execute pointed to \Cref{code:HandshakeSizeCalculator compute doFirstAndSecondHandshakes} before $op_2$'s next line to execute pointed to it) and we choose $op_2$ to be the first operation whose next line to execute reaches \Cref{code:HandshakeSizeCalculator compute doFirstAndSecondHandshakes} after $op_1$.

    From \Cref{lemma: at most one size during doFirst until compute} we determine that when $op_2$'s next line to execute points to \Crefrange{code: HandshakeSizeCalculator doFirstAndSecondHandshakes wait until}{code: HandshakeSizeCalculator compute finish handshake} then $op_1$'s next line to execute must point to one of \Crefrange{code:HandshakeSizeCalculator compute computeSize}{code: HandshakeSizeCalculator compute finish handshake}.
    From \Cref{lemma: at most one size during doFirst until compute} and because we chose $op_2$ to be the first operation whose next line to execute reaches \Cref{code:HandshakeSizeCalculator compute doFirstAndSecondHandshakes} after $op_1$ we conclude that from the moment $op_1$'s next line to execute reached \Cref{code:HandshakeSizeCalculator compute computeSize} and until $op_2$'s next to line to execute reached \Cref{code: HandshakeSizeCalculator doFirstAndSecondHandshakes wait until} no writes to \codestyle{sizePhase} were made. Therefore, the first value read by $op_2$ in \Cref{code: HandshakeSizeCalculator doFirstAndSecondHandshakes wait until} must be the value written by $op_1$ in \Cref{code: HandshakeSizeCalculator doFirstAndSecondHandshakes second inc}. Now, from \Cref{code: HandshakeSizeCalculator doFirstAndSecondHandshakes wait until} we know that the value of \codestyle{currSizePhase} in $op_1$'s execution of \Cref{code: HandshakeSizeCalculator doFirstAndSecondHandshakes second inc} must be $\equiv_4 0$ therefore the value written in \Cref{code: HandshakeSizeCalculator doFirstAndSecondHandshakes second inc} by $op_1$ must be $\equiv_4 2$. Therefore, 
    $op_2$ has read in \Cref{code: HandshakeSizeCalculator doFirstAndSecondHandshakes wait until} a value $\equiv_4 2$ and from \Cref{lemma: at most one size during doFirst until compute} we know that no \size{} operation other then $op_1$ and $op_2$ can  write to \codestyle{sizePhase} until $op_2$'s next line to execute reaches \Cref{code:HandshakeSizeCalculator compute computeSize}. Therefore, since $op_1$'s next line to execute must point to one of \Crefrange{code:HandshakeSizeCalculator compute computeSize}{code: HandshakeSizeCalculator compute finish handshake} until $op_2$'s next line to execute reaches \Cref{code: HandshakeSizeCalculator doFirstAndSecondHandshakes first inc} then no writes at all can be made to \codestyle{sizePhase} when $op_2$'s next line to execute points to \Cref{code: HandshakeSizeCalculator doFirstAndSecondHandshakes wait until} making it so that the value of \codestyle{sizePhase} will stay $\equiv_4 2$ and $op_2$ will never reach \Cref{code: HandshakeSizeCalculator doFirstAndSecondHandshakes first inc} in contradiction to the assumption that $op_1$'s and $op_2$'s next line to execute points to one of   \multicrefrange{code: HandshakeSizeCalculator doFirstAndSecondHandshakes first inc}{code: HandshakeSizeCalculator doFirstAndSecondHandshakes return}{code:HandshakeSizeCalculator compute collect}{code: HandshakeSizeCalculator compute finish handshake} at the same time.
\end{proof}

\begin{lemma}\label{lemma:first handshake size phase is 1}
While there is some \size{} operation whose next line to execute points to \Cref{code: HandshakeSizeCalculator doFirstAndSecondHandshakes second inc} then all concurrent \ins{} and \del{} operations whose next line to execute is one of \Crefrange{trans_op: start perform operation}{trans_op: set idle} must have obtained \codestyle{sizePhase} $\equiv_{4}{1}$ in \Cref{trans_op: read size phase}.
\end{lemma}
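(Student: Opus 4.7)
The plan is to track the evolution of \codestyle{sizePhase} during $op_{size}$'s execution and exploit the handshake condition that $op_{size}$ has successfully waited upon. Let $op_{size}$ be a size operation whose next line points to \Cref{code: HandshakeSizeCalculator doFirstAndSecondHandshakes second inc}. By \Cref{lemma: no concurrent sizes from first inc to finish handshake}, $op_{size}$ is the sole size operation currently between \Cref{code: HandshakeSizeCalculator doFirstAndSecondHandshakes first inc} and \Cref{code: HandshakeSizeCalculator compute finish handshake}, and hence the sole writer of \codestyle{sizePhase} throughout this window. Let $4k$ denote the value of \codestyle{sizePhase} immediately before $op_{size}$'s write in \Cref{code: HandshakeSizeCalculator doFirstAndSecondHandshakes first inc}; a simple induction over completed size operations, starting from the initial value $4$ set in \Cref{code:HandshakeSizeCalculator ctor}, shows this value is $\equiv_4 0$. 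Consequently, between this write and the pending execution of \Cref{code: HandshakeSizeCalculator doFirstAndSecondHandshakes second inc}, \codestyle{sizePhase} holds exactly $4k+1$.

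I would then use the fact that $op_{size}$ has completed \codestyle{\_performHandshake($4k+1$)} in \Cref{code: HandshakeSizeCalculator first handshake}, which means it observed $op$'s thread's \codestyle{opPhase} as either \codestyle{IDLE\_PHASE} or a value $\geq 4k+1$. During its execution, $op$'s \codestyle{opPhase} attains only the values \codestyle{IDLE\_PHASE}, \codestyle{FAST\_PHASE} $=1$, or the \codestyle{currSizePhase} it read in \Cref{trans_op: read size phase} and wrote in \Cref{trans_op: set slow phase}; since $4k+1 > 1$, the handshake observation rules out \codestyle{FAST\_PHASE}. In \emph{Case A} (observed \codestyle{IDLE\_PHASE}), $op$ had not yet executed \Cref{trans_op: set fast phase} at observation time, so $op$'s subsequent read in \Cref{trans_op: read size phase} occurs strictly after $op_{size}$'s write of $4k+1$ and strictly before $op_{size}$ reaches \Cref{code: HandshakeSizeCalculator doFirstAndSecondHandshakes second inc}; thus $op$ reads $4k+1 \equiv_4 1$. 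In \emph{Case B} (observed \codestyle{opPhase} equal to the written \codestyle{currSizePhase}, with value $\geq 4k+1$), the read \codestyle{currSizePhase} is some past value of \codestyle{sizePhase}, and by the uniqueness lemma this value cannot exceed $4k+1$; combined with the $\geq 4k+1$ lower bound, this forces \codestyle{currSizePhase} $= 4k+1 \equiv_4 1$.

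The main obstacle I anticipate is cleanly ruling out stale reads: had $op$ read an earlier value of \codestyle{sizePhase} such as $4k$ (taking the fast path with \codestyle{opPhase}$=1$) or $4(k-1)+2$ (taking the slow path with \codestyle{opPhase}$=4k-2$), then its \codestyle{opPhase} would remain below $4k+1$ and different from \codestyle{IDLE\_PHASE} throughout the remainder of its execution, so $op_{size}$ could not have completed its first handshake---contradicting the hypothesis that its next line is \Cref{code: HandshakeSizeCalculator doFirstAndSecondHandshakes second inc}. Making this precise requires carefully aligning the temporal order of $op$'s writes to \codestyle{opPhase}, $op_{size}$'s observation of them, and $op_{size}$'s prior write to \codestyle{sizePhase}, but the uniqueness guarantee from \Cref{lemma: no concurrent sizes from first inc to finish handshake} keeps the bookkeeping tractable.
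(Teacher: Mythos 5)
Your overall strategy matches the paper's: both proofs rest on the same two pillars --- \Cref{lemma: no concurrent sizes from first inc to finish handshake} making $op_{size}$ the unique writer of \codestyle{sizePhase} (so its value is pinned at $4k+1$ from the write in \Cref{code: HandshakeSizeCalculator doFirstAndSecondHandshakes first inc} until \Cref{code: HandshakeSizeCalculator doFirstAndSecondHandshakes second inc} actually executes), and the fact that the first handshake in \Cref{code: HandshakeSizeCalculator first handshake} has completed, which constrains what was observed in $op$'s \codestyle{opPhase} cell. (The paper reads $4k\equiv_4 0$ off the guard preceding \Cref{code: HandshakeSizeCalculator doFirstAndSecondHandshakes first inc} rather than by induction over completed \size{} operations, but that is cosmetic.) The real difference is how the case split is organized: the paper splits on whether $op$ reaches \Cref{trans_op: read size phase} before or after the handshake's passing observation of its thread's cell in \Cref{perform_hanshake: wait for size phase}, whereas you split on the \emph{value} that observation returned.

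That reorganization opens a gap. Your Cases A and B assume the observed value was either \codestyle{IDLE\_PHASE} or something written by $op$ itself, but the passing observation may predate $op$ entirely and return the slow-phase value written at \Cref{trans_op: set slow phase} by an \emph{earlier, already-completed operation of the same thread}. That value can be $\geq 4k+1$ while being distinct from \codestyle{IDLE\_PHASE} and \codestyle{FAST\_PHASE}, so it lets the handshake iteration pass yet falls under neither of your cases, and it says nothing directly about what $op$ later reads. The conclusion does survive --- such a predecessor must itself have read exactly $4k+1$ (a past value of \codestyle{sizePhase} that is $\geq 4k+1$ can only be $4k+1$ by monotonicity and the sole-writer property), and $op$'s read at \Cref{trans_op: read size phase} follows it in program order while \codestyle{sizePhase} still equals $4k+1$ --- but this third case has to be argued explicitly. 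The paper's temporal split sidesteps the issue entirely: if the passing observation precedes $op$'s arrival at \Cref{trans_op: read size phase}, it must in fact precede \Cref{trans_op: set fast phase} (otherwise the observed \codestyle{FAST\_PHASE} would stall the handshake until after the read), so $op$'s read postdates the write of $4k+1$ no matter what value was observed or which operation wrote it. You should either add the missing subcase or adopt the temporal split.
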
 
\begin{proof} 
    Denote $op_{size}$ some \size{} operation whose next line to execute points to \Cref{code: HandshakeSizeCalculator doFirstAndSecondHandshakes second inc}.
    Let \codestyle{size\_ph} be the \codestyle{sizePhase} set by $op_{size}$ in \Cref{code: HandshakeSizeCalculator doFirstAndSecondHandshakes first inc}. From \Cref{code: HandshakeSizeCalculator doFirstAndSecondHandshakes wait until} we know we could only reach \Cref{code: HandshakeSizeCalculator doFirstAndSecondHandshakes first inc} if \codestyle{currSizePhase} $\equiv_{4}{0}$, therefore, \codestyle{size\_ph} $\equiv_{4}{1}$. From \Cref{lemma: no concurrent sizes from first inc to finish handshake} we know that there can not be any other \size{} operation whose next line to execute points to one of   \multicrefrange{code: HandshakeSizeCalculator doFirstAndSecondHandshakes first inc}{code: HandshakeSizeCalculator doFirstAndSecondHandshakes return}{code:HandshakeSizeCalculator compute collect}{code: HandshakeSizeCalculator compute finish handshake}. Therefore, $op_{size}$ is the only operation that can write to \codestyle{sizePhase}. 
    
    Let $op$ be some \ins{} or \del{} operation whose next line to execute points to \Crefrange{trans_op: start perform operation}{trans_op: set idle} while $op_{size}$'s next line to execute points to \Cref{code: HandshakeSizeCalculator doFirstAndSecondHandshakes second inc}.
    \begin{itemize}
        \item If $op$'s next line to execute has reached \Cref{trans_op: read size phase} before $op_{size}$'s has executed the iteration for $op$'s thread ID in \Cref{perform_hanshake: wait for size phase} when executing \Cref{code: HandshakeSizeCalculator first handshake} then $op$ has changed its relevant \codestyle{opPhase} entry to \codestyle{FAST\_PHASE}. Unless this \codestyle{opPhase} entry is changed again then when $op_{size}$ executes the iteration for $op$'s thread ID in \Cref{perform_hanshake: wait for size phase} it will read \codestyle{FAST\_PHASE} forever making it so that it never continues to the next iteration and so that $op_{size}$ never reaches \Cref{code: HandshakeSizeCalculator doFirstAndSecondHandshakes second inc} in contradiction to $op_{size}$'s definition. Therefore, $op$ must have changed its thread ID's entry in \codestyle{opPhase} to \codestyle{IDLE\_PHASE} or to some other value $x$ such that $x\geq \text{\codestyle{size\_ph}}$.
        If $op$ changed its \codestyle{op\_phase} to \codestyle{IDLE\_PHASE} then its next line to execute must have reached \Cref{trans_op: return} before $op_{size}$'s next line to execute reached \Cref{code: HandshakeSizeCalculator doFirstAndSecondHandshakes second inc} in contradiction to the way we defined $op$. Otherwise, $op$ must have changed its \codestyle{op\_phase} to some value $x$ such that $x\geq \text{\codestyle{size\_ph}}$. Then,
        because $op_{size}$ is the only operation that can write to \codestyle{sizePhase} while $op_{size}$'s next line to execute points to one of \Crefrange{code: HandshakeSizeCalculator doFirstAndSecondHandshakes first inc}{code: HandshakeSizeCalculator doFirstAndSecondHandshakes return} we conclude that $op$ must have read \codestyle{size\_ph} $\equiv_{4}{1}$ in \Cref{trans_op: read size phase}.

        \item Otherwise, if $op$'s next line to execute has reached \Cref{trans_op: read size phase} after $op_{size}$'s has executed the iteration for $op$'s thread ID in \Cref{perform_hanshake: wait for size phase} when executing \Cref{code: HandshakeSizeCalculator first handshake} then $op_{size}$'s next line to execute must have reached \Cref{code: HandshakeSizeCalculator first handshake} before $op$'s next line to execute reached \Cref{trans_op: start perform operation}. Therefore, because $op_{size}$ has executed the write in \Cref{code: HandshakeSizeCalculator doFirstAndSecondHandshakes first inc} before $op$'s next line to execute has reached \Cref{trans_op: start perform operation} then $op$'s read in \Cref{trans_op: read size phase} must have obtained \codestyle{size\_ph} $\equiv_{4}{1}$.
    \end{itemize}
\end{proof}

\begin{lemma}\label{lemma:second handshake size phase is 2}
While there is some \size{} operation whose next line to execute points to any of \Cref{code: HandshakeSizeCalculator doFirstAndSecondHandshakes return}, \Crefrange{code:HandshakeSizeCalculator compute collect}{code: HandshakeSizeCalculator compute finish handshake} then all concurrent \ins{} and \del{} operations whose next line to execute is one of \Crefrange{trans_op: start perform operation}{trans_op: set idle} must have obtained \codestyle{sizePhase} $\equiv_{4}{2}$ in \Cref{trans_op: read size phase}.
\end{lemma}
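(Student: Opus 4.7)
My approach parallels the proof of \Cref{lemma:first handshake size phase is 1}, upgraded to the second handshake. I would first fix a \size{} operation $op_{size}$ whose next line lies in the stated range and let \codestyle{size\_ph2} denote the value it wrote to \codestyle{sizePhase} in \Cref{code: HandshakeSizeCalculator doFirstAndSecondHandshakes second inc}. Because the wait just before \Cref{code: HandshakeSizeCalculator doFirstAndSecondHandshakes first inc} enforces \codestyle{currSizePhase} $\equiv_4 0$, we get \codestyle{size\_ph2} $\equiv_4 2$; and by \Cref{lemma: no concurrent sizes from first inc to finish handshake}, no other \size{} operation writes to \codestyle{sizePhase} as long as $op_{size}$'s next line is in the stated range, so \codestyle{sizePhase} is frozen at \codestyle{size\_ph2} until $op_{size}$ itself executes \Cref{code: HandshakeSizeCalculator compute finish handshake}.

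Next, I would fix a concurrent \ins{} or \del{} operation $op$ whose next line lies in \Crefrange{trans_op: start perform operation}{trans_op: set idle}. Since $op_{size}$ has progressed past the second-handshake call at \Cref{code: HandshakeSizeCalculator second handshake}, the iteration there for $op$'s thread has completed; let $t_i$ denote that final, successful read of \codestyle{opPhase[tid]}, which by construction occurs after the write of \codestyle{size\_ph2}. The value read at $t_i$ is either \codestyle{IDLE\_PHASE} or $\geq$ \codestyle{size\_ph2}, and the proof proceeds by case analysis. In the \codestyle{IDLE\_PHASE} case, no operation by $op$'s thread is between \Cref{trans_op: set fast phase} and \Cref{trans_op: set idle} at time $t_i$, so the currently in-progress $op$ must have executed \Cref{trans_op: set fast phase} strictly after $t_i$. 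Its subsequent read at \Cref{trans_op: read size phase} therefore occurs after the write of \codestyle{size\_ph2}, and by the freezing of \codestyle{sizePhase} argued above, returns a value $\equiv_4 2$.

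In the remaining case, where \codestyle{opPhase[tid]} is read as $\geq$ \codestyle{size\_ph2} at $t_i$, the key observation is that this value can only have been installed at \Cref{trans_op: set slow phase} by some op that read a \codestyle{sizePhase} value $\geq$ \codestyle{size\_ph2} (the alternative writes to \codestyle{opPhase[tid]}, namely \codestyle{FAST\_PHASE} $= 1$ at \Cref{trans_op: set fast phase} and \codestyle{IDLE\_PHASE} at \Cref{trans_op: set idle}, are both too small). Since \codestyle{sizePhase} is monotonically non-decreasing and first attained the value \codestyle{size\_ph2} at the write in \Cref{code: HandshakeSizeCalculator doFirstAndSecondHandshakes second inc}, this set-slow-phase step happened after that write. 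If the responsible op is the current $op$, then $op$ itself read a value $\equiv_4 2$ at \Cref{trans_op: read size phase} and we are done. Otherwise, a strictly earlier operation on the same thread must have been in progress at $t_i$, later executed \Cref{trans_op: set idle}, and only then could the current $op$ begin; this pushes the current $op$'s \Cref{trans_op: read size phase} past the write of \codestyle{size\_ph2}, again yielding a read $\equiv_4 2$.

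The delicate part will be ruling out spurious scenarios in the last case, most notably the worry that \codestyle{opPhase[tid]} $\geq$ \codestyle{size\_ph2} at $t_i$ could have been left over from a prior handshake cycle or written by some earlier op with stale data. This is resolved by combining the monotonicity of \codestyle{sizePhase} with the uniqueness of the writer granted by \Cref{lemma: no concurrent sizes from first inc to finish handshake}: no value $\geq$ \codestyle{size\_ph2} ever resided in \codestyle{sizePhase} before the write in \Cref{code: HandshakeSizeCalculator doFirstAndSecondHandshakes second inc}, and no competing \size{} operation may increment \codestyle{sizePhase} past that value until $op_{size}$ itself executes \Cref{code: HandshakeSizeCalculator compute finish handshake}, so no op could have recorded such a value in \codestyle{opPhase[tid]} outside the narrow window considered above. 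Once this is in place, the three subcases are exhaustive and the lemma follows.
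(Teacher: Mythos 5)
Your proof is correct and follows essentially the same route as the paper's: fix the unique $op_{size}$ in the critical window via \Cref{lemma: no concurrent sizes from first inc to finish handshake}, establish that \codestyle{sizePhase} is frozen at a value $\equiv_4 2$, and case-split around the terminating read of \codestyle{opPhase[tid]} in the second handshake. The only difference is cosmetic — you branch on the value observed at that read (\codestyle{IDLE\_PHASE} versus $\geq$ \codestyle{size\_ph2}) whereas the paper branches on whether $op$ reached \Cref{trans_op: read size phase} before or after that read — and the two case analyses cover the same scenarios with the same supporting facts.
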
 
\begin{proof} 
    Denote $op_{size}$ some \size{} whose next line to execute points to any of \Cref{code: HandshakeSizeCalculator doFirstAndSecondHandshakes return}, \Crefrange{code:HandshakeSizeCalculator compute collect}{code: HandshakeSizeCalculator compute finish handshake}. Let \codestyle{size\_ph} be the \codestyle{sizePhase} set by $op_{size}$ in \Cref{code: HandshakeSizeCalculator doFirstAndSecondHandshakes second inc}. From \Cref{code: HandshakeSizeCalculator doFirstAndSecondHandshakes wait until} we know we could only reach \Cref{code: HandshakeSizeCalculator doFirstAndSecondHandshakes second inc} if \codestyle{currSizePhase} $\equiv_{4}{0}$, therefore, \codestyle{size\_ph} $\equiv_{4}{2}$. From \Cref{lemma: no concurrent sizes from first inc to finish handshake} we know that there can not be any other \size{} operation whose next line to execute points to one of   \multicrefrange{code: HandshakeSizeCalculator doFirstAndSecondHandshakes first inc}{code: HandshakeSizeCalculator doFirstAndSecondHandshakes return}{code:HandshakeSizeCalculator compute collect}{code: HandshakeSizeCalculator compute finish handshake}. Therefore, $op_{size}$ is the only operation that can write to \codestyle{sizePhase}. 

    Let $op$ be some \ins{} or \del{} operation whose next line to execute points to \Crefrange{trans_op: start perform operation}{trans_op: set idle} while $op_{size}$'s next line to execute points to any of \Cref{code: HandshakeSizeCalculator doFirstAndSecondHandshakes return}, \Crefrange{code:HandshakeSizeCalculator compute collect}{code: HandshakeSizeCalculator compute finish handshake}.
    \begin{itemize}
        \item If $op$'s next line to execute has reached \Cref{trans_op: read size phase} before $op_{size}$'s has executed the iteration for $op$'s thread ID in \Cref{perform_hanshake: wait for size phase} when executing \Cref{code: HandshakeSizeCalculator second handshake} then $op$ has changed its relevant \codestyle{opPhase} entry to \codestyle{FAST\_PHASE}. Unless this \codestyle{opPhase} entry is changed again then when $op_{size}$ executes the iteration for $op$'s thread ID in \Cref{perform_hanshake: wait for size phase} it will read \codestyle{FAST\_PHASE} forever making it so that it never continues to the next iteration and so that $op_{size}$ never reaches \Cref{code: HandshakeSizeCalculator doFirstAndSecondHandshakes return} in contradiction to $op_{size}$'s definition. Therefore, $op$ must have changed its thread ID's entry in \codestyle{opPhase} to \codestyle{IDLE\_PHASE} or to some other value $x$ such that $x\geq \text{\codestyle{size\_ph}}$.
        If $op$ changed its \codestyle{op\_phase} to \codestyle{IDLE\_PHASE} then its next line to execute must have reached \Cref{trans_op: return} before $op_{size}$'s next line to execute reached \Cref{code: HandshakeSizeCalculator doFirstAndSecondHandshakes return} in contradiction to the way we defined $op$. Otherwise, $op$ must have changed its \codestyle{op\_phase} to some value $x$ such that $x\geq \text{\codestyle{size\_ph}}$. Then,
        because $op_{size}$ is the only operation that can write to \codestyle{sizePhase} while $op_{size}$'s next line to execute points to one of \Crefrange{code: HandshakeSizeCalculator doFirstAndSecondHandshakes first inc}{code: HandshakeSizeCalculator doFirstAndSecondHandshakes return} we conclude that $op$ must have read \codestyle{size\_ph} $\equiv_{4}{2}$ in \Cref{trans_op: read size phase}.

        \item Otherwise, if $op$'s next line to execute has reached \Cref{trans_op: read size phase} after $op_{size}$'s has executed the iteration for $op$'s thread ID in \Cref{perform_hanshake: wait for size phase} when executing \Cref{code: HandshakeSizeCalculator second handshake} then $op_{size}$'s next line to execute must have reached \Cref{code: HandshakeSizeCalculator second handshake} before $op$'s next line to execute reached \Cref{trans_op: start perform operation}. Therefore, because $op_{size}$ has executed the write in \Cref{code: HandshakeSizeCalculator doFirstAndSecondHandshakes second inc} before $op$'s next line to execute has reached \Cref{trans_op: start perform operation} and because $op_{size}$ does not execute any other writes to \codestyle{sizePhase} until its next line to execute reaches \Cref{code: HandshakeSizeCalculator return size} then $op$'s read in \Cref{trans_op: read size phase} must have obtained \codestyle{size\_ph} $\equiv_{4}{2}$.
    \end{itemize}
\end{proof}

\begin{corollary}\label{corollary:compute only runs with slow op}
From \Cref{lemma:second handshake size phase is 2} we conclude that when some \size{} operation's next line to execute points to any of \Cref{code: HandshakeSizeCalculator doFirstAndSecondHandshakes return}, \Crefrange{code:HandshakeSizeCalculator compute collect}{code: HandshakeSizeCalculator compute finish handshake}, then only slow operations that have seen the second handshake and are linearized accordingly can reach their operation execution (namely, execute $slow\_op$).
\end{corollary}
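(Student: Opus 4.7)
The plan is to observe that the corollary follows almost immediately from \Cref{lemma:second handshake size phase is 2}, provided we unpack what ``having seen the second handshake'' means in terms of the code and check that any concurrent \ins{}/\del{} that is about to execute \codestyle{slow\_op} has indeed taken the slow branch in the \codestyle{op} method of \Cref{fig: transformed data structure with handshakes}.

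First I would apply \Cref{lemma:second handshake size phase is 2} directly: at any moment when some \size{}'s next line to execute is \Cref{code: HandshakeSizeCalculator doFirstAndSecondHandshakes return} or lies in \Crefrange{code:HandshakeSizeCalculator compute collect}{code: HandshakeSizeCalculator compute finish handshake}, every concurrent \ins{} or \del{} whose next line to execute is in \Crefrange{trans_op: start perform operation}{trans_op: set idle} must have read \codestyle{sizePhase} $\equiv_{4}{2}$ at \Cref{trans_op: read size phase}. In particular, consider any such operation whose next line to execute is exactly \Cref{trans_op: execute slowop}, i.e., one that is about to enter \codestyle{slow\_op}; it satisfies the premise of the lemma and hence has read \codestyle{sizePhase} $\equiv_{4}{2}$.

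Second, I would inspect the conditional at \Cref{trans_op: execute fastop,trans_op: execute slowop}. The branch takes the fast path only when the value of \codestyle{currSizePhase} obtained at \Cref{trans_op: read size phase} indicates no concurrent \size{} operation (i.e., $\equiv_{4}{0}$); otherwise, and in particular whenever \codestyle{currSizePhase}$\equiv_{4}{2}$, the operation first publishes \codestyle{currSizePhase} to its \codestyle{opPhase} cell at \Cref{trans_op: set slow phase} and then executes \codestyle{slow\_op}. Hence no operation reaching \Cref{trans_op: execute slowop} in this situation could have previously executed \codestyle{fast\_op}: it is committed to the slow path, which is exactly what we need.

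Finally, I would invoke the definition from \Cref{handshakes:linearizability}: a slow \ins{}/\del{} is said to have ``seen the second handshake'' precisely when it read \codestyle{sizePhase} $\equiv_{4}{2}$ at \Cref{trans_op: read size phase}, and such operations are linearized according to the rules stated there (either at their original linearization point when no fast operation intervenes, or at the \cite{sela2021concurrentSize} linearization point, or at the special $sz\_t$ point otherwise). Combining the three steps yields the corollary. There is essentially no hard step; the only care needed is to match the code-level condition at \Cref{trans_op: read size phase,trans_op: execute fastop} against the value guaranteed by \Cref{lemma:second handshake size phase is 2}, and to confirm that the linearization rules of \Cref{handshakes:linearizability} apply uniformly to every slow operation that witnessed the second handshake.
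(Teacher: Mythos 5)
Your proposal is correct and matches the paper's reasoning: the paper states this corollary as an immediate consequence of \Cref{lemma:second handshake size phase is 2} with no further argument, and your unpacking — applying the lemma to any \ins{}/\del{} whose next line is \Cref{trans_op: execute slowop}, noting that reading \codestyle{sizePhase} $\equiv_{4}{2}$ forces the slow branch at \Cref{trans_op: set slow phase}, and matching this against the definition of ``seen the second handshake'' and its linearization rule in \Cref{handshakes:linearizability} — is exactly the intended deduction.
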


\begin{lemma}\label{lemma:first handshake to last inc no fast op}
While there is some \size{} operation whose next line to execute points to one of \multicrefrange{code: HandshakeSizeCalculator doFirstAndSecondHandshakes second inc}{code: HandshakeSizeCalculator doFirstAndSecondHandshakes return}{code:HandshakeSizeCalculator compute collect}{code: HandshakeSizeCalculator compute finish handshake} then all concurrent \ins{} and \del{} operations whose next line to execute is one of \Crefrange{trans_op: start perform operation}{trans_op: set idle} must have obtained \codestyle{sizePhase} $\not\equiv_{4}{0}$ in \Cref{trans_op: read size phase}.
\end{lemma}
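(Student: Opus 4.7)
The plan is to prove this lemma by a case analysis on the next line of the \size{} operation $op_{\text{size}}$, leaning on \Cref{lemma:first handshake size phase is 1} and \Cref{lemma:second handshake size phase is 2} for the sub-ranges they already cover, and arguing directly for the one line that slips between them. The \codestyle{multicrefrange} in the statement expands to $[\text{second inc}, \text{return}] \cup [\text{compute collect}, \text{finish handshake}]$. I would partition this into (i) next line is \Cref{code: HandshakeSizeCalculator doFirstAndSecondHandshakes second inc}, (ii) next line is \Cref{code: HandshakeSizeCalculator doFirstAndSecondHandshakes return} or any of \Crefrange{code:HandshakeSizeCalculator compute collect}{code: HandshakeSizeCalculator compute finish handshake}, and (iii) next line is \Cref{code: HandshakeSizeCalculator second handshake} (the \codestyle{\_performHandshake(currSizePhase+2)} call or any line reached inside it).

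Regions (i) and (ii) follow immediately: \Cref{lemma:first handshake size phase is 1} yields \codestyle{sizePhase} $\equiv_{4}{1}$ for (i), and \Cref{lemma:second handshake size phase is 2} yields \codestyle{sizePhase} $\equiv_{4}{2}$ for (ii). Both conclusions imply \codestyle{sizePhase} $\not\equiv_{4}{0}$, giving the claim in those regions.

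For region (iii), I would mimic the structure of the previous two proofs. At this moment $op_{\text{size}}$ has executed second inc---so \codestyle{sizePhase} is currently $\equiv_{4}{2}$---and has completed the first handshake. Consider any concurrent \ins{} or \del{} operation $op$ whose next line is in $[\text{start perform operation}, \text{set idle}]$, and ask when $op$ read \codestyle{sizePhase} at \Cref{trans_op: read size phase}. If the read occurred after $op_{\text{size}}$'s first inc, the value is $\equiv_{4}{1}$ or $\equiv_{4}{2}$, both $\not\equiv_{4}{0}$, and we are done. Otherwise $op$ obtained a pre-first-inc value $\equiv_{4}{0}$ and proceeded down the fast path with \codestyle{opPhase} $=$ \codestyle{FAST\_PHASE} $=1$; I would then derive a contradiction by noting that because \codestyle{sizePhase} is initialized to $4$ and can only grow, $op_{\text{size}}$'s first-handshake threshold $\text{currSizePhase}+1$ is at least $5>1$, so the wait loop in \Cref{perform_hanshake: wait for size phase} can only be released by $op$ reaching \codestyle{IDLE\_PHASE}, contradicting the hypothesis on $op$'s next line.

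The main obstacle is this final reduction in region (iii)---specifically, ruling out the scenario in which $op$ was already in a fast-path execution before $op_{\text{size}}$ even started. The critical, slightly non-obvious ingredient is the strict inequality \codestyle{FAST\_PHASE} $<$ \codestyle{currSizePhase}$+1$, which hinges on the initial value \codestyle{sizePhase} $=4$ and the monotonic growth of \codestyle{sizePhase}. Once this is made explicit, region (iii) reduces to a routine variant of the arguments appearing in \Cref{lemma:first handshake size phase is 1} and \Cref{lemma:second handshake size phase is 2}.
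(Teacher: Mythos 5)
Your proposal is correct and follows the same overall plan as the paper: both partition the \size{} operation's position into the sub-ranges already covered by \Cref{lemma:first handshake size phase is 1} and \Cref{lemma:second handshake size phase is 2} and observe that the only residual case is \Cref{code: HandshakeSizeCalculator second handshake}. The two arguments diverge only in how that residual case is closed. The paper argues that, since the other two lemmas would otherwise apply, the operation $op$ must begin and finish its entire critical segment while $op_{\text{size}}$ sits at \Cref{code: HandshakeSizeCalculator second handshake}, so its read of \codestyle{sizePhase} necessarily lands after the increments and returns a value $\not\equiv_4 0$. You instead case-split on when the read happened relative to \Cref{code: HandshakeSizeCalculator doFirstAndSecondHandshakes first inc} and eliminate the ``read before the first increment'' branch by the handshake-blocking argument (\codestyle{FAST\_PHASE}$=1<\text{\codestyle{currSizePhase}}+1$, so the first handshake cannot complete until $op$ goes idle); this is essentially an inlined replay of the core of \Cref{lemma:first handshake size phase is 1}'s proof, and it spares you the containment observation the paper relies on. Both routes are sound. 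One small point: your claim that a read occurring after the first increment must return a value $\equiv_4 1$ or $\equiv_4 2$ tacitly assumes that no other \size{} operation writes \codestyle{sizePhase} in that window (in particular, that none executes \Cref{code: HandshakeSizeCalculator compute finish handshake}, which would produce a value $\equiv_4 0$); you should cite \Cref{lemma: no concurrent sizes from first inc to finish handshake} for this, as the paper does.
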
 
\begin{proof} 
    The only case that is not covered by \Cref{lemma:first handshake size phase is 1} and \Cref{lemma:second handshake size phase is 2} is that in which some \ins{} or \del{} operation $op$'s next line to execute points to \Cref{trans_op: start perform operation} and to \Cref{trans_op: set idle} while there is some \size{} operation (denote as $op_{size}$) whose next line to execute points to \Cref{code: HandshakeSizeCalculator second handshake}. For this to happen, it must be that $op$ starts and finishes executing \Crefrange{trans_op: read size phase}{trans_op: set idle} while $op_{size}$'s next line to execute points to \Cref{code: HandshakeSizeCalculator second handshake}. Therefore, from \Cref{lemma: no concurrent sizes from first inc to finish handshake} and because $op_{size}$ has already executed \Cref{code: HandshakeSizeCalculator doFirstAndSecondHandshakes first inc} by the time $op$ executed \Cref{trans_op: read size phase} we conclude that $op$ must have read the value written by $op_{size}$ in \Cref{code: HandshakeSizeCalculator doFirstAndSecondHandshakes first inc}. From \Cref{code: HandshakeSizeCalculator doFirstAndSecondHandshakes wait until} we know we could only reach \Cref{code: HandshakeSizeCalculator doFirstAndSecondHandshakes first inc} if \codestyle{currSizePhase} $\equiv_{4}{0}$, therefore, the value written by $op_{size}$ in \Cref{code: HandshakeSizeCalculator doFirstAndSecondHandshakes first inc} must be $\equiv_{4}{1}\not\equiv_{4}{0}$.
\end{proof}

\end{document}